\def\d{\mathrm{d}}
\def\laweq{\buildrel \d \over =}
\def\lawcn{\buildrel \d \over \rightarrow}
\newcommand{\VaR}{\mathrm{VaR}}
\newcommand{\ES}{\mathrm{ES}}
\newcommand{\ex}{\mathrm{ex}}
\newcommand{\E}{\mathbb{E}}
\newcommand{\R}{\mathbb{R}}
\newcommand{\M}{\mathcal{M}}
\newcommand{\N}{\mathbb{N}}
\newcommand{\p}{\mathbb{P}}
\newcommand{\id}{\mathds{1}}
\newcommand{\X}{\mathcal X}
\renewcommand{\ge}{\geqslant}
\renewcommand{\le}{\leqslant}
\renewcommand{\epsilon}{\varepsilon}
\theoremstyle{plain}
\newtheorem{theorem}{Theorem}%[section]
\newtheorem{assumption}{Assumption}%[section]
\newtheorem{lemma}{Lemma}%[section]
\newtheorem{proposition}{Proposition}%[section]
\theoremstyle{definition}
\newtheorem{definition}{Definition}%[section]
\newtheorem{example}{Example}%[section]
\theoremstyle{remark}
\newtheorem{remark}{Remark}%[section]
\theoremstyle{definition}
\renewcommand{\cite}{\citet}
\newcommand{\tbl}{\textcolor{black}}
\begin{document}
%\linenumbers

\title{Model Aggregation for Risk Evaluation  and Robust Optimization}
\author{Tiantian Mao\thanks{Department of Statistics and Finance,
University of Science and Technology of China, China. E-mail: tmao@ustc.edu.cn}
\and
Ruodu Wang\thanks{Department of Statistics and Actuarial Science, University of Waterloo, Canada. E-mail: wang@uwaterloo.ca}
\and
Qinyu Wu\thanks{Department of Statistics and Actuarial Science, University of Waterloo, Canada. E-mail: q35wu@uwaterloo.ca}
}

\date{\today}

\maketitle

\begin{abstract}
	We introduce a new approach for prudent risk evaluation based on stochastic dominance, which will be called the model aggregation (MA) approach. In contrast to the classic worst-case risk (WR) approach, the MA approach produces not only  a robust value of  risk evaluation but also  a robust distributional model, independent of any specific risk measure. The MA risk evaluation can be computed through explicit formulas in the lattice theory of stochastic dominance, and 
 under some standard assumptions, the MA robust optimization admits a convex-program  reformulation. The MA  approach for  Wasserstein and  mean-variance uncertainty sets admits  explicit formulas for the obtained robust models.
 Via an equivalence property between the MA and the WR approaches, new axiomatic characterizations are obtained for the Value-at-Risk (VaR) and the Expected Shortfall (ES, also known as CVaR).  The new approach is illustrated with various risk measures and  examples from portfolio optimization.
	% Enter your abstract

\medskip
\noindent
\textbf{Keywords:} Value-at-Risk, Expected Shortfall, stochastic dominance, model aggregation, worst-case risk measures, model uncertainty, robust optimization
\end{abstract}

\section{Introduction}
Modern risk management often requires the evaluation of risks under multiple scenarios.
The risk evaluation obtained under various scenarios needs to be aggregated properly, and a prudent approach is often implemented in practice.
As a prominent example,  in the Fundamental Review of the Trading Book (FRTB) of Basel IV (\cite{BASEL19}), banks need to evaluate the market risk of their portfolio losses under stressed scenarios, in particular including a model generated from  data during the financial crisis of  2007, and the obtained risk models are aggregated via a worst-case approach; see \citet[Section 1]{WZ18} for a description of the stressed scenarios and the model aggregation in the FRTB, and \cite{CF17} for techniques to generate regulatory scenarios. 
In the literature of portfolio risk assessment and optimization, the worst-case approach is popular; we refer to \cite{G03},  \cite{NPS08}, \cite{ZF09} and \cite{GX14} for robust portfolio optimization, and  \cite{EPR13} and \cite{WPY13}  for robust risk aggregation.
This idea further leads to many studies in distributionally robust optimization; see 
\cite{DY10},  \cite{GK16}, \cite{EK18} and  \cite{BM19}.
 In this paper, we will work in the context where a prudent risk evaluation under multiple models, which is our main focus, is desirable.\footnote{This assumption is natural in a regulatory setting such as the FRTB, where risk measures are heavily used; see also the above mentioned references. Other ways to aggregate risk models, such as averaging, max-min, smooth aggregation (\cite{KMM05}) and anti-conservative (e.g., best-case) approaches, may be suitable in different contexts, and they are outside the scope of the current paper.}

A natural question for risk management in the presence of model uncertainty is how to generate a robust model from a collection of models resulting from statistical and machine learning procedures,  operational considerations, or expert's opinion.
 Such a robust model can be used for risk evaluation, simulation, optimization, and decision making. %This question  will be addressed by a new approach that we propose.

Our main ideas to address this question are described below.
Let $\mathcal M$ be a set of distributions on $\R$, representing possible risk models; for illustrative purposes, we focus on one-dimensional financial losses for which the theory of risk measures is rich.
Suppose that a risk analyst evaluates a random loss using different   methodologies, scenarios or data sets,  and obtains a collection $\mathcal F\subseteq \mathcal M$ of distributional models. Here, the number of models in $\mathcal F$  may be finite or infinite.
For instance,  $\mathcal F$ may contain distributions of the random loss under different probability measures (economic scenarios), estimation methods, or values of statistical parameters; alternatively,   $\mathcal F$ may represent   distributions from   losses which may occur from different  possible decisions from a business competitor.
The set $\mathcal F$ will be called an \emph{uncertainty set}.
The distributions  in $\mathcal F$ will be used to assess the risk, together with a  risk measure $\rho:\mathcal M\to \R$, such as a Value-at-Risk (VaR) or an Expected Shortfall (ES, also known as CVaR); see Section \ref{sec:22} for their definitions.
Prudent regulation and risk management require  a conservative approach which aggregates the above information.
There are two conceptually intuitive ways to generate a robust assessment of the risk:
\begin{enumerate}[(i)]
	\item Directly calculate the maximum (or supremum) of  $\rho(F)$ over $F\in \mathcal F$.
	\item Calibrate  a robust (conservative) distributional model $F^*$   based on  $\mathcal F$, and calculate $\rho(F^*)$.
\end{enumerate}

Arguably, each of (i) and (ii) is a reasonable approach to take, but they may yield different risk evaluations.
We shall call (i) the  \emph{worst-case risk (WR) approach},
and (ii) the  \emph{model aggregation (MA) approach}.
There are two obvious advantages of the MA approach: We obtain a robust model which is useful for analysis and simulation, thus answering the motivating question above, and the procedure applies for generic risk measures, not only a specific one.
%\footnote{Indeed, the MA approach does not require the use of a risk measure. The risk analyst may simply seek a robust model for decision making or analysis without applying it to any risk measures or preference functionals.}
Other less obvious, but important, advantages of the MA approach will be revealed through this paper.   %Figure \ref{fig:grand} contains an illustration of the two methods. 
{\color{black} The model  $F^*$ is robust in two senses: First, it is  more conservative than any models in $\mathcal F$; second, it applies to a wide range of risk measures or decision criteria. }

At this point, we have not yet specified how the robust distributional model $F^*$ may be obtained in the MA approach (ii). For this purpose, we need an order relation,  often consistent with the risk measure $\rho$ used by the risk analyst. We will describe some natural choices of partial orders, in particular, first- and second-order stochastic dominance, in Section \ref{sec:21}.

Our main objective   is a comprehensive theory on the  two approaches of robust risk evaluation, with a focus on the newly introduced MA approach.
The following questions naturally arise.
\begin{enumerate}[Q1.]
	\item What are the advantages and disadvantages of  the MA approach in contrast to the WR approach, in addition to the points mentioned above?\label{Q1}
	\item \textcolor{black}{What are theoretical and computational properties  of the MA approach  in distributionally robust optimization?}\label{Q2}
	\item  \textcolor{black}{Which risk measures yield equivalent robust risk evaluation results via the MA and WR approaches and how are they used in regulatory practice}?\label{Q3}
	\item How is the MA approach implemented in common settings of uncertainty, optimization, and real-data applications?\label{Q4}
\end{enumerate}
We will answer the four questions above  by means of  several novel theoretical results.
Our main contributions can be explained {as} follows.
After introducing partial orders and risk measures in Section \ref{sec2},
we present a rigorous formulation of the MA and WR approaches for risk evaluation  and distributionally robust  optimization in Section \ref{sec:2}. These optimization problems will be called WR or  MA optimization. 
Our new method is related to stochastic optimization of risk measures (e.g., \cite{DR04}, \cite{RS06}, \cite{SDR21}); see our discussion in Section \ref{sec:2}.

 \textcolor{black}{Features of MA risk evaluation will be discussed in Section \ref{sec:4} and 
 MA robust optimization will be studied in Section \ref{sec:34}.
We show convenient  properties of the MA approach in risk evaluation and optimization.
In particular, the MA risk evaluation remains convex when the risk measure is convex,
and the MA robust optimization admits a convex program reformulation under suitable conditions. 
This answers Q\ref{Q2}, and also Q\ref{Q1} partially.}

We establish in Section \ref{sec:charac}   that the property of  equivalence in model aggregation characterizes VaR and ES 
among very general classes of risk measures.
The equivalence property identifies for which risk measures  the two approaches can be converted to each other. 
Through these   results, which require long technical proofs, the rich literature of robust risk evaluation and optimization, popular in operations research,\footnote{In addition to the literature on portfolio optimization, robust risk evaluation and optimization also broadly exist in other applications of operations research; see \cite{WKS14}, \cite{EK18}, \cite{BKM19} and \cite{ESW21} for a small specimen.} is connected to that of the axiomatic theory of risk preferences, popular in decision theory,\footnote{For developments on axiomatic studies in decision theory, see e.g., \cite{KMM05}, \cite{MMR06} and \cite{CHMM21}. Axiomatic theory of risk measures have also been an active topic in quantitative finance since  the seminal work of \cite{ADEH99}; see \cite{FS16} for a comprehensive treatment.} for the first time.
Our results contribute to the latter literature by offering new axiomatizations of both VaR and ES which are important issues in risk management in themselves.\footnote{In particular,  \cite{C09} obtained an axiomatization of VaR and  \cite{WZ20} obtained an axiomatization of ES; see also Remarks \ref{rem:characterizeVaR} and \ref{rem:characterizeES} for other axiomatizations of VaR and ES.} 
These results answer question Q\ref{Q3} above.

We address two settings of uncertainty, those generated by  Wasserstein metrics  and those generated by moment information in Section \ref{sec:US}.
We illustrate that the MA approach leads to closed-form robust distributional models in these settings, being easy to apply and computationally feasible.  \textcolor{black}{Based on a new result on dimension reduction for Wasserstein balls (Theorem \ref{th:7}), we show that the MA approach can conveniently handle multivariate Wasserstein uncertainty in the setting of portfolio selection.}
Section \ref{sec:Numerical} contains two applications of worst-case risk evaluation and portfolio selection under uncertainty using real financial data.  These two sections answer Q\ref{Q4}.

Finally,
advantages and limitations  of the MA approach, as well as directions for future work, are summarized and discussed in Section \ref{sec:conclude}, which also contains  a preliminary discussion  on aggregating multivariate risk models, in contrast to the univariate risk models  treated throughout the paper. These discussions address Q\ref{Q1} at a high level.

In the main text of the paper, we focus on the set of distributions with finite mean to make our analysis concise and managerial insights clear. More general choices of the space of distributions are treated in the   appendices, which also contain  technical proofs of all  results.

\section{Preliminaries and standing notation} \label{sec2}
We first {\color{black} introduce} some notation.
Let $(\Omega,\mathcal B,\p)$ be a nonatomic probability space. %, where $\Omega$ is a set of possible states of nature and $\mathcal B$ is a $\sigma$-algebra on $\Omega$. 
{\color{black}
%We denote by $\p(A)$ the probability of event $A\in\mathcal B$ under $\p$. 
For $d\in\N$, let $\mathfrak B(\R^d)$ be the Borel $\sigma$-field on $\R^d$. A random vector $\mathbf X$ is a measurable mapping from $(\Omega,\mathcal B)$ to $(\R^d,\mathfrak B(\R^d))$. %Almost surely equal random vectors are treated as identical. 
Denote by $\mu_{\mathbf X}:=\p\circ \mathbf X^{-1}$ the probability distribution induced by a random vector $\mathbf X$ under $\p$, where $\mathbf X^{-1}$ is the inverse image of $\mathbf X$. 
%In other words, $\mu_{\mathbf X}: \mathcal B(\R^d)\mapsto [0,1]$ is a probability measure on $(\R^d,\mathcal B(\R^d))$. 
Denote by $F_{\mathbf X}$ the cumulative distribution function (cdf) of $\mathbf X$ under $\p$, i.e., $F_{\mathbf X}(\mathbf x)=\p(\mathbf X\le \mathbf x)$ for $\mathbf x\in\R^d$, where the inequality is component-wise. 
%Based on the one-to-one correspondence between $\mu_{\mathbf X}$ and $F_{\mathbf X}$, we might use the terminology of probability distribution and cumulative distribution function interchangeably. 
We will take cdfs $F_{\mathbf X}$,   identified with distributions $\mu_{\mathbf X}$, as the main research object of the entire paper. 
%The notation $\mathbf X\sim F$ means that the random vector $\mathbf X$ has distribution $F$. 
We use $\delta_{\mathbf t}$ to represent the point-mass at $\mathbf t\in \R^d$.
%We use $\delta_{\mathbf x}$ to represent the point-mass at $\mathbf x\in \R^d$.
Let $L^1$ be the space of 
%all random variables and that of
all integrable random variables on $(\Omega,\mathcal B,\p)$, %respectively, 
where almost surely equal random variables are treated as identical.
%, i.e.,  $L^0=\{X:\Omega\to\R: X \textrm{ is measurable with respect to } \mathcal B\}$ and $L^1=\{X\in L^0:\int_{\Omega} |X|\d \p<\infty\}$. 
% For a random variable  $ X\in L^0$, we denote by $F_{X}$ the  cumulative distribution function (cdf) of $X$ under the probability $\p$, that is, $F_X(x)=\p(X\le x)$, $x\in\R$. We identify distributions with their cdfs.
Denote by $\mathcal M_1$
the set of cdfs of all random variables in $L^1$, i.e., $\mathcal M_1$ is the set of all cdfs $F$ satisfying $\int_{\R}|x|\d F(x)<\infty$.
%\footnote{fsdfa} 
%$$
%\mathcal M_1=\{F:(-\infty,\infty)\to[0,1]| F~\text{is increasing and right-continuous, and satisfies} \lim_{x\to-\infty}F(x)=0,~\lim_{x\to\infty}F(x)=1~{\rm and}~\int_{\R}|x|\d F(x)<\infty\}.
%$$
%for $i=0,1$. % i.e., $\mathcal M_i=\{F_X:\, X\in L^i\}$, $i=0,1$. 
%Throughout the paper, $\mathcal M$ denote a subset of $L^0$ and $\mathcal M_0$.
%Expectation $\E$ and $\mathfrak{m}$ are defined on $L^1$ and $\mathcal M_1$, respectively.
The mean of a random variable $X$  under $\p$ is written as $\E[X]=\mathfrak{m}(F_X)$, where $\E$ is defined on $L^1$ and $\mathfrak m$ is defined on $\mathcal M_1$. 
We denote  by  $\Delta_d$ the standard simplex $\{\bm\lambda\in [0,1]^d: \sum_{i=1}^d \lambda_i=1\} $ in $\R^d$ and by $[d]=\{1,\dots,d\}$.
}

%and we use the notation $X_F$ to represent the random variable with distribution function $F$. Denote by $\delta_a$ the degenerated distribution function at point $a$.
% l

\subsection{Stochastic orders and lattices}\label{sec:21}

For any set of cdfs $\mathcal M$, 
	let $\preceq$ be a partial order on $\mathcal M$, and $(\M,\preceq)$ is called a partially ordered  set.
%As mentioned in the introduction,  to properly formulate the MA approach, a partial order $\preceq$ is needed on a set of cdfs, denoted by $\M$, and $(\M,\preceq)$ is called a partially ordered   set. 
The relevant tool is the lattice theory which we collect in Appendix \ref{app:lattice}, and here we only present  a basic result needed to understand our main ideas.  The most commonly used partial orders in finance and economics are the first-order stochastic dominance $\preceq_1$ and the increasing convex order $\preceq_2$, defined as, for $F,G \in \mathcal M$,\footnote{
	%The first- and second-order stochastic dominance defined in this paper is different to the classic ones in decision theory
	Note that we treat  $F$ and $G$ as loss cdfs instead of wealth cdfs, and hence a larger element in $\preceq_1$ or $\preceq_2$ means higher risk, see e.g., \cite{SS07}.  
	For a cdf $F$, denote by $\widetilde F$ the cdf of $-X$ where $X$ follows $F$.
	Up to a sign flip,
	increasing convex order (for loss distributions) is economically equivalent to second-order stochastic dominance (for gain distributions) in the sense that  $F \succeq_2 G$ 
	if and only if $\widetilde F \preceq_{\rm ssd} \widetilde G$, 
	where $\preceq_{\rm ssd}$ is defined via (b) by changing increasing convex functions to increasing concave functions.
	This is because $u$ is increasing convex if and only if $x\mapsto -u(-x)$ is increasing concave. 
}
%	  $\preceq_1$ and $\preceq_2$ correspond to the classic first- and second-order stochastic dominance in decision theory, respectively.}
\begin{enumerate}[(a)]
\item  $F\preceq_1 G$  if   $\int u\d F \le \int u \d G$ for all increasing functions $u$;
\item
$F\preceq_2 G$  if   $\int u\d F \le \int u \d G$ for all increasing convex functions $u$.
\end{enumerate}
Other useful equivalent definitions of $\preceq_1$ and $\preceq_2$ are put in Appendix
\ref{app:lattice}.
To build a robust distributional model, we need to define the supremum of a set $\mathcal F \subseteq \mathcal M$. Throughout, $\mathcal F$ is  assumed nonempty.
For a partial ordered set $(\M,\preceq)$ and $\mathcal F\subseteq \mathcal M$, the supremum of $\mathcal F$, denoted by $\bigvee \mathcal F$, is defined as the distribution $H$ satisfying
$H\in\mathcal M$ and
$F \preceq H\preceq G$ for all $F\in \mathcal F$ and all $G\in \M$ which dominates every element of $\mathcal F$ (uniqueness is guaranteed by definition).  If there exists $G\in \M$ such that $G$ dominates every element of $\mathcal F$, we say that $\mathcal F$ is \emph{bounded from above with respect to $\preceq$},  denoted by $\preceq$-bounded. The supremum does not always exist, but for the two choices of ordered sets $(\mathcal M_1,\preceq_1)$ and $(\mathcal M_1,\preceq_2)$ that we consider in the main paper, this does not  create any problem; see e.g., \cite{KR00} and \cite{MS06} for the lattice structure of cdfs with $\preceq_1$ and $\preceq_2$. 
The two cases of $\bigvee \mathcal F$ for  $\preceq_1$ and $\preceq_2$  admit  explicit formulas,  given in Proposition \ref{th-sharp} in Section \ref{sec:4}.  
%Precise definitions and technical details are in Appendix \ref{app:lattice}.

\subsection{Risk measures}\label{sec:22}

In  the classic framework of \cite{ADEH99} and \cite{FS16},
a risk measure is traditionally defined as a mapping from a set $\X$ of random losses to ${\R}$.
Denote by $\mathcal M $ the set of cdfs of random variables in $\mathcal X$. {\color{black}  For a partial order $\preceq$ on $\mathcal M$, a natural interpretation of $F\preceq G$ is that $G$ is riskier than $F$ according to  $\preceq$. A risk measure $\rho:\M\to \R$  is {\it $\preceq$-consistent} if $\rho(F)\le \rho(G)$ for all $F,G\in \M$ with $F\preceq G$.}
%For law-invariant mappings,   $\preceq_1$-consistency is equivalent to monotonicity.

\begin{definition}\label{def-risk}
	A \emph{distribution based risk measure} is a mapping $\rho:\M \to \R$ {\color{black} satisfying $\preceq_1$-consistency}. 
	For such $\rho$, its associated \emph{random-variable based risk measure} is  $\widetilde\rho:\X\to \R$ given by $\widetilde \rho(X)=\rho(F_X)$. 
	Both $\rho$ and $\widetilde\rho$ will be called risk measures in this paper.
\end{definition}

Note that $\rho$ is  $\preceq_1$-consistent if and only if $\widetilde \rho$ is monotone (i.e., $\widetilde \rho(X)\le \widetilde \rho(Y)$ when $X\le Y$).
The random-variable based risk measure in Definition \ref{def-risk} satisfies 
\emph{law-invariance} (i.e., $\widetilde\rho(X)=\widetilde\rho(Y)$ whenever $F_X=F_Y$).
There exists a one-to-one correspondence between $\rho:\M\to \R$ 
and ${\color{red}\tilde{\rho}}:\X\to \R$ satisfying 
law-invariance;
%(i.e., $\phi(X)=\phi(Y)$ whenever $F_X=F_Y$); 
see e.g., Proposition 1 of \cite{DKW19}.
We will choose $\X=L^1$ in the main part of the paper, so that the two partial orders $\preceq_1$ and $\preceq_2$ both behave well.\footnote{In particular, it is well known that $\preceq_2$ is closely related to mean-preserving spreads of \cite{RS70}, and a finite mean is essential for such a connection. On the other hand, $\preceq_1$ fits well in any space of random variables or cdfs.} 
For a better exposition of distributional uncertainty, we will  present ideas and results mainly using $\rho$ instead of $\widetilde \rho$.

%% we work with {\color{black}a law-invariant risk measure $\widetilde\rho$}, that is, identically distributed random variables have the same risk value.
%For a risk measure  $ \widetilde\rho :\mathcal X\to\R$,  denote by $\mathcal M $ the set of distributions of random variables in $\mathcal X$. % Note that $(\Omega,\mathcal B,\p)$ is nonatomic.
%For a better exposition of distributional uncertainty, we also define a risk measure $\rho:\M \to \R$ as
%$\rho(F_X) = \widetilde\rho(X)$, which   one-to-one corresponds to risk measures $\widetilde \rho$; see e.g., Proposition 1 of \cite{DKW19}.
%We will follow this setting; thus, a risk measure is a mapping $\rho:\M\to \R$ in this paper. }

The two most popular and important risk measures in financial practice,  VaR and ES,
are both law-invariant. The risk measure VaR at level $\alpha\in (0,1)$ is the functional $\VaR_\alpha:\mathcal M_1 \to \mathbb{R}$ defined by
$$
\VaR_\alpha (F)= \inf \{x\in \R: F(x) \ge \alpha\},
$$
which is
the left $\alpha$-quantile of  a cdf.  The risk measure ES at level $\alpha\in[0,1)$ is the functional $\ES_\alpha:\mathcal M_1 \to \mathbb{R}$ defined by
$$
\ES_{\alpha}(F)=\frac{1}{1-\alpha}\int_\alpha^1 {\rm VaR}_{s}(F)\d s,
$$
and in particular, $\ES_0(F)=\mathfrak{m}(F)$.
We can also define $\VaR_0$, $\VaR_1$ and $\ES_1$, which are not finite-valued on $\M_1$; see Appendix \ref{app:A}.
%Hence, all commonly used risk measures, including 
Note that in addition to the $\preceq_{1}$-consistency of 
VaR and ES,  ES also satisfies  the  $\preceq_{2}$-consistency. 
%Throughout the paper, we assume  risk measures are $\preceq_1$-consistent.

\section{Introducing the MA approach}\label{sec:2}
% \subsection{Robust risk evaluation via the WR and MA approaches}%\label{subsec-MA}
\label{sec:31}
We   describe the two approaches for robust risk evaluation,   the primary objects of this paper.
For a  risk measure   $\rho:\mathcal M\to \R$ and an uncertainty set $\mathcal F\subseteq \mathcal M$, a common way to obtain a  robust risk evaluation is to calculate the following worst-case risk measure %defined as
\begin{align}%\tag{WC}
	{\rm WR:} ~~~~~\rho^{\rm WR}(\mathcal F) = \sup_{F\in\mathcal F}\rho(F). \label{eq-WC}
\end{align}
The value in  \eqref{eq-WC} is called the \emph{WR robust $\rho$ value}, and it has been widely studied in the literature; some references are mentioned in the introduction.
Next, we  propose a new method of robust risk evaluation, that is, assuming that the supremum $\bigvee \mathcal F$ with respect to $\preceq$ exists,
\begin{align}\label{eq-MA}
	{\rm MA:} ~~~~~\rho^{\rm MA}(\mathcal F) = \rho\left(\bigvee \mathcal F\right),
\end{align}
and $\rho^{\rm MA}(\mathcal F)=\infty$ if $\mathcal F$ is not bounded from above.  
The value in \eqref{eq-MA} is called the \emph{$\preceq$-MA robust $\rho$ value}
(``$\preceq$-" will be omitted if the  order is clear from the context). 
In the main text of the paper,
%$(\M,\preceq)$ is always a complete lattice, that is,
$\bigvee \mathcal F$ exists for all $\mathcal F$ bounded from above, and
hence,
$\rho^{\rm MA}$ is always well-defined.
In   case that $\bigvee \mathcal F$ may not exist, \eqref{eq-MA} needs to be modified as   in Appendix \ref{app:lattice}.

{\color{black}The idea of the MA approach can be described in two steps}: First, take the supremum  $\bigvee \mathcal F$ of the uncertainty set $\mathcal F$ as the robust distribution, and second, calculate the value of the risk measure of the robust distribution. %; see Figure \ref{fig:grand}.
%The supremum model   $\bigvee \mathcal F$ obtained in the first step is a robust distributional model generated by $\mathcal F$ with respect to the partial order $\preceq$.
The robust distribution $\bigvee \mathcal F$ obtained in the first step can be used for any risk measure.
If, in addition, the risk measure $\rho$ is $\preceq$-consistent, then the MA approach produces a larger robust risk value than the WR approach,
that is, for  any $\mathcal F\subseteq \M$,
\begin{align}\label{eq-gap}
	\rho^{\rm WR}(\mathcal F) \le  \rho^{\rm MA}(\mathcal F),
\end{align}
since $\preceq$-consistency implies   $\rho(\bigvee \mathcal F)\ge \rho(F)$ for all $F\in \mathcal F$.
The MA approach can be implemented even in case no risk measure is involved (thus skipping the second step above), as the model $\bigvee \mathcal F$ is ready to use without a specification of any specific objective.

%In contrast to the WR approach, the MA approach depends not only on the risk measure, but also on the partial order $\preceq$, which means that the MA approach can provide diverse ways of robust risk evaluation.
%The stronger the partial order is, the larger the value of the MA risk evaluation is.

In the sequel, we will focus on $\preceq_1$ and $\preceq_2$. {\color{black} For a simpler notation, we write ${\rm MA}_1$
when $\preceq$ is specified as $\preceq_1$, and ${\rm MA}_2$ is similar. % Accordingly, we use $\rho^{{\rm MA}_i}$ to represent the ${\rm MA}_i$ robust $\rho$ value, $i=1,2$.
%i.e., $\rho^{{\rm MA}_i}=\rho_{\preceq_i}^{\rm{MA}}$, $i=1,2$.
%	$\rho^{{\rm MA}_i}$ to represent the $\preceq_i$-MA robust $\rho$ value which will be called ${\rm MA}_i$ robust $\rho$ value, i.e., $\rho^{{\rm MA}_i}=\rho_{\preceq_i}^{\rm{MA}}$, $i=1,2$.
}
For these two stochastic orders,
the explicit forms of $\bigvee\mathcal F$  are  obtained in Section \ref{sec:4}. It is also worth noting that if $\rho$ is consistent with more than one partial orders, then the MA approach with a stronger partial order leads to a higher risk evaluation.
For instance, if $\rho$ is both $\preceq_1$-consistent and $\preceq_2$-consistent, then
$
\rho^{\rm WR}(\mathcal F) \le %  \rho \left(\bigvee_2 \mathcal F\right)\le \rho \left(\bigvee_1 \mathcal F\right),
\rho^{{\rm MA}_2}(\mathcal F) \le \rho^{{\rm MA}_1}(\mathcal F)
$
because any $(\M,\preceq_{1})$-upper bound on $\mathcal F$ is also an $(\M,\preceq_{2})$-upper bound  on $\mathcal F$.

{\color{black} Using the WR and MA approaches,   two types of distributionally robust optimization problems arise: 
%For a partial order $\preceq$, which is either $\preceq_1$ or $\preceq_2$, we consider the following two optimization problems
\begin{equation}
	\label{eq:robustopt0}
	\min_{\mathbf a\in A}~~ \rho^{\rm WR} (\mathcal F_{\mathbf a,f} ) \mbox{~~~~~and~~~~~}\min_{\mathbf a\in A} ~~\rho^{\rm MA} (\mathcal F_{\mathbf a,f}),
\end{equation}
where $A$ is a set of possible actions, $f:A\times \R^d  \to \R$ is a loss function, $\mathcal F$ is a set of cdfs on $\R^d$ and
\begin{equation}\label{eq-mulUS}
\mathcal F_{\mathbf a,f}=\left\{F_{f(\mathbf a,\mathbf X)}:  F_{\mathbf X}\in \mathcal F\right\}.
\end{equation}
The set $\mathcal F_{\mathbf a,f}$ consists all univariate cdfs of $f(\mathbf a,\mathbf X)$ where $\mathbf X$ has cdf in $\mathcal F$. 
%where $A$ is a set of possible actions, $\mathcal F_{\mathbf a,f}$ is the set of univariate distributions of $f(\mathbf a,\mathbf X_G)$, and $f:A\times \R^d  \to \R$ is a loss function and the random vector $\mathbf X_G$ has distribution $G\in \mathcal G$ where $\mathcal G$ is a set of distributions on $\R^d$.
For instance, by choosing $A\subseteq \R^d$ and $f(\mathbf a,\mathbf x)= \mathbf a^\top \mathbf x$, one arrives at the setting of robust portfolio selection, where $\mathbf a$ represents the vector of portfolio weights and $\mathbf x$ represents the vector of losses from individual assets.  
 The WR robust optimization problem in \eqref{eq:robustopt0} is equivalent to a minimax problem:
\begin{equation}\label{prob:minmax}
\min_{\mathbf a\in A}~ \sup_{F\in\mathcal F}\widetilde{\rho}^F(f(\mathbf a,\mathbf X)),
\end{equation}
where $\widetilde{\rho}^F(f(\mathbf a,\mathbf X))$ represents the value of $\widetilde{\rho}(f(\mathbf a,\mathbf X))$ when $\mathbf X$ has the cdf $F$.
If $\rho$ is $\preceq$-consistent, then the MA robust optimization problem in \eqref{eq:robustopt0} can be converted to a stochastic program
with    partial order $\preceq$ constraints: 
\begin{align}\label{prob:MA}
\min_{\mathbf a\in A} ~\inf_{H }\rho(H)
~~~~~~{\rm s.t.}~G\preceq H,~~\forall G\in\mathcal F_{\mathbf a,f}.
\end{align}
 The above problem with $\preceq$ being $\preceq_2$ will be studied  in Section \ref{sec:34}. Section \ref{sec:US} is dedicated to the portfolio selection problem, where two specific settings of uncertainty will be considered.

\textbf{Comparison: Optimization with stochastic dominance.}
\cite{DR04} introduced an optimization
problem with stochastic dominance constraint. 
%In their model, stochastic dominance is used in a different way from our approach. In \cite{DR04}, random variables represent outcomes, while in our context, they represent losses. Adapting their model to our setting where the risk of the loss function $f$ is minimized, we get
Adapting to our notation, and focusing on $\preceq_2$, their model can be described as  
\begin{align}\label{prob:DR04}
\min_{\mathbf a\in A}~~\E[f(\mathbf a, \mathbf X)]~~~~~~{\rm s.t.}~F_{g_i(\mathbf a, \mathbf X)}\preceq_2 F_i,~~ i\in [m],
\end{align}
where $\mathbf X$ has a fixed cdf,  $f,g_1,\dots,g_m$ are fixed functions,  and $F_1,\dots,F_m\in \mathcal M_1$. 
Note that if we replace $\E$ by $\widetilde \rho$ 
and set $g_1=\dots=g_m=f$, then the problem is 
\begin{align}\label{prob:DR04-2}
\min_{\mathbf a\in A}~~\widetilde \rho( f(\mathbf a, \mathbf X)) ~~~~~~{\rm s.t.}~F_{f(\mathbf a, \mathbf X)}\preceq_2 F_i,~~ i\in [m].
\end{align}
The   problem \eqref{prob:DR04-2} is similar to our MA problem \eqref{prob:MA}, but there are a few essential differences. First, for fixed $\mathbf a\in A$, the cdf of  ${f(\mathbf a, \mathbf X)}$ is fixed in \eqref{prob:DR04-2},
whereas   \eqref{prob:MA} searches for a robust model for $ f(\mathbf a, \mathbf X)$ over the uncertainty set $\mathcal F_{\mathbf a,f}$.
Second, the direction of stochastic dominance is flipped, as our $H$ dominates every $G$ in $\mathcal F_{\mathbf a,f}$ 
and their $F_{f(\mathbf a, \mathbf X)}$ is dominated by every $F_i$.  
Note that the interpretation of stochastic dominance is very different here: 
\eqref{prob:MA} looks at risks larger in $\preceq_2$ (riskier) because our objective is robust optimization,
whereas \eqref{prob:DR04-2} looks at risks  smaller in $\preceq_2$ (safer) because of risk constraint. 
In \cite{DR03}, the following problem has been considered:
 $
\min_{X\in \mathcal C} \widetilde \rho(X)~{\rm s.t.}~F_{X}\preceq_2 F_i,~ i\in[m],
$ 
where $\mathcal C$ is a set of random variables. 
When $\rho$ is $\preceq_2$-consistent,  our MA risk evaluation problem  can be written as
 $ \min_{X\in L^1} \widetilde \rho(X)~{\rm s.t.}~F_{i}\preceq_2  F_X,~ i\in[m],$
which is similar to the model of \cite{DR03}; see the recent work of \cite{D23} for a related two-stage optimization with stochastic dominance constraints. 
%Note that choosing $\mathcal C=\X$ in \eqref{prob:DR03} would lead to $-\infty$, so the two problems are mathematically not symmetric.

\begin{remark}
In this paper,  we focus on risk measures taking real values.
In some applications, 
risk measures may be multi-valued or set-valued (e.g., \cite{EP06}, \cite{HH10}, \cite{HHR11}). For such risk measures, the  MA approach, together with multivariate stochastic order, can also be applied.
\end{remark}

}

\section{MA approach in  risk evaluation}
\label{sec:4}
In this section, we study properties of the MA risk evaluation by focusing on $\preceq_1$ and $\preceq_2$.

\subsection{Computing the robust model}
\label{sec:32}

%We first compute the robust model explicitly. 
We use $\bigvee_1 \mathcal F$ and $\bigvee_2 \mathcal F$ to represent the supremum of the uncertainty set $\mathcal F$ on the ordered set $(\mathcal M_1,\preceq_1)$ and $(\mathcal M_1,\preceq_2)$, respectively, and  $\pi_{F_X}$ represents the \emph{{\color{black} integrated} survival function} of $X\in L^1$, defined as
\begin{align}\label{eq-pi}
	\pi_{F_X}(x)=\int_x^{\infty}(1-F_X(t)){\d}t=\E[(X-x)_+],~~~x\in\R,
\end{align}
{where $x_+=\max\{0,x\}$ for $x\in\R$. It is straightforward from \eqref{eq-pi} that a simple relationship between the integrated survival function  $\pi_F$ and the cdf $F$ is $F=1+(\pi_F)_+'$, where $(\pi_F)_+'$ is the right derivative of $\pi_F$. %; see also  \citet[Property 1.7.3]{D05}.}
\textcolor{black}{The left quantile function of $F\in\mathcal M_1$ is defined by $F^{-1}(\alpha)=\inf\{x\in \R: F(x)\ge \alpha\}$ for $\alpha\in(0,1]$, which is   $\VaR_\alpha (F)$ when $\alpha\in(0,1)$. 
The functions $\pi_F$ and $F^{-1}$ will be used throughout the paper.}
%We first collect some basic  properties  of  $(\mathcal M_1,\preceq_1)$ and $(\mathcal M_1,\preceq_2)$.

\begin{proposition} \label{th-sharp}
	\begin{itemize} \item [(a)] For a set $\mathcal F\subseteq \mathcal M_1$ that is $\preceq_1$-bounded, we have $\bigvee_1 \mathcal F=\inf_{F\in\mathcal F} F$\footnote{Note that the infimum of upper semicontinuous functions  $F\in\mathcal F$ is again upper semicontinuous and  thus a valid cdf when $\mathcal F$ is $\preceq_1$-bounded.} and the left quantile function of $\bigvee_1\mathcal F$ is $\sup_{F\in\mathcal F} F^{-1}$.
		\item [(b)] For a set $\mathcal F\subseteq\mathcal M_1$ that is $\preceq_2$-bounded, we have
		\begin{equation}
			\pi_{\bigvee_2\mathcal F}=\sup_{F\in\mathcal F}\pi_F,  \mbox{~and thus~}  \bigvee_2 \mathcal F=1+\left(\sup_{F\in\mathcal F}\pi_F\right)_+',\label{eq:computeV2}
		\end{equation}
		{\color{black}where $g'_+$ denotes the right derivative of $g$.}
		%A set $\mathcal F\subseteq\mathcal M_1$ is $\preceq_2$-bounded  if and only if
		%$$\lim_{x\to-\infty}\sup_{F\in\mathcal F}\pi_F(x)+x=c\in\R ~~{\rm and}~~ \lim_{x\to\infty}\sup_{F\in\mathcal F}\pi_F(x)=0.$$ Moreover, in case the above condition holds,
		%\begin{equation}
		%\pi_{\bigvee_2\mathcal F}=\sup_{F\in\mathcal F}\pi_F,  \mbox{~and thus~}  \bigvee_2 \mathcal F=1+\left(\sup_{F\in\mathcal F}\pi_F\right)_+'.\label{eq:computeV2}
		%\end{equation}
	\end{itemize}
\end{proposition}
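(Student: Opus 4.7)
The plan is to use the well-known pointwise characterizations of the two stochastic orders: $F\preceq_1 G$ iff $F\ge G$ pointwise (equivalently $F^{-1}\le G^{-1}$ on $(0,1)$), and $F\preceq_2 G$ iff $\pi_F\le \pi_G$ pointwise. Given these, the supremum in each case is obtained by taking the ``extremal'' element of the relevant family of monotone-like functions and then reading off a cdf from it. The $\preceq$-boundedness hypothesis will be exactly what guarantees that the pointwise extremum we write down is (a) finite and (b) produces an element of $\mathcal M_1$ rather than merely of a larger distributional space.

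For part (a), first I set $H^\ast=\inf_{F\in\mathcal F}F$. Because each $F$ is nondecreasing and right-continuous (equivalently, nondecreasing and upper semicontinuous), the pointwise infimum $H^\ast$ inherits both properties, so it is nondecreasing and right-continuous. The $\preceq_1$-bounded hypothesis gives some $G\in\mathcal M_1$ with $G\le F$ pointwise for every $F\in\mathcal F$, so $G\le H^\ast\le F_0$ for any fixed $F_0\in\mathcal F$; this pins down the tails, forcing $H^\ast(x)\to 0$ as $x\to -\infty$ and $H^\ast(x)\to 1$ as $x\to\infty$, and the sandwich with $G$ and $F_0$ also gives $\int_0^\infty(1-H^\ast)<\infty$ and $\int_{-\infty}^0 H^\ast<\infty$, so $H^\ast\in\mathcal M_1$. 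It is then immediate from $F\ge H^\ast$ for all $F\in\mathcal F$ that $F\preceq_1 H^\ast$, and if any $K\in\mathcal M_1$ satisfies $F\preceq_1 K$ for all $F\in\mathcal F$ then $K\le F$ pointwise for every $F$, hence $K\le H^\ast$, i.e.\ $H^\ast\preceq_1 K$. For the quantile identity, one direction $(H^\ast)^{-1}\ge \sup_F F^{-1}$ follows from $H^\ast\le F$; the reverse follows by a standard argument: for any $y>\sup_F F^{-1}(\alpha)$, monotonicity gives $F(y)\ge\alpha$ for every $F\in\mathcal F$, hence $H^\ast(y)\ge\alpha$, so $(H^\ast)^{-1}(\alpha)\le y$.

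For part (b), set $\pi^\ast=\sup_{F\in\mathcal F}\pi_F$. Each $\pi_F$ is nonnegative, nonincreasing, convex, tends to $0$ at $+\infty$, and satisfies $\pi_F(x)+x-\mathfrak m(F)\to 0$ as $x\to-\infty$. The supremum of such functions is again nonnegative, nonincreasing and convex; the $\preceq_2$-boundedness by some $G\in\mathcal M_1$ yields $\pi^\ast\le\pi_G$, which gives finiteness everywhere and $\pi^\ast(x)\to 0$ as $x\to+\infty$. Convexity together with $\pi^\ast\ge 0$ and the limit at $+\infty$ forces the right derivative $(\pi^\ast)'_+$ to be nondecreasing, right-continuous, nonpositive, and to tend to $0$ at $+\infty$ and to some value $\ge -1$ at $-\infty$; the lower bound $\pi^\ast\ge\pi_{F_0}$ for any fixed $F_0\in\mathcal F$ forces this limit to equal exactly $-1$. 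Hence $H^\ast:=1+(\pi^\ast)'_+$ is a genuine cdf. An application of the fundamental theorem of calculus for convex functions gives
\[
\pi_{H^\ast}(x)=\int_x^\infty(1-H^\ast(t))\,\d t=\int_x^\infty -(\pi^\ast)'_+(t)\,\d t=\pi^\ast(x)-\lim_{t\to\infty}\pi^\ast(t)=\pi^\ast(x),
\]
so $\pi_{H^\ast}=\pi^\ast$, which in particular places $H^\ast$ in $\mathcal M_1$. From $\pi_{H^\ast}\ge \pi_F$ for all $F\in\mathcal F$ we read off $F\preceq_2 H^\ast$; and if $K\in\mathcal M_1$ dominates every $F\in\mathcal F$ in $\preceq_2$, then $\pi_K\ge \pi_F$ for all such $F$, hence $\pi_K\ge \pi^\ast=\pi_{H^\ast}$, i.e.\ $H^\ast\preceq_2 K$. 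Therefore $H^\ast=\bigvee_2\mathcal F$, which is the claim.

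The main obstacle I anticipate is in part (b): verifying that the pointwise supremum $\pi^\ast$ actually has the correct asymptotic slope $-1$ at $-\infty$ (equivalently, that the recovered cdf tends to $0$ there), and consequently that the recovered $H^\ast$ has a finite mean. This is precisely where the $\preceq_2$-boundedness assumption has to be used delicately; the two-sided sandwich $\pi_{F_0}\le \pi^\ast\le \pi_G$ is what lets us transfer the endpoint behaviour of genuine integrated survival functions to $\pi^\ast$. The rest of the argument reduces to careful bookkeeping with convex, monotone functions and their right derivatives.
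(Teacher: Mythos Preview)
Your proposal is correct and follows essentially the same approach as the paper's proof (given in the appendix as a special case of a more general result on $\mathcal M_p$): both parts exploit the pointwise characterizations $F\preceq_1 G\Leftrightarrow F\ge G$ and $F\preceq_2 G\Leftrightarrow \pi_F\le\pi_G$, take the appropriate pointwise extremum, and use the two-sided sandwich furnished by boundedness to force membership in $\mathcal M_1$. The paper phrases the key step in (b)---that the slope of $\pi^\ast$ at $-\infty$ equals $-1$---by observing that $x+\pi^\ast(x)=\sup_{F}(x+\pi_F(x))$ is increasing with a finite limit at $-\infty$ (bounded below by $\mathfrak m(F_0)$), which is exactly the content of your lower-sandwich argument; your explicit FTC verification that $\pi_{H^\ast}=\pi^\ast$ is a detail the paper leaves as ``standard to check.''
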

{\color{black}\begin{remark}Under the condition of Proposition \ref{th-sharp} (b), the function $  \bigvee_2 \mathcal F=1+\left(\sup_{F\in\mathcal F}\pi_F\right)_+'$ is a well-defined cdf  by noting that $\pi_F$ is a decreasing  convex function, and thus, its right derivative  is well-defined, nonnegative, and right continuous. More details can be found in the proof of Proposition EC.1.
\end{remark}}

\tbl{By Proposition \ref{th-sharp}, for a risk measure $\rho$, 
	the evaluation of $\mathrm{MA}_1$ 
	is equivalent to applying $\rho$ to the distribution with a worst-case quantile function. 
	Similarly,  the evaluation of $\mathrm{MA}_2$ 
	is equivalent to applying $\rho$ to  the distribution with a worst-case upper partial moment given by \eqref{eq:computeV2}. 
	Worst-case quanitle and worst-case 
	upper partial moment functions
	have wide range of applications in optimization; see e.g., \cite{G03}, \cite{L87}, \cite{N10} and \cite{C11}.} 
% \tbl{an optimization problem with objective specified by $\mathrm{MA}_1$ requires} worst-case quantile optimization; see e.g., \cite{G03}.
% Similarly, \tbl{an optimization problem with objective specified by $\mathrm{MA}_2$ requires} worst-case first upper partial moment optimization, which also has   wide range of applications; see \cite{L87}, \cite{N10} and \cite{C11}.  
To compute $\pi_{\bigvee_2 \mathcal F}$ numerically, as  a decreasing convex function,  $\pi_{\bigvee_2 \mathcal F}$   can be well approximated by a piece-wise linear function with finitely many pieces (which requires computing   it    at finitely many points). 
The next result  concerns convexity of the uncertainty set. 
\begin{proposition}\label{prop-EMAM}
	Suppose that $i\in\{1,2\}$. For $\mathcal F\subseteq\mathcal M_1$, we have $\bigvee_i {\rm conv}{\mathcal F} =\bigvee_i\mathcal F$, where ${\rm conv}{\mathcal F}$ is the convex hull of $\mathcal F$.
\end{proposition}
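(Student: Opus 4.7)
The plan is to prove the two inequalities $\bigvee_i\mathcal F\preceq_i\bigvee_i\mathrm{conv}(\mathcal F)$ and $\bigvee_i\mathrm{conv}(\mathcal F)\preceq_i\bigvee_i\mathcal F$ separately, with a brief reduction showing that the two notions of $\preceq_i$-boundedness coincide. The ``easy'' inclusion is $\bigvee_i\mathcal F\preceq_i\bigvee_i\mathrm{conv}(\mathcal F)$: since $\mathcal F\subseteq\mathrm{conv}(\mathcal F)$, any $\preceq_i$-upper bound for $\mathrm{conv}(\mathcal F)$ is automatically a $\preceq_i$-upper bound for $\mathcal F$, so by the definition of supremum this inequality follows once we know $\bigvee_i\mathrm{conv}(\mathcal F)$ exists.

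The substantive content is the reverse direction, which rests on the following one-line observation: both $\preceq_1$ and $\preceq_2$ are preserved under finite mixtures of the smaller element. That is, if $F_1,\dots,F_n\in\mathcal M_1$ satisfy $F_j\preceq_i H$ for each $j\in[n]$ and $\bm\lambda\in\Delta_n$, then $\sum_{j=1}^n\lambda_j F_j\preceq_i H$. This is immediate from the integral characterizations recalled in Section~\ref{sec:21}, since for any $u$ in the relevant test class (increasing for $i=1$, increasing convex for $i=2$) one has
\begin{equation*}
\int u\,\mathrm d\Bigl(\sum_{j=1}^n\lambda_j F_j\Bigr)=\sum_{j=1}^n\lambda_j\int u\,\mathrm dF_j\le\sum_{j=1}^n\lambda_j\int u\,\mathrm dH=\int u\,\mathrm dH.
\end{equation*}
Alternatively, one can appeal directly to Proposition~\ref{th-sharp}: for $i=1$, mixture of cdfs is a convex combination pointwise, and $\sum_j\lambda_j F_j\ge\inf_{F\in\mathcal F}F=\bigvee_1\mathcal F$; for $i=2$, the integrated survival function is affine in the distribution, so $\pi_{\sum_j\lambda_j F_j}=\sum_j\lambda_j\pi_{F_j}\le\sup_{F\in\mathcal F}\pi_F=\pi_{\bigvee_2\mathcal F}$.

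With this observation in hand, I would take any $H\in\mathrm{conv}(\mathcal F)$, write it as $H=\sum_{j=1}^n\lambda_j F_j$ with $F_j\in\mathcal F$, $\bm\lambda\in\Delta_n$, and apply the observation with $\bigvee_i\mathcal F$ in place of $H$ to conclude $H\preceq_i\bigvee_i\mathcal F$. Thus $\bigvee_i\mathcal F$ is a $\preceq_i$-upper bound for $\mathrm{conv}(\mathcal F)$, yielding $\bigvee_i\mathrm{conv}(\mathcal F)\preceq_i\bigvee_i\mathcal F$. The same argument shows $\mathrm{conv}(\mathcal F)$ is $\preceq_i$-bounded whenever $\mathcal F$ is, and the converse is trivial from $\mathcal F\subseteq\mathrm{conv}(\mathcal F)$; combining both directions gives the equality, with both sides existing simultaneously.

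There is no serious obstacle here; the only thing to be careful about is keeping the boundedness bookkeeping consistent, which the mixture-preservation observation handles cleanly in one stroke. The proof is essentially a two-line verification once one recognizes that the two partial orders $\preceq_1$ and $\preceq_2$ are themselves closed under convex combinations on the dominated side.
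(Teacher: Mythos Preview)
Your proposal is correct and essentially the same as the paper's own proof. The paper also invokes the pointwise linearity of $F\mapsto F(x)$ and $F\mapsto\pi_F(x)$ to conclude that a convex combination is bounded (in the relevant sense) by one of its extreme constituents, and then applies Proposition~\ref{th-sharp}; your ``alternative'' argument via Proposition~\ref{th-sharp} is exactly this, and your integral-characterization variant is a trivially equivalent rephrasing.
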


Proposition \ref{prop-EMAM} illustrates that for the ${\rm MA}_1$ and ${\rm MA}_2$ approaches,
one can convert freely between any uncertainty set and its convex hull.
For WR risk evaluation, this does not hold in general.

\subsection{VaR and ES}
\label{sec:33}
Next we discuss the MA and WR approaches applied to   VaR and ES, and this will help us understand the inequality \eqref{eq-gap}.
The case of VaR, coupled with the partial order $\preceq_1$, is simple. By Proposition \ref{th-sharp}, for $\alpha\in (0,1)$ and any $\mathcal F$ that is $\preceq_1$-bounded,  $\VaR_\alpha ^{\rm WR}(\mathcal F) =  \VaR_\alpha^{{\rm MA}_1} (\mathcal F) $, and thus \eqref{eq-gap} holds as an equality in this specific setting; this   result will be collected in Theorem \ref{prop-cxES} below.

The case of ES is more illuminating.  Note that ES is consistent with respect to both $\preceq_1$ and $\preceq_2$.
First, we consider the MA approach with $\preceq_1$.
Since $$\ES_\alpha  ^{\rm WR}(\mathcal F) = \frac 1{1-\alpha}\sup_{F\in \mathcal F}  \int_\alpha ^1 F^{-1}(s)\d s
\le \frac 1{1-\alpha} \int_\alpha ^1 \sup_{F\in \mathcal F}  F^{-1}(s )\d s = \ES_\alpha^{{\rm MA}_1} (\mathcal F),$$
for \eqref{eq-gap} to hold as an equality, one needs to exchange the order of a supremum and an integral. Such an exchange, if legitimate, means that there exists $F\in \mathcal F$ such that $F^{-1}(s)\ge G^{-1}(s)$ for all $G\in \mathcal F$ and $s\in (\alpha,1)$, which is  a quite strong assumption unlikely to hold in applications.

Next, we consider the MA approach for ES with $\preceq_2$.
Recall a representation of $\ES_\alpha$ for $\alpha \in(0,1)$ in the celebrated work of \cite{RU02} and \cite{P00}, 
that is,
\begin{align} \ES_{\alpha}(F)
	=  \min_{x\in\R}\left\{x+\frac{1}{1-\alpha}\pi_{F}(x)\right\},~~~F\in \M_1.\label{eq:RU02}
\end{align}
Using \eqref{eq:RU02},  we obtain the WR robust ES value,  that is,
\begin{align}\label{eq:illustrateES}
	\ES_\alpha  ^{\rm WR}(\mathcal F)=\sup_{F\in \mathcal F} \ES_{\alpha}(F)
	&=\sup_{F\in \mathcal F}\min_{x\in\R}\left\{x+\frac{1}{1-\alpha}\pi_{F}(x)\right\}.
	%=\sup_{F\in B_k(F_0,\epsilon)}\min_{x\in A}\left\{x+\frac{1}{1-\alpha}\pi_{F}(x)\right\}\\
	%&=\min_{x\in A}\sup_{F\in B_k(F_0,\epsilon)}\left\{x+\frac{1}{1-\alpha}\pi_{F}(x)\right\}
	%\ge \min_{x\in \R}\sup_{F\in B_k(F_0,\epsilon)}\left\{x+\frac{1}{1-\alpha}\pi_{F}(x)\right\}\\
	%&=\ES_\alpha\left(F_{k,\epsilon|F_0}^2\right),
\end{align}
On the other hand, the $\preceq_2$-MA robust ES value can also be calculated using \eqref{eq:RU02} and \eqref{eq:computeV2} in Proposition \ref{th-sharp}, that is,
\begin{align}\label{eq:illustrateES2}
	\ES_\alpha ^{{\rm MA}_2} (\mathcal F)=\ES_\alpha \left(\bigvee_2 \mathcal F\right)
	&=\min_{x\in\R}\sup_{F\in \mathcal F} \left\{x+\frac{1}{1-\alpha}\pi_{F}(x)\right\},
\end{align}
where the second equality follows from  \eqref{eq:RU02} and  $\pi_{\bigvee_2\mathcal F} (x) = \sup_{F\in\mathcal F}   \pi_{ F} (x)$ by Proposition \ref{th-sharp} (ii).
The formulas \eqref{eq:illustrateES} and \eqref{eq:illustrateES2} imply that the WR and MA robust ES values can be seen as, respectively, the maximin and the minimax of the same bivariate objective function.
This observation immediately leads to
\begin{align}\label{eq:illustrateES3}  \ES_\alpha  ^{\rm WR}(\mathcal F) \le \ES_\alpha^{{\rm MA}_2}(\mathcal F),\mbox{ and equality holds if a minimax theorem holds}.
\end{align}
Therefore, although \eqref{eq-gap} is generally not an equality,   it may be an equality for $\ES_\alpha$  and $\preceq_2$ under certain conditions on $\mathcal F$.
In particular, as shown by \cite {ZF09}, if $\mathcal F$ is a convex polytope (see
Section \ref{sec:EMAM} for a definition) or a compact convex set of discrete cdfs, then \eqref{eq:illustrateES3} becomes an equality.
In the following theorem, we establish a more general sufficient condition to make \eqref{eq:illustrateES3} an equality, where $\ES_0=\mathfrak{m}$ and $\ES_\alpha$ for $\alpha \in (0,1)$ are treated separately.   We also collect the corresponding result for $\VaR_\alpha$ discussed above.
\tbl{Recall that $\rho^{\rm WR}$ and $\rho^{\rm MA}$ are defined in \eqref{eq-WC} and \eqref{eq-MA} for any risk measure $\rho$.}
%Specially, \eqref{eq:illustrateES3} is an equality for general convex sets $\mathcal F$ if $\alpha\in(0,1)$.
%If $\mathcal F$ is convex (not necessarily a polytope), then by the minimax theorem of \cite{F52}, \eqref{eq:illustrateES3} is an equality if the minimization over $x\in \R$ can be constrained to a compact subset of $\R$. We conclude this observation in the following proposition, where $\ES_0=\E$ and $\ES_\alpha$ for $\alpha \in (0,1)$ are treated separately.
\begin{theorem}\label{prop-cxES}
	Suppose that $\mathcal F\subseteq\mathcal M_1$.
	\begin{itemize}
		\item[(a)] If  $\sup_{F\in\mathcal F}\int_{\R}(x-y)_+\d F(y)
		\to 0 $ as $x\to -\infty$,  then $\mathfrak{m}^{\rm WR}(\mathcal F)=\mathfrak{m}^{{\rm MA}_2}(\mathcal F)$.
		
		\item[(b)] For $\alpha\in(0,1)$, if $\mathcal F $ is convex and $\preceq_2$-bounded,
		%and satisfies $\inf_{F\in\mathcal F}\VaR_\alpha(F)>-\infty$,
		then $\ES_\alpha^{\rm WR}(\mathcal F)=\ES_\alpha^{{\rm MA}_2}(\mathcal F)$.\item[(c)] For $\alpha\in(0,1)$, if $\mathcal F $ is $\preceq_1$-bounded,
		then $\VaR_\alpha^{\rm WR}(\mathcal F)=\VaR_\alpha^{{\rm MA}_1}(\mathcal F)$.
	\end{itemize}
\end{theorem}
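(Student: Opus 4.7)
Part (c) falls out of Proposition \ref{th-sharp}(a) directly: that result identifies the left quantile function of $\bigvee_1\mathcal F$ as the pointwise supremum $\sup_{F\in\mathcal F}F^{-1}$, so evaluating at $\alpha\in(0,1)$ gives $\VaR_\alpha^{{\rm MA}_1}(\mathcal F)=\sup_{F\in\mathcal F}\VaR_\alpha(F)=\VaR_\alpha^{\rm WR}(\mathcal F)$.

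The key identity for Part (a) is $\mathfrak m(F)=\pi_F(x)+x-q_F(x)$ for every $x\in\R$, where $q_F(x):=\E[(x-X)_+]=\int_{-\infty}^x F(t)\,\d t$; this is a restatement of $(X-x)_+-(x-X)_+=X-x$. Assuming $\mathcal F$ is $\preceq_2$-bounded so that $G:=\bigvee_2\mathcal F\in\mathcal M_1$, Proposition \ref{th-sharp}(b) allows the interchange $\sup_F\pi_F(x)=\pi_G(x)$, and the elementary bound $\sup_F(a_F-b_F)\ge\sup_F a_F-\sup_F b_F$ yields
\[
\mathfrak m^{\rm WR}(\mathcal F)=\sup_F\bigl[\pi_F(x)+x-q_F(x)\bigr]\;\ge\;\pi_G(x)+x-\sup_F q_F(x).
\]
Sending $x\to-\infty$, the hypothesis makes $\sup_F q_F(x)\to 0$, while $\pi_G(x)+x=\E[\max(Y,x)]\to\mathfrak m(G)$ for $Y\sim G$ by dominated convergence. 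Hence $\mathfrak m^{\rm WR}(\mathcal F)\ge\mathfrak m(G)=\mathfrak m^{{\rm MA}_2}(\mathcal F)$, and the reverse inequality is $\preceq_2$-consistency of $\mathfrak m$.

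Part (b) is the substantive piece. Via the representations \eqref{eq:illustrateES} and \eqref{eq:illustrateES2}, the claim reduces to the minimax equality
\[
\inf_{x\in\R}\sup_{F\in\mathcal F}f(x,F)=\sup_{F\in\mathcal F}\inf_{x\in\R}f(x,F),\qquad f(x,F):=x+\frac{\pi_F(x)}{1-\alpha},
\]
where $f$ is convex in $x$ and affine in $F$. The plan is to apply Sion's theorem after enlarging $\mathcal F$ to its closure $\overline{\mathcal F}$ in the topology of pointwise convergence of integrated survival functions. In this topology, the set $\mathcal K:=\{H\in\mathcal M_1:H\preceq_2 G\}$ identifies via $H\mapsto\pi_H$ with a family of decreasing convex functions dominated pointwise by $\pi_G$; it is closed in the product topology on $\R^{\R}$ and compact by Tychonoff's theorem, so $\mathcal K$ and $\overline{\mathcal F}\subseteq\mathcal K$ are compact convex. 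Since $F\mapsto\pi_F(x)$ is simply coordinate evaluation here, $f(x,\cdot)$ is continuous and affine on $\overline{\mathcal F}$ while $f(\cdot,F)$ is continuous and convex on $\R$. Sion's theorem (with compactness on the $F$-side) then gives $\inf_x\sup_{\overline{\mathcal F}}f=\sup_{\overline{\mathcal F}}\inf_x f$. The left-hand side equals $\ES_\alpha(G)$ once $\bigvee_2\overline{\mathcal F}=G$ is checked (it is sandwiched between $\bigvee_2\mathcal F=G$ and $G$ itself), and continuity of $\ES_\alpha$ on $\mathcal K$ in the same topology (from the local uniform convergence of pointwise-convergent decreasing convex functions) upgrades the right-hand side to $\sup_{F\in\mathcal F}\ES_\alpha(F)=\ES_\alpha^{\rm WR}(\mathcal F)$.

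The main obstacle is clearly the topological setup in Part (b): one must choose a single topology on $\mathcal M_1$ that simultaneously (i) makes $\mathcal K$ compact, (ii) keeps $F\mapsto\pi_F(x)$ continuous for each $x$ so that Sion's applies, and (iii) keeps $F\mapsto\ES_\alpha(F)$ continuous for $\alpha\in(0,1)$ so that passing to the closure does not change the worst-case value. The $\pi$-topology outlined above satisfies all three requirements; once that is in place, the minimax theorem handles the rest.
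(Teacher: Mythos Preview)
Your arguments for (a) and (c) are correct and essentially match the paper's.

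For (b), the Sion-based strategy is natural but the compactness claim fails, and this is fatal to the argument as written. Concretely, take any $G\in\mathcal M_1$ and $n\ge -\mathfrak m(G)$: then $\delta_{-n}\in\mathcal K$ (since $\pi_{\delta_{-n}}(x)=(-n-x)_+\le\pi_G(x)$, using $\pi_G(x)\ge\mathfrak m(G)-x$), yet $\pi_{\delta_{-n}}\to 0$ pointwise and $0$ is not $\pi_H$ for any $H\in\mathcal M_1$. So the image $\{\pi_H:H\in\mathcal K\}$ is \emph{not} closed in the product topology, hence not compact; the Tychonoff step does not land where you need it. Working instead in the function space $\overline\Pi$ (closure of $\{\pi_F:F\in\mathcal F\}$), Sion does give $\ES_\alpha^{{\rm MA}_2}(\mathcal F)=\sup_{\phi\in\overline\Pi}\inf_x[x+\phi(x)/(1-\alpha)]$, but you still owe the identity $\sup_{\overline\Pi}=\sup_\Pi$. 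Your appeal to continuity of $\ES_\alpha$ on $\mathcal K$ does not cover this, because the problematic limits lie \emph{outside} $\mathcal K$; and since $\phi\mapsto\inf_x f(x,\phi)$ is only upper semicontinuous (an infimum of continuous functions), density of $\Pi$ in $\overline\Pi$ gives the inequality in the wrong direction. That remaining step---showing the supremum over $\overline\Pi$ does not exceed the supremum over $\Pi$---is precisely the content of the theorem, so Sion has not reduced the problem.

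The paper circumvents this by a localization-and-approximation argument rather than a global minimax. It first invokes the known polytope case (\cite{ZF09}), then fixes an anchor $G\in\mathcal F$ to control the objective for $x<x_0$ (exploiting $(\pi_G)'_+(x)\to-1$ so that $x+\pi_G(x)/(1-\alpha)\to\infty$), uses that the trivial bound $x\ge c$ handles $x>c$, and on the remaining compact window $[x_0,c]$ approximates $\sup_{F\in\mathcal F}\pi_F$ uniformly within $\epsilon$ by $\sup$ over a finite subset (via uniform convergence of pointwise-convergent convex functions on compacta). Applying the polytope result to the convex hull of that finite set together with $G$ yields $\ES_\alpha^{\rm WR}(\mathcal F)\ge\ES_\alpha^{{\rm MA}_2}(\mathcal F)-\epsilon/(1-\alpha)$, and $\epsilon\downarrow 0$ finishes. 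The key insight you are missing is this reduction to a compact $x$-interval via a single element of $\mathcal F$, which is what makes the approximation finite-dimensional and avoids the loss of mass to $-\infty$.
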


The most useful part of Theorem \ref{prop-cxES} is (b), which offers a simple condition under which the WR robust ES value can be obtained by implementing the MA approach. This result generalizes Theorems 1 and  2 of \cite{ZF09} where the set $\mathcal F$ is a convex polytope and a compact convex set of discrete cdfs, respectively.
Without convexity of $\mathcal F$, for $\alpha\in(0,1)$, $\ES_\alpha^{\rm WR}(\mathcal F) = \ES_\alpha ^{{\rm MA}_2}(\mathcal F)$ may not hold, as illustrated by the following example.

\begin{example} \label{ex-1es}
	Let $\alpha\in (0,1)$. Let  $\epsilon=(1-\alpha)/2 $, $F_1=\delta_0$ and $F_2=(1-\epsilon)\delta_{-1/(1-\epsilon)-1}+\epsilon\delta_{1/\epsilon}$, where $\delta_{t}$ represents the point-mass at $t\in\R$.
	By computing $\max \{\pi_{F_1},\pi_{F_2}\}$, we get $\bigvee_2\{F_1,F_2\}=(1-\epsilon)\delta_{-1/(1-\epsilon)}+\epsilon\delta_{1/\epsilon}$ and $$\ES_{\alpha}\left(\bigvee_2\{F_1,F_2\}\right) = \frac 12 \left(\frac{1}{ \epsilon}  -\frac{1}{1- \epsilon}\right) > \frac{1}{2}\left(\frac{1}{\epsilon} - \frac{2-\epsilon}{1- \epsilon}  \right)_+=\max\left\{ \ES_\alpha(F_1) ,\ES_\alpha(F_2)\right\}.$$
	Hence, $\ES_\alpha^{\rm WR}( \{F_1,F_2\} )  < \ES_\alpha^{{\rm MA}_2} (\{F_1,F_2\})$.
\end{example}

The conditions on $\mathcal F$ in (a) and (b) of Theorem \ref{prop-cxES} do not imply each other. The following example shows that $\mathfrak{m}^{\rm WR}(\mathcal F)=\mathfrak{m}^{{\rm MA}_2}(\mathcal F)$ may not hold in case $\mathcal F$ does not satisfy the condition in (a) and satisfy the condition in (b).
\begin{example} For $n\in\N$, let  $F_n=(1/n)\delta_{-n}+(1-1/n)\delta_{0}$, and denote by $\mathcal F$ the convex hull of $\{F_n\}_{n\in\N}$. By computing $\max \{\pi_{F_n},n\in\N\}$, we have $\bigvee_2\mathcal F=\bigvee_2\{F_n\}_{n\in\N}=\delta_0$. Note that  $\mathfrak{m}(F)=-1$ for any $F\in\mathcal F$. Hence, $\mathfrak{m}(\bigvee_2\mathcal F)=0>-1=\sup_{F\in\mathcal F}\mathfrak{m}(F)$, that is, $\mathfrak{m}^{\rm WR}( \mathcal F)  <  \mathfrak{m} ^{{\rm MA}_2}(\mathcal F)$.
\end{example}

\subsection{Convexity and other properties} \label{sec:5-3}

We first introduce some standard properties of a risk measure $\rho:\mathcal M_1\to\R$ and its associated $\widetilde \rho:L^1\to \R$.  
{\it Translation invariance}:
$\widetilde \rho( X+c)=\widetilde \rho(X)+c$ for any $c\in\R$ and $X\in L^1$.
{\it Positive homogeneity}:
$\widetilde \rho({\lambda X} )=\lambda\widetilde \rho(X)$ for any $\lambda>0$ and $X\in L^1$.
{\it Convexity}:
$\widetilde \rho({\lambda X+(1-\lambda)Y}) \le \lambda \widetilde \rho(X) + (1-\lambda)\widetilde \rho(Y)$ for any $\lambda \in [0,1]$ and $X,Y\in L^1$.
{\it Lower semicontinuity}: $\liminf_{n\to\infty} \widetilde \rho(X_{n})\ge \widetilde \rho(X)$ if $X_n,X\in L^1$ for all $n$ and 
$X_n\stackrel{\rm d}\to X$ as $n\to\infty$, where $\stackrel{\rm d}\to$ denotes convergence in distribution.\footnote{Convergence in distribution corresponds to weak convergence on $\M_1$. Note that this lower semicontinuity is different from $L^1$-lower semicontinuity commonly used in the literature of risk measures (e.g., \cite{FS16}).}
\emph{Comonotonic additivity}: $\widetilde \rho({X+Y})=\widetilde \rho(X)+\widetilde \rho(Y)$ for any $X,Y\in L^1$ that are comonotonic.\footnote{\color{black}Random variables $X$ and $Y$ are \emph{comonotonic} if there exists $\Omega_0\in\mathcal F$ with $\p(\Omega_0)=1$ such that for all $\omega,\omega'\in\Omega_0$,
	\begin{align*}
		(X(\omega)-X(\omega'))(Y(\omega)-Y (\omega'))\ge 0.
	\end{align*}
}
A risk measure  is \emph{coherent}, as defined by \cite{ADEH99}, if it satisfies translation invariance, positive homogeneity, and convexity \tbl{(also monotonicity, which is assumed in Definition \ref{def-risk}).} 
All the properties are defined for both $\rho$ and $\widetilde \rho$.

It is well known that $\VaR_\alpha$ and $\ES_\alpha$, $\alpha\in(0,1)$ satisfy translation invariance, positive homogeneity, lower semicontinuity, comonotonic additivity, and
$\ES_\alpha$ further satisfies convexity.
Translation invariance,  positive homogeneity  and convexity are standard  properties with interpretations extensively discussed by \cite{ADEH99} and \cite{FS16}.
Lower semicontinuity,  called the prudence axiom by \cite{WZ20}, means that if the loss cdf is modeled using a
truthful approximation, then the approximated risk model should not underreport the capital requirement. Comonotonic additivity is popular in both the literature of decision theory (e.g., \cite{Y87} and \cite{S89}) and that of risk measures (e.g., \cite{K01}).
A coherent risk measure on $\M_1$, including ES, is automatically consistent with both  $\preceq_{1}$ and $\preceq_{2}$; see e.g., \cite{L05},
	\cite{FS16} and \cite{SDR21}.

%{\color{red}I have done lots of revision in this subsection. This section involves convexity of a risk measure, but it is introduced in Section \ref{sec:charac}. We may change the position of this part.}\\[10pt]
%Recall that we work on a nonatomic probability space $(\Omega,\mathcal B,\p)$. Let $\rho:\mathcal M_1\to \R$ be a risk measure, and  $\widetilde\rho:L^1\to\R$ be its associated random-variable based risk measure, i.e., $\widetilde\rho(X)=\rho(F_X)$ where $F_X$ is the distribution of $X$ under $\p$. Let $Q$ be a probability measure on $(\Omega,\mathcal B)$. 
Next we formulate uncertainty for $\widetilde \rho$ with different probabilities based on $\rho:\M_1\to \R$.
Let $\mathcal P$ be the set of all  probability measures on $(\Omega,\mathcal B)$ absolutely continuous with respect to $\p$.
Define $\widetilde \rho^Q(X)=\rho(F_X^Q)$ where $F_X^Q$ is denoted by the cdf of $X$ under $Q \in \mathcal P$, assumed to be in $\mathcal M_1$. In particular, we have $\widetilde\rho^{\p}(X)=\widetilde\rho(X)=\rho(F_X)$ for all $X\in L^1$.
%A risk measure $\widetilde\rho:L^1\to\R$ is law-invariant under $\p$, and $\rho:\mathcal M_1\to\R$ is defined as $\rho(F_X) =\widetilde\rho(X)$ where $F_X$ is the distribution of $X$ under $\p$. 
%Let $Q$ be a probability measure on $(\Omega,\mathcal B)$. 
%We define $\widetilde \rho^Q(X)=\rho(F_X^Q)$ where $F_X^Q$ is denoted by the distribution of $X$ under $Q$. In particular, we have $\widetilde\rho^{\p}(X)=\widetilde\rho(X)$. 
For  $\mathcal Q\subseteq \mathcal P$, we use $\mathcal F_{X|\mathcal Q}$ to represent the set of all possible cdfs of $X$ under the uncertainty set, i.e., $\mathcal F_{X|\mathcal Q}=\{F_X^Q: Q\in\mathcal Q\}$. 

%We also assume that $\mathcal F_{X|\mathcal Q}\subseteq \mathcal M_1$. 
In this setting, we study the properties of risk measures via WR approach 
\begin{align*}
	{\rm WR:} ~~~~~\widetilde\rho^{\rm WR}(X) =\rho^{\rm WR}\left(\mathcal F_{X|\mathcal Q}\right)=\sup_{F\in\mathcal F_{X|\mathcal Q}}\rho(F)
	=\sup_{Q\in\mathcal Q}\widetilde \rho^Q(X),
\end{align*}
and via ${\rm MA}_i$ approach for $i=1,2$,
\begin{align*}
	{\rm MA:} ~~~~~\widetilde\rho^{{\rm MA}_i}(X) =\rho^{{\rm MA}_i}\left(\mathcal F_{X|\mathcal Q}\right)=  {\color{black}\left\{\begin{array}{ll}\rho\left(\bigvee_i{\mathcal F_{X|\mathcal Q}}\right),~ &\textrm{if } \mathcal F_{X|\mathcal Q}  \textrm{ is $\preceq_i$-bounded;}\\
			\infty,&\textrm{otherwise}. \end{array}\right.}
\end{align*}

The two mappings  $\widetilde\rho^{{\rm WR}}$  and  $\widetilde\rho^{{\rm MA}_i}$ are well defined on the set $L=\bigcap_{Q\in \mathcal Q}L^1(\Omega,\mathcal B,Q)$.
This set always contains, for instance, all bounded random variables in $L^1 $. 
The next result gives properties of $\widetilde\rho^{{\rm WR}}$  and  $\widetilde\rho^{{\rm MA}_i}$ based on those of $\widetilde \rho$. 

\begin{theorem}\label{th-cxca}
	Let $\widetilde \rho:L^1\to\R$ be a risk measure, $\mathcal Q\subseteq \mathcal P$,
	and $L=\bigcap_{Q\in \mathcal Q}L^1(\Omega,\mathcal B,Q)$.
	% Suppose that $\X$ is a linear space of random variables on $(\Omega,\mathcal B)$ satisfying $\{F_X^Q: X\in \X, Q\in\mathcal Q\}\subseteq \mathcal M_1$. Then,
	The following statements hold.
	\begin{itemize}
		\item[(a)] If $\widetilde \rho$ is convex, then $\widetilde\rho^{\rm WR}$  and $\widetilde\rho^{{\rm MA}_2}$ are convex on $L$.
		\item[(b)] %If $\widetilde \rho$ is monotone and convex, then $\widetilde \rho_{\preceq_2}^{\rm MA}$ is convex.
		%\item[(iii)]
		If $\widetilde \rho$ satisfies
		%monotonicity and
		comonotonic additivity, then $ \widetilde\rho^{{\rm MA}_1}$ satisfies comonotonic additivity on $L$.
		\item[(c)] If $\widetilde \rho$ satisfies translation invariance (positive homogeneity), then $\widetilde\rho^{\rm WR}$, $\widetilde\rho^{{\rm MA}_1}$  and $\widetilde\rho^{{\rm MA}_2}$ satisfy translation invariance (positive homogeneity) on $L$.
		%		\item[(d)] If $\rho$ satisfies positive homogeneity, then $\widetilde\rho^{\rm WR}$, $\widetilde\rho^{{\rm MA}_1}$  and $\widetilde\rho^{{\rm MA}_2}$ satisfy positive homogeneity.
	\end{itemize}
\end{theorem}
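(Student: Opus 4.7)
The plan is to handle the three parts separately, each time pushing the structural property of $\widetilde\rho$ through the $\sup_Q$ or $\bigvee_i$ operation using law-invariance of $\rho$, atomlessness of $(\Omega,\mathcal B,\p)$, and the explicit descriptions of $\bigvee_i$ in Proposition \ref{th-sharp}. Throughout, I write $H_X=\bigvee_i\mathcal F_{X|\mathcal Q}$ and similarly $H_Y$, $H_Z$ for the relevant $i\in\{1,2\}$ and $Z$ (either $\lambda X+(1-\lambda)Y$ or $X+Y$).

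For (a), I start by showing $\widetilde\rho^Q$ is convex on $L^1(Q)$ for each $Q\in\mathcal Q$: given $X,Y\in L$ and $\lambda\in[0,1]$, atomlessness of $\p$ yields $(X',Y')$ on $(\Omega,\mathcal B,\p)$ whose $\p$-joint law equals the $Q$-joint law of $(X,Y)$, and law-invariance plus convexity of $\widetilde\rho$ deliver the convexity of $\widetilde\rho^Q$. Convexity of $\widetilde\rho^{\rm WR}=\sup_Q\widetilde\rho^Q$ follows as a pointwise supremum of convex functions. For the ${\rm MA}_2$ part, let $U$ be uniform on $[0,1]$ on $(\Omega,\mathcal B,\p)$, set $\widetilde X=H_X^{-1}(U)$, $\widetilde Y=H_Y^{-1}(U)$ (comonotonic with marginals $H_X, H_Y$) and $\widetilde Z=\lambda\widetilde X+(1-\lambda)\widetilde Y$. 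I aim to prove $H_Z\preceq_2 F_{\widetilde Z}$. For any $x_1,x_2$ with $\lambda x_1+(1-\lambda)x_2=x$, the decomposition $Z-x=\lambda(X-x_1)+(1-\lambda)(Y-x_2)$ and convexity of $t\mapsto t_+$, combined with Proposition \ref{th-sharp}(b), give
\begin{align*}
\pi_{H_Z}(x)=\sup_{Q\in\mathcal Q}\E^Q[(Z-x)_+]\le\lambda\pi_{H_X}(x_1)+(1-\lambda)\pi_{H_Y}(x_2).
\end{align*}
The key choice is $u^*=F_{\widetilde Z}(x)$ with $x_1=H_X^{-1}(u^*)$, $x_2=H_Y^{-1}(u^*)$: comonotonicity of $(\widetilde X,\widetilde Y)$ forces $\widetilde X-x_1$, $\widetilde Y-x_2$ and $\widetilde Z-x$ to share the same sign pathwise, so the subadditivity of $t_+$ becomes an equality and the right-hand side equals $\pi_{F_{\widetilde Z}}(x)$. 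Since a convex, law-invariant, $\preceq_1$-consistent risk measure is automatically $\preceq_2$-consistent (standard; see \cite{FS16}), I conclude
\begin{align*}
\rho(H_Z)\le\rho(F_{\widetilde Z})=\widetilde\rho(\widetilde Z)\le\lambda\widetilde\rho(\widetilde X)+(1-\lambda)\widetilde\rho(\widetilde Y)=\lambda\rho(H_X)+(1-\lambda)\rho(H_Y).
\end{align*}

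For (b), the starting point is the standard quantile identity for comonotonic sums: under any $Q\in\mathcal Q$, $(F_{X+Y}^Q)^{-1}(u)=(F_X^Q)^{-1}(u)+(F_Y^Q)^{-1}(u)$. Writing $X=g(W)$, $Y=h(W)$ with nondecreasing $g,h$, each of these quantile functions equals $g$, $h$ or $g+h$ composed with $(F_W^Q)^{-1}$, so monotonicity lets $\sup_Q$ commute with the evaluations; Proposition \ref{th-sharp}(a) then yields $H_{X+Y}^{-1}=H_X^{-1}+H_Y^{-1}$. Realizing comonotonic $\widetilde X=H_X^{-1}(U)$, $\widetilde Y=H_Y^{-1}(U)$ on $(\Omega,\mathcal B,\p)$ gives $\widetilde X+\widetilde Y\sim H_{X+Y}$, and comonotonic additivity of $\widetilde\rho$ closes the argument. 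Part (c) is routine: shifting $X$ by $c$ shifts every $F_X^Q$ by $c$ and, via Proposition \ref{th-sharp}, both $\bigvee_1\mathcal F_{X|\mathcal Q}$ and $\bigvee_2\mathcal F_{X|\mathcal Q}$ by $c$, and translation invariance of $\rho$ (inherited from $\widetilde\rho$ via law-invariance) finishes all three functionals; positive homogeneity is the analogous computation using $\lambda(F_X^Q)^{-1}$ and $\lambda\pi_{F_X^Q}(\cdot/\lambda)$. I expect the main obstacle to be the tightness step in (a) for ${\rm MA}_2$: the generic subadditivity bound on $\pi_{H_Z}$ is strict for arbitrary $(x_1,x_2)$, and only the specific comonotonic choice $x_1=H_X^{-1}(F_{\widetilde Z}(x))$, $x_2=H_Y^{-1}(F_{\widetilde Z}(x))$ makes the bound an equality and thereby secures $H_Z\preceq_2 F_{\widetilde Z}$.
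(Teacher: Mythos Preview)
Your proposal is correct, and for (a)-WR, (b) and (c) it matches the paper's proof closely (for (b) the paper cites Proposition~4.6 of \cite{WZ18} for the comonotonic additivity of $X\mapsto\sup_{Q}\widetilde{\VaR}_\alpha^Q(X)$, which you re-derive via the representation $X=g(W)$, $Y=h(W)$ with $W=X+Y$). The genuine difference is in (a) for $\widetilde\rho^{{\rm MA}_2}$. The paper does not prove $H_Z\preceq_2 F_{\widetilde Z}$ directly; instead it parametrizes $\mathcal M_1$ by the convex set $\mathcal G=\{g_F:\alpha\mapsto\ES_\alpha(F)\}$, defines $\widehat\rho(g_F)=\rho(F)$, shows $\widehat\rho$ is convex on $\mathcal G$ (again via a comonotonic coupling), and writes $\widetilde\rho^{{\rm MA}_2}(X)=\inf_{g\in\mathcal G}\{\widehat\rho(g)+\Theta(X,g)\}$ with $\Theta$ a $\{0,\infty\}$-valued constraint indicator that is jointly convex in $(X,g)$. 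Your route is more elementary: it isolates the single relation $H_Z\preceq_2 F_{\widetilde Z}$ and then applies $\preceq_2$-consistency and convexity of $\widetilde\rho$ once. The paper's machinery, on the other hand, is reusable---the same $\mathcal G$-representation drives the proof of Proposition~\ref{prop:cxMA}. One small caution on your tightness step: the choice $x_1=H_X^{-1}(u^*)$, $x_2=H_Y^{-1}(u^*)$ with $u^*=F_{\widetilde Z}(x)$ need not satisfy $\lambda x_1+(1-\lambda)x_2=x$ when $x$ lies outside the range of $F_{\widetilde Z}^{-1}$. A clean fix is to argue via $\ES_\alpha$ instead of $\pi$: for each $\alpha$, take $x_i$ to be a minimizer in \eqref{eq:RU02} for $H_{X_i}$ and set $x=\lambda x_1+(1-\lambda)x_2$; your subadditivity bound on $\pi_{H_Z}(x)$ then yields $\ES_\alpha(H_Z)\le\lambda\ES_\alpha(H_X)+(1-\lambda)\ES_\alpha(H_Y)=\ES_\alpha(F_{\widetilde Z})$ for all $\alpha$, which is equivalent to $H_Z\preceq_2 F_{\widetilde Z}$.
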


A direct result from Theorem \ref{th-cxca} is that $\widetilde\rho^{\rm WR}$ and $\widetilde\rho^{{\rm MA}_2}$ are coherent whenever $\widetilde \rho$ is a coherent risk measure.
Note that $\widetilde\rho^{\rm WR}$ may not be comonotonic additive even if $\widetilde \rho$ is. For instance, the mapping $\widetilde\rho^{\rm WR}:X\mapsto \max\{\E^{\p_1}[X], \E^{\p_2}[X]\}$ for $\p_1\ne \p_2$ is not comonotonic additive in general.
%, and $\rho^{\rm MA}_{\preceq_2}(X) = \max\{\E^{\p_1}[X], \E^{\p_2}[X]\}$ 

\begin{remark}
	Although all the risk measures considered in the paper are law-invariant with respect to $\p$, $\widetilde\rho^{\rm WR}$, $\widetilde\rho^{{\rm MA}_1}$ and $\widetilde\rho^{{\rm MA}_2}$ may not be law-invariant with respect to any one probability measure.
	%\footnote{We say a risk measure $\phi\mathcal X\to\R$ is law-invariant under a probability measure $Q$ if $\phi(X)=\phi(Y)$} 
	% The risk measure in Example \ref{ex:1-1} is not a law-invariant risk measure with respect to any  probability measure. 
	Nevertheless, $\widetilde\rho^{\rm WR}$, $\widetilde\rho^{{\rm MA}_1}$ and $\widetilde\rho^{{\rm MA}_2}$ are all law-invariant with respect to the probability set $\mathcal Q$ according to the definitions of \cite{DKW19} and \cite{WZ18}.
\end{remark}

\section{MA approach in {\color{black}distributionally} robust optimization}\label{sec:34}

In this section, we consider optimization problems
in which uncertainty is addressed by the WR   and ${\rm MA}_2$ approaches, i.e., $\preceq_2$ is chosen as the partial order.
Let $A$ be a set of possible actions, $f:A\times \R^d  \to \R$ be a loss function, $\mathcal F$ be a set of cdfs on $\R^d$, and $\mathcal F_{\mathbf a,f}$ is defined by \eqref{eq-mulUS}.
We consider the following two optimization problems 
\begin{equation}
	\label{eq:robustopt}
	\min_{\mathbf a\in A}~~ \rho^{\rm WR} (\mathcal F_{\mathbf a,f} ) \mbox{~~~~~and~~~~~}\min_{\mathbf a\in A} ~~\rho^{{\rm MA}_2} (\mathcal F_{\mathbf a,f}).
\end{equation}
Recall the equivalency between the WR  optimization and \eqref{prob:minmax}, i.e., $$
\min_{\mathbf a\in A}\rho^{\rm WR} (\mathcal F_{\mathbf a,f})=\min_{\mathbf a\in A}\sup_{F\in\mathcal F}\widetilde{\rho}^F(f(\mathbf a,\mathbf X)).
$$ 
It is straightforward to see that the WR optimization is a convex problem if $A$, $\widetilde  \rho$  and $f$ (in its first argument) are convex \tbl{because $\widetilde \rho$ is monotone and the inner problem is the supremum of a set of convex functionals}. We demonstrate in the following result that the ${\rm MA}_2$  optimization is convex under the same conditions.

\begin{proposition}\label{prop:cxMA}
	Let $A$ be a convex set.  
	Suppose that a risk measure $\rho:\mathcal M_1\to \R$ is convex, and $f: A\times \R^d\to \R$ is convex in its first argument. Then, the mapping $\mathbf a\mapsto \rho^{\rm MA_2}(\mathcal F_{\mathbf a,f})$ is convex. As a consequence, the ${\rm MA}_2$  optimization problem in \eqref{eq:robustopt} is a convex one.
\end{proposition}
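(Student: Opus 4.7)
The plan uses Proposition \ref{th-sharp}(b), which identifies $\pi_{\bigvee_2 \mathcal F_{\mathbf a,f}}(x)=\sup_{F_{\mathbf X}\in\mathcal F}\E^{F_{\mathbf X}}[(f(\mathbf a,\mathbf X)-x)_+]$, together with the Rockafellar-Uryasev formula \eqref{eq:RU02} for ES and a Kusuoka-style dual representation for general convex risk measures. First, I would establish joint convexity of $(\mathbf a,x)\mapsto \pi_{\bigvee_2 \mathcal F_{\mathbf a,f}}(x)$ on $A\times\R$: for every realization $\mathbf X(\omega)$ the map $(\mathbf a,x)\mapsto f(\mathbf a,\mathbf X(\omega))-x$ is jointly convex (convex in $\mathbf a$ by hypothesis and linear in $x$); composing with the convex nondecreasing function $y\mapsto y_+$ preserves joint convexity; and both taking expectation under $F_{\mathbf X}$ and taking the pointwise supremum over $\mathcal F$ preserve this property.

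For $\rho=\ES_\alpha$ with $\alpha\in(0,1)$, \eqref{eq:RU02} yields
\[
\ES_\alpha^{{\rm MA}_2}(\mathcal F_{\mathbf a,f})=\min_{x\in\R}\left\{x+\frac{1}{1-\alpha}\pi_{\bigvee_2 \mathcal F_{\mathbf a,f}}(x)\right\},
\]
where the bracketed expression is jointly convex in $(\mathbf a,x)$ (a linear term in $x$ plus a jointly convex term). Since partial minimization of a jointly convex function produces a convex function in the remaining argument, $\mathbf a\mapsto \ES_\alpha^{{\rm MA}_2}(\mathcal F_{\mathbf a,f})$ is convex; the mean case $\alpha=0$ follows analogously. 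For a general convex, law-invariant, $\preceq_1$-consistent $\rho$ with the mild regularity (Fatou/lsc) standard for such risk measures, the Kusuoka representation gives $\rho(F)=\sup_{\mu\in\mathcal D}\{\int_{[0,1)}\ES_\alpha(F)\,\d\mu(\alpha)-\beta(\mu)\}$ for a suitable family $\mathcal D$ of probability measures on $[0,1)$ and a penalty $\beta$. Substituting $F=\bigvee_2 \mathcal F_{\mathbf a,f}$, the integrand is convex in $\mathbf a$ for each $\alpha$ by the ES step; integration against $\mu$ preserves convexity; and the supremum over $\mu$ shifted by $-\beta(\mu)$ preserves it as well. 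Hence $\mathbf a\mapsto \rho^{{\rm MA}_2}(\mathcal F_{\mathbf a,f})$ is convex on $A$.

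The main obstacle is the passage from ES to a general convex $\rho$: it hinges on the Kusuoka representation and therefore on the usual regularity assumptions attached to law-invariant convex risk measures. An alternative, more self-contained route would be to derive a direct dual representation of $\rho$ as a penalized supremum of linear functionals of $\pi_F$ with nonnegative coefficients (using $\preceq_2$-consistency, which is itself implied by convexity combined with monotonicity and law invariance) and then apply the first-step joint convexity argument to each such functional. Either way, ES serves as the essential building block and the remaining operations are convexity-preserving.
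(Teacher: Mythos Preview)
Your proposal is correct and reaches the same conclusion, but the route differs from the paper's. You build \emph{upward} via a Kusuoka (supremum) representation: first establish joint convexity of $(\mathbf a,x)\mapsto \pi_{\bigvee_2\mathcal F_{\mathbf a,f}}(x)$, deduce convexity of $\mathbf a\mapsto \ES_\alpha^{{\rm MA}_2}(\mathcal F_{\mathbf a,f})$ by partial minimization in \eqref{eq:RU02}, and then take a penalized supremum over mixing measures $\mu$. The paper instead works \emph{downward} via an infimum over ES profiles: it parametrizes $\mathcal M_1$ by the convex set $\mathcal G=\{g_F:\alpha\mapsto\ES_\alpha(F)\}$, defines $\widehat\rho(g_F)=\rho(F)$, writes $\rho^{{\rm MA}_2}(\mathcal F_{\mathbf a,f})=\inf_{g\in\mathcal G}\{\widehat\rho(g)+\Theta(\mathbf a,g)\}$ with $\Theta$ the convex indicator of the constraint $\sup_{F\in\mathcal F}\widetilde{\ES}_\alpha^F(f(\mathbf a,\mathbf X))\le g(\alpha)$ for all $\alpha$, and shows $\widehat\rho$ is convex on $\mathcal G$ by a comonotone-coupling argument (if $g_i=g_{F_i}$, set $X_i=F_i^{-1}(U)$ so that $\lambda g_1+(1-\lambda)g_2=g_{F_{\lambda X_1+(1-\lambda)X_2}}$ and apply convexity of $\widetilde\rho$). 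Your joint-convexity-of-$\pi$ step is exactly what drives the paper's convexity of $\Theta$, so the two arguments share that ingredient.

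Regarding your stated obstacle: the regularity you worry about is in fact available under the proposition's hypotheses. Since $\widetilde\rho$ is real-valued, convex and monotone on $L^1$, it is automatically $L^1$-continuous (Kaina--R\"uschendorf), which both yields $\preceq_2$-consistency (Cerreia-Vioglio et al.) and is enough for a Kusuoka representation on $\mathcal M_1$. So your supremum route closes without extra assumptions. The paper's infimum route has the minor advantage of bypassing the Kusuoka machinery entirely, using only the bijection $F\leftrightarrow g_F$ and the comonotone trick; your route has the advantage of making the ES case explicit and the general case a one-line supremum.
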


We note that the convexity of both WR and ${\rm MA}_2$  optimization problems may not always coincide; see Section \ref{app:ex}.

Next, we detail the application of the ${\rm MA}_2$ approach to a broad class of coherent risk measures.
%let us introduce some necessary notation and Assumption.
%Let $\mathbf e$ be the vector of ones. 
%For any vector $\mathbf x$, we use $x_i$ to represent the $i$th component of $\mathbf x$. 
We introduce an extra assumption before diving in.
%and recall that $\Delta_n=\{\bm\lambda\in [0,1]^n:\sum_{i=1}^n\lambda_i=1\}$ for $n\in\N$ is the standard simplex in $\R^n$.

\begin{assumption}\label{assump:Babove}
	For any $\mathbf a\in A$, the uncertainty set $\mathcal F_{\mathbf a,f}$ is $\preceq_2$-bounded.
\end{assumption}

Assumption \ref{assump:Babove} guarantees that $\bigvee_2\mathcal F_{\mathbf a,f}$ exists for any $\mathbf a\in A$.  
%Assumption \ref{assump:Babove} is not overly restrictive and is commonly satisfied in a wide range of situations. 
This assumption is mild.  
For example,  it holds if we consider a common compact support of all involved cdfs.

%when $\mathcal F$ comprises a finite set of distributions, and every distribution in $\mathcal F_{\mathbf a,f}$ is integrable for $\mathbf a\in A$, Assumption \ref{assump:Babove} is satisfied since $(\mathcal M_1,\preceq_2)$ is a lattice, as detailed in Appendix \ref{app:lattice}. If all distributions in $\mathcal F$ are confined to a compact set, then the support of all distributions in $\mathcal F_{\mathbf a,f}$ also in a compact set for any $\mathbf a\in A$, reinforcing the validity of Assumption \ref{assump:Babove}. 

A law-invariant coherent risk measure has a Kusuoka representation (\cite{K01} and \cite{S13}), which can be approximated by the following  form\footnote{ 
	While our primary focus is on law-invariant coherent risk measures, we note that
	there is no extra difficulty to deal with a law invariant convex risk measure which has the form $\rho (F) = \sup_{w\in \mathbb W}\{\sum_{j=1}^{n^w} p_j^w {\rm ES}_{\alpha_j^w} (F)-\beta(w)\}$ where $\beta:\mathbb W\to\R$ is a penalty function.}
\begin{align} 
	\label{eq:0818-3}
	\rho (F) = \sup_{w\in \mathbb W}~\sum_{j=1}^{n^w} p_j^w {\rm ES}_{\alpha_j^w} (F),
\end{align}
where $\mathbb W$ is a finite index set, and  for each $w\in \mathbb W$, $n^w\in \N$,  $(p_1^w,\dots,p_{n^w}^w) \in\Delta_{n^w}$ and $(\alpha_1^w,\dots,\alpha_{n^w}^w)\in [0,1]^{n^w}$.  
{We formally show in Proposition \ref{pro:A4} in Section \ref{app:optMA-1-1} that \eqref{eq:0818-3} can approximate any law-invariant coherent risk measure as the size of $\mathbb W$ goes infinity.} 
For the risk measure in \eqref{eq:0818-3}, we obtain a reformulation of 
the ${\rm MA}_2$  optimization.

%Additionally, the case of portfolio selection, i.e., $f(\mathbf a,\mathbf x)=\mathbf a^{\top}\mathbf x$, where $\mathcal F$ is a Wasserstein ball or a mean-variance uncertainty set will be investigated in Section \ref{sec:US}. In that section, we establish the explicit form of $\bigvee_2\mathcal F_{\mathbf a,f}$.

\begin{theorem}\label{th-MAcr}
	Suppose that Assumption \ref{assump:Babove} holds, and
	let $\rho$ be given by \eqref{eq:0818-3}. The ${\rm MA}_2$  optimization problem in \eqref{eq:robustopt} can be reformulated as the following problem 
	%with variables $\mathbf a\in A$ and $h\in\R$; $\mathbf h^w, \mathbf x^w\in\R^{n^w}$ for $w\in\mathbb W$:
	\begin{align}\label{prob-MAcr}
		\min_{\mathbf a\in A,\,x_j^{w}, h,h_{j}^{w}\in\R} ~ &  h\\
		{\rm s.t.} ~~ &  \sum_{j=1}^{n^w} p_j^w h_j^w \le h,~~ w\in\mathbb{W}\notag\\
		&   x_j^{w}+ \frac1{1-\alpha_j^w} \sup_{F\in\mathcal F}\E^F[( f({\bf a},\mathbf X)-x_j^{w})_+]   \le h_j^w,~~ j\in[n^w],~w\in\mathbb{W}.\notag
	\end{align}
\end{theorem}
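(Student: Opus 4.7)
My plan is to unwind the $\mathrm{MA}_2$ objective layer by layer using the two ingredients already established in the excerpt, namely the Rockafellar--Uryasev formula \eqref{eq:RU02} and the supremum characterization of $\pi_{\bigvee_2\mathcal F}$ in Proposition \ref{th-sharp}(b), and then apply a routine epigraph reformulation.

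First I would write, for each fixed $\mathbf a\in A$,
\[
\rho^{\mathrm{MA}_2}(\mathcal F_{\mathbf a,f})
=\rho\!\left(\bigvee\nolimits_2 \mathcal F_{\mathbf a,f}\right)
=\sup_{w\in\mathbb W}\sum_{j=1}^{n^w}p_j^w\,\mathrm{ES}_{\alpha_j^w}\!\left(\bigvee\nolimits_2\mathcal F_{\mathbf a,f}\right),
\]
using Assumption \ref{assump:Babove} to guarantee that $\bigvee_2\mathcal F_{\mathbf a,f}\in\mathcal M_1$ and the Kusuoka-type representation \eqref{eq:0818-3} of $\rho$. Next, I would apply \eqref{eq:RU02} to each $\mathrm{ES}_{\alpha_j^w}$ and use Proposition \ref{th-sharp}(b), which gives $\pi_{\bigvee_2\mathcal F_{\mathbf a,f}}(x)=\sup_{F\in\mathcal F}\mathbb E^F[(f(\mathbf a,\mathbf X)-x)_+]$, to obtain
\[
\mathrm{ES}_{\alpha_j^w}\!\left(\bigvee\nolimits_2\mathcal F_{\mathbf a,f}\right)
=\inf_{x_j^w\in\R}\left\{x_j^w+\frac{1}{1-\alpha_j^w}\sup_{F\in\mathcal F}\mathbb E^F\!\left[(f(\mathbf a,\mathbf X)-x_j^w)_+\right]\right\}.
\]

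Substituting back, the $\mathrm{MA}_2$ robust optimization becomes a joint infimum in $(\mathbf a,\{x_j^w\})$ of a supremum over $w\in\mathbb W$ of a weighted sum indexed by $j$. Introducing epigraph variables $h_j^w$ to dominate each inner bracket and $h$ to dominate each weighted sum $\sum_j p_j^w h_j^w$ is standard: since $\mathbb W$ is finite, the sup over $w$ is handled by finitely many constraints $\sum_{j=1}^{n^w}p_j^w h_j^w\le h$, and the infima over the $x_j^w$ collapse into the joint minimization. Combining the outer $\min_{\mathbf a\in A}$ with these gives exactly the formulation in \eqref{prob-MAcr}; the equivalence in both directions is a direct check that any feasible solution of \eqref{prob-MAcr} yields an upper bound on the $\mathrm{MA}_2$ objective and vice versa, so the optimal values coincide and optima of one problem can be read off from the other.

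The step requiring the most care is the interchange leading to the outer $\inf$ over $(\mathbf a, x_j^w)$: one must verify that distributing the infimum over $x_j^w$ past the finite sup over $w$ and the finite weighted sum over $j$ is valid, which is immediate because each $x_j^w$ is a distinct variable attached to one term, but I would still state this carefully to avoid sign mistakes. A secondary point to address is the case $\alpha_j^w\in\{0,1\}$ if permitted in \eqref{eq:0818-3}: for $\alpha_j^w=0$, the RU minimum is the mean, which can be absorbed into the constraint by letting $x_j^w\to-\infty$ in the limit, or handled by noting $\mathrm{ES}_0(\bigvee_2\mathcal F_{\mathbf a,f})=\mathfrak m(\bigvee_2\mathcal F_{\mathbf a,f})\le\sup_F\mathbb E^F[(f(\mathbf a,\mathbf X)-x)_+]+x$ for every $x$. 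Apart from this bookkeeping, no deeper obstacle arises.
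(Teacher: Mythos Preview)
Your proposal is correct and follows essentially the same route as the paper's proof: both expand $\rho^{\mathrm{MA}_2}(\mathcal F_{\mathbf a,f})$ via the Kusuoka-type representation \eqref{eq:0818-3}, apply the Rockafellar--Uryasev formula \eqref{eq:RU02} together with Proposition~\ref{th-sharp}(b) to rewrite each $\mathrm{ES}_{\alpha_j^w}(\bigvee_2\mathcal F_{\mathbf a,f})$ as an infimum over $x_j^w$, and then introduce the epigraph variables $h_j^w$ and $h$ to obtain \eqref{prob-MAcr}. Your extra bookkeeping on the endpoint cases $\alpha_j^w\in\{0,1\}$ is not addressed in the paper's proof but does no harm.
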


Consider a finite uncertainty set $\mathcal F=\{F_1,\dots,F_n\}$, where $n\in\N$. Under this setting, we reformulate
the ${\rm MA}_2$  optimization, which is a direct consequence of Theorem \ref{th-MAcr}. 
The two optimization problems in \eqref{eq:robustopt} can be respectively reformed as
\begin{align} \label{eq:0820-1}
	\mbox{$\mathrm{WR}$}: \min_{\mathbf a\in A,\,x_{i,j}^{w}, h,h_{i,j}^{w}\in\R}~&  h \\
	{\rm s.t.}~~&  \sum_{j=1}^{n^w} p_j^w h_{i,j}^{w} \le h,~~ i\in[n],~ w\in\mathbb{W} \nonumber\\
	& x_{i,j}^{w}+\frac{1}{1-\alpha_j^w}\E^{F_i}[(f(\mathbf a,\mathbf X)-x_{i,j}^{w})_+] \le h_{i,j}^{w},~~  i\in[n],~j\in[n^w],~w\in\mathbb{W}; \nonumber
\end{align} 
\begin{align}\label{eq:0820-2}
	\mbox{$\mathrm{MA}_2$}:  \min_{\mathbf a\in A,\,x_j^{w}, h,h_{j}^{w}\in\R}~&  h \\
	{\rm s.t.}~~&    \sum_{j=1}^{n^w} p_j^w h_j^w \le h,~~ w\in\mathbb{W} \nonumber\\
	&x_j^{w}+\frac{1}{1-\alpha_j^w}\E^{F_i}[(f(\mathbf a,\mathbf X)-x_j^{w})_+] \le h_j^w,~~  i\in[n],~j\in[n^w],~ w\in\mathbb{W}. \nonumber
\end{align} 
%\begin{proposition}\label{thm:TAES-1} 
If $\rho=\ES_\alpha$ for $\alpha\in(0,1)$, then the two optimization problems in \eqref{eq:robustopt} can be reformed as
\begin{align*}
	&\mbox{$\mathrm{WR}$}: \min_{\mathbf a\in A,\,x_i,h\in\R}~h~~~~  
	{\rm s.t.}~~ x_i+\frac{1}{1-\alpha}\E^{F_i}[(f(\mathbf a,\mathbf X)-x_i)_+] \le h,~ i\in[n]; \\
	&\mbox{$\mathrm{MA}_2$}: \min_{\mathbf a\in A,\,x,   h \in\R}~h~~~~  
	{\rm s.t.}~~  x+ \frac{1}{1-\alpha}\E^{F_i}[(f(\mathbf a,\mathbf X)-x )_+] \le h ,~  i\in[n].
\end{align*}
%\end{proposition}

\tbl{A comparison between the computation time of   \eqref{eq:0820-1} and \eqref{eq:0820-2} based on a specific example is given in Section \ref{app:ex}.} 
\begin{remark}
	By Theorem \ref{th-MAcr} (or using Proposition \ref{prop-EMAM}), the ${\rm MA}_2$   optimization retains the form \eqref{eq:0820-2}   if we substitute the uncertainty set $\mathcal F=\{F_1,\dots,F_n\}$ with any set $\widetilde{\mathcal F}$, whose convex hull has extreme points  $F_1,\ldots,F_n$. %This can also be corroborated by Proposition \ref{prop-EMAM} because ${\rm conv} \mathcal F_{\mathbf a,f}={\rm conv} \widetilde{\mathcal F}_{\mathbf a,f}$ for any $\mathbf a\in A$.
\end{remark}

%In Appendix \ref{app:ex}, we establish that Problem \eqref{prob-MAcr} could potentially be tractably solved under
%other commonly employed examples of uncertainty sets, namely, box uncertainty and ellipsoidal uncertainty that are associated with discrete distributions.
For other settings of  Problem \eqref{prob-MAcr}  that are solvable by convex program, see Section \ref{app:ex}.
%We present other commonly employed examples of convex uncertainty sets $\mathcal F$, namely, box uncertainty and ellipsoidal uncertainty that are associated with discrete distributions. We establish that Problem \eqref{prob-MAcr} could potentially be tractably solved under these types of uncertainty sets. 
%The same uncertainty sets have been explored for the WR optimization problem when considering the risk measure of ES in \cite{ZF09}. Indeed, 
The two optimization problems in \eqref{eq:robustopt} are equivalent under a convex uncertainty set $\mathcal F$ when $\rho={\rm ES}_\alpha$ for all $\alpha\in(0,1)$. This equivalence stems from the fact that $\mathcal F_{\mathbf a,f}$ is convex for all $\mathbf a\in A$ which combining with Theorem \ref{prop-cxES} imply
$\ES_{\alpha}^{\rm WR}(\mathcal F_{\mathbf a,f})=\ES_{\alpha}^{\rm MA_2}(\mathcal F_{\mathbf a,f})$ for all $\mathbf a\in A$. Many results in the literature rely on converting between $\ES_{\alpha}^{\rm WR}(\mathcal F_{\mathbf a,f})$ and $\ES_{\alpha}^{\rm MA_2}(\mathcal F_{\mathbf a,f})$;
see e.g., \cite{ZF09}, \cite{N10} and \cite{C11}.

\section{Wasserstein   and  mean-variance uncertainty sets}\label{sec:US}
%Risk measures that satisfy equivalence in  convex or arbitrary  uncertainty sets are characterized in the previous sections.
In this section, we focus on three specific and popular uncertainty sets:
				%(a) mixture distribution uncertainty;
				(a) univariate Wasserstein uncertainty, (b) multivariate Wasserstein uncertainty, and (c) mean-variance uncertainty.
				%In case (a), we will consider the question: what kind of risk measure will be indifferent between model aggregation approach and worst-case approach with  mixture distribution uncertainty? Surprisingly, $\VaR$ can be characterized with $\preceq_1$, while $\ES$ can be characterized with $\preceq_2$.
				We obtain explicit formulas for
				the  robust models as well as the WR and MA robust risk evaluation. 
				Furthermore, the  portfolio selection problem will be explored based on these two robust approaches.
				% in Proposition  \ref{prop-MV} and Theorem \ref{prop-WU}  as well as  WR and the MA risk evaluations with $\preceq_1$ and $\preceq_2$, that is, $\rho^{\rm WR}(\mathcal F)$, $\rho^{\rm MA}_{\preceq_1} (\mathcal F)$ and $\rho^{\rm MA}_{\preceq_2} (\mathcal F)$ for an uncertainty set $\mathcal F$.
				%Moreover, Theorem \ref{th:7} establishes a convenient connection between multivariate  and univariate settings of Wasserstein uncertainty  in  portfolio selection.
				%Furthermore, we will compare model aggregation approach and worst-case risk measure approach with some classical risk measures.
				%For convenience, we use $\bigvee_1$ and $\bigvee_2$ to simplify notations $\bigvee_1^{\M_0}$ and $\bigvee_2^{\M_1}$, respectively, and define
				%$$
				%\rho^{\uparrow}(\mathcal F)=\sup_{F\in\mathcal F}\rho(F),~~\rho^{1\uparrow}(\mathcal F)=\rho\left(\bigvee_1 \mathcal F\right),~~\rho^{2\uparrow}(\mathcal F)=\rho\left(\bigvee_2 \mathcal F\right),
				%$$
				%where the left one is the worst-case risk evaluation, and the middle and right ones are model aggregation risk evaluation under $\preceq_1$ and $\preceq_2$, respectively.
				
				For results in this section and Section \ref{sec:Numerical},  we define a few classes of risk measures other than $\VaR$ and $\ES$.
				% which are given in Table \ref{tab-R} and Figure \ref{fig-R}.
				The Range Value-at-Risk (RVaR), proposed by \cite{C10}, is defined as
				$$
				{\rm RVaR}_{\alpha,\beta}(F)=\frac{1}{\beta-\alpha}\int_\alpha^\beta \VaR_s(F)\d s,~~0\le\alpha<\beta\le 1.
				$$
				Special and limiting cases of ${\rm RVaR}_{\alpha,\beta}$ include $\ES_\alpha$ with $\beta=1$  and $\VaR_\beta$  with $\alpha\uparrow\beta\in (0,1) $.   If $\beta<1$, then ${\rm RVaR}_{\alpha,\beta}$ is not $\preceq_2$-consistent by e.g., \citet[Theorem 3]{WWW20}. % was investigated in whose result shows that the value of worst-case ${\rm RVaR}_{\alpha,\beta}$ is independent of the parameter $\beta$.
				The power-distorted (PD) risk measure (\cite{W95, CM09}) is defined as
				$$
				{\rm PD}_k(F)=\int_0^1 ks^{k-1}\VaR_s(F)\d s,~~k\ge 1.
				$$
				The PD risk measure is coherent.
				The expectile,  proposed by \cite{NP87} and denoted by $\ex_\alpha$, is defined as the unique solution $t = \ex_\alpha(F_X) \in \R$ to the following equation,
				$$
				\alpha\E[(X-t)_+]=(1-\alpha)\E[(X-t)_-],~~X\in L^1.
				$$
				%is defined as the unique minimizer of the following optimization problem,
				%\begin{equation}\label{eq-expectile}
				%\ex_\alpha(F_X)=\argmin_{x\in \R}\left \{ \alpha\E[(X-x)_+^2]+(1-\alpha)\E[(X-x)_-^2]\right\},
				%\end{equation}
				The risk measure $\ex_\alpha$ is   coherent  (and thus $\preceq_2$-consistent) if and only if   $\alpha\in [1/2,1)$; we will use this specification.%This can also be derived by Theorem 2 of \cite{L18} as  an expectile is coherent. % whose worst-case approach

				\subsection{Uncertainty induced by the univariate Wasserstein metric}\label{sec:62}
				We first focus on an uncertainty set induced by the Wasserstein metric.
				Let $\M_p$ be the set of cdfs on $\R$ with finite  $p$th moment and $F_0\in \M_p$ be a pre-specified cdf used as benchmark.
				For $p\ge 1$, the {\color{black}$p$-Wasserstein} metric between $F$ and $F_0$ is defined as
				\begin{equation}\label{eq:defW}{W_p}(F,F_0)=\left(\int_0^1 |F^{-1}(s)-F_0^{-1}(s)|^p\d s\right)^{1/p}.\end{equation}
				The corresponding uncertainty set is, for a parameter $\epsilon\ge0$,  \begin{align}\label{eq-WU}
					\mathcal F_{p,\epsilon}(F_0) =\left\{F\in \M_p: W_p(F,F_0) \le\epsilon\right\},
				\end{align}
				which is a convex set.
				The parameter $\epsilon$ represents the magnitude of uncertainty.
				Denote by $$F_{p,\epsilon|F_0}^1=\bigvee_1 \mathcal F_{p,\epsilon}(F_0)~~~{\rm and }~~~ F_{p,\epsilon|F_0}^2=\bigvee_2 \mathcal F_{p,\epsilon}(F_0) $$ the supremum of $\mathcal F_{p,\epsilon}(F_0)$ with respect to $\preceq_1$ and $\preceq_2$, respectively.
				In the following result,
				we will identify an explicit form of the suprema $F_{p,\epsilon|F_0}^1$ and $F_{p,\epsilon|F_0}^2$
				in terms of left quantile functions.
				
				%by applying Proposition 4 of \cite{{LMWW21}}. %The detail proof will be moved in Appendix.
				
				%Wasserstein uncertainty:
				%\begin{align*}
				%B_k(F_0,\epsilon):=\left\{F: W_k(F,F_0):=\left(\int_0^1 |F^{-1}(s)-F_0^{-1}(s)|^k\d s\right)^{\frac1k}\le\epsilon\right\},~~{\rm for~} k\ge 1,~\epsilon\ge0,
				%\end{align*}
				%where the notation $W_k(F,F_0)$ represents $k$-th Wasserstein distance between $F$ and $F_0$.
				
				\begin{theorem}\label{prop-WU}
					Suppose that $\epsilon> 0$, $p\ge 1$ and $ F_0 \in \M_p$.
					\begin{itemize}
						\item[(a)] The left quantile function of $F_{p,\epsilon|F_0}^1$ is   uniquely determined by % solution to the following equation
						\begin{align}\label{eq-WU1}
							\int_\alpha^1 \left((F_{p,\epsilon|F_0}^1)^{-1}(\alpha)-F_0^{-1}(s)\right)_+^p\d s =\epsilon^p,~~~\alpha\in(0,1).
						\end{align}
						\item[(b)] The set $\mathcal F_{1,\epsilon}(F_0)$ is not $\preceq_2$-bounded. For $p>1$, the left quantile function of $F_{p,\epsilon|F_0}^2$ is given by
						\begin{align}\label{eq-WU2}
							(F_{p,\epsilon|F_0}^2)^{-1}(\alpha)=F_0^{-1}(\alpha)
							+\left(1-\frac1p\right)(1-\alpha)^{-1/p}\epsilon,~~~\alpha\in(0,1).
						\end{align}
					\end{itemize}
				\end{theorem}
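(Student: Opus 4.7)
The plan is to prove both parts via the explicit representations in Proposition \ref{th-sharp}. For part (a), Proposition \ref{th-sharp}(a) reduces the claim to showing that $\sup_{F\in\mathcal{F}_{p,\epsilon}(F_0)} F^{-1}(\alpha)$ equals the unique $t=t(\alpha)$ solving \eqref{eq-WU1}. Uniqueness is immediate: the map $t\mapsto\int_\alpha^1(t-F_0^{-1}(s))_+^p\d s$ is continuous on $\R$, vanishes on $(-\infty, F_0^{-1}(\alpha^+)]$, is strictly increasing above this point, and tends to $\infty$ as $t\to\infty$. For the upper bound $F^{-1}(\alpha)\le t(\alpha)$, I would exploit monotonicity of $F^{-1}$: for $s\ge\alpha$, $F^{-1}(\alpha)\le F^{-1}(s)$, so
\[
\int_\alpha^1(F^{-1}(\alpha)-F_0^{-1}(s))_+^p\d s\le\int_0^1|F^{-1}(s)-F_0^{-1}(s)|^p\d s\le\epsilon^p.
\]
For the matching lower bound, for small $\delta>0$ I would construct $F_\delta$ with left quantile $F_0^{-1}(\cdot)$ on $(0,\alpha-\delta]$ and $\max(F_0^{-1}(\cdot),t(\alpha-\delta))$ on $(\alpha-\delta,1)$. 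This is nondecreasing and left-continuous with $W_p(F_\delta,F_0)=\epsilon$ and $F_\delta^{-1}(\alpha)=t(\alpha-\delta)$, so continuity of $\alpha\mapsto t(\alpha)$ yields $\sup_F F^{-1}(\alpha)\ge t(\alpha)$.

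For the non-boundedness assertion in part (b), I would exhibit $F_n\in\mathcal{F}_{1,\epsilon}(F_0)$ with left quantile $F_0^{-1}(s)$ on $(0,1-1/n^2]$ and $F_0^{-1}(s)+n^2\epsilon$ on $(1-1/n^2,1)$. Then $W_1(F_n,F_0)=\epsilon$ by direct computation, and splitting the integral defining $\pi_{F_n}(x)$ gives $\pi_{F_n}(x)\to\pi_{F_0}(x)+\epsilon$ for every fixed $x\in\R$. Any $\preceq_2$-upper bound $G\in\M_1$ of $\{F_n\}$ would have to satisfy $\pi_G(x)\ge\pi_{F_0}(x)+\epsilon\ge\epsilon$ for every $x$, contradicting $\pi_G(x)\to 0$ as $x\to\infty$ (which holds for any cdf in $\M_1$).

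For part (b) with $p>1$, let $q^*$ denote the candidate function in \eqref{eq-WU2}. It is nondecreasing and left-continuous, and $\int_0^1(1-s)^{-1/p}\d s<\infty$, so $q^*\in L^1(0,1)$ and $q^*$ is the left quantile of a unique cdf $F^*\in\M_1$. By Proposition \ref{th-sharp}(b), it suffices to show $\pi_{F^*}(x)=\sup_{F\in\mathcal{F}_{p,\epsilon}(F_0)}\pi_F(x)$ for every $x\in\R$. For the upper bound, write $\pi_F(x)=\int_\beta^1(F^{-1}(s)-x)\d s$ with $\beta=F(x)$, decompose the integrand as $(F^{-1}(s)-F_0^{-1}(s))+(F_0^{-1}(s)-x)$, and apply H\"older's inequality on $(\beta,1)$ to the first piece to obtain $\pi_F(x)\le g(\beta)$, where
\[
g(\beta)=\int_\beta^1(F_0^{-1}(s)-x)\d s+\epsilon(1-\beta)^{1-1/p}.
\]
Maximizing $g$ yields the stationary condition $F_0^{-1}(\beta^*)+(1-1/p)\epsilon(1-\beta^*)^{-1/p}=x$, i.e., $q^*(\beta^*)=x$, and a direct substitution gives $\sup_\beta g(\beta)=\pi_{F^*}(x)$. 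The matching lower bound follows by taking $F_x^{**}\in\mathcal{F}_{p,\epsilon}(F_0)$ with quantile $F_0^{-1}(\cdot)$ on $(0,\beta^*]$ and $F_0^{-1}(\cdot)+\epsilon(1-\beta^*)^{-1/p}$ on $(\beta^*,1)$: its Wasserstein cost is exactly $\epsilon^p$ and a short calculation gives $\pi_{F_x^{**}}(x)=\pi_{F^*}(x)$. Uniqueness of a cdf from its integrated survival function then identifies $\bigvee_2\mathcal{F}_{p,\epsilon}(F_0)=F^*$, whose left quantile is $q^*$. The main obstacle is executing this H\"older-plus-optimization step cleanly: verifying that the constructed $F_x^{**}$ satisfies $F_x^{**}(x)=\beta^*$ so that the upper and lower bounds meet at the same value, and covering the boundary case where the maximum of $g$ is attained at $\beta^*=0$ (which corresponds to $x$ below the essential range of $q^*$).
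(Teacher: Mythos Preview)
Your approach is correct in outline and essentially self-contained, but it differs substantially from the paper's argument. The paper does not work directly with $\pi_F$ or with explicit quantile constructions. For part (a) it simply invokes Proposition~4(i) of \cite{LMWW21}. For the non-boundedness when $p=1$ it asserts the exact identity $\sup_{F\in\mathcal F_{1,\epsilon}(F_0)}\pi_F(x)=\pi_{F_0}(x)+\epsilon$ and notes that its limit at $+\infty$ is $\epsilon>0$. For $p>1$ it first uses convexity of the Wasserstein ball together with Theorem~\ref{prop-cxES}(b) to obtain $\ES_\alpha^{\rm WR}(\mathcal F_{p,\epsilon}(F_0))=\ES_\alpha^{\rm MA_2}(\mathcal F_{p,\epsilon}(F_0))$, then cites Proposition~4(ii) of \cite{LMWW21} for the closed form $\ES_\alpha(F_0)+(1-\alpha)^{-1/p}\epsilon$, and finally differentiates the resulting identity $\int_\alpha^1 \VaR_s(F_{p,\epsilon|F_0}^2)\,\d s=\int_\alpha^1 \VaR_s(F_0)\,\d s+(1-\alpha)^{1-1/p}\epsilon$ in $\alpha$ to read off \eqref{eq-WU2}.

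Your route is more elementary in that it avoids both the external reference and the $\ES$-equivalence machinery of Theorem~\ref{prop-cxES}: the H\"older bound plus the concave maximization of $g(\beta)$ recovers $\sup_F\pi_F(x)$ directly, and your explicit maximizers show attainment. This is arguably more transparent and stands on its own. The paper's proof, in exchange, is much shorter and ties the formula to the equivalence theory developed earlier in the paper. One small imprecision in your part (a): you write $F_\delta^{-1}(\alpha)=t(\alpha-\delta)$, but in fact $F_\delta^{-1}(\alpha)=\max\bigl(F_0^{-1}(\alpha),t(\alpha-\delta)\bigr)$; since $t(\alpha)>F_0^{-1}(\alpha)$ and $t$ is continuous, this still yields $\sup_F F^{-1}(\alpha)\ge t(\alpha)$ after letting $\delta\downarrow 0$, so the conclusion is unaffected. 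The boundary and jump cases you flag in part (b) are genuine but routine: concavity of $g$ pins down $\beta^*$ via one-sided derivative conditions, and at $\beta^*=0$ both $g(0)$ and $\pi_{F^*}(x)$ equal $\mathfrak m(F_0)+\epsilon-x$.
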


	Since the cdfs $F_{p,\epsilon|F_0}^1$ and $F_{p,\epsilon|F_0}^2$, as well as their quantile functions, are obtained explicitly in  Theorem \ref{prop-WU}, the robust risk values
	$\rho^{{\rm MA}_1} (\mathcal F_{p,\epsilon}(F_0)) $ and  $\rho^{{\rm MA}_2} (\mathcal F_{p,\epsilon}(F_0)) $ can be computed in a straightforward manner. On the other hand, $\rho^{\rm WR}(\mathcal F_{p,\epsilon}(F_0))$ is often difficult to compute if the risk measure is complicated,
%	 {\color{black}(see \cite{PHM16} for a number of risk measures that are paired with uncertainty sets to produce tractable reformulations under WR approach)}, 
	although there are some results in the literature {\color{black} that} considered the WR approach for special choices of risk measures. {\color{black}\cite{PHM16} presented  combinations of risk measures and uncertainty sets that allow for computationally tractable reformulations.} 
%	see \cite{LMWW21} for distortion risk measures, and \cite{GK16}, \cite{EK18} and  \cite{BM19} for robust stochastic optimization.
				%$\rho^{\rm WR}(\mathcal F_{p,\epsilon}(F_0))$ is often difficult to compute if the risk measure is complicated, and
				%are limited except for some special choices of risk measures (see \cite{LMWW21} for distortion risk measures and ).
				%Therefore, in this setting, the MA approach is more tractable.
				
				As a feature of the robust model,  both $F_{p,\epsilon|F_0}^1$  and $F_{p,\epsilon|F_0}^2$ are  heavy-tailed  even if the benchmark distribution $F_0$ is light-tailed.
				Heavy-tailed distributions are common for modeling financial data; see e.g., \cite{MFE15}.
				Indeed, $(F_{p,\epsilon|F_0}^1)^{-1}\ge (F_{p,\epsilon|F_0}^2)^{-1}$, and $(F_{p,\epsilon|F_0}^2)^{-1}$ is the sum of the quantile $F_0^{-1}$ and a Pareto quantile with tail index $p>1$.
				Some other observations on the supremum distributions in Theorem \ref{prop-WU} are made
				in Remark \ref{rem:omit-WD}.
				
				Noting that the Wasserstein uncertainty set  $\mathcal F_{p,\epsilon}(F_0)$ is
				convex, we have
				$\ES_\alpha^{\rm WR}(\mathcal F_{p,\epsilon}(F_0))
				=\ES_\alpha^{{\rm MA}_2}(\mathcal F_{p,\epsilon}(F_0))$ by Theorem \ref{prop-cxES}. A simulation result in case of $p=2$, $\epsilon=0.1$ and  a  standard normal benchmark distribution  is reported in Section \ref{app:G}.

				%Figure \ref{quantile_W} shows the left-quantile functions
				%of the supremum in the case that $k=2$ and $\epsilon=0.1$
				%and the baseline distribution is the standard normal distribution function.
				%In Figure \ref{ES_power_W}, we illustrate the behavior of the WR approach
				%and the MA approach when the risk measure is chosen as ES
				%or ${\rm PD}_k$ under the same case as Figure \ref{quantile_W}.
				%Noting that the Wasserstein uncertainty set  $\mathcal F_{k,\epsilon}(F_0)$ is
				%convex with bounded quantiles, we have
				%$\ES_\alpha^{\rm WR}(\mathcal F_{k,\epsilon}(F_0))
				%=(\ES_\alpha)^{\rm MA}_{\preceq_2}(\mathcal F_{k,\epsilon}(F_0))$ by Proposition \ref{prop-cxES}.
				
				%It is interesting to see that $\ES_\alpha^{\rm WR}(B_k(F_0,\epsilon))
				%=(\ES_\alpha)^{\rm MA}_{\preceq_2}(B_k(F_0,\epsilon))$. %Indeed, this phenomenon can be verified by applying Theorem \ref{th-ESM} as the Wasserstein uncertainty set satisfies all conditions in that theorem .

				\subsection{Multivariate Wasserstein uncertainty}\label{sec:63}
				For $p\ge 1$ and $a\ge 1$, let $ \M_p(\R^d)$ be the set of all cdfs on $\R^d$ with finite $p$th moment in each component.
				The {\color{black}$p$-Wasserstein} metric on $\R^d$ between $F,G\in \mathcal M_p(\R^d)$ is defined as
				$${W^d_{a,p}}(F,G)= \inf_{F_{\mathbf X}=F, F_{\mathbf Y}=G}\left( \E[\Vert \mathbf X-\mathbf Y\Vert ^p_a]\right)^{1/p},$$
				where $\Vert \cdot \Vert_a$ is the $L^a$ norm on $\R^d$; see e.g., \cite{BCZ22}.
				If $d=1$, then $W_{a,p}^d$ is $W_p$ in \eqref{eq:defW} where the infimum   is attained by comonotonicity via the Fr\'echet-Hoeffding inequality.
				Define the   Wasserstein uncertainty set for a benchmark distribution $F_0\in \mathcal M_p(\R^d)$ as, similar to \eqref{eq-WU},
				\begin{align}\label{eq-WU-multi}
					\mathcal F_{a,p,\epsilon}^d(F_0) =\left\{F\in \M_p(\R^d): W^d_{a,p}(F,F_0) \le\epsilon\right\}, ~~\epsilon\ge0.
				\end{align}

				We focus on a portfolio selection problem, {\color{black} i.e., the loss function is chosen as  the linear function $f(\mathbf w,\mathbf x)=\mathbf w^\top \mathbf x$.} The  
				portfolio risk is $\rho(F_{\mathbf w^\top \mathbf X})$ for some weight vector $\mathbf w\in \R^d$ and risk   vector $\mathbf X$ with unknown cdf in the multi-dimensional Wasserstein ball $\mathcal F_{a,p,\epsilon}^d(F_0)$.
				The univariate uncertainty set for the cdf of $ \mathbf w^\top \mathbf X$ is denoted
				by
				\begin{align}\label{eq-WU-multi2}
					\mathcal F_{\mathbf w, a,p,\epsilon} (F_0) =\left\{F_{\mathbf w^\top \mathbf X}:  F_{\mathbf X}\in
				    \mathcal F_{a,p,\epsilon}^d(F_0) \right\},~~~F_0\in \mathcal M_p(\R^d).
				\end{align}
%				We consider $p>1$ since $\mathcal F_{1,\epsilon}(F_0)$ is not $\preceq_2$-bounded  as shown by Theorem \ref{prop-WU}. 
In the following theorem, we show that the problem of multivariate Wasserstein uncertainty can be conveniently converted to a univariate setting.
				%Recall that $\Vert \mathbf w \Vert = (\mathbf w ^\top \mathbf w)^{1/2}.   $
				\begin{theorem}\label{th:7}
					For $\epsilon\ge 0$, $a\ge 1$ and $p \ge 1$,   random vector $\mathbf X$ with $F_{\mathbf X}\in \M_p(\R^d)$ and $\mathbf w\in  \R^d$, we have
{\color{black}$$\mathcal F _{\mathbf w, a,p,\epsilon }(F_{\mathbf X}) = \mathcal F_{p,\epsilon\Vert\mathbf w \Vert_b}(F_{\mathbf w^\top \mathbf X}) ,$$}
 where $b$ satisfies $1/a+1/b=1$.
					%$$\mathcal F _{\mathbf w, k,\epsilon }(F_{\mathbf X}) \subseteq  \mathcal F_{k,\Vert \mathbf w \Vert  \epsilon   }(F_{\mathbf w^\top \mathbf X})
					%\mbox{~~~and~~~}
					%\bigvee \mathcal F_{k,\Vert \mathbf w \Vert  \epsilon  }(F_{\mathbf w^\top \mathbf X}) = \bigvee  \mathcal F _{\mathbf w, k,\epsilon }(F_{\mathbf X}).
					%$$
					As a consequence,  for any  $\rho:\M_1\to \R$ and $i\in\{1,2\}$,
					%partial order $\preceq$ on $\M_1$, %Let $\rho:\M_1\to \R$  be   consistent with $\preceq$ which is either $\preceq_1$ or $\preceq_2$.
					$$\rho^{\rm WR}  (\mathcal F _{\mathbf w,a,p,\epsilon }(F_{\mathbf X}))
					= \rho^{\rm WR} (\mathcal F_{p,\epsilon\Vert \mathbf w \Vert_b    }(F_{\mathbf w^\top \mathbf X}))
					\mbox{~~~and~~~} \rho^{{\rm MA}_i}  (\mathcal F _{\mathbf w, a,p,\epsilon }(F_{\mathbf X}))
					= \rho^{{\rm MA}_i} (\mathcal F_{p,\epsilon \Vert \mathbf w \Vert_b   }(F_{\mathbf w^\top \mathbf X})).
					$$
				\end{theorem}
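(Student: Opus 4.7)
The plan is to establish the set equality
\[\mathcal F_{\mathbf w, a, p, \epsilon}(F_{\mathbf X}) = \mathcal F_{p, \epsilon \Vert \mathbf w \Vert_b}(F_{\mathbf w^\top \mathbf X})\]
by proving the two inclusions separately. The four identities for $\rho^{\rm WR}$ and $\rho^{{\rm MA}_i}$ at the end of the statement are then immediate, because each of these functionals depends on its input only through the underlying set of univariate cdfs.

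The direction $\subseteq$ is the easy one and rests on H\"older's inequality. Given any $F_{\mathbf Y}\in \mathcal F^d_{a,p,\epsilon}(F_{\mathbf X})$ and any coupling $(\mathbf X',\mathbf Y')$ with marginals $F_{\mathbf X}$ and $F_{\mathbf Y}$, the pointwise bound $|\mathbf w^\top(\mathbf X'-\mathbf Y')|\le \Vert \mathbf w\Vert_b\,\Vert \mathbf X'-\mathbf Y'\Vert_a$ gives $\E[|\mathbf w^\top(\mathbf X'-\mathbf Y')|^p]^{1/p}\le \Vert \mathbf w\Vert_b\, \E[\Vert \mathbf X'-\mathbf Y'\Vert_a^p]^{1/p}$. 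Since $(\mathbf w^\top\mathbf X',\mathbf w^\top\mathbf Y')$ is itself a valid coupling of $(F_{\mathbf w^\top\mathbf X},F_{\mathbf w^\top\mathbf Y})$, taking the infimum over couplings of both sides yields $W_p(F_{\mathbf w^\top\mathbf Y},F_{\mathbf w^\top\mathbf X})\le \Vert \mathbf w\Vert_b\, W^d_{a,p}(F_{\mathbf Y},F_{\mathbf X})\le \epsilon\Vert \mathbf w\Vert_b$, which is exactly the claimed inclusion.

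The direction $\supseteq$ is the substantive step, and I would prove it by an explicit construction of a multivariate coupling. Given $F\in \M_p$ with $W_p(F,F_{\mathbf w^\top \mathbf X})\le \epsilon\Vert \mathbf w\Vert_b$, set $V=\mathbf w^\top \mathbf X$ and, using nonatomicity of $(\Omega,\mathcal B,\p)$, construct on the same probability space a random variable $Z\sim F$ that is comonotonic with $V$: write $V=F_V^{-1}(U)$ a.s.\ for some uniform $U$ and let $Z=F^{-1}(U)$, so $\E[|Z-V|^p]=W_p(F,F_V)^p\le \epsilon^p\Vert \mathbf w\Vert_b^p$. Next, choose a vector $\mathbf v\in\R^d$ that attains the H\"older equality case, i.e.\ $\mathbf w^\top \mathbf v=1$ and $\Vert \mathbf v\Vert_a=1/\Vert \mathbf w\Vert_b$; explicit closed-form expressions are available for every $a\in[1,\infty]$, including the endpoints $a=1$ (take $\mathbf v$ supported on a coordinate where $|w_i|$ is maximal) and $a=\infty$ (take $\mathbf v$ proportional to the sign vector of $\mathbf w$). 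Then $\mathbf Y=\mathbf X+\mathbf v(Z-V)$ satisfies $\mathbf w^\top \mathbf Y=Z\sim F$, and evaluating $W^d_{a,p}$ at this particular coupling gives
\[W^d_{a,p}(F_{\mathbf Y},F_{\mathbf X})^p\le \E[\Vert\mathbf v(Z-V)\Vert_a^p]=\Vert \mathbf v\Vert_a^p\,\E[|Z-V|^p]\le \epsilon^p,\]
so $F_{\mathbf Y}\in \mathcal F^d_{a,p,\epsilon}(F_{\mathbf X})$ and hence $F=F_{\mathbf w^\top \mathbf Y}\in \mathcal F_{\mathbf w,a,p,\epsilon}(F_{\mathbf X})$.

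The main obstacle is the comonotonic realization of $Z$ on the same probability space as $\mathbf X$: this is where the standing nonatomicity assumption on $(\Omega,\mathcal B,\p)$ is indispensable, via the standard fact that any random variable $V$ on a nonatomic space admits a uniform $U$ with $V=F_V^{-1}(U)$ a.s. The remaining items---verifying the H\"older-extremal choice of $\mathbf v$, checking $F_{\mathbf Y}\in \M_p(\R^d)$ (which holds since $\mathbf X\in \M_p(\R^d)$ and $Z-V$ has finite $p$-th moment), and passing from the set equality to the identities for $\rho^{\rm WR}$ and $\rho^{{\rm MA}_i}$---are routine.
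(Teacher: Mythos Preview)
Your proposal is correct and follows essentially the same approach as the paper: H\"older's inequality for the inclusion $\subseteq$, and for $\supseteq$ the construction $\mathbf Y=\mathbf X+\mathbf v(Z-\mathbf w^\top\mathbf X)$ with $\mathbf v$ the H\"older-extremal direction (the paper writes $\mathbf v=\mathbf w^{b/a}/(\mathbf w^\top\mathbf w^{b/a})$ explicitly for $a>1$ and uses a coordinate vector for $a=1$, which are precisely your choices). The only presentational difference is in producing the one-dimensional coupling $Z$: you invoke the comonotone coupling via nonatomicity, whereas the paper appeals to closedness of the $L^p$-ball around $\mathbf w^\top\mathbf X$; your route is arguably the cleaner justification of the same step.
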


				Intuitively, by
				Theorem \ref{th:7},
				the multi-dimensional Wasserstein ball has the simple property of a usual Euclidean ball, that its affine projection is a lower-dimensional ball (this intuitive observation is not completely trivial because of the infimum   in the Wasserstein metric).
				This result allows us
				to solve the MA robust portfolio optimization by applying Theorem \ref{prop-WU}. 
				%In particular, the optimization problem under the ${\rm MA}_2$ approach is more tractable, because Theorem \ref{prop-WU} provides a direct and concise formula for the robust distribution with respect to $\preceq_2$.
				%A real-data illustration of this method is reported in Section \ref{sec:Numerical}.
			%	Although the WR approach admits the same dimension reduction technique, we note that  it is not easy to compute for general risk measures even for the univariate Wasserstein uncertainty.
%The result  in
%Theorem \ref{th:7}  applies immediately to problems of the form $\rho(f(\mathbf w^\top \mathbf Z))$ for a real function $f$.

{\color{black}				
%Below we present,
%in the special case that $F_0$ is an elliptical benchmark distribution or $\rho$ is a coherent distortion risk measure, the portfolio optimization problem via the ${\rm MA}_2$ approach can be written as a simple optimization problem. 
%Additionally, we will demonstrate the utility of our reduction result (Theorem \ref{th:7}) in solving the WR portfolio optimization problem when 
%$\rho$ is a coherent distortion risk measure. This serves to underscore the pivotal role that Theorem \ref{th:7} plays.
We illustrate Theorem \ref{th:7}  with the   setting of an elliptical benchmark distribution. 
%\emph{Elliptical benchmark distribution.}\\
%First, we focus on an Elliptical benchmark distribution.
An elliptical distribution with characteristic generator $\psi$ is denoted by ${\rm E}(\bm\mu,\Sigma,\psi)$, which has normal and t-distributions as special cases; see \citet[Chapter 6]{MFE15}  for a precise definition.
Let the benchmark distribution $F_0 = {\rm E}(\bm\mu,\Sigma,\psi)$
and denote by $F_{\psi} ={\rm E}(0,1,\psi)$. 
Define a Pareto distribution $G_{p}$ with $G_{p}^{-1}(\alpha)=(1-\alpha)^{-1/p}$ for $\alpha\in(0,1)$.
By Theorems \ref{prop-WU} and \ref{th:7}, it holds that
\begin{align}\label{prob-MAellipsoid}
	\min_{{\mathbf w}\in\mathcal W}:~~ \rho^{{\rm MA}_2}\left( \mathcal F _{\mathbf w,a,p,\epsilon }(F_{0})\right)
	=\widetilde\rho\left(\mathbf w^\top\bm\mu + \sqrt{\mathbf w^\top\Sigma \mathbf w} F_{\psi}^{-1}(U)
	+\left(1-\frac{1}{p}\right)\epsilon\|\mathbf w\|_b G_p^{-1}(U)\right),
\end{align}
where $U$ is a uniform random variable on $[0,1]$. % and $\mathcal W$ is the feasible set of $\bf w$. %, and $b$ satisfies $1/a+1/b=1$. 
The WR approach does not admit an explicit formula like \eqref{prob-MAellipsoid}, 
unless $\rho$ is a coherent distortion risk measure; see \cite{W14} and \cite{PHM16}.}

				Consider a coherent distortion risk measure defined by $\rho_h(F)=\int_0^1\VaR_s(F)\d{h(s)}$, where $h:[0,1]\to[0,1]$ is increasing and convex with $h(0)=1-h(1)=0$. In this case, by applying Proposition 4 of \cite{LMWW21} and Theorems \ref{prop-WU} and \ref{th:7}, we obtain the following reformulations \begin{align}
					&\min_{{\mathbf w}\in\mathcal W}: ~~\rho_h^{\rm WR} \left( \mathcal F _{\mathbf w,p,\epsilon }(F_{0})\right)= {\mathbf w}^\top\bm\mu+\rho_h(F_{\psi})\sqrt{{\mathbf w}^\top\Sigma {\mathbf w}}+\zeta(p,h)\epsilon\Vert \mathbf w\Vert_b, \label{eq:WassWR}
				\end{align} 
				(this is also obtained by \cite{W14}) and
				\begin{align}
					\min_{{\mathbf w}\in\mathcal W}:~~ \rho_h^{\rm MA_2}\left( \mathcal F _{\mathbf w,p,\epsilon }(F_{0})\right)= {\mathbf w}^\top\bm\mu+\rho_h(F_{\psi})\sqrt{{\mathbf w}^\top\Sigma {\mathbf w}}+\xi(p,h)\epsilon\Vert \mathbf w\Vert_b, \label{eq:WassMA}
				\end{align}
				where %$\mathcal W$ is the feasible set of $\bf w$, $b$ satisfies $1/a+1/b=1$ and
				%$\Phi$ is the standard normal distribution, and
				$$
				\zeta(p,h)=\left(\int_0^1(h'_+(s))^{p/(p-1)}\d{s}\right)^{(p-1)/p}~~{\rm and}~~
				\xi(p,h)=\frac{p-1}{p}\int_0^1 (1-s)^{-1/p}\d{h(s)}.
				$$
				In particular, \eqref{eq:WassWR} and \eqref{eq:WassMA} are second-order conic program (SOCP) when $a=2$; see e.g., \cite{BN01}.
				Coherence of $\rho$ (convexity of $h$) is essential for the WR formula  in \eqref{eq:WassWR} because general formulas are not available for non-convex distortions under Wasserstein uncertainty. In contrast, the MA formula \eqref{eq:WassMA} holds for
				any distortion risk measures (even if they may  not be $\preceq_2$-consistent) which   directly follows from Theorems \ref{prop-WU} and   \ref{th:7}.
				%A real-data illustration of this method is reported in Section \ref{sec:Numerical} where we use the
				Numerical and empirical results on the above   approaches for robust portfolio selection are presented in Section \ref{sec:72}.

\subsection{Uncertainty induced by mean-variance information}\label{sec:MV}
				%\subsubsection{Supremum for mean-variance uncertainty}\label{subsec-sup}
				Next, we pay attention to an uncertainty set defined by the first two moments, that is, for some $\mu\in\R $ and $\sigma> 0$, the set
				\begin{align}\label{eq-MVU}
					\mathcal F_{\mu,\sigma}:=\left\{F\in \M_2: \mathfrak{m}(F)=\mu~{\rm and}~{\rm var}(F)=\sigma^2\right\},
				\end{align}
				where
				%$\M_2$ is the set of distributions on $\R$ with finite second moment, and
				$\mathfrak{m}(F)$ and ${\rm var}(F)$ represent the mean and the variance of $F$, respectively.
				The two equalities in \eqref{eq-MVU} can be safely replaced by inequalities $\mathfrak{m}(F)\le \mu$ and ${\rm var}(F)\le \sigma^2$ in the problems we consider, and we omit the  formulation with inequalities.
				The WR robust risk value for different risk measures based on this uncertainty set $\mathcal F_{\mu,\sigma}$ has been extensively studied in literature, see
				%Uncertainty set $\mathcal F_{\mu,\sigma}$ is widely used in worst-case approach. For instance,
				e.g., \cite{G03}, %considered this uncertainty set for worst-case VaR.
				\cite{ZF09}, \cite{N10}, \cite{C11}, %investigated worst-case ES.
				\cite{L18} and the references therein. % studied the problem for a big class of risk measures called law-invariant coherent risk measures.
				
				% In the following,
				%we aim to derive the explicit form of the supremum of $\mathcal F_{\mu,\sigma}$. Let us
				
				%Our purpose is to obtain the explicit forms of $F_{\mu,\sigma}^1$ and $F_{\mu,\sigma}^2$. First,
				For the MA approach, we will identify the supremum of  $\mathcal F_{\mu,\sigma}$ with respect to $\preceq_1$ and $\preceq_2$.
				Theorem 1 of \cite{G03} and Corollary 1.1 of \cite{J77} (see also \citet[Theorem 1.10.7]{MS02}) yield
				$$
				\sup_{F\in\mathcal F_{\mu,\sigma}}\VaR_{\alpha}(F)=\mu+\sigma\sqrt{\frac{\alpha}{1-\alpha}},
				~~~\alpha\in(0,1)
				$$
				and
				$$
				\sup_{F\in\mathcal F_{\mu,\sigma}}\pi_F(x)=\frac12\left(\mu-x+\sqrt{(x-\mu)^2+\sigma^2}\right),
				~~x\in\R.
				$$
				Denote by $F_{\mu,\sigma}^1=\bigvee_1 \mathcal F_{\mu,\sigma}$ and $F_{\mu,\sigma}^2=\bigvee_2 \mathcal F_{\mu,\sigma}$ the supremum of $\mathcal F_{\mu,\sigma}$ with respect to $\preceq_1$ and $\preceq_2$, respectively.
				Using Proposition \ref{th-sharp} and above two equations, we immediately get the explicit expressions of  $F_{\mu,\sigma}^1$ and  $F_{\mu,\sigma}^2$.
				
				\begin{proposition}\label{prop-MV} 
					Suppose that $\mu\in\R$ and $\sigma>0$. We have
					\begin{align}\label{eq-F1}
						& F_{\mu,\sigma}^1(x)=\frac{(x-\mu)^2}{\sigma^2+(x-\mu)^2},~~x\ge \mu;
				\\ \label{eq-F2}
					&	F_{\mu,\sigma}^2(x)=\frac12\left(1+\frac{x-\mu}{\sqrt{(x-\mu)^2+\sigma^2}}\right),~~x\in\R.
					\end{align}
				\end{proposition}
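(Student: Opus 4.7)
The plan is to apply Proposition \ref{th-sharp} directly to $\mathcal F_{\mu,\sigma}$, taking as input the two classical tight envelopes displayed just above the proposition: the Cantelli-type bound
$\sup_{F\in\mathcal F_{\mu,\sigma}}\VaR_\alpha(F)=\mu+\sigma\sqrt{\alpha/(1-\alpha)}$ for $\alpha\in(0,1)$, and the mean-variance upper envelope
$\sup_{F\in\mathcal F_{\mu,\sigma}}\pi_F(x)=\tfrac12(\mu-x+\sqrt{(x-\mu)^2+\sigma^2})$ for $x\in\R$.

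For \eqref{eq-F1}, I would invoke Proposition \ref{th-sharp}(a), which says that the left quantile function of $F_{\mu,\sigma}^1=\bigvee_1 \mathcal F_{\mu,\sigma}$ equals the pointwise supremum of quantiles; that supremum is finite on $(0,1)$ by the Cantelli bound, so $\mathcal F_{\mu,\sigma}$ is $\preceq_1$-bounded and $(F_{\mu,\sigma}^1)^{-1}(\alpha)=\mu+\sigma\sqrt{\alpha/(1-\alpha)}$. Since this quantile is a continuous increasing bijection from $(0,1)$ onto $(\mu,\infty)$, the distribution $F_{\mu,\sigma}^1$ vanishes on $(-\infty,\mu]$, and for $x\ge\mu$ one simply solves $x=\mu+\sigma\sqrt{\alpha/(1-\alpha)}$ for $\alpha$ to obtain $\alpha=(x-\mu)^2/(\sigma^2+(x-\mu)^2)$, which is \eqref{eq-F1}.

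For \eqref{eq-F2}, I would apply Proposition \ref{th-sharp}(b): once $\mathcal F_{\mu,\sigma}$ is known to be $\preceq_2$-bounded, we have $\pi_{F_{\mu,\sigma}^2}(x)=\tfrac12(\mu-x+\sqrt{(x-\mu)^2+\sigma^2})$ and $F_{\mu,\sigma}^2=1+(\pi_{F_{\mu,\sigma}^2})'_+$. A direct differentiation of the envelope gives right derivative $\tfrac12(-1+(x-\mu)/\sqrt{(x-\mu)^2+\sigma^2})$, and adding $1$ recovers \eqref{eq-F2}. To verify $\preceq_2$-boundedness, I would check that the envelope is a bona fide integrated survival function of some cdf in $\mathcal M_1$: it is convex and nonincreasing in $x$ (the right derivative lies in $[-1,0]$), tends to $0$ as $x\to\infty$ at the rate $\sigma^2/(4(x-\mu))$, and satisfies $\pi(x)+x\to\mu$ as $x\to-\infty$; these are the standard characterizations of $\pi_H$ for some $H\in\mathcal M_1$ with mean $\mu$. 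Equivalently, one can take \eqref{eq-F2} as an ansatz, verify it is a valid cdf, and compute that its $\pi$ matches the envelope.

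The only mildly technical step is this verification that the explicit envelope of $\pi_F$ is itself the $\pi$ of a cdf, which is required to legitimize Proposition \ref{th-sharp}(b); beyond that, everything reduces to algebraic inversion and differentiation of the two quoted envelopes, so I do not anticipate any substantial obstacle.
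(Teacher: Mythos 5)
Your proposal is correct and follows essentially the same route as the paper, which derives Proposition \ref{prop-MV} directly from Proposition \ref{th-sharp} together with the two quoted envelope formulas (the Cantelli quantile bound and the mean--variance bound on $\pi_F$), inverting the former and differentiating the latter. Your extra verification that the $\pi$-envelope is itself a valid integrated survival function of a cdf in $\mathcal M_1$ is a detail the paper leaves implicit, and it is carried out correctly.
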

				
				We note that both $F_{\mu,\sigma}^1$ and $F_{\mu,\sigma}^2$ are in $\M_1$, so they are ready for implementation with any risk measures or   preferences well-defined on $\M_1$; however, none of $F_{\mu,\sigma}^1$ and $F_{\mu,\sigma}^2$ is in $\M_2$. Most risk measures in practice, including ES and VaR and the other examples in this section, are well-defined and finite on $\M_1$.
				
				%\begin{proof}
				%By Theorem 1 in \cite{G03}, we have
				%$$
				%{\rm VaR}_{\alpha}(F_{\mu,\sigma}^1)=\sup_{\E[F]=\mu,{\rm var}(F)=\sigma^2} {\rm VaR}_\alpha(F)=\mu+\sigma\sqrt{\frac{\alpha}{1-\alpha}},~~\alpha\in[0,1).
				%$$
				%Therefore, one can obtain $F_{\mu,\sigma}^1$ has the form \eqref{eq-F1}. To see the representation of $F_{\mu,\sigma}^2$, we first calculate the integrated survival function of $F_{\mu,\sigma}^2$:
				%\begin{align*}
				%\pi_{F_{\mu,\sigma}^2}(x)&=\sup_{\E[F]=\mu,{\rm var}(F)=\sigma^2} \pi_{F}(x)\\
				%&=\sup_{\E[X]=\mu,{\rm var}(X)=\sigma^2} \E[(X-x)_+]\\
				%&=\sup_{\E[Y]=\mu-x,{\rm var}(X)=\sigma^2} \E[Y_+]\\
				%&=\frac12\left(\mu-x+\sqrt{(x-\mu)^2+\sigma^2}\right),
				%\end{align*}
				%where the third equality follows from Lemma 4 of \cite{M20} (see also Lemma 2.2 of \cite{C11}). Note that $F_{\mu,\sigma}^2(x)=1+(\pi_{F_{\mu,\sigma}^2})_+'(x)$. It then follows that $F_{\mu,\sigma}^2$ has the form \eqref{eq-F2}.
				%\end{proof}
				
				%\begin{remark}
				%The distribution function $F_{\mu,\sigma}^1$ is a Pareto distribution which has a fat tail with  mean $\mu+\pi/2$ and  infinite variance, and $F_{\mu,\sigma}^2$ is a symmetric distribution with  mean $\mu$ and infinite variance. Obviously, they do not belong to $\mathcal F_{\mu,\sigma}$, that is, the supremum of $\mathcal F_{\mu,\sigma}$ under $\preceq_1$ and $\preceq_2$ is not an element in the uncertainty set.
				%Moreover, one can verify that $F_{\mu,\sigma}^2\preceq_1 F_{\mu,\sigma}^1$. %,  and $F_{\mu,\sigma}^1,~F_{\mu,\sigma}^2$ both have finite $p$-moment for $1\le p<2$.
				%\end{remark}
				By Proposition \ref{prop-MV}, for a risk measure that is $\preceq_1$-consistent or $\preceq_2$-consistent,
				the MA robust risk value for the uncertainty set $\mathcal F_{\mu,\sigma}$ can be directly obtained by calculating the risk measure of $F_{\mu,\sigma}^1$ or $F_{\mu,\sigma}^2$.
				%That is, if   $\rho$ is $\preceq_1$-consistent ($\preceq_2$-consistent, resp.), we have
				%$$
				%\rho^{\rm MA}_{\preceq_1}(\mathcal F_{\mu,\sigma})= \rho(F_{\mu,\sigma}^1)~~~~~(\rho^{\rm MA}_{\preceq_2}(\mathcal F_{\mu,\sigma})= \rho(F_{\mu,\sigma}^2), ~{\rm resp.}).
				%$$
				%\subsubsection{Specific risk measures under mean-variance uncertainty}
				To compute the WR robust risk value,  for a coherent risk measure $\rho$,
				\cite{L18} gives the explicit expression of $\rho^{\rm WR}(\mathcal F_{\mu,\sigma})$
				based on the Kusuoka representation.
				In addition, noting that $\mathcal F_{\mu,\sigma}$ is a
				convex set,
				%which is bounded from above
				if $\rho$ is an ES,
				%(Theorem \ref{prop-cxES}) or benchmark-adjusted ES (Theorem \ref{th-2sd}),
			 then
				$
				\rho^{\rm WR}(\mathcal F_{\mu,\sigma})=\rho^{{\rm MA}_2}(\mathcal F_{\mu,\sigma})= \rho(F_{\mu,\sigma}^2).
				$  %which implies for any coherent risk measure $\rho$
				%$$
				%\rho^{\rm WR}(\mathcal F_{\mu,\sigma})=\rho^{\rm MA}_{\preceq_2}(\mathcal F_{\mu,\sigma})= \rho(F_{\mu,\sigma}^2).
				%$$
				If $\rho$ is a 
%				benchmark-loss VaR (Theorem \ref{th-lossVaR}), including the special case of 
				VaR, then
				$
				\rho^{\rm WR}(\mathcal F_{\mu,\sigma})=\rho^{{\rm MA}_1}(\mathcal F_{\mu,\sigma})= \rho(F_{\mu,\sigma}^1).
				$
				The explicit WR and MA robust risk values for $\ES_\alpha$, ${\rm RVaR}_{\alpha,\beta}$, the power-distorted risk measure and the expectile are given in Table \ref{tab-R},\footnote{To obtain these formulas, we use the following results. \cite{Li18} showed that ${\rm RVaR}_{\alpha,\beta}^{\rm WR} (\mathcal F_{\mu,\sigma})= {\rm ES}_{\alpha}^{\rm WR} (\mathcal F_{\mu,\sigma})$ for all $\beta\in(\alpha,1)$. The value of $\mathrm{PD}_k$ via the WR approach can be directly derived by \citet[Theorem 2]{L18}. An expectile can be represented as the supremum of convex combinations of ES and expectation; see \citet[Proposition 9]{B14}. By Theorem \ref{th-ESM} and noting that all elements in $\mathcal F_{\mu,\sigma}$ have the same expectation, we obtain $\ex^{\rm WR}_{\alpha}(\mathcal F_{\mu,\sigma})=\ex_\alpha^{{\rm MA}_2}(\mathcal F_{\mu,\sigma})$.}
				and a few  figures on  their numerical values are reported in Section \ref{app:G}. Since those risk measures satisfy translation invariance and positive homogeneity, it suffices to consider the case $(\mu,\sigma)=(0,1)$.
				% which are given in Table \ref{tab-R} and Figure \ref{fig-R}.
				
				%\renewcommand\arraystretch{2.5}
				
				\begin{table}
					\caption{WR and MA under uncertainty induced by $\mathcal F_{0,1}$}\label{tab-R}\def\arraystretch{1.5}
					\begin{center}
						\begin{tabular}{c|ccc}
							\hline
							$\rho$ & $\rho^{\rm WR}$ & $\rho^{{\rm MA}_1}$  & $\rho^{{\rm MA}_2}$\\[3pt]
							\hline
							%${\rm VaR}_{\alpha}$  & $\sqrt{\frac{\alpha}{1-\alpha}}$ & $\sqrt{\frac{\alpha}{1-\alpha}}$  & -  \\[6pt]
							$\ES_{\alpha}$  & $\sqrt{\frac{\alpha}{1-\alpha}}$ & $\frac{1}{1-\alpha}\int_\alpha^1 \sqrt{\frac{s}{1-s}}\d s $ & $\sqrt{\frac{\alpha}{1-\alpha}}$   \\[3pt]
							${\rm RVaR}_{\alpha,\beta}$  & $\sqrt{\frac{\alpha}{1-\alpha}}$ & $\frac{1}{\beta-\alpha}\int_\alpha^\beta \sqrt{\frac{s}{1-s}}\d s$  &  $\frac{1}{\beta-\alpha}\int_\alpha^\beta \frac{s-1/2}{\sqrt{s(1-s)}}\d s$ \\[6pt]
							${\rm  VaR}_{\alpha}$  & $\sqrt{\frac{\alpha}{1-\alpha}}$ & $\sqrt{\frac{\alpha}{1-\alpha}}$  & $\frac{\alpha-1/2}{\sqrt{\alpha(1-\alpha)}}$  \\[3pt]
							${\rm PD}_k$  & $\frac{k-1}{\sqrt{2k-1}}$ & $\frac{\sqrt{\pi}\Gamma\left(k+1/2\right)}{\Gamma(k)}$  & $ \frac{\sqrt\pi(k-1)}{2k-1}\frac{ \Gamma(k+1/2)}{\Gamma(k)}  $   \\[3pt]
							$\ex_\alpha$ & $\frac{\alpha-1/2}{\sqrt{\alpha(1-\alpha)}}$ & $\ex_\alpha(F_{0,1}^1)$   &   $\frac{\alpha-1/2}{\sqrt{\alpha(1-\alpha)}}$\\[3pt]
							\hline
						\end{tabular}
					\end{center}
					~\\[5pt]
					\footnotesize{Note: $\Gamma$ is the gamma function;
						%${\rm RVaR}_{\alpha,\beta}^{{\rm MA}_2}$ and ${\rm VaR}_{\alpha}^{{\rm MA}_2}$ are  not reported  because ${\rm RVaR}_{\alpha,\beta} $ and ${\rm VaR}_{\alpha}$ are not $\preceq_2$-consistent; 
      $\ex_\alpha(\mathcal F_{0,1}^1)$ can be numerically computed but it  does not admit an explicit formula.}
				\end{table}

				Similar to Section \ref{sec:63}, we apply the MA approach with mean-variance uncertainty to  robust portfolio selection.
				%We discuss the MA approach with uncertainty set $\mathcal F_{\mu,\sigma}$  applied to  robust portfolio selection.
				The portfolio risk is $\rho(F_{\mathbf w^\top \mathbf X})$ for some portfolio weight vector $\mathbf w\in \R^d$ and risk  vector $\mathbf X$ with unknown cdf in the uncertainty set with given first two moments, which can be formulated as, for a feasible set  $\mathcal W$ of $\mathbf w$,
				%Let $X_i$ be the random loss rate of the $i$th financial asset, $i=1,\dots,d$. Denote by $\mathbf w=(w_1,\dots,w_d)\in \R^d$  the invested wealth on each asset. The robust portfolio selection models with given first two moments are formulated as, for a feasible set    $\mathcal W$ of $\mathbf w$,
				\begin{align}\label{eq-RPO}
					\min_{\mathbf w\in\mathcal W}~~ \rho^{\rm MA}\left( \mathcal F_{\mathbf w,\bm\mu,\Sigma}\right), \mbox{~~~where~}
					\mathcal F_{\mathbf w,\bm\mu,\Sigma}=\{F_{\mathbf w^\top \mathbf X}: \E[\mathbf X]=\bm\mu,~{\rm Cov}(\mathbf X)=\Sigma\},
				\end{align}
				%That is, the portfolio loss  $\mathbf w^\top \mathbf X$ is minimized over a
				%polyhedral set $\mathcal W$ for its worst-case risk under the MA approach over all multivariate distributions of random losses having mean $\bm\mu$ and covariance $\Sigma$.
				%Denote by $\mu(\mathbf w)= \mathbf w^\top\bm\mu$ and $\sigma(\mathbf w)=\sqrt{\mathbf w^\top\Sigma \mathbf w}$.
				where $\E[\mathbf X]$ and ${\rm Cov}(\mathbf X)$ represents the mean vector and the covariance of $\mathbf X$.
				Applying the general projection property in \cite{P07} (see also \citet[Lemma 2.2]{C11}), the two sets
				$\mathcal F_{\mathbf w,\bm\mu,\Sigma}$ and $\mathcal F_{\mathbf w^\top\bm\mu,\sqrt{\mathbf w^\top\Sigma \mathbf w} }$ are identical. Hence,
				\eqref{eq-RPO} is equivalent to
				\begin{align*}%\label{eq-RPO1}
					\min_{\mathbf w\in\mathcal W}~~ \rho^{\rm MA}\left(\mathcal F_{\mathbf w^\top\bm\mu,\sqrt{\mathbf w^\top\Sigma \mathbf w} }\right).
				\end{align*}
				In case of ${\rm MA}_1$ or ${\rm MA}_2$,  this leads to
{\color{black}
\begin{align}\label{eq-RPO2}
				\min_{{\mathbf w}\in\mathcal W}:~~ \rho^{{\rm MA}_i}\left(\mathcal F_{\mathbf w^\top\bm\mu,\sqrt{\mathbf w^\top\Sigma \mathbf w} }\right)
				=
				\widetilde\rho^{F_{0,1}^i}\left(\mathbf w^\top\bm\mu+\sqrt{\mathbf w^\top\Sigma \mathbf w}X\right),
\end{align}
where $F_{0,1}^i$ is given by Proposition \ref{prop-MV} in explicit form for $i=1,2$.
%Therefore, Monte Carlo simulation can provide an approximation method for Problem \eqref{eq-RPO2}.
}
In particular, if $\rho$ satisfies translation invariance and positive homogeneity, Problem \eqref{eq-RPO2}
leads to the following convenient formulation of SOCP, for $i=1,2$,
				$$
				\min_{{\mathbf w}\in\mathcal W}:~~ \rho^{{\rm MA}_i}\left(\mathcal F_{\mathbf w^\top\bm\mu,\sqrt{\mathbf w^\top\Sigma \mathbf w} }\right)
				=\left\{\mathbf w^\top\bm\mu+\sqrt{\mathbf w^\top\Sigma \mathbf w}~\rho\left(F_{0,1}^i\right)\right\}.
				$$
%				where $F_{0,1}^i$ is given by Proposition \ref{prop-MV} in explicit form.
				% Numerical and empirical results on the above MA approaches for robust portfolio selection are presented in Section \ref{sec:72}.

\section{Characterization of risk measures by equivalence in MA}\label{sec:charac}\label{sec:EMAM}

In this section, we aim to characterize the risk measures under which the WR and MA approaches are equivalent, that is, 
\begin{equation}\label{eq210929-1}
	\rho \left( \bigvee \mathcal F\right)=\sup_{F\in\mathcal F}\rho(F),
\end{equation}
for all $\mathcal F\in\mathcal S$, where
$\mathcal S $ is a collection of subsets of $\mathcal M$. 
We are interested in the case that $\mathcal S$ is the collection of all convex polytopes which is defined by \begin{align*}
	\mathcal F={\rm conv}(F_1,\ldots,F_n)
	=\left\{\sum_{i=1}^n\lambda_iF_i: \bm\lambda\in \Delta_n\right\},
\end{align*}
where $F_1,\dots,F_n$ are finitely many cdfs.
The corresponding property is called {\it equivalence in model aggregation for convex polytopes} (cEMA); that is,  \eqref{eq210929-1} holds for all convex polytopes $\mathcal F\subseteq \mathcal M$.
\begin{itemize}
	\item[]{\it $\preceq$-cEMA}: Let $(\mathcal M,\preceq)$ be an ordered set. A mapping $\rho:\mathcal M\to\R$ satisfies  $\preceq$-cEMA  if  $\rho \left( \bigvee \mathcal F\right)=\sup_{F\in\mathcal F}\rho(F)$ holds  for any {\color{black} nonempty} convex polytope $\mathcal F\subseteq\mathcal M$.
\end{itemize}

All results in this section remain valid if convex polytopes in the above definition are replaced by convex sets bounded from above, and such a property  is stronger than cEMA.\footnote{Recall that  characterization results are generally stronger if imposed properties are weaker, so we aim for a weaker formulation of the properties.}

Our main focus is the partial orders $\preceq_1$ and $\preceq_2$. By Proposition \ref{prop-EMAM}, %for  $i=1,2$,
$\preceq_i$-cEMA  is equivalent to
\begin{align}\label{eq-eqcEMA}
	\rho\left(\bigvee_i\{F_1,\dots,F_n\}\right)=\sup\left\{\rho\left(\sum_{i=1}^n\lambda_iF_i\right):\bm\lambda\in \Delta_n\right\},~~~i=1,2,
\end{align}
for all $F_1,\dots,F_n\in \mathcal M$.
By \eqref{eq-eqcEMA},
$\preceq_i$-cEMA is stronger than $\preceq_i$-consistency since for any $F\preceq_i G$, we have
$
\rho(G)=\rho\left(\bigvee_i \{F,G\}\right)=\sup_{\lambda\in[0,1]}\rho(\lambda F+(1-\lambda)G)\ge\rho(F)
$ {\color{black}under $\preceq_i$-cEMA.}

By  Theorem \ref{prop-cxES}, VaR satisfies $\preceq_1$-cEMA,
and
ES  satisfies  $\preceq_2$-cEMA.
The more challenging question is in the opposite direction:
Are VaR and ES  the unique classes of risk measures, with some standard properties, that satisfies $\preceq_1$-cEMA and $\preceq_2$-cEMA, respectively?
This question is particularly important given the special roles of VaR and ES in banking practice.
% In the next theorem, we obtain an affirmative answer to the above question.
We obtain  two main results: With the additional standard properties of  translation invariance, positive homogeneity, and lower semicontinuity,
$\preceq_1$-cEMA characterizes   VaR, and $\preceq_2$-cEMA  characterizes  ES. As far as we are aware, this is  the first time that  VaR and ES are axiomatized with parallel properties.
%We first present the characterization of VaR.

%\subsection{Characterization of VaR via $\preceq_1$-cEMA}\label{sec41}
%
\begin{theorem}\label{th-VaRM}\label{th-ESM}
	\color{black} For a mapping $\rho:\mathcal M_1\to\R$, 
	\begin{enumerate}[(a)]
	\item  it satisfies translation invariance, positive homogeneity,  lower semicontinuity and $\preceq_1$-cEMA
	if and only if $\rho=\VaR_\alpha$ for some $\alpha \in (0,1)$;
	\item  it  satisfies  translation invariance, positive homogeneity, lower semicontinuity and $\preceq_2$-cEMA if and only if $\rho=\ES_{\alpha}$ for some $\alpha\in(0,1)$.
	\end{enumerate}
	%\end{itemize}
\end{theorem}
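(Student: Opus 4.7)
The ``if'' direction for both parts is immediate: by Theorem~\ref{prop-cxES}(c), $\VaR_\alpha$ satisfies the cEMA equality on every $\preceq_1$-bounded set, and every convex polytope $\mathrm{conv}(F_1,\dots,F_n)\subseteq\mathcal M_1$ is $\preceq_1$-bounded (its $\preceq_1$-supremum exists by Proposition~\ref{th-sharp}(a)); similarly, by Theorem~\ref{prop-cxES}(b), $\ES_\alpha$ satisfies the cEMA equality on convex $\preceq_2$-bounded sets, which covers all convex polytopes in $\mathcal M_1$ via Proposition~\ref{th-sharp}(b). The other three axioms for $\VaR_\alpha$ and $\ES_\alpha$ are standard, as recalled in Section~\ref{sec:5-3}. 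So the real work is the ``only if'' direction of both parts.

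\textbf{Common reductions.} Three observations set the stage. First, applying $\preceq_i$-cEMA to $\mathcal F=\{F,G\}$ with $F\preceq_i G$ gives $\bigvee_i \mathcal F=G$, hence $\rho(G)=\sup_{\lambda\in[0,1]}\rho(\lambda F+(1-\lambda)G)\ge \rho(F)$, so $\rho$ is $\preceq_i$-consistent (in particular $\preceq_1$-consistent in both cases). Second, positive homogeneity forces $\rho(\delta_0)=\lim_{\lambda\downarrow 0}\lambda\rho(\delta_1)=0$, and translation invariance then gives $\rho(\delta_c)=c$ for all $c\in\R$. Third, by Proposition~\ref{prop-EMAM}, $\preceq_i$-cEMA on convex polytopes is equivalent to the identity $\rho(\bigvee_i\{F_1,\dots,F_n\})=\sup_{\bm\lambda\in\Delta_n}\rho(\sum_i \lambda_i F_i)$ for every finite collection, which is how I would use it throughout.

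\textbf{Part (a).} I would introduce the two-point profile $\phi:[0,1]\to[0,1]$ by $\phi(p)=\rho((1-p)\delta_0+p\delta_1)$; by translation invariance and positive homogeneity the value of $\rho$ on any two-point distribution is an affine function of $\phi$. The reductions above make $\phi$ increasing, lower semicontinuous, with $\phi(0)=0$ and $\phi(1)=1$. The crux is to show $\phi\in\{0,1\}$. For this I would apply $\preceq_1$-cEMA to polytopes of the form $\mathrm{conv}\{(1-p)\delta_0+p\delta_1,\,(1-q)\delta_{-t}+q\delta_1\}$ (and, more generally, finite families of two-point distributions at shifted atoms), computing $\bigvee_1$ via Proposition~\ref{th-sharp}(a) and the mixture supremum via positive homogeneity, to obtain a functional equation in $\phi$ that, combined with monotonicity and lower semicontinuity, admits only $\phi=\mathbf{1}_{[1-\alpha,1]}$ for some $\alpha\in(0,1)$. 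This identifies $\rho$ with $\VaR_\alpha$ on all two-point distributions; $\preceq_1$-cEMA applied to convex polytopes generated by $\delta$-masses extends the identification to all simple distributions, and lower semicontinuity together with weak approximation of elements of $\mathcal M_1$ by simple distributions yields $\rho=\VaR_\alpha$ on $\mathcal M_1$.

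\textbf{Part (b) and main obstacle.} For ES, the parallel strategy is more delicate because ES requires a nontrivial representation. The plan is to: (i) use $\preceq_2$-cEMA together with the identity $\pi_{\bigvee_2\mathcal F}=\sup_{F\in\mathcal F}\pi_F$ from Proposition~\ref{th-sharp}(b) and the Rockafellar--Uryasev formula \eqref{eq:RU02} to derive enough structural regularity of $\rho$ (convexity, or equivalently comonotonic additivity, on a dense subclass of $\mathcal M_1$); (ii) combine this with $\preceq_2$-consistency, positive homogeneity, translation invariance and lower semicontinuity to invoke the Kusuoka representation $\rho(F)=\sup_{\mu\in\mathcal M}\int_0^1 \ES_\alpha(F)\,d\mu(\alpha)$; (iii) apply $\preceq_2$-cEMA to convex polytopes of two-point distributions whose $\bigvee_2$-supremum is sharply computable, tuned so that only one value of $\alpha$ can saturate the representation on both sides of the cEMA identity, forcing $\mathcal M$ to consist of a single point mass $\delta_\alpha$. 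The two anticipated hard steps are the ``dichotomy'' for (a) --- namely, that monotonicity plus lower semicontinuity are too weak on their own, and the polytopes must be engineered so that an intermediate value of $\phi$ is strictly incompatible with the cEMA identity --- and the derivation for (b) of global convexity (or comonotonic additivity) from $\preceq_2$-cEMA alone, since cEMA only equates the single value $\rho(\bigvee_2\mathcal F)$ with a mixture supremum and does not a priori deliver a convex inequality; bridging this gap via the explicit $\pi$-representation in Proposition~\ref{th-sharp}(b) is where I expect the technical heart of the proof to lie.
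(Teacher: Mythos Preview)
Your ``if'' direction and common reductions are correct and match the paper. The difficulty, as you sense, is the ``only if'' direction, and here your plan diverges from the paper's in an important way.

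For part (a), your two-point profile $\phi$ is essentially the paper's object $g_\Xi(\beta)=1-\rho(\beta\delta_0+(1-\beta)\delta_1)$, and the target $\phi=\mathbf 1_{[1-\alpha,1]}$ is the right one. But the cEMA test you propose, applied to $\{(1-p)\delta_0+p\delta_1,\ (1-q)\delta_{-t}+q\delta_1\}$, produces \emph{three-point} mixtures on the right-hand side, and the value of $\rho$ on those is not determined by $\phi$ alone. The paper resolves this by invoking a robust representation due to Jia, Xia and Zhao (2020): every $\preceq_1$-consistent, translation-invariant, positively homogeneous $\rho$ on $\mathcal M_\infty$ is of the form $\rho(F)=\min_{\xi}\max_{\mu\in\mathcal M_\xi}\int_0^1\VaR_s(F)\,\mu(ds)$. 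This min-max structure is what lets the paper evaluate $\rho$ on the three-point mixtures and push the inequality through to the indicator conclusion. Without such a representation, your ``functional equation in $\phi$'' step is underspecified.

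For part (b), the gap is more serious. Your step (ii) invokes the Kusuoka representation, but Kusuoka requires convexity (coherence), and your step (i)---deriving convexity or comonotonic additivity from $\preceq_2$-cEMA---is precisely the obstacle you flag, and it is not clear it can be done. The paper sidesteps this entirely: it uses a parallel Jia--Xia--Zhao representation valid for all $\preceq_2$-consistent, translation-invariant, positively homogeneous maps (no convexity assumed), namely $\rho(F)=\min_{\xi}\max_{h\in\mathcal H_\xi}\int_0^1\VaR_s(F)\,dh(s)$ over convex distortions $h$. Then cEMA applied to $F=\beta\delta_{-a}+(1-\beta)\delta_1$ and $G=\delta_0$ (with $a\to\infty$) forces $h_\Xi(\beta):=\max_\xi\min_{h\in\mathcal H_\xi}h(\beta)$ to equal $(\beta-\alpha)_+/(1-\alpha)$, which pins $\rho$ down on $\mathcal M_\infty$. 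Finally, your extension ``lower semicontinuity plus weak approximation'' only gives one inequality; the paper uses $\preceq_i$-consistency for the reverse direction (approximating from below by compactly supported distributions in the relevant order), and for part (a) there is a further nontrivial case analysis for distributions unbounded from below that again uses cEMA with $\{F,\delta_\zeta\}$.
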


The special case of $\ES_0=\mathfrak{m}$ is excluded from Theorem \ref{th-ESM}, as it satisfies $\preceq_2$-cEMA (by Theorem \ref{prop-cxES}) but not lower semicontinuity. 
Theorem \ref{th-VaRM} states that $\preceq_1$-cEMA and $\preceq_2$-cEMA can identify VaR and ES, respectively.
In contrast to VaR which satisfies \eqref{eq210929-1} for any $\mathcal F$ bounded from above (Theorem \ref{prop-cxES}), ES  {fails to satisfy} \eqref{eq210929-1} for non-convex set $\mathcal F$ (Example \ref{ex-1es}).

\begin{remark}
	\label{rem:characterizeVaR}
	There are a few sets of axioms which characterize VaR,  each with the additional help of some standard properties such as continuity, monotonicity, translation invariance or positive homogeneity. In \cite{C09}, the main axiom for $\VaR$ is   ordinal covariance, an invariance property under some risk transforms. In \cite{KP16}, the main axioms for $\VaR$ are elicitability and comonotonic additivity.  In \cite{HP18}, the main axiom for $\VaR$ is surplus-invariance of the acceptance set.
	In \cite{LW20}, the main axioms are tail relevance and elicitability.
	In Theorem \ref{th-VaRM}, the new axiom of $\preceq_1$-cEMA leads to a characterization of $\VaR$,
	and  this new axiom standalone does not imply any axioms mentioned above.
	%Ordinal covariance of \cite{C09} is arguably the strongest property among the above sets of axioms.
	%The other properties appear above, as standalone properties, do not imply each other.
\end{remark}

\begin{remark}\label{rem:characterizeES}
	ES is recently axiomatized by  \cite{WZ20} in the context of portfolio capital requirement. Their key axiom is called no reward for concentration (NRC) which intuitively means that a concentrated portfolio does not receive a diversification benefit.
	%which intuitively means
	%that a concentrated portfolio, whose components incur large losses simultaneously in the tail event $A$, does not receive any capital reduction.
	\cite{H21}, who also considered concentrated portfolio, obtain another characterization of ES by relaxing  NRC.  %by introducing the notion of concentration aversion, which does not impose any specific form of the risk measure.
	Another characterization result on ES is obtained by \cite{EMWW21} based on elicitability and Bayes risk.
	In contrast, our characterization result does not involve the consideration of elicitability or  portfolio risk aggregation. Therefore, the interpretation of Theorem \ref{th-ESM} is quite different from results in the literature and can be applied to robust modeling outside of a financial or statistical context.
\end{remark}

%and hence we need to search for alternative risk measures which yield EMA with $\preceq_2$ for arbitrary uncertainty sets, and this will be treated in Section \ref{sec:MC}.

\begin{remark}
Equivalence in model aggregation has some similarity to  max-stability studied by \cite{KZ20}, which is defined on the set of random variables with the natural order, i.e., $X\preceq Y$ if and only if $X\le Y$ pointwisely. This leads to completely different interpretations and mathematics.
\end{remark}

\section{Numerical results for financial data}\label{sec:Numerical}
				In this section, we report  some numerical experiments based on real financial data to show the performance of the MA approach. 
				We select 20 stocks and collect their historical loss data   from Yahoo! Finance.\footnote{They are AAPL, MSFT, GOOGL, AMZN, ADBE, NFLX, AMD, V, JNJ, COST, WMT, PG, MA, UNH, DIS, HD, INTC, PYPL, GS, IBM.}
				%the National Association of Securities Deal Automated Quotations (NASDAQ) index
				We use  the period of January 1, 2019, to August 1, 2021,  with a  total of 649 observations of
				the daily losses of the 20 stocks.

				We shall conduct two sets of numerical experiments. First, in Section \ref{sec:71}, we present the robust distributions based on the MA approach when the uncertainty set consists of finite cdfs generated from the historical data, and compare the robust risk values with the WR ones. This analysis is based on data of single asset, and we   only report results on AAPL  for a simple illustration. Second, in Section \ref{sec:72}, we consider the application of the MA approach with Wasserstein and mean-variance uncertainty as in Section  \ref{sec:US}, and data of all 20 stocks will be used.
				
				\subsection{Performance of MA  with finite uncertainty set}\label{sec:71}
				We examine the MA approach for the uncertainty set {\color{black}that} consists of the cdfs generated by the real portfolio data AAPL.
				We use \texttt{Matlab} to fit the data with normal,  t-
				and logistic distributions that will be denoted by $F_{\rm n}$, $F_{\rm  t}$ and $F_{\rm  lg}$, respectively, and the empirical cdf is denoted by $\hat{F}$. Let $\mathcal F$ be the uncertainty set that consists of these four cdfs, i.e., $\mathcal F=\{\hat{F},F_{\rm  n},F_{\rm t},F_{\rm  lg}\}$.  
				%The supremum of $\mathcal F$ with respect to $\preceq_1$ and $\preceq_2$ will be denoted by $\bigvee_1\mathcal F$ and $\bigvee_2\mathcal F$, respectively.

				Figure \ref{fig-fit} (top panels) shows the cdfs and integrated survival functions defined by \eqref{eq-pi} of the cdfs in $\mathcal F$. %In the figure of distribution functions, it seems that the real data can be well fitted by t location-scale distribution.
				Noting that $\bigvee_1 \mathcal F(x)=\inf\{\hat{F}(x),F_{\rm n}(x),F_{\rm t}(x),F_{\rm lg}(x)\}$ for $x\in\R$,
				the supremum $\bigvee_1 \mathcal F$ can be roughly divided into four parts. By Proposition \ref{th-sharp}, $\bigvee_2 F=F^*\in\mathcal F$ on  $(a,b)$ if $F^*$ has the largest value of integrated survival function on $(a,b)$.
				Hence, the figure of integrated survival functions illustrates $\bigvee_2\mathcal F$ can be divided into three parts: ${\bigvee_2\mathcal F}={F_{\rm n}}$ on $(-\infty,0.02)$; ${\bigvee_2\mathcal F}={\hat{F}}$ on $[0.02,0.0445)$; ${\bigvee_2\mathcal F}={F_{\rm t}}$ on $[0.0445,\infty)$.
				The curves of $\bigvee_1\mathcal F$ and $\bigvee_2\mathcal F$ are given in Figure \ref{fig-fit} (bottom panel)  from which we can see that $\bigvee_2\mathcal F\preceq_1 \bigvee_1\mathcal F$. Moreover, $\bigvee_2\mathcal F$ has a jump at $0.02$ which can be explained by the difference between left and right derivatives of the integrated survival function of $\bigvee_2\mathcal F$ at $0.02$.
				\begin{figure}[t]
					\caption{Left: cdf;  Middle: Integrated survival function; Right: Suprema of $\mathcal F$}
					\begin{center}
						\includegraphics[width=5.4 cm]{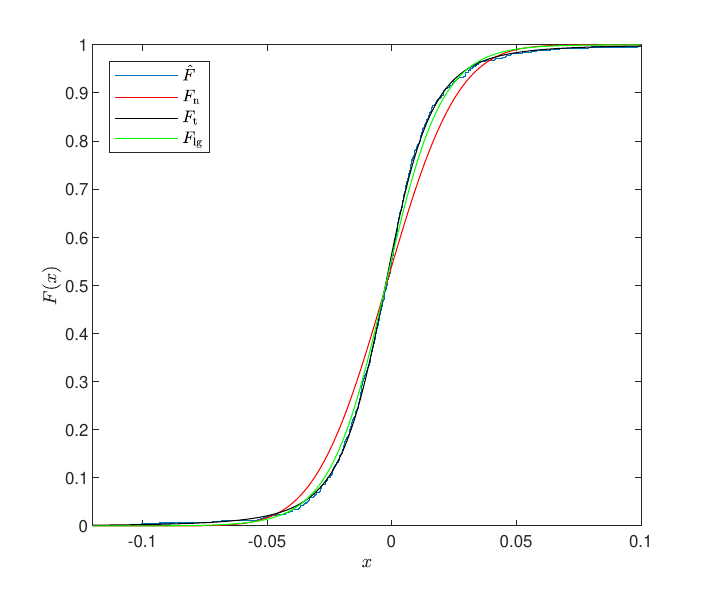}
						\includegraphics[width=5.4 cm]{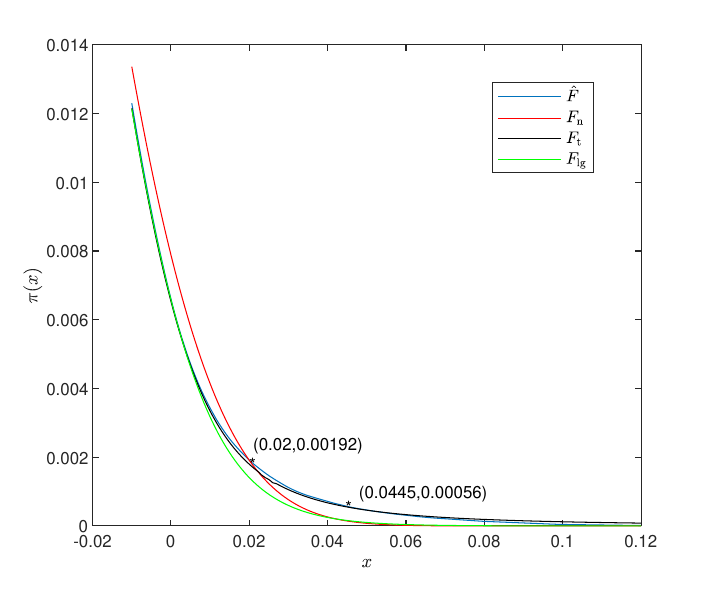}
						\includegraphics[width=5.4 cm]{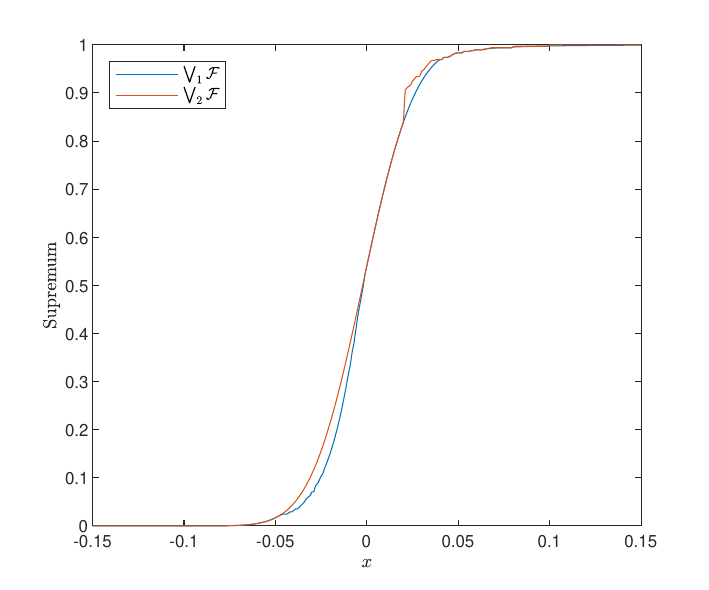}
						\label{fig-fit}
						\medskip
					\end{center}
				\end{figure}
				%   \medskip
				% \end{center}
			%\end{figure}
			%\begin{figure}[t]
			%\caption{}
			%\begin{center}
			%\begin{figure}
			%\caption{The suprema of $\mathcal F$.}
			%\begin{center}
			%   \includegraphics[width=7.5cm]{Supremum.pdf}
			%    \label{supremum}
			%    \medskip
			%  \end{center}
		%\end{figure}

		In the following, we compare  the ${\rm MA}_1$ and ${\rm MA}_2$ robust risk values and the WR ones with the uncertainty set $\mathcal F$. The risk measures are RVaR or ES. In the case of ${\rm RVaR}_{\alpha,\beta}$, we set $\alpha=0.95$ and let $\beta$ 
 vary in $[0.95,1]$. In the case of $\ES_{\alpha}$,   $\alpha$ varies in $[0.9,0.99]$.  %range from $0.95$ to 1. In the case of $\ES_{\alpha}$,   $\alpha$ ranges from $0.9$ to 1.
		
		Figure \ref{RVaR_robust} shows the value of RVaR$_{\alpha,\beta}$ of the cdfs in $\mathcal F$,  and RVaR$_{\alpha,\beta}$ based on the MA and WR approaches, and Figure \ref{ES_robust-11} presents the results of ES. From both figures we can see that the MA robust risk value is larger than the  WR  one. Moreover, from Figure \ref{RVaR_robust}, one can find that these two robust approaches have identical performance for $\beta\in[0.95,0.9685]$. This is because the quantile function of $F_{\rm n}$ dominates other elements in $\mathcal F$ on $[0.95,0.9685]$ which implies that ${\rm RVaR}_{0.95,\beta}^{{\rm MA}_1}(\mathcal F)= {\rm RVaR}_{0.95,\beta}^{\rm WR}(\mathcal F)={\rm RVaR}_{0.95,\beta}(F_{\rm n})$ for $\beta\in [0.95,0.9685]$.  From Figure \ref{ES_robust-11}, we find that $\ES_{\alpha}(F_{\rm n})$ and $\ES_\alpha(F_{\rm lg})$ are always smaller than $\ES_{\alpha}(\hat{F})$ and $\ES_\alpha(F_{\rm t})$ for $\alpha\in[0.9,0.99)$. The reason is that financial market loss data are heavy-tailed empirically (see e.g., \cite{MFE15}), and  ES with high level $\alpha$ focuses on the tail loss. % The t-distribution has a heavier tail than the normal and logistic distributions, and hence the values of ES of $\hat{F}$ and $F_{\rm t}$ are larger.
		In addition, the curve of $\ES^{\rm MA}$ always lies above the curve of  $\ES^{\rm WR}$, which implies that the MA approach is more conservative.

		\begin{figure}[t]
			\caption{RVaR for individual models, via WR and via MA ($\alpha$=0.95)}
			\begin{center}
				\includegraphics[width=7cm]{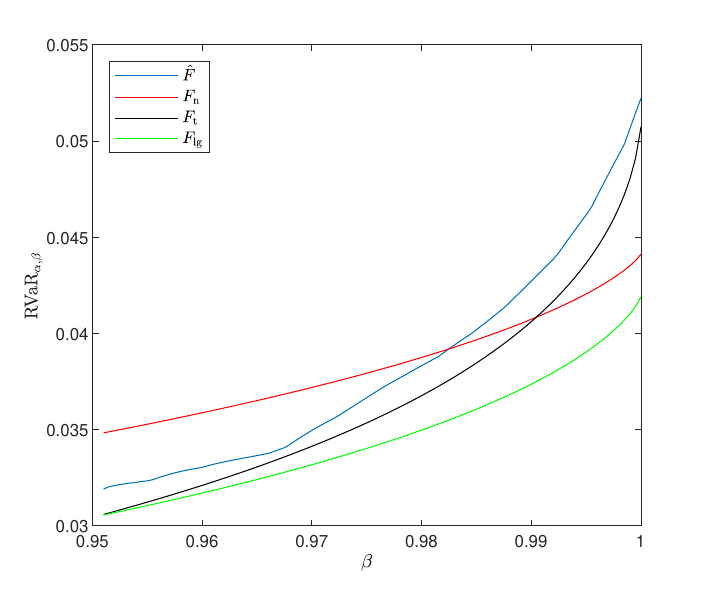}
				\includegraphics[width=7cm]{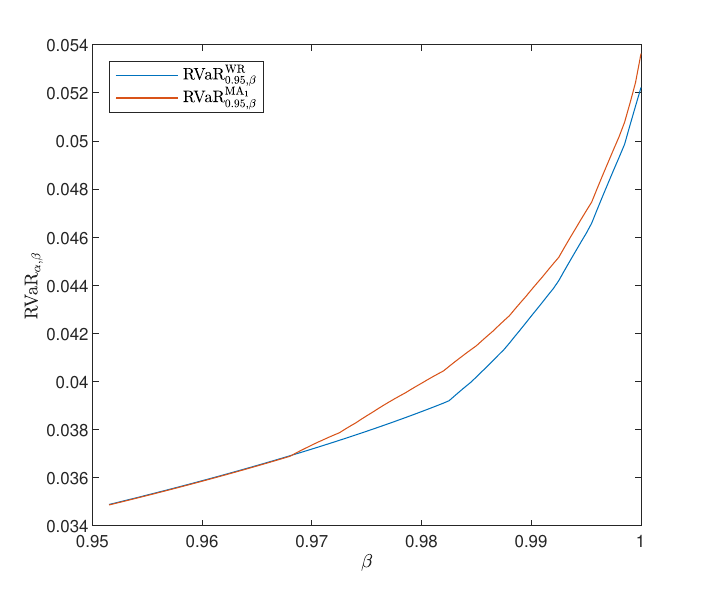}
				\label{RVaR_robust}
				\medskip
			\end{center}
		\end{figure}

		\begin{figure}[t]
			\caption{ES for individual models, via WR and via MA}
			\begin{center}
				\includegraphics[width=7cm]{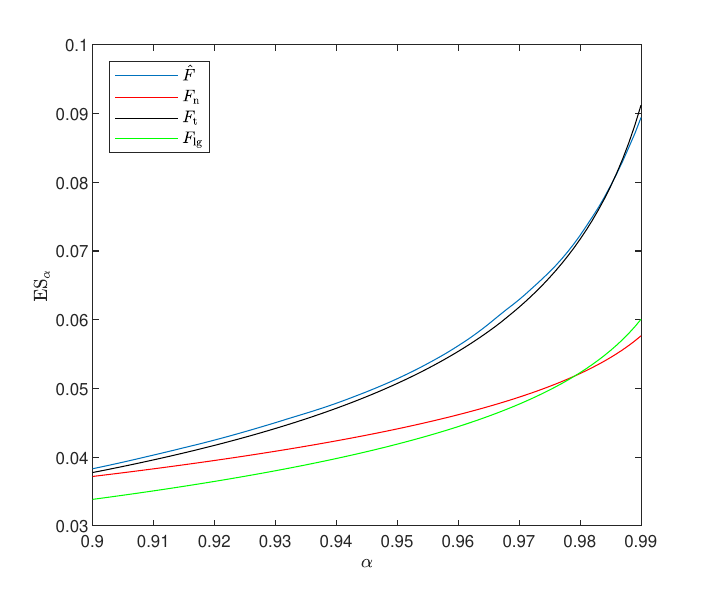}
				\includegraphics[width=7cm]{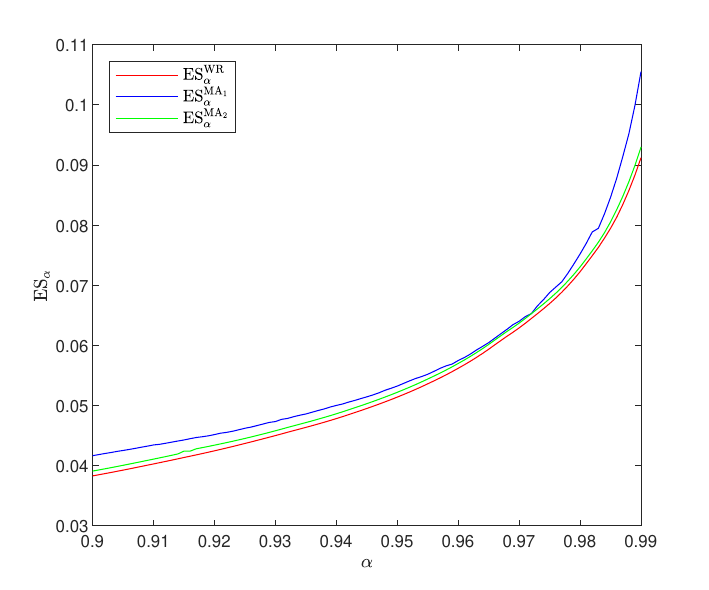}
				\label{ES_robust-11}
				\medskip
			\end{center}
		\end{figure}

		\subsection{MA  approach in robust portfolio selection}
		\label{sec:72}
		In this section, we consider the application of the MA approach with $\preceq_2$  %\footnote{We do not consider $\preceq_1$ for the reason that in general it is unlikely that a portfolio dominates other portfolios under  $\preceq_1$.}
		in the setting of portfolio selection in Section \ref{sec:US}.
%		A similar application with mean-variance uncertainty is presented in Appendix \ref{app:portfolio1}.
		The MA approach will be contrasted to the WR approach, and the standard sample average approximation (SAA) approach, which evaluates risks using the empirical distribution; see e.g., \cite{SDR21}.
		We construct a portfolio from the 20 stocks mentioned in the beginning of this section, whose daily losses are denoted by $X_1$ (AAPL), $X_2$ (MSFT), ..., $X_{20}$ (IBM). {\color{black}Mean, variance and correlation matrix of the return rate of the 20 stocks are given in Appendix \ref{app:portfolio}.} The wealth invested in the asset $X_i$ is denoted by $w_i$ for $i\in[20]$. Thus, the total loss from the investment of these 20 stocks is $\mathbf{w}^{\top}\mathbf{X}$, where  $\mathbf{w}=(w_1,\dots,w_{20})$ and $\mathbf{X}=(X_1,\dots,X_{20})$. The feasible region of $\mathbf{w}$ is the standard simplex
		$\Delta_{20}.$
%\begin{figure}[t]
%	\caption{The mean of daily return rate of the twenty stocks}
%	\begin{center}
%		\includegraphics[width=15cm]{}
%		\label{fig-return}
%		\medskip
%	\end{center}
%\end{figure}

We consider the setting in Section \ref{sec:US} where   uncertainty is modeled by a multi-dimensional Wasserstein ball.
%A study with mean-variance uncertainty is reported in Appendix \ref{app:portfolio1} with similar findings.
For the choice of the risk measure $\rho$, we  will work with  ${\rm PD}_k$ defined  in Section \ref{sec:US} to measure the portfolio risk. There are a few reasons for this choice. First, ${\rm PD}_k$ is  $\preceq_2$-consistent (which also implies $\preceq_1$-consistency). Second, the   WR   and ${\rm MA}_2$ approaches are similar in the portfolio optimization problem under  the Wasserstein or the mean-variance uncertainty if the risk measure is selected as ES or expectile, so we move away from these two   choices. %But, WR ${\rm PD}_k$ and MA ${\rm PD}_k$ with $\preceq_2$ perform differently.
Third, the portfolio optimization problem of ${\rm PD}_k$  leads to a convenient formulation of SOCP under the Wasserstein or the mean-variance uncertainty as in Section \ref{sec:US}.

As in many classic settings of portfolio selection, e.g., the classic framework of \cite{M52}, we assume that the investor has a target level of expected {\color{black}annualized return rate} and minimizes the risk. That is,  with the constraint $\E[{\mathbf w}^{\top}{\mathbf X}]\le -r_0/m$ where {\color{black}$r_0$ is the expected annualized return rate and $m=250$}, the investor minimizes $\rho(F_{{\mathbf w}^{\top}{\mathbf X}})$. 
% The expected return level $r_0$ takes values in $[0.0006,0.003]$ which allows for feasible efficient frontiers.

We set the parameter $a=p=2$ in the Wasserstein uncertainty ball $\mathcal F_{a,p,\epsilon}^d(F_0)$, and use a  multivariate t-benchmark distribution $F_0$ fitted to the data. % with mean vector $\bm{\mu}$ and covariance matrix $\Sigma$ in Table \ref{tab:basic}.
The case of  a normal benchmark  distribution, which has a lighter tail, is reported in Appendix \ref{app:portfolio}.
For the whole-period data, the fitted t-distribution has $\nu=3.994$ degrees of freedom.
%and its mean and correlation matrix are in Table \ref{tab:basic}.
					The choice of a t-distribution is by no means restrictive, and we will consider the case of normal distribution which has a lighter tail in Appendix.
					We apply the WR  and the $\preceq_2$-MA approaches, and the corresponding  portfolio optimization problems are converted to  SOCPs which can be computed efficiently.
					By \eqref{eq:WassWR} and \eqref{eq:WassMA} in Section \ref{sec:63},
					%the portfolio selection under the MA approach with $\preceq_1$ and $\preceq_2$ are
					the optimization problems via the WR   and the MA  approaches are,
					respectively,
					\begin{align}\label{eq-WWCOP}
						&\min_{{\mathbf w}\in\Delta_{20}}: ~~\rho^{\rm WR} \left( \mathcal F _{\mathbf w,2,2,\epsilon }(F_{0})\right)= {\mathbf w}^\top\bm\mu+{\rm PD}_k(F_{ \nu})\sqrt{{\mathbf w}^\top\Sigma {\mathbf w}}+\zeta_k\epsilon\sqrt{{\mathbf w}^\top{\mathbf w}}~~~~{\rm s.t.}~{\mathbf w}^\top\bm\mu\le -r_0/m,
					\end{align}
					and
					%\begin{align}\label{eq-MAM1}
					%\min_{{\mathbf w}\in\Delta_5}:~~\rho^{\rm MA}_{\preceq_1}(\mathcal F _{\mathbf w,2,\epsilon }(F_{0}))= {\mathbf w}^\top\bm\mu+{\rm PD}_k\left(F_{2,\sqrt{{\mathbf w}^\top{\mathbf w}}\epsilon|\Phi}^1\right)\sqrt{{\mathbf w}^\top\Sigma {\mathbf w}}~~~~{\rm s.t.}~{\mathbf w}^\top\bm \mu\le -r_0,
					%\end{align}
					%and
					\begin{align}\label{eq-WMAOP}
						\min_{{\mathbf w}\in\Delta_{20}}:~~ \rho^{{\rm MA}_2}\left( \mathcal F _{\mathbf w,2,2,\epsilon }(F_{0})\right)= {\mathbf w}^\top{\bm\mu}+{\rm PD}_k(F_{\nu})\sqrt{{\mathbf w}^\top\Sigma {\mathbf w}}+\xi_k\epsilon\sqrt{{\mathbf w}^\top{\mathbf w}}~~~~{\rm s.t.}~{\mathbf w}^\top\bm\mu\le -r_0/m,
					\end{align}
					where
					%$F_{2,\sqrt{{\mathbf w}^\top{\mathbf w}}\epsilon|F_0}^1$ is defined by \eqref{eq-WU1},
					$\zeta_k= {k}/{\sqrt{2k-1}}$,
					$\xi_k=( \sqrt{\pi}\Gamma(k+1))/(2\Gamma(k+1/2))$,   $F_{\nu}$ is the unit variance t-distribution with the tail parameter $\nu$,
					and
					$\bm\mu$ and $\Sigma$ are the mean and the covariance of the fitted $F_0$,  {\color{black}i.e., the mean vector and the covariance of the whole-period data.}
%					On the other hand, by the results in Section \ref{sec:MV}, the optimization problem of the portfolio selection under the ${\rm MA}_1$ and ${\rm MA}_2$ approaches are
%						\begin{align}\label{eq-MAOP1}
%							\min_{{\mathbf w}\in\Delta_{20}}:~~ \rho^{{\rm MA}_1}\left( \mathcal F_{{\mathbf w}^\top\bm\mu,\sqrt{{\mathbf w}^\top\Sigma {\mathbf w}}}\right)=   {\mathbf w}^\top\bm\mu+\beta_k\sqrt{{\mathbf w}^\top\Sigma {\mathbf w}}~~~~{\rm s.t.}~{\mathbf w}^\top\bm \mu\le -r_0,
%						\end{align}
%						and
%						\begin{align}\label{eq-MAOP}
%							\min_{{\mathbf w}\in\Delta_{20}}:~~ \rho^{{\rm MA}_2}\left( \mathcal F_{{\mathbf w}^\top\bm\mu,\sqrt{{\mathbf w}^\top\Sigma {\mathbf w}}}\right)= {\mathbf w}^\top\bm\mu+\gamma_k\sqrt{{\mathbf w}^\top\Sigma {\mathbf w}}~~~~{\rm s.t.}~{\mathbf w}^\top\bm\mu\le -r_0,
%						\end{align}
%						respectively, where $\beta_k={(\sqrt{\pi}\Gamma\left(k+1/2\right))}/{\Gamma(k)}$ and
%						$\gamma_k={(\sqrt{\pi} (k-1)\Gamma\left(k+1/2\right))}/{((2k-1)\Gamma(k))},$
%						as in Table \ref{tab-R}.
%						Using the results of \cite{L18}, the WR portfolio optimization problem is
%						\begin{align}\label{eq-WCOP}
%							&\min_{{\mathbf w}\in\Delta_{20}}: ~~\rho^{\rm WR} \left( \mathcal F_{{\mathbf w}^\top\bm\mu,\sqrt{{\mathbf w}^\top\Sigma {\mathbf w}}}\right)= {\mathbf w}^\top\bm\mu+\eta_k\sqrt{{\mathbf w}^\top\Sigma {\mathbf w}}~~~~{\rm s.t.}~{\mathbf w}^\top\bm\mu\le -r_0,
%						\end{align}
%						where $\eta_k= ({k-1})/{\sqrt{2k-1}}.$ }				
					The SAA approach optimizes the portfolio according to the empirical cdf of the asset losses. 
					Figure \ref{Robust_Wt_PW_epsilon} presents the optimized risk value of SAA approach and the optimized robust risk values under Wasserstein   uncertainty with the WR and MA approaches for different values of $\epsilon$, $r_0$ and $k$ using the whole-period data. 
     In the left panel, the MA robust risk value is  larger than the  WR one, and both are generally larger than  that of SSA. This is consistent with our intuition as MA is more conservative than WR, and SAA is not a conservative method.
In the middle and the right panels we set $\epsilon=0.01$ and let $k$ and $r$ vary.
	In practice, the parameter  $\epsilon$ should not be too small; one may tune $\epsilon$ so that the empirical cdf remains in the Wasserstein ball.
     
     % In both kinds of uncertainty, the MA robust risk value is larger than the WR one, and both are generally larger than that of SSA. Specifically, in the case of mean-variance uncertainty, the robust value computed by the MA approach with $\preceq_1$ is always the largest one.
					% This is consistent with our intuition as MA is more conservative than WR, and SAA is not a conservative method. 
					% In the middle and the right panels we set $\epsilon=0.01$ and let $k$ and $r$ vary.
					% In practice, the parameter $\epsilon$ should not be too small; one may tune $\epsilon$ so that the empirical distribution remains in the Wasserstein ball. 

					\begin{figure}[t]
						%\caption{igure}[pht]
						\caption{The optimized robust values of ${\rm PD}_k$ under Wasserstein uncertainty using the whole-period data. Left: $r_0=0.2$, $k=10$, $\epsilon\in[0,0.1]$; Middle: $r_0=0.2$, $\epsilon=0.01$, $k\in [1,20]$; Right: $k=10$, $\epsilon=0.01$, $r_0\in  [0.15,0.5]$}
						\begin{center}
							\includegraphics[width=5.4 cm]{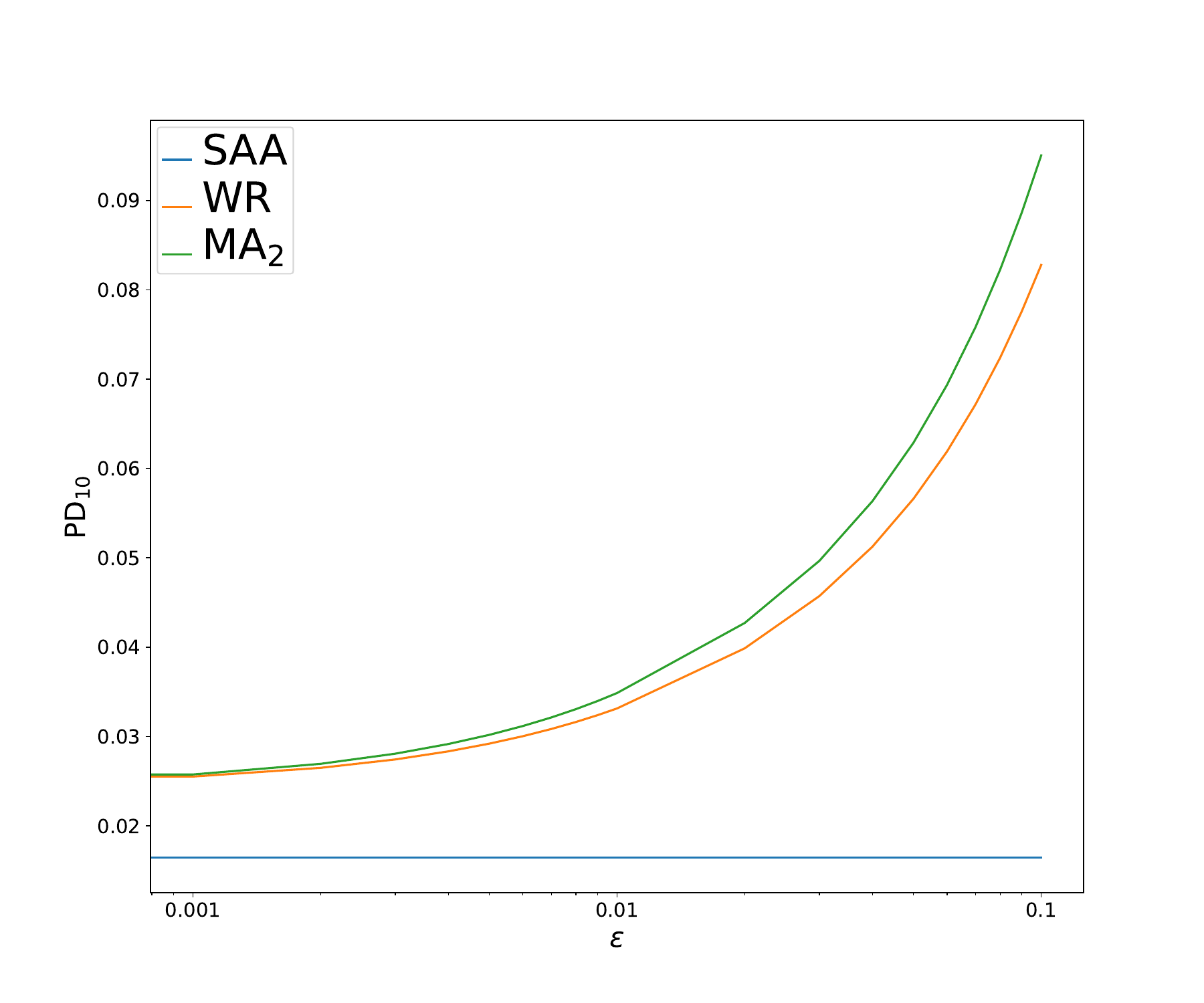}
							\label{Robust_Wt_PW_epsilon}
							\includegraphics[width=5.4 cm]{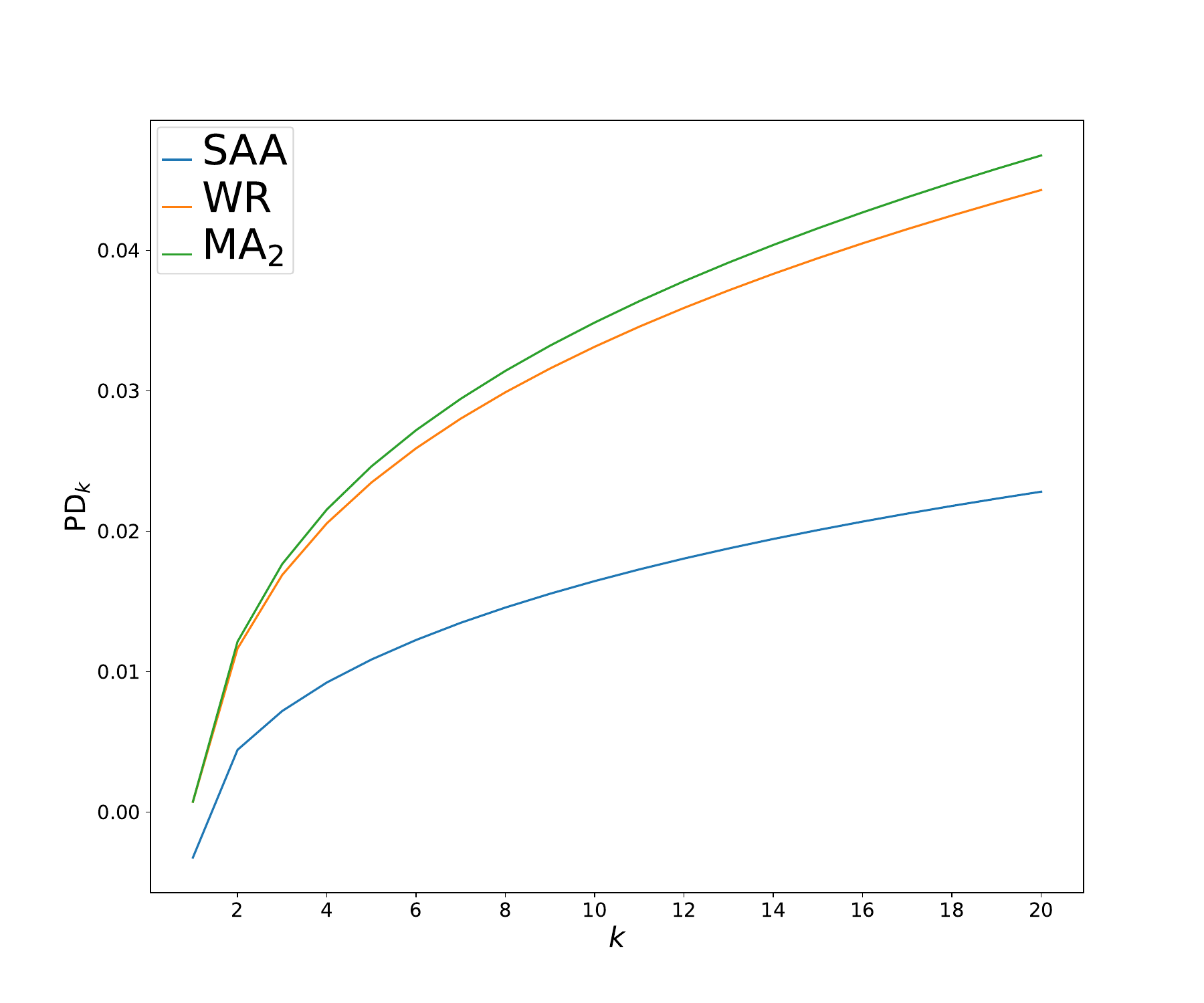}
							\label{Robust_Wt_PW_k}
							\includegraphics[width=5.4 cm]{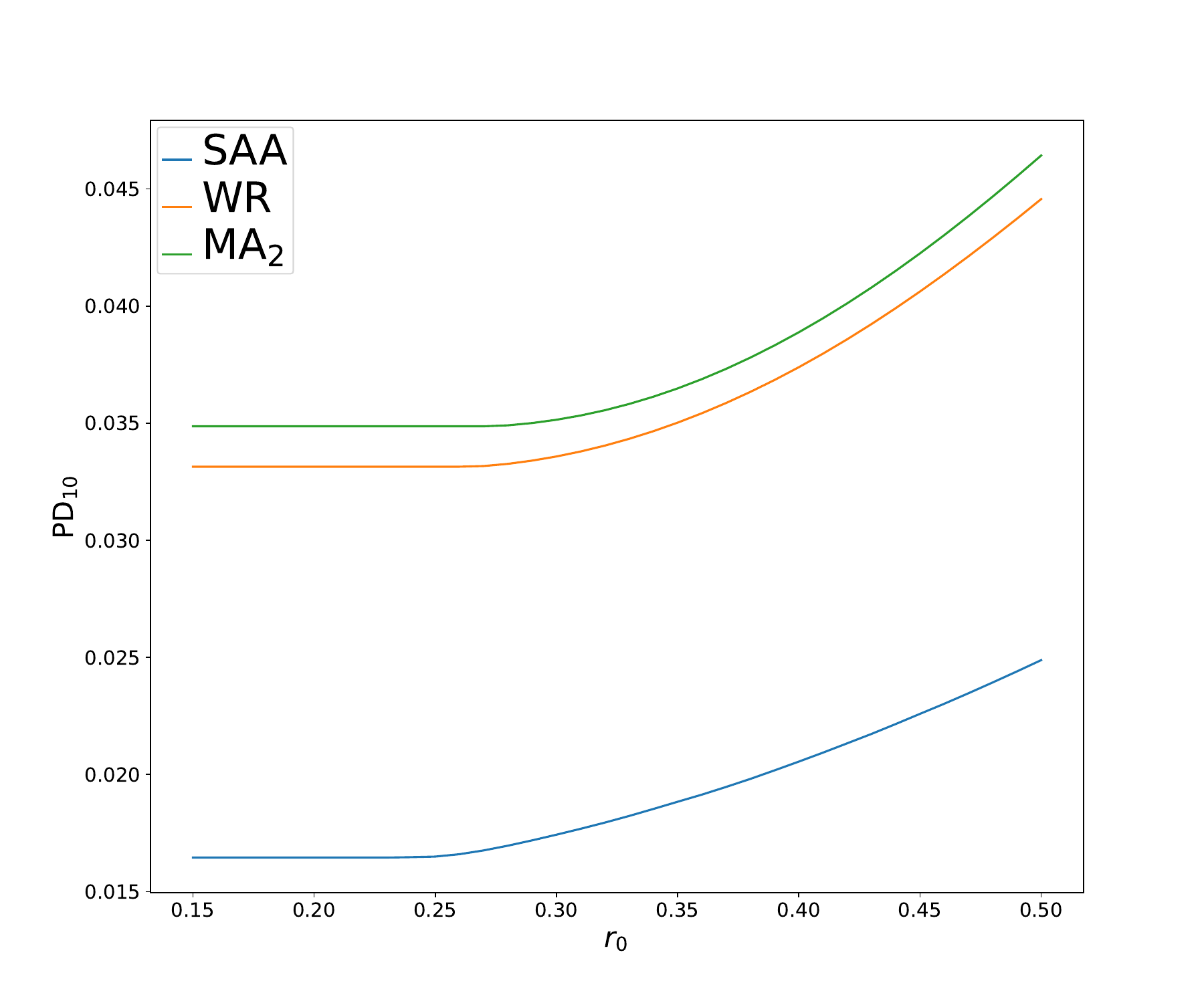}
							\label{Robust_Wt_PW_mu}
							\medskip
						\end{center}
					\end{figure}

					We choose slightly more than half of the period (350 trading days) for the initial training,
					and optimize the portfolio weights in each day with a rolling window. That is, on each trading day starting from day 351 (roughly June 2020), the preceding 350 trading days are used to fit the benchmark distribution, and compute the optimal portfolio weights. Note that the parameter $k$ reflects the degree of risk aversion of the decision maker, that is, a  larger  value of $k$ indicates a more risk-averse decision maker, and thus a larger corresponding risk measure. In this experiment, we choose $k=2$ and $20$,  %When $k=2$ represents a risk-loving decision maker and the larger one
					and the decision maker with $k=20$ is more risk-averse than the one with $k=2$.
					{\color{black} Figure \ref{rollingwindow_Wt} depicts the performance of SAA approach, the mean-variance model of  \cite{M52} (minimizing  the variance of $\mathbf w^{\top}{\mathbf X} $ subject to $\E[{\mathbf w}^{\top}{\mathbf X}]\le -r_0/m$), and the MA and WR approaches under  Wasserstein uncertainty} over the remaining 300 trading days with $r_0=0.2$ and $\epsilon=0.01$, and we set $k=2$ (left) or $k=20$ (right). \textcolor{black}{Table \ref{tab:TC-perf} presents  realized annualized return rates and Sharpe ratios of all methods.}
					In all results, the MA and WR approaches,  being robust methods, perform   similarly. 
					{\color{black} MA and WR generally outperform the SAA  and the Markowitz  model, especially after the first 150 trading days. Intuitively, this means that, during the period from Jan to Aug 2021, robust investment strategies likely outperform non-robust strategies.} 
					The similar performance of the MA and  WR approaches under Wasserstein uncertainty is not a coincidence due to the similarity of problems \eqref{eq-WWCOP} and \eqref{eq-WMAOP} by noting that $\epsilon$ is small. 
				%In the setting of mean-variance uncertainty, an explanation for the analogous performance is provided below. As $k$ grows, the weights $\beta_k$, $\gamma_k$ and $\eta_k$ in Problems \eqref{eq-MAOP1}, \eqref{eq-MAOP} and \eqref{eq-WCOP} also grow. If these weights are large enough, then the terms involving  $\sqrt{{\mathbf w}^\top\Sigma {\mathbf w}}$  in those  problems become dominant in the optimization. For this reason,  Problems \eqref{eq-MAOP1}, \eqref{eq-MAOP} and \eqref{eq-WCOP}  are similar to the mean-variance optimization with a large weight on the variance. 
				}
					
					%, that is,
					%\begin{align*}
					%&\min_{w\in\mathcal W}~~w^{\top}\Sigma w\\
					%& {\rm s.t.}~~~w^\top\ell\le -r_0.
					%\end{align*}
\begin{figure}[t]
\caption{Wealth evolution for different portfolio strategies from May 2020 to Aug 2021 ($\epsilon=0.01$, $r_0=0.2$).
 %     under SAA approach, MV model, and robust approach with Wasserstein uncertainty  
 Left: $k=2$; Right: $k=20$}
						\begin{center}
							\includegraphics[width=7cm]{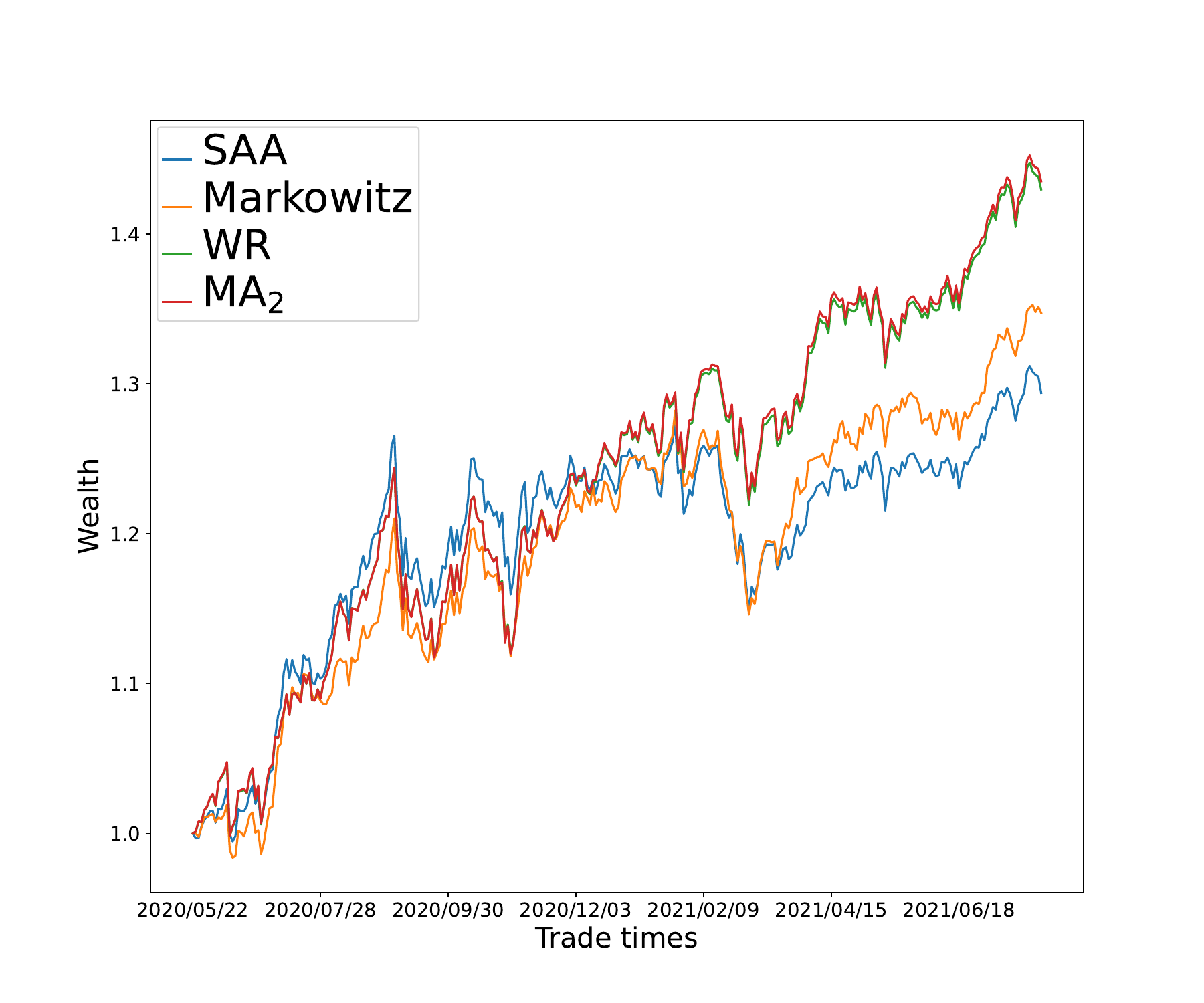}
							\includegraphics[width=7cm]{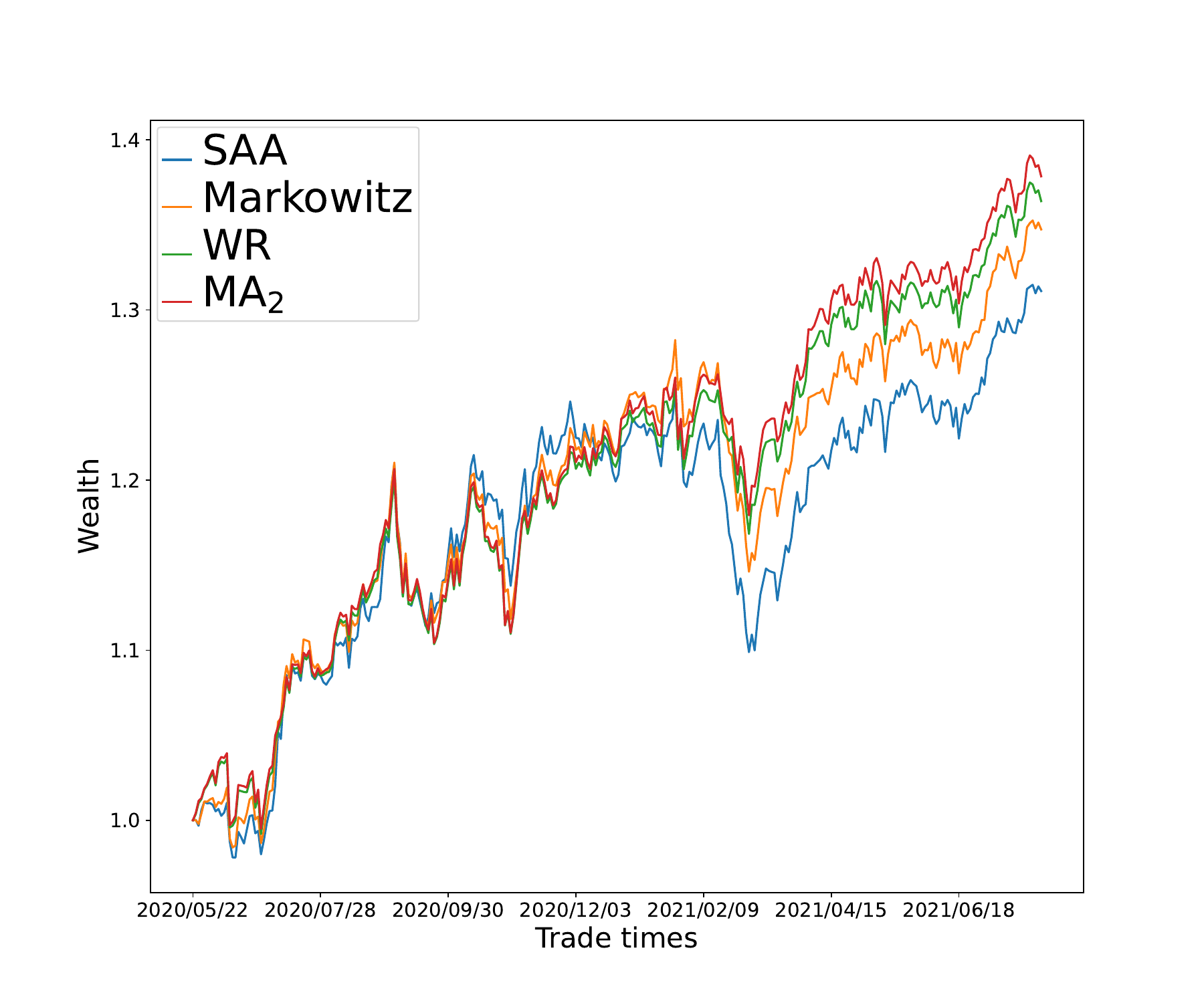}
							\label{rollingwindow_Wt}
							\medskip
						\end{center}
\end{figure}

\begin{table}[t]
\color{black}
\begin{center}
\caption{\color{black}Annualized return (AR), annualized volatility (AV) and Sharpe ratio (SR) for different strategies from May 2020 to Aug 2021; $r_0=0.2$ and  the risk-free rate  is  $0.165\%$ (1-year US Treasury yield on  May 22, 2020)
 }
	\label{tab:TC-perf}
 \renewcommand{\arraystretch}{1.2}
	\begin{tabular}{lcccccc}
		\toprule
		& \multicolumn{2}{c}{AR ($\%$)} & \multicolumn{2}{c}{AV ($\%$)} & \multicolumn{2}{c}{SR ($\%$)} \\
		\cmidrule(lr){2-3} \cmidrule(lr){4-5} \cmidrule(lr){6-7}
		Approach  & $ k = 2 $ & $ k = 20 $ & $ k = 2$ & $ k = 20 $ & $ k = 2$ & $ k = 20 $ \\
		\midrule
			SAA & 25.42 & 26.72 & 14.82 & 14.35 & 170.4 & 185.0 \\
	Markowitz  & 29.50 & 29.50 & \textbf{13.54} & 13.54 & 216.6 & 216.6 \\
	WR   & 32.75 & 30.79 & 14.25 & \textbf{13.30} & 228.7 & 230.3 \\
	${\rm MA}_2$ & \textbf{33.77} & \textbf{32.01} & 14.48 & 13.56 & \textbf{232.0} & \textbf{234.9} \\
% MV-WR & 22.10 & 25.60 & 15.30 & 12.86 & 139.9 & 193.6 \\
% 	MV-${\rm MA}_1$  & 25.26 & 25.93 & \textbf{12.92} & \textbf{12.81} & 190.1 & 197.0 \\
% 	MV-${\rm MA}_2$ & 22.24 & 25.71 & 13.90 & 12.84 & 154.9  & 194.7 \\
		\bottomrule
\end{tabular}
\end{center}
% ~\\[5pt]
%  \footnotesize{Note: W-WR: WR approach with Wasserstein uncertainty; W-${\rm MA}_2$: ${\rm MA}_2$ approach with Wasserstein uncertainty; MV-WR: WR approach with mean-variance uncertainty; MV-${\rm MA}_1$: ${\rm MA}_1$ approach with mean-variance uncertainty; MV-${\rm MA}_2$: ${\rm MA}_2$ approach with mean-variance uncertainty. }
~\\[5pt]
%\footnotesize{{\color{red}Note: We use the one-year U.S. Treasury bond yield on the initial investment date (May 22, 2019) as the risk-free rate, which is $0.165\%$.}}
\end{table}

{\color{black}Table \ref{tab:transaction_cost} presents the nominal transaction cost  for different strategies by using the average weight change $\sum_{t=1}^{T} \|\mathbf w_{t+1}-\mathbf w_{t}\|_1/T$
where $\mathbf w_{t}$ is the weight used on day $t\in [T]$ by each strategy (see \cite{OD18}). 
 The MA and WR approaches based on Wasserstein uncertainty have similar  transaction costs, which are smaller than the other methods in most cases. A similar  analysis using the mean-variance uncertainty is reported in Appendix \ref{app:portfolio}.}

\begin{table}[htb]
\centering \color{black}
\caption{\color{black}Nominal transaction cost $\sum_{t=1}^{T} \|\mathbf w_{t+1}-\mathbf w_{t}\|_1/T$ with $\epsilon=0.01$ and $T=299$}
	\label{tab:transaction_cost}
	\begin{tabular}{lcccccc}
		\toprule
		& \multicolumn{2}{c}{$ r_0 = 0.1 $} & \multicolumn{2}{c}{$ r_0 = 0.2 $} & \multicolumn{2}{c}{$ r_0 = 0.3 $} \\
		\cmidrule(lr){2-3} \cmidrule(lr){4-5} \cmidrule(lr){6-7}
		Approach & $ k = 2 $ & $ k = 20 $ & $ k = 2$ & $ k = 20 $ & $ k = 2$ & $ k = 20 $ \\
		\midrule
		SAA & 0.0549 & 0.0071 & 0.0871 & 0.0121 & 0.0969 & 0.0833 \\
	Markowitz  & \textbf{0.0102} & 0.0102 & 0.0110 & 0.0110 & 0.0746 & 0.0746 \\
	WR   & 0.0127 & \textbf{0.0032} & 0.0127 & \textbf{0.0033} & 0.0271 & 0.0382 \\
	${\rm MA}_2$ & 0.0114 & 0.0035 & \textbf{0.0105} & 0.0035 & \textbf{0.0239} & \textbf{0.0348} \\
%	WR with MV uncertainty & 0.0403 & 0.0449 & 0.0969 & 0.0978 & 0.1297 & 0.1297 \\
%	${\rm MA}_1$ with MV uncertainty & 0.0442 & 0.0463 & 0.0979 & 0.0979 & 0.1297 & 0.1297 \\
%	${\rm MA}_2$ with MV uncertainty & 0.0401 & 0.0454 & 0.0979 & 0.0978 & 0.1297  & 0.1297 \\
		\bottomrule
	\end{tabular}
\end{table}

					\section{Concluding remarks and discussions}\label{sec:conclude}
					The MA approach  for robust risk evaluation is proposed. %, along with a comprehensive study on its properties and implications.
					Below, we summarize some the advantages of the MA approach, which are illustrated and discussed through several technical results. %, in contrast to the WR approach.
					\begin{enumerate}
						\item The MA approach is natural to interpret, and it is motivated by the need for a robust distributional model.
						The WR approach is also natural to interpret, but the focus is on the risk value instead of the risk model. Different from the WR approach, the MA approach is built on  stochastic orders and lattice theory.
						The robust model produced by the MA approach can be readily applied to different risk evaluation procedures and decision problems (Section \ref{sec:31}). % and can be used for calibration, analysis, and simulation, even without any risk measures (Section \ref{sec:31}).
						\item The MA robust risk value is straightforward to compute (Section \ref{sec:32}). 						In some settings of uncertainty, the MA approach leads to explicit formulas for the robust model (Section \ref{sec:US}). In particular, it can   handle Wasserstein uncertainty in portfolio selection, based on 
						a new dimension reduction result on Wasserstein balls (Theorem \ref{th:7}).
						\item The MA approach admits reformulations in distributionally robust optimization similar to the WR approach, and it leads to a convex program when the loss function and the risk measure are convex  (Section \ref{sec:34}).
						\item The MA approach gives rise to the useful property of cEMA which characterizes  VaR and ES (Section \ref{sec:charac}). These results reveal a profound connection of the popular regulatory risk measures to robust risk evaluation methods, and  highlight the special roles of VaR and ES among all risk measures, which is in itself a highly active research topic in risk management. 
						%In particular,  for convex sets of uncertainty  (Theorems \ref{prop-cxES} and  \ref{th-ESM}), ES is the unique class of coherent risk measures allowing for a conversion between the WR approach and the MA approach (with $\preceq_2$) which is more tractable in optimization.
					\end{enumerate}

					The MA approach  requires a stochastic order to be specified. For an interpretation of prudent risk evaluation as in \eqref{eq-gap}, the  risk measure of interest should be consistent with this stochastic order.
					%This does not seem to be a problem for practical applications, as commonly used risk measures and decision criteria are at least monotone, and thus consistent with $\preceq_1$. Moreover, all convex risk measures are consistent with respect to $\preceq_2$.
					We recommend, in most applications, using $\preceq_2$ in an MA approach as the default option, for its nice interpretation in decision theory (strong risk aversion) and   mathematical properties as   developed  in this paper.

					We have focused on studying the MA and WR approaches together with risk measures throughout the paper.  Both approaches can be easily applied to other objectives other than risk measures, such as expected utility functions, rank-dependent expected utilities, or other behaviour   decision criteria. Some decision criteria  may work better with  notions of stochastic dominance other than FSD and SSD, and they may include considerations of model uncertainty by design; see e.g., \cite{HS01}, \cite{MMR06} and \cite{CHMM21}.

					Our theory is built on model spaces of univariate  cdfs on $\R$ for the following reasons. First, classic risk measures, especially the ones used in regulatory practice such as VaR and ES,  are defined on one-dimensional cdfs representing potential (portfolio) losses; second, commonly used stochastic orders, the key tool to build robust model aggregation  in this paper, are usually defined on one-dimensional cdfs and they are naturally interpretable in this setting; third, many problems that are multivariate in natural often boil down to   robust risk evaluation in one-dimension; see the settings in Sections  \ref{sec:63}, \ref{sec:MV}  and \ref{sec:72}.
					If desired by specific applications, the theory of the MA approach can be readily extended to a multi-dimensional setting (see e.g., \cite{EP06}) with the help from multivariate stochastic orders (e.g., \cite{SS07}) and set-valued risk measures (e.g., \cite{HH10}, \cite{HHR11} and \cite{AHR17}).

					In addition to the multi-dimensional extension mentioned above,
					we mention a few promising directions of future study.
					First, one can consider the recently introduced notions of fractional stochastic dominance of \cite{MSTW17} and \cite{HTZ20}, which generalize the first- and second-order stochastic dominance used in this paper.
					Second, instead of relying only on the set $\mathcal F$ of uncertainty, which treats each cdf as an element of equal importance ex ante,  we can equip  a prior probability measure $\mu$ on set $\mathcal F$, and this will open up  many new challenges or conceptualizing  and constructing  robust models in a similar framework to our theory.
					Third, we can apply the MA approach to many other settings of uncertainty other than the ones studied in Section \ref{sec:US}, and this will lead to convenient tools in various  new applications  and contexts.

					%The benchmark-loss VaR and the benchmark-loss ES, that may lose the property positive homogeneity, satisfy $\preceq_1$-cEMA and $\preceq_2$-cEMA, respectively. This means that it will bring in lots of extra risk measures if removing the property positive homogeneity in Theorems \ref{th-VaRM} and \ref{th-ESM}. The characterizations of Theorems \ref{th-VaRM} and \ref{th-ESM} without positive homogeneity could be open questions.

					%\ACKNOWLEDGMENT{The authors gratefully acknowledge the existence of
						%the Journal of Irreproducible Results and the support of the Society
						%for the Preservation of Inane Research.}

					% References here (outcomment the appropriate case)
					
					% CASE 1: BiBTeX used to constantly update the references
					%   (while the paper is being written).
					%\bibliographystyle{informs2014} % outcomment this and next line in Case 1
					%\bibliography{<your bib file(s)>} % if more than one, comma separated
					
					% CASE 2: BiBTeX used to generate mypaper.bbl (to be further fine tuned)
					%\input{mypaper.bbl} % outcomment this line in Case 2
					
					%If you don't use BiBTex, you can manually itemize references as shown below.

\newpage

\begin{center}
{\Large \bf  Online Supplement: Technical Appendices}

\medskip

{\Large Model Aggregation for Risk Evaluation  and Robust Optimization}

\medskip

\end{center}

\begin{appendix}
\setcounter{table}{0}
\setcounter{figure}{0}
\setcounter{equation}{0}
\renewcommand{\thetable}{EC.\arabic{table}}
\renewcommand{\thefigure}{EC.\arabic{figure}}
\renewcommand{\theequation}{EC.\arabic{equation}}

\setcounter{theorem}{0}
\setcounter{proposition}{0}
\renewcommand{\thetheorem}{EC.\arabic{theorem}}
\renewcommand{\theproposition}{EC.\arabic{proposition}}
\setcounter{lemma}{0}
\renewcommand{\thelemma}{EC.\arabic{lemma}}

\setcounter{corollary}{0}
\renewcommand{\thecorollary}{EC.\arabic{corollary}}

\setcounter{remark}{0}
\renewcommand{\theremark}{EC.\arabic{remark}}
\setcounter{definition}{0}
\renewcommand{\thedefinition}{EC.\arabic{definition}}

We organize the appendices as follows.
We first introduce some  extra notation and terminology in Appendix \ref{app:0}.
In Appendix \ref{app:lattice}, we formally introduce the lattice theory and prove a generalized version of Proposition \ref{th-sharp} in Section \ref{sec:4}.
The proofs of the main results 
%, which are formulated on general domains,
are presented in Appendices \ref{app:A} (other results in  Section \ref{sec:4}), \ref{app:optMA} (results in  Section \ref{sec:34}), \ref{app:E} (results  and omitted figures in Section \ref{sec:US}) and
\ref{app:C} (results in Section \ref{sec:EMAM}).
%and \ref{app:D} (results in  Section \ref{sec:MC}). 
In Appendix \ref{app:portfolio}, we
present the summary statistics of the return rates as well as the 
numerical results in portfolio selection under mean-variance uncertainty and Wasserstein uncertainty with a normal benchmark distribution, which complements the numerical studies in Section \ref{sec:Numerical}.
%Some technical remarks that were omitted from the main paper are collected in Appendix \ref{app:F}.

\section{Setting and notation}
\label{app:0}

We will use the same notation  as in the main paper. In addition, let $L^p$ be the space of random variables in $(\Omega,\mathcal B,\p)$ with finite $p$th moment, $p\in[0,\infty)$, and $L^\infty$ be the space of all bounded random variables. Accordingly, denote by $\mathcal M_p$, $p\in[0,\infty]$, the set of cdfs of all random variables in $L^p$, i.e., $\mathcal M_p$ is the set of all cdfs $F$ satisfying $\int_{\R}|x|^p\d F(x)<\infty$ for $p\in[0,\infty)$, and $\mathcal M_{\infty}$ is the set of all compactly supported cdfs.
% For a random variable $X$, its distribution function is denoted by $F_X(x)=\p(X\le x)$ for $x\in\R$.
On $\M_\infty$, we can define $\VaR_0$, $\VaR_1$ and $\ES_1$ which are   finite, by  $$\VaR_0 (F)= \inf \{x\in \R: F(x) > 0\}~\mbox{ and }~ \VaR_1 (F)= \inf \{x\in \R: F(x) \ge 1\},~~~F\in \M_{\infty},
$$
and $$
\ES_{1}(F)={\rm VaR}_1(F),~~~F\in \M_{\infty}.$$
Denote by
$\mathcal M_{\rm bb}$ the set of all cdfs $F$ with a support bounded from below, i.e., $F(x_0)=0$ for some $x_0\in\R$.
%, i.e., ${\rm VaR}_0(F)>-\infty$.
% In what follows,
% $\bigvee_1 \mathcal F$ and $\bigvee_2 \mathcal F$  represent the supremum of $\mathcal F$ in the ordered set $(\mathcal M_0,\preceq_1)$ and $(\mathcal M_1,\preceq_2)$, respectively {\color{black}(see Proposition \ref{prop-lattice-G} for the existence).}
For two real objects (numbers or functions) $f$ and $g$, $f \vee g$ is their (point-wise) maximum, and $f \wedge g$ is their (point-wise) minimum.

%\subsection*{Lower semicontinuity for $\mathcal M_p$ with $p\in [0,\infty]$}
%We generalize the characterization Theorems \ref{th-VaRM} and \ref{th-ESM} in Section \ref{sec:EMAM} to $\mathcal M_p$ ($p\in[0,\infty)$) and $\mathcal M_p$ ($p\in[1,\infty)$), respectively. In the generalization, we need some other lower semicontinuity for a risk measure to be consistent:\\
%{\bf $d$-lower semicontinuity}: Let $\rho:\mathcal M\to\R$. $\liminf_{n\to\infty}\rho(F_{n})\ge \rho(F)$ if $\{F_n\}_{n\in\N}\subseteq\mathcal M$, $F\in\mathcal M$ and $F_n\stackrel{\rm d}\to F$ as $n\to\infty$, where $\stackrel{\rm d}\to$ denotes convergence in distribution.\\
%{\bf $W_p$-lower semicontinuity}: Let $\rho:\mathcal M_p\to\R$, $p\in[1,\infty]$.  $\liminf_{n\to\infty}\rho(F_{n})\ge \rho(F)$ if $F_n\stackrel{w_p}\to F$ as $n\to\infty$, where $\stackrel{w_p}\to$ denotes convergence with respect to $p$th order Wasserstein metric, which is defined as $d_{W_p}(F,G)=\left(\int_0^1 |F^{-1}(s)-G^{-1}(s)|^p\d s\right)^{1/p}$.

%{\color{red} We have defined these notation and properties in the main part. Is it necessary to introduce them again in the appendix?}\\[10pt]

{\color{black}
For $p\in[0,\infty]$, 
% let $\rho:\M_p \to\R$ be a distribution based risk measure and $\widetilde\rho :L^p\to\R$ be its random-variable based risk measure, i.e., $\widetilde\rho(X)= \rho(F_X)$ for all $X\in L^p$.
% Some properties of a risk measure are listed below.
% {\it Translation invariance}: $\rho(F_{X+c})  =\rho(F_X) +c$ for any $X\in L^p$; {\it Positive homogeneity}:
% $\rho(F_{\lambda X} )=\lambda\rho(F_X)$ for any $\lambda>0$ and $X\in L^p$;
% {\it Convexity}:
% $\rho(F_{\lambda X+(1-\lambda)Y}) \le \lambda \rho(F_X) + (1-\lambda)\rho(F_Y)$ for any $\lambda \in [0,1]$ and $X,Y\in L^p$. 
% {\it Lower semicontinuity}: $\liminf_{n\to\infty}\rho(F_{n})\ge \rho(F)$ if $F_n,F\in\mathcal M_p$ for all $n$ and 
% $F_n\stackrel{\rm d}\to F$ as $n\to\infty$. 
the following are some properties of a risk measure $\rho:\mathcal M_p\to\R$ and its associated $\widetilde \rho:L^p\to \R$.  
{\it Translation invariance}:
$\widetilde \rho( X+c)=\widetilde \rho(X)+c$ for any $c\in\R$ and $X\in L^p$.
{\it Positive homogeneity}:
$\widetilde \rho({\lambda X} )=\lambda\widetilde \rho(X)$ for any $\lambda>0$ and $X\in L^p$.
{\it Convexity}:
$\widetilde \rho({\lambda X+(1-\lambda)Y}) \le \lambda \widetilde \rho(X) + (1-\lambda)\widetilde \rho(Y)$ for any $\lambda \in [0,1]$ and $X,Y\in L^p$.
{\it Lower semicontinuity}: $\liminf_{n\to\infty} \widetilde \rho(X_{n})\ge \widetilde \rho(X)$ if $X_n,X\in L^p$ for all $n$ and 
$X_n\stackrel{\rm d}\to X$ as $n\to\infty$, where $\stackrel{\rm d}\to$ denotes convergence in distribution. All the properties are defined for both $\rho$ and $\widetilde \rho$.
}

\section{Lattice theory and the proof of Proposition \ref{th-sharp}}\label{app:lattice}

In this appendix, we introduce  the lattice structure of an ordered set which complements the main paper.
%In order to propose the concept of our approach of robust risk evaluation, we need to introduce (a) the partial order which can be used to compare the riskiness of random losses; (b) the lattice structure of an ordered set in which the supremum exists and with which the form of our approach will be more concise, such as Proposition \ref{prop-MA} below.
For more details of the lattice theory, the reader is referred to \cite{DP02}. %In this paper, from Section \ref{sec:SD} onward, we commit the partial order to first or second-order stochastic dominance (denoted by $\preceq_1$ or $\preceq_2$) and set $\preceq=\preceq_1$ or $\preceq=\preceq_2$. The reasons of using these two partial orders are discussed in Section \ref{sec:SD}.

\begin{definition}\label{def-lattice}
	Let $(\M, \preceq)$ be an ordered set (i.e., $\preceq$ is a partial order on $\mathcal M$) and $\mathcal F\subseteq \M$.
	\begin{itemize}
		\item[(i)]  A set $\mathcal F$ is said to be \emph{bounded from above} (\emph{below}, resp.) in  $(\M,\preceq)$, if the set of upper (lower, resp.) bounds  on $\mathcal F$, denoted by  $U(\mathcal F)$ ($L(\mathcal F)$, resp.), is nonempty, where
		\begin{align}\label{eq-U}
			%U(\mathcal F)=\bigcap_{F\in\mathcal F}\{G\in\M: F\preceq G\},
			U(\mathcal F)=\{G\in\M: F\preceq G,~{\forall}F\in\mathcal F\}
			%\end{align}
			%\begin{align}\label{eq-L}
			~~{\rm and}~~
			%L(\mathcal F)=\bigcap_{F\in\mathcal F}\{G\in\M: G\preceq F\}.
			L(\mathcal F)=\{G\in\M: G\preceq F,~{\forall}F\in\mathcal F\}.
		\end{align}

		\item[(ii)]  For $\mathcal F\subseteq\M$ which is bounded from above (below, resp.), if there exists   $F_0\in U(\mathcal F)$ ($L(\mathcal F)$, resp.) such that $F_0\preceq$ ($\succeq$, resp.) $G$  for all $G\in U(\mathcal F)$ ($L(\mathcal F)$, resp.), then $F_0$ is called the \emph{supremum} (\emph{infimum}, resp.) of $\mathcal F$ and we write $\bigvee\mathcal F=F_0$ ($\bigwedge\mathcal F=F_0$, resp.).
		%if $L(\mathcal F)$ has a largest element $\bigwedge\mathcal F$, then it is called the \emph{infimum} of $\mathcal F$.
		
		\item[(iii)] If for all $F,G\in \M$, $\bigvee\{F,G\}$ and $\bigwedge\{F,G\}$ exist, then $(\M, \preceq)$ is called a \emph{lattice}.
		%\footnote{The original definition of complete lattice is also closed under infimum whose definition will be given Appendix. We  use this definition as we only care the closure under supremum},
		If $\bigvee\mathcal F$ exists  for all $\mathcal F\subseteq\M$ that {\color{black}are} bounded from above and $\bigwedge\mathcal F$   exists for all $\mathcal F\subseteq\M$ that {\color{black}are} bounded from below, then $(\M, \preceq)$ is called a \emph{complete lattice}.\footnote{The definition of complete lattice in \cite{DP02} is slightly different to ours. In \cite{DP02}, a complete lattice has the largest and the smallest elements, and our $\mathcal M$ does not. Nevertheless, if we extend $\mathcal M$ to $\overline{\mathcal M}:=\mathcal M\cup\{F_{\min},F_{\max}\}$ where $F_{\min}\preceq F$ and $F\preceq F_{\max}$ for all $F\in\mathcal M$, then our definition of complete lattice on the ordered set $(\overline{\mathcal M},\preceq)$ is equivalent to the one of \cite{DP02}.}
	\end{itemize}
\end{definition}

%For partial orders first-stochastic dominance (FSD)\footnote{For two distributions $F$ and $G$, $F$ is said to dominance $G$ with respect to first-stochastic dominance if $F(x)\le G(x)$ for any $x
	%\in\R$. This is denoted by $F\succeq_1 G$.} and second-stochastic dominance  (SSD)\footnote{For two distributions $F$ and $G$, $F$ is said to dominance $G$ with respect to second-stochastic dominance if $\int_{-\infty}^xF(t)\d t\le \int_{-\infty}^x G(t)\d t$ for any $x
	%\in\R$. This is denoted by $F\succeq_2 G$.}, which are denoted by $\preceq_{1}$ and $\preceq_{2}$, respectively, the following result shows that $(\mathcal M_1,\preceq_1)$ and $(\mathcal M_1,\preceq_2)$ are both complete lattice and gives the equivalent characterizations of boundedness from above. In what follows, we use $\bigvee_1 \mathcal F$  (resp. $\bigvee_2 \mathcal F$) to represent the supremum of uncertainty set $\mathcal F$ on ordered set $(\mathcal M_1,\preceq_1)$ (resp. $(\mathcal M_1,\preceq_2)$).

\begin{remark}
	In case $(\M,\preceq)$ is a lattice which is not complete, $\bigvee \mathcal F$  may not exist even if $\mathcal F$ is bounded from above. In this case, the definition of the MA robust risk value needs to be modified. We can alternatively define
	{\color{black}$  \rho^{\rm MA}(\mathcal F) = \inf_{G\in U(\mathcal F)}\rho(G) $}
	where $U(\mathcal F)$ is defined by \eqref{eq-U},
 % $ = \{G\in \mathcal M: G\succeq F,\,{\forall} F\in\mathcal F\}$,
	and this definition is equivalent to \eqref{eq-MA} if  $(\M,\preceq)$ is a complete lattice and $\rho$ is $\preceq$-consistent.
	{\color{black} For any partial order $\preceq$, the supremum and infimum are both unique whenever they exist.}%In all $(\M,\preceq)$ that we consider in this paper,  $\bigvee \mathcal F$ exists whenever $\mathcal F$ is bounded from above.
\end{remark}
%We will not need this definition in this paper.

For stochastic dominances $\preceq_1$ and $\preceq_2$, there are several equivalent definitions that are useful throughout the paper; see e.g., \cite{BM06}. In case of $\preceq_1$, the following statements are equivalent: (i) $F\preceq_1 G$; (ii) $F(x)\ge G(x)$ for all $x\in\R$; (iii) $F^{-1}(\alpha)\le G^{-1}(\alpha)$ for all $\alpha\in(0,1)$. In case of $\preceq_2$, the following statements are equivalent: (i) $F\preceq_2 G$; (ii) $\pi_F(x)\le \pi_G(x)$ for all $x\in\R$ where $\pi_F$ is the integrated survival function defined by \eqref{eq-pi}; (iii) $E_F(\alpha)\le E_G(\alpha)$ for all $\alpha\in(0,1)$ where $E_F$ is the \emph{integrated quantile function}\footnote{\color{black}The integrated quantile function $E_F$ is also called the (upper) absolute Lorenz function, see, e.g., \cite{S83} and \cite{C11}.} defined by
\begin{align}\label{eq-ESprofile}
	E_F(\alpha) = (1-\alpha)\ES_\alpha(F) = \int_{ \alpha}^1 F^{-1}(s) {\d} s,~~\alpha\in [0,1].
\end{align}

The complete lattice structure of $(\mathcal M_0,\preceq_1)$ and $(\mathcal M_+,\preceq_2)$
and the formulas for the suprema are known in the literature; see \cite{KR00}. Here $\mathcal M_+=\{F\in\mathcal M_0: \int_0^\infty x{\d}F(x)<\infty\}$.
%Proposition \ref{th-sharp} can be shown based on the above result and some extra technical analysis.
The {\color{black}following} proposition which is a generalized result of Proposition \ref{th-sharp} considers general space $\mathcal M_p$, $p\in[0,\infty]$ with partial order $\preceq_1$ and $\preceq_2$.
Specifically, in Proposition \ref{prop-lattice-G} below,  (a)  generalizes  Proposition \ref{th-sharp} (a) to the domain $\mathcal M_p$, $p\in [0,\infty]$, and  similarly, (b) and (c) generalize Proposition \ref{th-sharp} (b).
%and (c) is a parallel result for the space $\mathcal M_p$, $p\in (1,\infty]$.
\begin{proposition}\label{prop-lattice-G}
	\begin{itemize}
		\item[(a)] For each $p\in[0,\infty]$, the partially ordered set $(\mathcal M_p,\preceq_1)$ is a complete lattice. If $\mathcal F\subseteq\mathcal M_p$ is bounded from above, then its supremum $\bigvee_1\mathcal F$ is given by $\inf_{F\in\mathcal F}F$, and the left quantile function of $\bigvee_1\mathcal F$ is $\sup_{F\in\mathcal F} F^{-1}$.
		
		\item[(b)] The partially ordered set $(\mathcal M_1,\preceq_2)$ is a complete lattice, and for $\mathcal F$ that is bounded from above,
		\begin{equation*}
			\pi_{\bigvee_2\mathcal F}=\sup_{F\in\mathcal F}\pi_F, ~~~~~ %\mbox{~and thus~}
			\bigvee_2 \mathcal F=1+\left(\sup_{F\in\mathcal F}\pi_F\right)'_+.
		\end{equation*}		
		\item[(c)] For each $p\in (1,\infty]$, the
		ordered set  $(\mathcal M_p,\preceq_2)$ is a lattice and not a complete lattice. The supremum is given by $\bigvee_2 \{F,G\}=1+( \pi_F\vee \pi_G )_+'$ for $F,G\in\mathcal M_p$.
	\end{itemize}
\end{proposition}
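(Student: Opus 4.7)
The three parts of the proposition call for rather different arguments; I plan to treat them in order.

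For part (a), let $\mathcal F \subseteq \mathcal M_p$ be bounded from above by $G_0$ and set $H := \inf_{F \in \mathcal F} F$ pointwise. As the infimum of upper semicontinuous monotone functions, $H$ is increasing and upper semicontinuous, which combined with monotonicity gives right-continuity. The bounds $G_0 \le H \le F_0$ (pointwise, for any fixed $F_0 \in \mathcal F$) yield the correct limits at $\pm\infty$ and, after translating to quantiles, show $H \in \mathcal M_p$ by sandwiching $H^{-1}$ between $F_0^{-1}$ and $G_0^{-1}$ in $L^p([0,1])$. That $H$ is the $\preceq_1$-supremum is direct: $H \le F$ pointwise gives $F \preceq_1 H$, while $G \le F$ pointwise for any other upper bound $G$ forces $G \le \inf_F F = H$. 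The quantile formula follows from noting that $H(x) < \alpha$ if and only if $F(x) < \alpha$ for some $F \in \mathcal F$, i.e., if and only if $x < \sup_F F^{-1}(\alpha)$.

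For part (b), set $\Pi := \sup_{F \in \mathcal F} \pi_F$. Each $\pi_F$ is non-negative, convex, tends to $0$ at $+\infty$, and is $1$-Lipschitz with right-derivative in $[-1,0]$; the sup inherits convexity and all pointwise bounds from $G_0$, and the Lipschitz bound $\pi_F(y) - \pi_F(x) \le y - x$ for $x \le y$ transfers to $\Pi$. Since $x \mapsto \Pi(x) + x$ is increasing and bounded above by $\mathfrak m(G_0)$, its limit at $-\infty$ is finite; this equips $\Pi$ with all the defining properties of an integrated survival function, so $H := 1 + \Pi'_+$ is a valid cdf in $\mathcal M_1$, and the fundamental theorem of calculus for convex functions recovers $\pi_H = \Pi$. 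That $H = \bigvee_2 \mathcal F$ is immediate: $\pi_H \ge \pi_F$ for all $F$ by construction, and any other upper bound $G$ satisfies $\pi_G \ge \pi_F$ for all $F$, hence $\pi_G \ge \Pi = \pi_H$.

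For the lattice part of (c), set $H := 1 + (\pi_F \vee \pi_G)'_+$ for $F, G \in \mathcal M_p$; by (b) this satisfies $\pi_H = \pi_F \vee \pi_G$ and is already the $\preceq_2$-supremum in $\mathcal M_1$. The task is to verify $H \in \mathcal M_p$. I plan to establish the pointwise sandwich $F \wedge G \le H \le F \vee G$ by a case analysis: if $\pi_F(x_0) > \pi_G(x_0)$ then $\pi_F > \pi_G$ in a right-neighborhood by continuity, so $\pi_H \equiv \pi_F$ there and $H(x_0) = F(x_0)$; the symmetric case is analogous, and at points of equality one computes $\pi_H'(x_0^+) = \max(\pi_F'(x_0^+), \pi_G'(x_0^+))$ to obtain $H(x_0) = F(x_0) \vee G(x_0)$. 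Given the sandwich, $1 - H \le (1 - F) + (1 - G)$ and $H \le F + G$, so the $p$-th moment of $H$ on each side of $0$ is bounded by the corresponding moments of $F$ and $G$, giving $H \in \mathcal M_p$.

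The hardest part is the failure of completeness in (c), for which I plan an explicit example written for $p = 2$ (the general $p > 1$ is identical after replacing $t^2$ with $t^p$). Take the cdf $H$ defined by $H(x) = 1/(1 + x^2)$ for $x \le 0$ and $H(x) = 1$ for $x \ge 0$; then $H \in \mathcal M_1 \setminus \mathcal M_2$ and $H \preceq_2 \delta_0 \in \mathcal M_\infty$. Let $F_n$ be the Winsorization of $H$ at $-n$, replacing $X \le -n$ by a point mass at the conditional mean $\mu_n = \E[X \mid X \le -n]$. Direct computation yields $\pi_{F_n}(x) = \pi_H(x)$ for $x \ge -n$, and Jensen gives $F_n \preceq_{\mathrm{cx}} H$; since $F_n \in \mathcal M_\infty \subseteq \mathcal M_2$, we conclude $\bigvee_2\{F_n\} = H$ in $(\mathcal M_1, \preceq_2)$, while the set is $\preceq_2$-bounded in $\mathcal M_2$ by $\delta_0$. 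For non-existence of the supremum in $\mathcal M_2$, I introduce comparators $G_M$ defined as $H$ conditioned on $X > -M$; computing $1 - G_M(t) = (1 - H(t))/(1 - H(-M))$ gives $\pi_{G_M}(x) = \pi_H(x)/(1 - H(-M))$ on $[-M, 0]$, so each $G_M \in \mathcal M_\infty$ satisfies $H \preceq_2 G_M$ and is therefore an $\mathcal M_2$-upper bound of $\{F_n\}$, and $\pi_{G_M} \downarrow \pi_H$ pointwise as $M \to \infty$ because $1 - H(-M) \to 1$. A putative supremum $H^* \in \mathcal M_2$ would satisfy $\pi_H \le \pi_{H^*} \le \pi_{G_M}$ for every $M$; letting $M \to \infty$ forces $\pi_{H^*} = \pi_H$ and hence $H^* = H \notin \mathcal M_2$, a contradiction. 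The main technical obstacle is verifying the two order relations $F_n \preceq_2 H$ and $H \preceq_2 G_M$ together with the monotone convergence $\pi_{G_M} \downarrow \pi_H$, all of which I plan to handle by explicit integration of the survival functions.
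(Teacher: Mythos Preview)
Your overall strategy matches the paper's closely for parts (a), (b), and the lattice half of (c): the same sandwich argument for membership in $\mathcal M_p$, the same verification that $\sup_F \pi_F$ is an integrated survival function, and the same pointwise bound $F\wedge G \le H \le F\vee G$ for the pairwise supremum.

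There is one genuine slip in (b). You write that $x\mapsto \Pi(x)+x$ is ``bounded above by $\mathfrak m(G_0)$'' and conclude the limit at $-\infty$ is finite. Both the bound and the inference go the wrong way: $\pi_{G_0}(x)+x = \E[\max(X,x)]\ge \mathfrak m(G_0)$, and an increasing function bounded \emph{above} controls the limit at $+\infty$, not $-\infty$. The correct argument (which the paper uses) is to bound from \emph{below}: for any fixed $F^*\in\mathcal F$, $\Pi(x)+x \ge \pi_{F^*}(x)+x \to \mathfrak m(F^*)\in\R$, which is what forces $1+\Pi'_+ \to 0$ at $-\infty$.

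You also do not address the \emph{infimum} side, which is needed for the ``complete lattice'' claims in (a) and (b). For $\preceq_1$ this is symmetric, but for $\preceq_2$ it is not: the paper handles it via the integrated quantile function $E_F(\alpha)=\int_\alpha^1 F^{-1}(s)\,\d s$, showing $\inf_F E_F$ is again an integrated quantile function. Your sketch should at least signal this.

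For the failure of completeness in (c), your example is different from the paper's (Cauchy-type tail versus pure power law) but equally valid, and your argument is actually more complete. The paper simply observes that the $\mathcal M_1$-supremum $F$ lies outside $\mathcal M_p$ and declares the lattice incomplete; your introduction of the comparators $G_M\in\mathcal M_\infty$ with $\pi_{G_M}\downarrow\pi_H$ makes explicit why no $\mathcal M_p$-supremum can exist (any candidate $H^*$ is squeezed to $H\notin\mathcal M_p$). This closes a step the paper leaves to the reader.
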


\begin{proof} We first give one fact: For $p\in [0,\infty]$ and  an increasing and right-continuous function $H:\R\to[0,1]$, if  $F,G\in \M_p$ and $F\le H\le G$, then $H\in\mathcal M_p$. It suffices to verify that
	\begin{itemize}
		\item[1.] $0\le \lim_{x\to-\infty} H(x)\le \lim_{x\to-\infty} G(x)=0$  and $1\ge \lim_{x\to \infty} H(x)\ge \lim_{x\to-\infty} F(x)=1$, which imply {\color{black}that $H$ is a cdf on $\R$}, that is, $H\in \M_0$.
		\item [2.] If $p\in (0,\infty)$, then we have $ F\succeq_1 H\succeq_1 G$
		and thus  $\int_0^\infty x^p\d H(x)\le \int_0^\infty x^p\d F(x)<\infty$ and $\int_{-\infty}^0 (-x)^p\d H (x)\le \int_{-\infty}^0 (-x)^p\d G(x)<\infty$. It follows that
		$
		\int_\R|x|^p\d H(x)<\infty,
		$ that is, $H\in\mathcal M_p$.
		\item [3.] If $F,G\in \M_\infty$, then there exists $x$, $y\in\R$ such that $G(x)=0$ and $F(y)=1$. Then we have $H(x)=0$ and $H(y)=1$, that is, $H\in\M_\infty$.
	\end{itemize}
	%{\color{black}Combining the above three facts, we have }
	(a) %The result for $p=0$  has been shown by Lemma 3.1 of \cite{KR00}.
	For $p\in[0,\infty]$, let  $\mathcal F\subseteq\mathcal M_p$.
	% We only consider the cases $p\in(0,\infty)$ since the case of $\mathcal M_\infty$ is trivial. To see this,
	{\color{black}Suppose that} $\mathcal F$ is bounded from above. {\color{black}Define} $H=\inf_{F\in\mathcal F} F$ which is increasing and right-continuous.
	{\color{black}Then} there exists $G\in\mathcal M_p$ such that $F\ge H\ge G$ for any $F\in \mathcal F$.
	% This implies $F_1\in\mathcal M_0$ and thus, $F\preceq_1 F_1\preceq_1 G$.
	%  If $p=\infty$, it follows that
	By the above fact, we have $H\in \M_p$.
	If $\mathcal F$ is bounded from below, define $H(x)=\lim_{y\downarrow x} H_1(y)$ where $H_1 = \sup_{F\in\mathcal F} F$. Then $H$ is increasing and right-continuous and  there exists $G\in\mathcal M_p$ such that $G\ge H\ge F$ for any $F\in \mathcal F$. By the above fact, we have $H\in \M_p$.
	Therefore, we have
	that $(\mathcal M_p,\preceq_1)$ is a complete lattice for $p\in[0,\infty]$.
	The statement on the left quantile of $\bigvee_1\mathcal F$ follows  from
	$(\inf_{F\in \mathcal F}  F)^{-1}  = \sup_{F\in \mathcal F} F^{-1}$.
	%the definition of a left-quantile and the fact that the supremum of lower semicontinuous is again lower semicontinuous.
	Hence, we complete the proof of (a).
	
	(b) The proof is similar to that of Theorem 3.4 of \cite{KR00} which shows that $(\mathcal M_+,\preceq_2)$ is a complete lattice. We give a proof for completeness. Let $\mathcal F\subseteq \mathcal M_1$ be bounded from above. There exists  $G\in\mathcal M_1$ such that $F \preceq_2 G$ for all $F\in\mathcal F$, that is, $\sup_{F\in\mathcal F}\pi_F(x) \le \pi_G(x)$ for $x\in \R$.  One can check that \begin{itemize} \item [1.] $\pi_0(x):=\sup_{F\in\mathcal F}\pi_F(x)$ is decreasing convex as each $\pi_F(x)$ is decreasing convex. This implies $ 1+ (\pi_0)_+'(x)$ is right-continuous and increasing.
		\item [2.] $\lim_{x\to\infty} \pi_0(x) \le\lim_{x\to\infty} \pi_G(x)= 0$
		%and $(\pi_0)_+'\le 0$ is decreasing
		which implies $\lim_{x\to\infty} (\pi_0)_+'(x)=0$, that is, $\lim_{x\to\infty} ( 1+ (\pi_0)_+'(x))=1$.
		\item [3.] Since $x+\pi_F(x)$ is increasing in $x$ for all $F\in\mathcal F$, we have
		$x+\pi_0(x)$ is increasing in $x$ and thus $ \lim_{x\to-\infty} x+ \pi_0(x)$ exists (may take $-\infty$). Let $F^*\in\mathcal F$, and we have $x+ \pi_0(x)\ge x+\pi_{F^*}(x)$ for all {\color{black}$x\in\R$.} Noting that $\lim_{\color{black}x\to-\infty} x+\pi_{F^*}(x) =\mathfrak{m}({F^*})\in\R$, we have $ \lim_{x\to-\infty} x+ \pi_0(x) \in\R$, which implies $\lim_{x\to-\infty} 1+ (\pi_0)_+'(x) =0$.
	\end{itemize}
	Combining the above three observations, we have $H=1+(\sup_{F\in\mathcal F}\pi_F)_+'\in\M_1$. %$\sup_{F\in\mathcal F}\pi_F(x)$ is an integrated survival function of some distribution, say
	By definition of supremum, it is standard  to check that   $\bigvee_2\mathcal F=H$.
	
	Let $\mathcal F\subseteq \mathcal M_1$ be bounded from below. There exists  $G\in\mathcal M_1$ such that $G \preceq_2 F$  for all $F\in\mathcal F$, that is, $E_G(\alpha)\le \inf_{F\in\mathcal F}E_F(\alpha)$ for $\alpha\in [0,1]$. Similar to the proof of Steps 1-3 for $\mathcal F$ that is bounded from above, one can show that  $\inf_{F\in\mathcal F} E_F$ is an integrated quantile function of some cdf in $\M_1$, say $H$. By definition of infimum, we have  $H =\bigwedge_2\mathcal F$. It follows from the relation between a cdf and its integrated quantile function that $H^{-1} =-(\inf_{F\in\mathcal F} E_F)_-'$. This completes the proof of (b).
	
	%
	%
	% In Theorem 3.4 of \cite{KR00}, the complete lattice structure of $(\mathcal M_+,\preceq_2)$ is derived where $\mathcal M_+=\{F\in\mathcal M_0:\int_0^\infty x{\d}F(x)<\infty\}$. Here, we can share a similar proof to derive that $(\mathcal M_1,\preceq_2)$ is  a complete lattice with the integrated survival function $\pi_{\bigvee_2\mathcal F}=\sup_{F\in\mathcal F}\pi_F$. By the relation between distribution function and integrated survival function. %; see e.g., Property 1.7.3 of \cite{D05}, we have $\bigvee_2\mathcal F=1+\left(\sup_{F\in\mathcal F}\pi_F\right)_+'$.
	
	(c) %By the result of (b), we know that for $F,G\in\mathcal M_p$, $\bigvee_2\{F,G\}$ and $\bigwedge_2\{F,G\}$ exist on $\mathcal M_1$. %Thus, it suffices to show that
	%$\bigvee_2\{F,G\},\bigwedge_2\{F,G\}\in\mathcal M_p$ for all $F,G\in\mathcal M_p$.
	% the lattice structure of $(\mathcal M_p,\preceq_2)$, $p\in(1,\infty]$ can be verified by proving that $H:=1+(\pi_F\vee\pi_G)_+'\in\mathcal M_p$ for all $F,G\in\mathcal M_p$.
	%To see this,
	For $F,G\in\mathcal M_p$, define $%H_1=\bigvee_1\{F,G\},~~H_2=\bigwedge_1\{F,G\},~~
	F_1=\bigvee_2\{F,G\}$ and $F_2=\bigwedge_2\{F,G\}.$ It follows from (b) that
	$F_1=1+(\pi_F\vee \pi_G)_+'$ which implies
	$\min\{F,G\}\le F_1\le \max\{F,G\}$, and
	$F_2^{-1} =-(E_F\wedge E_G)_-'$ which implies $\min\{F^{-1},G^{-1}\}\le F_2^{-1}\le\max\{F^{-1},G^{-1}\}$, and hence, $\min\{F,G\}\le F_2\le \max\{F,G\}$.
	%
	%%$\in[F(x)\wedge G(x),F(x)\vee G(x)]=[H_1(x),H_2(x)]$ for all $x\in\R$.
	%%Hence, we have $H_2\preceq_1 T_1\preceq_1 H_1$. On the other hand,
	%%define $E_F(\alpha)= \int_{ \alpha}^1 F^{-1}(s) {\d} s$ as the integrated quantile function.
	%%Note that $F\preceq_2 G$ if and only if $E_F(\alpha)\le E_G(\alpha)$ for all $\alpha\in(0,1)$, where $E_F$ is the integrated quantile function defined by \eqref{eq-ESprofile}. We have
	%Note that $E_{T_2}=E_{F}\wedge E_{G}$ as $E_{F}\wedge E_{G}$ remains to be an integrated quantile function. By the fact that $F^{-1}=-(E_F)'_{-}$ where $(E_F)'_{-}$ is the left derivative of $E_F$, we know that $T_2^{-1}$ can only be equal to $F^{-1}$ or $G^{-1}$, which implies $H_2^{-1}\le T_2^{-1}\le H_1^{-1}$, and hence, $H_2\preceq_1 T_2\preceq_1 H_1$. Therefore, we conclude that $H_2\preceq_1 T_1,T_2\preceq_1 H_1$. If $p=\infty$, then it follows immediately that $T_1,T_2\in\mathcal M_\infty$ since $H_1,H_2\in\mathcal M_\infty$. If $p\in(0,\infty)$, then  for $i=1,2$, it holds that
	%\begin{align*}
	% \int_0^\infty x^p {\d}T_i(x)
	%&\le \int_0^\infty x^p {\d}H_1(x)
	%\le\int_0^1 |\VaR_s(F)\vee\VaR_s(G)|^p{\d}s\\
	%&\le \int_0^1 |\VaR_s(F)|^p+|\VaR_s(G)|^p{\d}s<\infty,
	%\end{align*}
	%and
	%\begin{align*}
	%\int_{-\infty}^0 (-x)^p {\d}T_i(x)
	%&\le \int_{-\infty}^0 (-x)^p {\d}H_2(x)
	%\le\int_0^1 |\VaR_s(F)\wedge\VaR_s(G)|^p{\d}s\\
	%&\le \int_0^1 |\VaR_s(F)|^p+|\VaR_s(G)|^p{\d}s<\infty.
	%\end{align*}
	By the fact in the beginning of the proof, we have  $F_1,F_2\in\mathcal M_p$, and thus $(\mathcal M_p,\preceq_2)$ is a lattice for $p\in(1,\infty]$.
	%We obtain that $H_1\preceq_1 \bigvee_2\{F,G\},~\bigwedge_2\{F,G\}\preceq_1 H_2$. Hence, we have
	%$$
	%\inf\{F^{-1},G^{-1}\}=(H_1)^{-1}\le \left(\bigvee_2\{F,G\}\right)^{-1},~\left(\bigwedge_2\{F,G\}\right)^{-1}\le (H_2)^{-1}=\sup\{F^{-1},G^{-1}\}.
	%$$
	%This implies $|(\bigvee_2\{F,G\})^{-1}|,~|(\bigwedge_2\{F,G\})^{-1}|\le |F^{-1}|+|G^{-1}|$. Hence, we have
	%$$
	%\int_0^1 \left|\left(\bigvee_2\{F,G\}\right)^{-1}(s)\right|^p{\d}s,~\int_0^1 \left|\left(\bigvee_2\{F,G\}\right)^{-1}(s)\right|^p\d s\le \int_0^1 (|F^{-1}(s)|+|G^{-1}(s)|)^p\d s<\infty,
	%$$
	%where the second inequality follows from $F,G\in\mathcal M_p$. Thus, we conclude that $\bigvee_2\{F,G\},\bigwedge_2\{F,G\}\in\mathcal M_p$, and hence, $(\mathcal M_p,\preceq_2)$ is a lattice for $p\in(1,\infty]$.
	
	Below, we give a counterexample to illustrate that $(\mathcal M_p,\preceq_2)$  is not complete lattice for $p\in(1,\infty]$.
	For $p\in(1,\infty)$, %and $F\in\mathcal M_1\setminus\mathcal M_p$ with
	define $F(x)=(-x)^{-p}$ for $x\le -1$. We have $F\not\in \M_p$ and for $y< -1,$ let $F_y$ be a cdf with
	%We now define a class of
	integrated survival function
	%$$
	%\pi_{F_y}(x)=\left(\max\left\{-x-\frac p{p-1},\,\, \pi_F'(y)(x-y)+\pi_F(y)\right\}\right)_+.
	%$$
	$$
	\pi_{F_y}(x)=\max\left\{\left(-x-\frac p{p-1}\right)_+,\,\, \pi_F'(y)(x-y)+\pi_F(y)\right\}.
	$$
	It is clear that $F_y\in\mathcal M_\infty$ for all $y<-1$ and the set $\{F_y\}_{y<-1}$ is bounded from above as $F_y\preceq_2 \delta_{-1} $ for $y<-1$.
	Noting that $\sup_{y<-1}\pi_{F_y}=\pi_F$ and $F\not\in\mathcal M_p$, we have that $(\mathcal M_p\preceq_2)$ is not a complete lattice.
 \end{proof}

%\subsubsection{Proofs for Section \ref{sec:2} and Appendix \ref{apdx-lattice}}

\section{Proofs for other results in Sections \ref{sec:4}}
\label{app:A}

\noindent{\bf Proof of Proposition \ref{prop-EMAM}.}~
For a fixed $x\in\R$, both $F\mapsto F(x)$ and $F\mapsto \pi_F(x)$ are affine on $\mathcal M_1$. Hence, for
	%any $x\in\R$ and
	$F\in{\rm conv}\mathcal F$ with $F=\sum_{i=1}^n\lambda_i F_i$ where $(\lambda_1,\dots,\lambda_n)\in\Delta_n$ and $F_i\in\mathcal F$ for $i\in[n]$, there exist $G_1,G_2\in\{F_1,\dots,F_n\}\subseteq\mathcal F$ such that $G_1(x)\le F(x)$ and $\pi_{G_2}(x)\ge \pi_F(x)$.
	The results follow immediately from Proposition \ref{th-sharp}. \qed

\noindent{\bf Proof of Theorem \ref{prop-cxES}.}~
	(a) Since $\mathfrak{m}(F)=\lim_{x\to-\infty}\left\{x+\pi_F(x)\right\}$ for each $F\in \M_1$, we have
	\begin{align*}
		\mathfrak{m}^{{\rm MA}_2}(\mathcal F)  &=\mathfrak{m}\left(\bigvee_2\mathcal F\right)
		{\color{black}=\lim_{x\to-\infty}\left\{x+\pi_{\bigvee_2\mathcal F} (x) \right\}}
		=\lim_{x\to-\infty}\left\{x+\sup_{F\in\mathcal F}\pi_{F}(x)\right\}\\
		&=\lim_{x\to-\infty}\sup_{F\in\mathcal F} \left\{\mathfrak{m}(F)+\int_{\R}(x-y)_+\d F(y)\right\}
		\le\sup_{F\in\mathcal F}\mathfrak{m}(F)+\lim_{x\to-\infty}\sup_{F\in\mathcal F} \int_{\R}(x-y)_+\d F(y)\\
		&=\sup_{F\in\mathcal F}\mathfrak{m}(F)=\mathfrak{m}^{\rm WR}(\mathcal F),
		%&=\lim_{x\to-\infty}\sup_{F\in\mathcal F}\left\{\E[F]+\E[(x-X_F)_+]\right\}\\
		%&\le \lim_{x\to-\infty}\sup_{F\in\mathcal F}\E[F]+\lim_{x\to-\infty}\sup_{F\in\mathcal F}\E[(x-X_F)_+]\\
		%&=\sup_{F\in\mathcal F}\E[F]=\E^{\rm WR}[\mathcal F],
	\end{align*}
	{\color{black}where the third equality comes from  (ii) of Proposition \ref{th-sharp}, and the forth equality follows from $x+ \pi_F(x) = x+ \E^F[(X-x)_+] = \mathfrak{m}(F) + \E^F[(x-X)_+]$.}
	The converse direction $\mathfrak{m}^{{\rm MA}_2}(\mathcal F)\ge\mathfrak{m}^{\rm WR}(\mathcal F)$ is trivial. Hence, we complete the proof of (a).

	(b) Suppose that $\mathcal F\subseteq \M_1$ is a convex set which is $\preceq_2$-bounded.
	Denote by $\Pi_{\mathcal G}= \sup_{F\in \mathcal G} \pi_F$ for any set $\mathcal G\subseteq \M_1$.
	If $\mathcal G$ is a convex polytope, then by Theorem 1 of \cite{ZF09}, we have
	\begin{align}  \ES_\alpha  ^{\rm WR}(\mathcal G) &= \ES_\alpha^{{\rm MA}_2} (\mathcal G) .\label{eq:useconvex}\end{align}
	Let  $c=   \ES_\alpha^{{\rm MA}_2} (\mathcal F )$. Using \eqref{eq:illustrateES2}, we get
	\begin{align}  x+\frac{1}{1-\alpha}\Pi_{\mathcal F }(x)\ge c\mbox{~~for all $x \in \R$.}\label{eq:useNN}\end{align}
	Take an arbitrary $G\in \mathcal F$. Since  $(\pi_{G})_+'(x)\to -1$  as $x\to -\infty$, we have $(1-\alpha) x+ \pi_G(x) \to \infty$ as $x\to -\infty$. There exists $x_0 <c$ such that  \begin{align}x+\frac{1}{1-\alpha}\pi_G(x)\ge c\mbox{~~for all $x<x_0$.}\label{eq:usetrick}\end{align}
	Fix $\epsilon  >0$. Let $\mathcal G\subseteq \mathcal F$  be a   convex polytope such that
	\begin{align}
		\Pi_{\mathcal  G }(x) \ge \Pi_{\mathcal F}(x)-\epsilon   \mbox{~~for all $x\in [x_0,c]$}.	\label{eq:useapprox}\end{align}
% {\color{black}
% We illustrate why such $\mathcal G$ exists. 
% Denote by $f(x)=\Pi_{\mathcal F}(x)$ for $x\in[x_0,c]$
% Since $\Pi_{\mathcal F}$ is a decreasing and convex function on $[x_0,c]$, there exists $n\in\N$ such that $\Pi_{\mathcal F}(y)-\Pi_{\mathcal F}(x)\le \epsilon/2$ for any $x,y\in[x_0,c]$ and $0\le y-x\le (c-x_0)/n$. Let $x_i=x_0+i(c-x_0)/n$ for $i=1,\dots,n-1$ and $f$ be a piecewise linear function 
% }
{\color{black}
We illustrate why such $\mathcal G$ exists. Let $\mathbb{Q}$ be a set of all rational numbers on $\R$, and we represent it as $\mathbb{Q}=\{q_i\}_{i\in\N}$. Suppose that $\{F_{i,j}\}_{j\in\N}\subseteq \mathcal F$  satisfies $\lim_{j\to\infty}\pi_{F_{i,j}}(q_i)=\Pi_{\mathcal F}(q_i)$ for $i\in\N$. Define $\mathcal G_i=\{F_{1,i},F_{2,i},\dots,F_{i,i}\}$ for $i\in\N$. It holds that $\Pi_{\mathcal G_i}(x)\to \Pi_{\mathcal F}(x)$ on $\mathbb Q$. By Theorem 10.8 of \cite{R70}, we have $\{\Pi_{\mathcal G_i}\}_{i\in\N}$ uniformly converges to $\Pi_{\mathcal F}$ on $[x_0,c]$. This implies that such $\mathcal G$ in \eqref{eq:useapprox} exists.
} 
% Such a  set $ \mathcal G $ exists since $\Pi_{\mathcal F}$ is a decreasing convex function,
% which can be uniformly approximated by a discrete grid on the compact set $[x_0,c]$.
Let $\mathcal G_0=\mathrm{conv}(\mathcal G\cup \{G\}) \subseteq\mathcal F $, which is again a convex polytope. Using \eqref{eq:useconvex}, \eqref{eq:useNN}, \eqref{eq:usetrick} and \eqref{eq:useapprox}, we obtain
	\begin{align*}
		\ES_\alpha  ^{\rm WR}(\mathcal G_0) &= \ES_\alpha^{{\rm MA}_2} (\mathcal G_0) =  \min_{x\in\R}\left\{x+\frac{\Pi_{\mathcal G_0}(x)}{1-\alpha} \right\}
		\\ &=  \min\left\{  \inf_{x < x_0} \left\{x+\frac{\Pi_{\mathcal G_0}(x)}{1-\alpha} \right\},\min_{x\in [x_0,c]} \left\{x+\frac{\Pi_{\mathcal G_0}(x) }{1-\alpha}\right\},\inf_{x >c} \left\{x+\frac{\Pi_{\mathcal G_0}(x)}{1-\alpha} \right\}\right\}
		\\ &\ge    \min\left\{  \inf_{x < x_0} \left\{x+\frac{\pi_{ G }(x)}{1-\alpha} \right\},\min_{x\in [x_0,c]} \left\{x+\frac{\Pi_{\mathcal G }(x) }{1-\alpha}\right\}, c\right\}
		\\ &\ge    \min\left\{  \min_{x\in [x_0,c]} \left\{x+\frac{\Pi_{\mathcal F}(x)  }{1-\alpha}-\frac\epsilon{1-\alpha}\right\}, c\right\}
		\\ &\ge    \min\left\{   \min_{x\in \R } \left\{x+\frac{\Pi_{\mathcal F}(x)  }{1-\alpha}\right\}, c\right\} -\frac\epsilon{1-\alpha}  =   c -\frac{\epsilon}{1-\alpha}.
	\end{align*}
	Note that $  \ES_\alpha  ^{\rm WR}(\mathcal F)  \ge   \ES_\alpha  ^{\rm WR}(\mathcal G_0) \ge c-\epsilon/(1-\alpha) $ because $ \mathcal G_0\subseteq \mathcal F$.    Since $\epsilon$ is arbitrary, we get
	$
	\ES_\alpha  ^{\rm WR}(\mathcal F)  \ge   c =  \ES_\alpha^{{\rm MA}_2} (\mathcal F ) .
	$ Together with $   \ES_\alpha  ^{\rm WR}(\mathcal F)  \le \ES_\alpha ^{{\rm MA}_2} (\mathcal F ) $, we obtain the desired equality $  \ES_\alpha  ^{\rm WR}(\mathcal F) =\ES_\alpha^{{\rm MA}_2} (\mathcal F ) $.
	
	(c) It follows directly from Proposition \ref{th-sharp}.\qed

{\color{black}
\noindent{\bf Proof of Theorem \ref{th-cxca}.}~
(a) 
%In this proof, we note that $L^1=L^1(\Omega,\mathcal B,\p)$ where $(\Omega,\mathcal B,\p)$ is nonatomic. 
%For $Q\in \mathcal Q$, define $\widetilde\rho^Q: L^1\to \R$ as  $\widetilde \rho^Q(X)=\rho(F)$ if $X$ under $Q$ has distribution $F$.
We first consider the convexity of $\widetilde\rho^{\rm WR}$.
Suppose that $\widetilde\rho$ is convex.
Note that $\widetilde\rho^{\rm WR}$  is the supremum of a family of functionals $\widetilde\rho^Q$, $Q\in\mathcal Q$.
It suffices to verify that $\widetilde \rho^Q$ is convex on $L$ for all $Q\in\mathcal Q$. Let $Q\in\mathcal Q$ and $X_1,X_2\in L$. By a version of Skorhod’s
Theorem (see, e.g., Theorem 3.1 of \cite{BPR07}), we can construct {\color{black} measurable mappings} $X_1',X_2' $ on {\color{black} $(\Omega,\mathcal B)$} such that $\p(X_1'\le x_1,X_2'\le x_2)=Q(X_1\le x_1,X_2\le x_2)$ for all $x_1,x_2\in\R$, i.e., the joint cdf of $(X_1,X_2)$
under $Q$ is same as the joint cdf of $(X_1',X_2')$ under $\p$. Hence, we have {\color{black}$X_1',X_2' \in L^1$, and }
\begin{align*}
	\widetilde \rho^Q(\lambda X_1+(1-\lambda)X_2)
	% &=\rho\left(F_{\lambda X_1+(1-\lambda)X_2}^Q\right)
	% =\rho(F_{\lambda X_1'+(1-\lambda)X_2'})
	=\widetilde \rho(\lambda X_1'+(1-\lambda)X_2')
	\le \lambda\widetilde\rho({X_1'})+(1-\lambda)\widetilde\rho({X_2'})
	% &=\lambda\rho(F_{X_1}^Q)+(1-\lambda)\rho(F_{X_2}^Q)
	=\lambda\widetilde\rho^Q(X_1)+(1-\lambda)\widetilde\rho^Q(X_2),
\end{align*}
where the inequality follows from the convexity of $\rho$. This yields the convexity of $\widetilde{\rho}^{\rm WR}$.

%Since $\widetilde\rho^{\rm WR}(X)=\sup_{Q\in\mathcal Q}\widetilde \rho^Q(X)$ is the supremum of a family of convex functionals, we have $\widetilde\rho^{\rm WR}$ is convex.

To see the convexity of $\widetilde\rho^{{\rm MA}_2}$, by Theorem 2.2 of \cite{KR09}, $\widetilde\rho: L^1\to\R$ is continuous with respect to the $L^1$-norm because $\widetilde\rho(X)\in\R$ for all $X\in L^1$. It then follows from Theorem 5.1 of \cite{CMMM11} that $\rho$ is $\preceq_2$-consistent.
%Therefore,
%\begin{align*}
%\widetilde \rho_{\preceq_2}^{\rm MA}(X)&=\rho\left(\bigvee_2{\mathcal F_{X|\mathcal Q}}\right)=\inf\left\{\rho(F): \sup_{Q\in\mathcal Q}\widetilde{\ES}_\alpha^Q(X)\le \ES_{\alpha}(F),~~\forall \alpha\in(0,1)\right\}.
%\end{align*}
{\color{black}For any $F\in \mathcal M_1$, denote $g_F(\alpha)=\ES_\alpha(F)$ for $\alpha\in(0,1)$, and define 
	$$
	\mathcal G=\{g_F: (0,1)\to\R\mid F\in\mathcal M_1\}~~~{\rm and}~~~\widehat\rho :\mathcal G\to \R ~~{\rm as}~~ \widehat\rho(g_F)={\rho}(F),~~ g_F\in \mathcal G.
	$$
	We assert that $\mathcal G$ is a convex set and $ \widehat\rho$ is convex in $\mathcal{G}$. To see it, take $F,G\in \mathcal M_1$ and $\lambda\in [0,1]$. Define $H$ as a distribution whose quantile is $H^{-1}(\alpha) = \lambda F^{-1}(\alpha) + (1-\lambda) G^{-1}(\alpha)$, $\alpha\in [0,1]$. One can verify that $H\in\mathcal M_1$ and $g_H= \lambda g_F+ (1-\lambda) g_G$, and thus, $\lambda g_F+ (1-\lambda) g_G\in \mathcal G$, which implies that $\mathcal G$ is a convex set. 
	Let $U\in L^1$ be a uniform random variable on $[0,1]$ under $\p$, i.e., $\p(U\le x)=x$ for $x\in[0,1]$, and such random variable exists because $(\Omega,\mathcal F,\p)$ is a nonatomic space (see e.g., Lemma A.27 of \cite{FS16}). It holds that $\lambda F^{-1}(U)+(1-\lambda)G^{-1}(U)\in L^1$ has the distribution $H$. Therefore, we have
	\begin{align*}
		\widehat\rho(\lambda g_F+ (1-\lambda) g_G)&=\widehat\rho(g_H)=\rho(H) \\
		&= \widetilde{\rho}(\lambda F^{-1}(U)+(1-\lambda)G^{-1}(U)) 
		\le
		\lambda \widetilde \rho(F^{-1}(U))+(1-\lambda) \rho(G^{-1}(U))\\
		&=
		\lambda \rho(F)+(1-\lambda) \rho(G) =\lambda\widehat\rho( g_F)+ (1-\lambda)\widehat\rho(g_G),
	\end{align*}
	where we have used the convexity of $\widetilde\rho$ in the first inequality.
	This implies that $ \widehat\rho$ is convex in $\mathcal{G}$.
	We assert that 
	\begin{align}\label{R2-1}
		\widetilde\rho^{{\rm MA}_2}(X)=\inf_{g\in\mathcal G}\left\{\widehat\rho(g)+\Theta(X,g)\right\},
	\end{align}
	where $\Theta(X,g)=0$ 
	if $\sup_{Q\in\mathcal Q}\widetilde{\ES}_\alpha^Q(X)\le g(\alpha)$ for all $\alpha\in(0,1)$, and $\Theta(X,g)=\infty$ otherwise. To see it, note that if ${\mathcal F_{X|\mathcal Q}}$ is not bounded, then $\widetilde\rho^{{\rm MA}_2}(X)=\infty$ and $\Theta(X,g)=\infty$ for all $g\in\mathcal G$, which imply \eqref{R2-1} holds. If ${\mathcal F_{X|\mathcal Q}}$ is bounded, then  note that $F\preceq_2 G$ if and only if $g_F(\alpha)\le g_G(\alpha)$ for all $\alpha\in(0,1)$. Since $\rho$ is $\preceq_2$-consistent, we have
	\begin{align*}
		\widetilde\rho^{{\rm MA}_2}(X)=\rho\left(\bigvee_2{\mathcal F_{X|\mathcal Q}}\right)&=\inf\left\{\rho(F): F\in\mathcal M_1,~\sup_{Q\in\mathcal Q}\widetilde{\ES}_\alpha^Q(X)\le g_F(\alpha),~~\forall \alpha\in(0,1)\right\}\\
		&=\inf\left\{\widehat\rho(g): g\in\mathcal G,~  \sup_{Q\in\mathcal Q}\widetilde{\ES}_\alpha^Q(X)\le g(\alpha),~~\forall \alpha\in(0,1)\right\}\\
		&=\inf_{g\in\mathcal G}\left\{\widehat\rho(g)+\Theta(X,g)\right\}.
	\end{align*}
	Therefore, \eqref{R2-1} holds for all $X\in L$. 
	It remains to show that $\widehat\rho(g)+\Theta(X,g)$ is convex in $(X,g)\in L\times\mathcal G$. %For any $g_1,g_2\in\mathcal G$, there are two cdfs $F_1,F_2\in\mathcal M_1$ such that $g_1=g_{F_1}$ and $g_2=g_{F_2}$. Let {\color{black}$U\in L$} be a uniform random variable on $[0,1]$ under $\p$ (see e.g., Lemma A.27 of \cite{FS16} for the existence). Define $X_1=F_1^{-1}(U)$ and $X_2=F_2^{-1}(U)$. It holds that $X_1,X_2\in L^1$, $F_{X_1}=F_1$, $F_{X_2}=F_2$, and  $X_1,X_2$ are comonotonic.
	%let $X_1,X_2\in L^1$ be two comonotonic random variables such that $g_1=g_{F_{X_1}}$ and $g_2=g_{F_{X_2}}$. 
	%Note that $\ES$ is positively homogeneous and comonotonic additive. We have $\lambda g_1+(1-\lambda) g_2=g_{F_{\lambda X_1+(1-\lambda)X_2}}$ for all $\lambda\in[0,1]$. Hence, it holds that
	%let $F_1,F_2\in\mathcal M_1$ be such that $g_i(\alpha)=\ES_{\alpha}(F_i)$ for all $\alpha\in(0,1)$ with $i=1,2$. Suppose that $X_1$ and $X_2$ are comonotonic, and have distributions $F_1$ and $F_2$ under $\p$, respectively. For $\lambda\in[0,1]$, we have
	%
	Since $ \widehat\rho$ is convex in $\mathcal{G}$, 
	it remains to verify that $\Theta(X,g)$ is convex on $L\times\mathcal G$. To see this, by definition of $\Theta(X,g)$, it suffices to show that the set $\{(X,g)\in L\times \mathcal G: \sup_{Q\in\mathcal Q}\widetilde{\ES}_\alpha^Q(X)\le g(\alpha),~\alpha\in [0,1]\}$ is a convex set.
	Take $X_1,X_2\in L$ and $g_1,g_2\in\mathcal G$. %it suffices to show that for any $\lambda\in[0,1]$, $\Theta(\lambda X_1+(1-\lambda)X_2,\lambda g_1+(1-\lambda)g_2)=\infty$ implies  $\Theta(X_i,g_i)=\infty$ for either $i=1$ or $2$. Suppose now $\Theta(\lambda X_1+(1-\lambda)X_2,\lambda g_1+(1-\lambda)g_2)=\infty$. Then,  the following holds for some $\alpha\in(0,1)$,
	For $\alpha\in [0,1],$ it holds that
	\begin{align*}
		\sup_{Q\in\mathcal Q}\widetilde{\ES}_\alpha^Q(\lambda X_1+(1-\lambda) X_2)	&\le\sup_{Q\in\mathcal Q}\left\{\lambda \widetilde{\ES}_\alpha^Q(X_1)+(1-\lambda)\widetilde{\ES}_\alpha^Q( X_2)\right\}\\
		&\le \lambda\sup_{Q\in\mathcal Q} \widetilde{\ES}_\alpha^Q(X_1)+(1-\lambda)\sup_{Q\in\mathcal Q} \widetilde{\ES}_\alpha^Q( X_2)\le \lambda g_1+(1-\lambda)g_2,
		%~~{\rm for~some~}\alpha\in(0,1).
	\end{align*}
	where the first inequality follows from the convexity of ES.
	Hence, we have that $\widetilde\rho^{{\rm MA}_2}$ is convex.}

(b) Define $f_F(\alpha)=\VaR_\alpha(F)$ for $\alpha\in(0,1)$ and $F\in\mathcal M_1$.
Let $\mathcal M_1^{-1}=\{f_F:(0,1)\to\R| F\in\mathcal M_1\}$.
Define $\widehat\rho:\mathcal M_1^{-1}\to\R$ as the risk measure satisfying $\widehat\rho(f_F)=\rho(F)$. Noting that $\widetilde\rho$ satisfies comonotonic additivity, we have $\widehat\rho(f_1+f_2)=\widehat\rho(f_1)+\widehat\rho(f_2)$ for all $f_1,f_2\in\mathcal M_1^{-1}$. Suppose now
$X_1,X_2\in L$ are two comonotonic random variables. 
By Proposition 4.6 of \cite{WZ18} who prove that the mapping $X\mapsto \sup_{Q\in\mathcal Q}\widetilde\VaR_\alpha^{Q}(X)$ satisfies comonotonic additivity for all $\alpha\in(0,1)$, we have 
\begin{align*}
	\sup_{Q\in\mathcal Q} f_{F_{X_1+X_2}^Q}(\alpha)
	&=\sup_{Q\in\mathcal Q}\widetilde{\VaR}_{\alpha}^Q(X_1+X_2)\\
	&=\sup_{Q\in\mathcal Q}\widetilde{\VaR}_{\alpha}^Q(X_1)+\sup_{Q\in\mathcal Q}\widetilde{\VaR}_{\alpha}^Q(X_2)
	=\sup_{Q\in\mathcal Q} f_{F_{X_1}^Q}(\alpha)+\sup_{Q\in\mathcal Q} f_{F_{X_2}^Q}(\alpha),~~\forall \alpha\in(0,1).
\end{align*}
Hence,
\begin{align*}
	\widetilde\rho^{{\rm MA}_1}(X_1+X_2)&=\rho\left(\bigvee_1{\mathcal F_{X_1+X_2|\mathcal Q}}\right)=\widehat\rho\left(\sup_{Q\in\mathcal Q}f_{F_{X_1+X_2}^Q}\right)\\
	&=\widehat\rho\left(\sup_{Q\in\mathcal Q}f_{F_{X_1}^Q}+\sup_{Q\in\mathcal Q}f_{F_{X_2}^Q}\right)
	=\widehat\rho\left(\sup_{Q\in\mathcal Q}f_{F_{X_1}^Q}\right)+\widehat\rho\left(\sup_{Q\in\mathcal Q}f_{F_{X_2}^Q}\right)\\
	&=\widetilde\rho^{{\rm MA}_1}(X_1)+\widetilde\rho^{{\rm MA}_1}(X_2).
\end{align*}
% where the third step follows from Proposition 4.6 of \cite{WZ18} who prove that the mapping $X\mapsto \sup_{Q\in\mathcal Q}\widetilde\VaR^{Q}(X)$ satisfies comonotonic additivity.
%, and the fourth step holds as $ \rho$ satisfies comonotonic additivity.
This yields the comonotonic additivity of $\widetilde{\rho}^{{\rm MA}_1}$.

(c) We only consider the case of translation invariance as the case of positive homogeneity is similar. Suppose that $\widetilde\rho$ satisfies translation invariance. We have $\rho(G)=\rho(F)+c$ whenever $F,G\in\mathcal M_1$ satisfy $G(x)=F(x-c)$ for all $x\in\R$.
The translation invariance of $\widetilde\rho^{\rm WR}$ is trivial because we have 
$$
\widetilde{\rho}^{Q}(X+c)=\rho(F_{X+c}^Q)=\rho(F_X^Q)+c=\widetilde{\rho}^{Q}(X)+c
$$ 
for any $X\in L$, $Q\in \mathcal Q$ and $c\in\R$. 
% To see the cases of  ${\rm MA}_1$ and ${\rm MA}_2$, it suffices to consider that $\mathcal F_{X|\mathcal Q}$ is $\preceq_1$-bounded  and $\preceq_2$-bounded, respectively. 
To see the case of ${\rm MA}_1$,
it follows from Proposition \ref{th-sharp} that
$$
\VaR_{\alpha}\left(\bigvee_1\mathcal F_{X+c|\mathcal Q}\right)
=\sup_{Q\in\mathcal Q}\widetilde{\VaR}_{\alpha}^Q(X+c)
=\sup_{Q\in\mathcal Q}\widetilde{\VaR}_{\alpha}^Q(X)+c
=\VaR_{\alpha}\left(\bigvee_1\mathcal F_{X|\mathcal Q}\right)+c,~~\forall \alpha\in(0,1).
$$
This means that $\bigvee_1\mathcal F_{X+c|\mathcal Q} (x)=\bigvee_1\mathcal F_{X|\mathcal Q}(x-c)$ for all $x\in\R$.
Hence, we have
$$
\widetilde{\rho}^{{\rm MA}_1}(X+c)=\rho\left(\bigvee_1\mathcal F_{X+c|\mathcal Q}\right)=\rho\left(\bigvee_1\mathcal F_{X|\mathcal Q}\right)+c=\widetilde{\rho}^{{\rm MA}_1}(X)+c,
$$
which yields translation invariance of $\widetilde{\rho}^{{\rm MA}_1}$. For ${\rm MA}_2$, using Proposition \ref{th-sharp} again, we have
$$
\pi_{\bigvee_2\mathcal F_{X+c|\mathcal Q}}(x)
=\sup_{Q\in\mathcal Q}\E^Q[(X+c-x)_+]
=\pi_{\bigvee_2\mathcal F_{X|\mathcal Q}}(x-c),~~\forall x\in\R,
$$
which implies $\bigvee_2\mathcal F_{X+c|\mathcal Q}(x)=\bigvee_2\mathcal F_{X|\mathcal Q}(x-c)$ for all $x\in\R$. Hence, we have
$$
\widetilde{\rho}^{{\rm MA}_2}(X+c)=\rho\left(\bigvee_2\mathcal F_{X+c|\mathcal Q}\right)=\rho\left(\bigvee_2\mathcal F_{X|\mathcal Q}\right)+c=\widetilde{\rho}^{{\rm MA}_2}(X)+c,
$$
which shows that $\widetilde{\rho}^{{\rm MA}_2}$ is translation invariant.  \qed
}

{\color{black}
\section{\color{black}Proofs for results in Sections \ref{sec:34} and omitted examples}\label{app:optMA}

\subsection{\color{black} Proofs} \label{app:optMA-1-1}

\noindent{\bf Proof of Proposition \ref{prop:cxMA}.}~
%The proof is similar to Theorem \ref{th-cxca}.
We introduce some notation that defined in the proof of Theorem \ref{th-cxca} (a).
Let $g_F(\alpha)=\ES_\alpha(F)$ for $\alpha\in(0,1)$ and $F\in\mathcal M_1$.
Define $\mathcal G=\{g:(0,1)\to\R| (1-\alpha)g(\alpha)~{\rm is~concave~for~}\alpha\in(0,1),~\lim_{\alpha\to1}(1-\alpha)g(\alpha)=0\}$, and $\widehat\rho:\mathcal G\to\R$ as $\widehat\rho(g_F)=\rho(F)$. As shown in the proof of Theorem \ref{th-cxca} (a), we know that $\rho$ is $\preceq_2$-consistent. Hence, we have
\begin{align*}
\rho^{{\rm MA}_2}(\mathcal F_{\mathbf a,f})&=\inf\left\{\rho(G):G\in\mathcal M_1,~ \sup_{F\in\mathcal F}\widetilde{\ES}_\alpha^F(f(\mathbf a,\mathbf X))\le g_G(\alpha),~~\forall \alpha\in(0,1)\right\}\\
	&=\inf\left\{\widehat\rho(g): g\in\mathcal G,~ \sup_{F\in \mathcal F}\widetilde{\ES}_\alpha^F(f(\mathbf a,\mathbf X))\le g(\alpha),~~\forall \alpha\in(0,1)\right\}\\
	&=\inf_{g\in\mathcal G}\left\{\widehat\rho(g)+\Theta(\mathbf a,g)\right\},
\end{align*}
where $\Theta(\mathbf a,g)=0$ if $\sup_{F\in\mathcal \mathcal F}\widetilde{\ES}_\alpha^F(f(\mathbf a,\mathbf X))\le g(\alpha)$ for all $\alpha\in(0,1)$, and $\Theta(\mathbf a,g)=\infty$ otherwise. The remained proof is similar to that of Theorem \ref{th-cxca} (a) by noting that $f(\mathbf a, \mathbf x)$ is convex in $\mathbf a$.
\qed

The following proposition shows that any law-invariant coherent risk measure on $\mathcal M_1$ 
can be approximated by risk measures  in the form of \eqref{eq:0818-3}. 
Recall that  any law-invariant coherent risk measure that is finite on $\mathcal M_1$ admits a 
Kusuoka representation  (see e.g., \cite{S13})
\begin{equation}\label{eq:kusuoka}\rho(F) = \sup_{\mu\in \mathcal P_0}~\int_0^1  {\rm ES}_{\alpha} (F) \d \mu(\alpha), ~~~~F\in \mathcal M_1,\end{equation}
where  $\mathcal P_0$ is a subset of the set $ \mathcal P$  of all probability measures on $[0,1]$.
\begin{proposition}\label{pro:A4}
	Let $\rho  :  \mathcal M_1 \to \R$ be a  law-invariant coherent risk measure. 
	There exists a sequence of risk measures  $(\rho_n)_{n\in \N}$ of the form  \eqref{eq:0818-3}  such that   $\lim_{n\to\infty}\rho_n(F) =\rho(F)$ for any $F\in \mathcal M_1$. 
	Moreover, if $\mathcal P_0$ in \eqref{eq:kusuoka} is finite, then for any set $\mathcal G$ of distributions supported within a common compact interval, uniform convergence holds: 
	$\sup_{F\in \mathcal G}|\rho_n(F) -\rho(F)|\to0$ as $n\to\infty$.
\end{proposition}

\begin{proof}
	Denote by  
	$D_n = \left\{ i/2^n: i=0,\ldots,2^n \right\}$, $n\in \N$, and $D= \bigcup_{n\in\N} D_n$. Note that $D$ is  countable  and dense in $[0,1]$. 
	For $n\in \N$, denote by $\mathcal P_n$ the finite set of convex combinations of elements of $\{\delta_j:j\in D_n\}$ with  
	weights in $D_n$, that is, 
	%Define
	$$\mathcal P_n= \left\{\sum_{i=0}^{2^n} \alpha_i \delta_{i/2^n} :~ \alpha_i\in D_n, ~i=0,\ldots,2^n,~ \sum_{i=0}^{2^n} \alpha_i=1\right\}.$$
	Let $\mathcal P_0$ be given by \eqref{eq:kusuoka}, which represents $\rho$. 
	For each $\mu\in \mathcal P_0$, define $\mu_n=\sum_{i=0}^{2^n-1} \alpha_{n,i} \delta_{i/2^n}\in\mathcal P_n$, $n\in\N$, where $\alpha_{n,i} = \beta_{n,i}- \beta_{n,i-1} $ 
	and 
	$$\beta_{n,i} = 2^{-n}\left \lceil 2^n \mu \left(\left[0, \frac {i+1}{2^n} \right]\right)\right\rceil,~~~i=0,\ldots, 2^n-1;~~~\beta_{n,-1}=0. $$ 
	Here, $\lceil x \rceil$ is the smallest integer dominating $x$. 
	% $$\beta_{n,i} = \inf\left\{ \frac k{2^n}\,\left|\,  \frac k{2^n} \ge \mu \left(\left[0, \frac {i+1}{2^n} \right]\right)\right.\right\},~~i=0,\ldots, 2^n-1;~~~\beta_{n,-1}=0. $$ 
	This construction guarantees that $\mu_n \preceq_1 \mu$ and $\mu_n$ converges to $\mu$ in distribution as $n\to\infty$. Therefore, for $F\in \mathcal M_1$, we have that $\alpha\mapsto {\rm ES}_{\alpha} (F) $ is  continuous and bounded from below, and hence,
	$$\liminf_{n\to\infty} \int_0^1  {\rm ES}_{\alpha} (F) \d \mu_n(\alpha) \ge \int_0^1  {\rm ES}_{\alpha} (F) \d \mu(\alpha) .$$
	Let $\mathbb W_n = \bigcup_{\mu\in \mathcal P_0} \{\mu_n\} $ and 
	$$
	\rho_n(F)=\sup_{\nu\in\mathbb W_n } \int_0^1  {\rm ES}_{\alpha} (F) \d \nu(\alpha),~~~F\in \mathcal M_1 .
	$$ Note that $\mathbb W_n  \subseteq \mathcal P_n$ is a finite set, and hence $\rho_n$ is a risk measure of the form \eqref{eq:0818-3}. For each $\mu\in \mathcal P_0 $,
	$$\liminf_{n\to\infty} \rho_n(F) \ge \liminf_{n\to\infty}  \int_0^1  {\rm ES}_{\alpha} (F) \d \mu_n(\alpha)  \ge \int_0^1  {\rm ES}_{\alpha} (F) \d \mu(\alpha).$$
	Therefore, we have  for $F\in \mathcal M_1$,
	$$\liminf_{n\to\infty} \rho_n(F)  \ge \sup_{\mu\in \mathcal P_0}\int_0^1  {\rm ES}_{\alpha} (F) \d \mu(\alpha) =\rho(F).$$
	On the other hand,  by $\mu_n \preceq_1 \mu$, we have  $ \rho_n(F)  \le \rho(F)$ for all $n\in\N$, and thus, $\lim_{n\to\infty}\rho_n(F) =\rho(F)$. This completes the proof of the first statement.

	% In Proposition \ref{pro:A4}, if the set $\mathcal P_0$ of the Kusuoka representation \eqref{eq:kusuoka} is a finite set, then the convergence is uniform on the space of  distributions with support within a compact interval; that is,   $$\lim_{n\to\infty}\sup_{F\in \mathcal G}|\rho_n(F)-\rho(F)|=0,$$ where $\mathcal G$ is a set of distributions with support within a compact interval. 
	Next, we show uniform convergence on $\mathcal G$, assuming that $\mathcal P_0$ is finite. Note that
	\begin{align*}
		\sup_{F\in\mathcal G}|\rho_n(F)-\rho(F)|&=\sup_{F\in\mathcal G}\left|\sup_{\mu\in \mathcal P_0}~\int_0^1  {\rm ES}_{\alpha} (F) \d \mu_n(\alpha)-\sup_{\mu\in \mathcal P_0}~\int_0^1  {\rm ES}_{\alpha} (F) \d \mu(\alpha)\right|\\
		&\le\sup_{\mu\in \mathcal P_0}\sup_{F\in\mathcal G}\left|\int_0^1  {\rm ES}_{\alpha} (F) \d \mu_n(\alpha)-\int_0^1  {\rm ES}_{\alpha} (F) \d \mu(\alpha)\right|.
	\end{align*}
	Recall that $\rho$ and $\rho_n$ for $n\in\N$ satisfy translation invariance and positive homogeneity.
	To see $\lim_{n\to\infty}\sup_{F\in \mathcal F}|\rho_n(F)-\rho(F)|=0$, it suffices to verify that for $\mu\in\mathcal P_0$
	\begin{align}\label{eq-UC}
		\lim_{n\to\infty}\sup_{F\in\mathcal M[0,1]}\left|\int_0^1  {\rm ES}_{\alpha} (F) \d \mu_n(\alpha)-\int_0^1  {\rm ES}_{\alpha} (F) \d \mu(\alpha)\right|=0,
	\end{align}
	where $\mathcal M[0,1]$ is the set of all distributions with support in $[0,1]$. Below let us prove \eqref{eq-UC}. 
	Define, for $s\in[0,1]$,
	\begin{align*}
		G(s)=\mu([0,s]), ~~G_n(s)=\mu_n([0,s]),~~
		h(s)=\int_0^s \frac{1}{1-\alpha} \d G(\alpha),~~\mbox{and}~~h_n(s)=\int_0^s \frac{1}{1-\alpha} \d G_n(\alpha).
	\end{align*}
	Applying integration by parts, we have
	\begin{align*}
		h_n(s)-h(s)&=\frac{G_n(s)}{1-s}-\int_0^s \frac{G_n(\alpha)}{(1-\alpha)^2}\d \alpha
		-\frac{G(s)}{1-s}+\int_0^s \frac{G(\alpha)}{(1-\alpha)^2}\d \alpha.
	\end{align*} Let $A$ be the set of all continuity points of $G$. 
	Since $\mu_n\to \mu$ in distribution, we have $G_n(\alpha)\to G(\alpha)$ for all $\alpha\in A$, and this also implies $\int_0^s{G_n(\alpha)}/{(1-\alpha)^2}\d \alpha\to \int_0^s {G(\alpha)}/{(1-\alpha)^2}\d \alpha$ for all $s\in[0,1)$. Hence, we have 
	\begin{align}\label{eq-UC0}
		h_n(s)-h(s)\to 0 ~~{\rm for}~s\in A.
	\end{align}
	Using the relation that $(1-\alpha){\rm ES}_{\alpha} (F)=\int_\alpha^1 \VaR_s(F)\d s$, we have
	\begin{align*}
		&~~~~\sup_{F\in\mathcal M[0,1]}\left|\int_0^1  {\rm ES}_{\alpha} (F) \d \mu_n(\alpha)-\int_0^1  {\rm ES}_{\alpha} (F) \d \mu(\alpha)\right|\\
		&=\sup_{F\in\mathcal M[0,1]}\left|\int_0^1  \frac{1}{1-\alpha}\int_\alpha^1 \VaR_s(F)\d s \d \mu_n(\alpha)-\int_0^1  \frac{1}{1-\alpha}\int_\alpha^1 \VaR_s(F)\d s \d \mu(\alpha)\right|\\
		&=\sup_{F\in\mathcal M[0,1]}\left|\int_0^1 \VaR_s(F)h_n(s)\d s
		-\int_0^1 \VaR_s(F)h(s)\d s\right|\\
		&\le \sup_{F\in\mathcal M[0,1]}\int_0^1\left|(h_n(s)-h(s))\VaR_s(F)\right|\d s
		\le \int_0^1 |h_n(s)-h(s)|\d s,
	\end{align*}
	where we have used Fubini's theorem in the second step. By \eqref{eq-UC0}, we know that $h_n\to h$ almost surely on $[0,1]$. On the other hand, it follows from Theorem 4 of \cite{HW24} that $h(1)<\infty$. Note that $\mu_n \preceq_1 \mu$ and $h_n$ is increasing on $[0,1]$. For any $s\in[0,1]$ and $n\in\N$, we have
	\begin{align*}
		0\le h_n(s)\le h_n(1)\le h(1)<\infty,
	\end{align*}
	and this implies that $\{h_n\}_{n\in\N}$ is a bounded sequence. Therefore, we conclude that $\lim_{n\to \infty}\int_0^1 |h_n(s)-h(s)|\d s=0$, and \eqref{eq-UC} holds. This completes the proof.
	\end{proof}

\noindent{\bf Proof of Proposition \ref{th-MAcr}.}~
Note that $\min_{\mathbf a\in A} \rho^{\rm MA_2} (\mathcal F_{\mathbf a,f})$ is equivalent to
\begin{align}\label{eq-MAcr1}
	\min_{\mathbf a\in A}\sup_{w\in \mathbb W} \sum_{j=1}^{n^w} p_j^w {\rm ES}_{\alpha_j^w} \left(\bigvee_2 \mathcal F_{\mathbf a,f}\right),
\end{align}
where $\mathcal F_{\mathbf a,f}=\{F_{f(\mathbf a,\mathbf X)}: F_{\mathbf X}\in\mathcal F\}$. 
It holds that
\begin{align*}
	{\rm ES}_{\alpha_j^w} \left(\bigvee_2 \mathcal F_{\mathbf a,f}\right)
	=
	\inf_{x\in\R}\left\{x+\frac{1}{1-\alpha_j^w}\pi_{\vee_2 \mathcal F_{\mathbf a,f}}(x)\right\}=\inf_{x\in\R}\left\{x+\frac{1}{1-\alpha_j^w}\sup_{F\in\mathcal F}\E^F\left[(f(\mathbf a,\mathbf X)-x)_+\right]\right\},
\end{align*}
where the first and the second steps follow from \eqref{eq:RU02} and
Proposition \ref{th-sharp}, respectively. 
Substituting the above equation into \eqref{eq-MAcr1} yields the following equivalent problem
\begin{align*}
	\min_{\mathbf a\in A, h_j^w\in\R}   &\sup_{w\in \mathbb W}~\sum_{j=1}^{n^w} p_j^w h_j^{w}\\
	{\rm s.t.} ~~ & \inf_{x\in\R} \left\{ x+ \frac1{1-\alpha_j^w} \sup_{F\in\mathcal F}\E^F[( f({\bf a},\mathbf X)-x)_+] \right\} \le h_j^w,~~  j\in[n^w],~  w\in\mathbb W.
\end{align*}
This is again equivalent to Problem \eqref{prob-MAcr} which completes the proof.\qed

\noindent{\bf Proof of \eqref{eq:0820-1}.}~
 Note that $\min_{\mathbf a\in A} \rho^{\rm WR} (\mathcal F_{\mathbf a,f} )$ is equivalent to 
	\begin{align*}
		\min_{\mathbf a\in A } ~ & \sup_{i\in[n]}\sup_{w\in \mathbb W} \sum_{j=1}^{n^w} p_j^w \widetilde{{\rm ES}}_{\alpha_j^w}^{F_i} (f({\bf a},\mathbf X)),
	\end{align*}
	that is,
	\begin{align*}
		\min_{\mathbf a\in A,\,h\in\R } ~ &  h \\
		{\rm s.t.} ~~ &  \sum_{j=1}^{n^w} p_j^w \widetilde{{\rm ES}}_{\alpha_j^w}^{F_i} (f({\bf a},\mathbf X)) \le h,~~ i\in[n],~ w\in\mathbb W.
	\end{align*}
	This is again equivalent to 
	\begin{align*}
		\min_{\mathbf a\in A, h,h_{i,j}^{w}\in\R} ~ &  h \\
		{\rm s.t.} ~~ &  \sum_{j=1}^{n^w} p_j^w h_{i,j}^{w}\le h,~~ i\in[n],~ w\in\mathbb W\\
		&\widetilde{{\rm ES}}_{\alpha_j^w}^{F_i} (f({\bf a},\mathbf X))\le    h_{i,j}^{w},~ i\in[n],~ j\in[n^w],~ w\in\mathbb W.
	\end{align*}
	Substituting the representation $\widetilde{{\rm ES}}_{\alpha}^F (f({\bf a},\mathbf X)) = \inf_{x\in\R} \{x+ \frac1{1-\alpha}\E^F [(f({\bf a},\mathbf X)-x)_+]\}$  into the above problem yields  \eqref{eq:0820-1}. \qed
	% \begin{align*}
		%\min_{\mathbf a\in A , h, h_j^{w}} ~ &  h \\
		%{\rm s.t.} ~~ &  \sum_{j=1}^{n_w} p_j^w h_i^{j,w}\le h,~~\forall i,~\forall w\\
		%&x_i^{j,w} +\frac1 {1-\alpha_j^w}\E^{F_i} [ (f({\bf a},X)-x_i^{j,w})_+]\le    h_i^{j,w},~\forall i,~\forall j,~\forall w.
		%\end{align*}
		%This completes the proof of (\ref{eq:0820-1}). 

%		To see \eqref{eq:0820-2},  note that $\min_{\mathbf a\in A} \rho^{\rm MA_2} (\mathcal F_{\mathbf a,f} )$ is equivalent to 
%		\begin{align*}
%			\min_{\mathbf a\in A } ~ &\sup_{w\in \mathbb W} \sum_{j=1}^{n_w} p_j^w \inf_{x\in\R} \left\{ x+ \frac1{1-\alpha_j^w} \sup_{1\le i\le n}\E  [( f({\bf a},\mathbf X_{F_i})-x)_+] \right\},
%		\end{align*}
%		that is,
%		\begin{align*}
%			\min_{\mathbf a\in A,\,h\in\R } ~   &\sup_{w\in \mathbb W} \sum_{j=1}^{n_w} p_j^w  h_j^w \\
%			{\rm s.t.} ~~ & \inf_{x\in\R} \left\{ x+ \frac1{1-\alpha_j^w} \sup_{1\le i\le n}\E  [( f({\bf a},\mathbf X_{F_i})-x)_+] \right\} \le h_j^w,~~\forall j,~\forall w.
%		\end{align*}
%		This is again equivalent to 
%		\begin{align*}
%			\min_{\mathbf a\in A } ~ &  h \\
%			{\rm s.t.} ~~ &  \sum_{j=1}^{n_w} p_j^w h_j^w\le h,~~\forall w\\
%			&   x^{j,w}+ \frac1{1-\alpha_j^w} \sup_{1\le i\le n}\E[( f({\bf a},\mathbf X_{F_i})-x^{j,w})_+]   \le h_j^w,~\forall j,~\forall w.
%		\end{align*}
%		Standard manipulation yields (\ref{eq:0820-2}). \Halmos
%\endproof

\subsection{\color{black} Examples}\label{app:ex}
First, we propose the following example to demonstrate that the convexity of WR and ${\rm MA}_2$ robust optimization problems may not always coincide.
\begin{example}
Let $A\subseteq \R_+^2$ be convex. Define $f:A\times \R\to\R$ as $f(\mathbf a,x)=(a_1\vee a_2) x$, where $a_1\vee a_2=\max\{a_1,a_2\}$. Denote by $\mathcal F=\{F_1,F_2\}$ with $F_1=\delta_{-1}$ and $F_2=2\delta_{-9}/3+\delta_{9}/3$ where $\delta_t$ represents the point-mass at $t\in\R$.
For $\alpha\in[1/6,1/3)$, we consider the WR and ${\rm MA}_2$ robust optimization problems:
\begin{align}\label{prob-WRMA2}
\min_{\mathbf a\in A}\ES_\alpha^{\rm WR}(\mathcal F_{\mathbf a,f})~~~~{\rm and }~~~~\min_{\mathbf a\in A}\ES_\alpha^{{\rm MA}_2}(\mathcal F_{\mathbf a,f}).
\end{align}
For WR robust optimization, we have
\begin{align*}
\ES_\alpha^{\rm WR}(\mathcal F_{\mathbf a,f})
&=\max_{F\in\mathcal F}\widetilde{\ES}_\alpha^F(f(\mathbf a,X))
=(\ES_{\alpha}(F_1)\vee \ES_{\alpha}(F_2))(a_1\vee a_2)\\
&=\left((-1)\vee \left(\frac{6}{1-\alpha}-9\right)\right)(a_1\vee a_2).
\end{align*}
Note that $6/(1-\alpha)-9<0$ as $\alpha\in[1/6,1/3)$. We have WR robust optimization in \eqref{prob-WRMA2} is not a convex optimization problem. For ${\rm MA}_2$ robust optimization, one can check that 
\begin{align*}
\bigvee_2\mathcal F_{\mathbf a,f}
=\frac{2}{3}\delta_{-6(a_1\vee a_2)}
+\frac{1}{3}\delta_{9(a_1\vee a_2)},~~\forall \mathbf a\in A.
\end{align*}
Hence, 
\begin{align*}
\ES_\alpha^{{\rm MA}_2}(\mathcal F_{\mathbf a,f})
=\ES_\alpha\left(\bigvee_2\mathcal F_{\mathbf a,f}\right)
=\left(\frac{5}{1-\alpha}-6\right)(a_1\vee a_2).
\end{align*}
Note that $5/(1-\alpha)-6\ge 0$ as $\alpha\in[1/6,1/3)$. We have ${\rm MA}_2$ robust optimization in \eqref{prob-WRMA2} is a convex optimization problem. If we let $f(\mathbf a,x)=(a_1\wedge a_2)\cdot x$, where $a_1\wedge a_2=\min\{a_1,a_2\}$, then it follows the similar arguments previously that the WR optimization problem is convex, and the ${\rm MA}_2$ optimization problem is not convex.

\end{example}

Next, we study the tractability of the ${\rm MA}_2$ robust optimization under  two common convex uncertainty sets, namely box uncertainty and ellipsoid uncertainty, that are specifically associated with discrete cdfs.  
%Before delving into the results, let us establish some preliminaries.
\begin{example}   %For any vector $\mathbf x=(x_1,\ldots,x_N)\in \R^N$ 
Suppose that the sample space is   $\{\mathbf x_{1},\dots,\mathbf x_{N}\}$, % with probability $\theta_i$ on $\mathbf x_{i}$.
%(the subscript $[i]$ is used to distinguish a vector from a scalar). 
%Under this setting, 
% the uncertainty set, which is denoted by $\Theta$, can be characterized by a subset of $\Delta_N$. Specifically, 
and consider the uncertainty set % has the form:
\begin{align*}
	\mathcal F_{\Theta}=\left\{\sum_{i=1}^N \theta_i\delta_{\mathbf x_{i}}: \bm\theta\in\Theta\right\},
\end{align*}
where $\Theta$ is a subset of $\Delta_N$. 
We have that for any fixed $\mathbf a\in A$ and $x\in\R$, 
\begin{align*}
	\sup_{F\in\mathcal F_{\Theta}}\E^F[( f({\bf a},\mathbf X)-x)_+]
	=
	\sup_{\bm\theta\in\Theta}\sum_{i=1}^N
	\theta_i (f(\mathbf a,\mathbf x_{i})-x)_+.
\end{align*}
This implies that Problem \eqref{prob-MAcr} can be equivalently reformed as the following problem with variables $\mathbf a\in A$ and $h\in\R$; $\mathbf h^w, \mathbf x^w\in\R^{n^w}$ for $w\in\mathbb W$;
$\mathbf u^{j,w}\in\R^N$ for $j\in n^w$ and $w\in\mathbb W$:
\begin{align}\label{prob-MAcrdis}
	\min ~ &  h \\
	{\rm s.t.} ~~ & (\mathbf p^w)^\top \mathbf h^w\le h,~~ w\in\mathbb W\notag\\
	&   x_j^{w}+ \frac1{1-\alpha_j^w} \sup_{\bm\theta\in\Theta} \bm\theta^\top\mathbf u^{j,w}  \le h_j^w,~~ j\in[n^w],~ w\in\mathbb W\notag\\
	& u_i^{j,w}\ge f(\mathbf a,\mathbf x_{i})-x_j^w,~~i\in[N],~j\in[n^w],~ w\in\mathbb W\notag\\
	& u_i^{j,w}\ge 0,~~i\in[N],~j\in[n^w],~ w\in\mathbb W.\notag
\end{align}
When $\Theta$ is chosen as a box uncertainty or an ellipsoid uncertainty, we can further reform the above problem as follows. 
\begin{itemize}
\item [(i)]\textbf{Box uncertainty.} 
We set $\Theta$ as a box, i.e.,
\begin{align*}
	\Theta=\Theta_{B}=\{\bm\theta: \bm\theta=\bm\theta^0+\bm\eta,~
	\mathbf e^\top\bm \eta=0,~
	\underline{\bm\eta}\le \bm\eta\le \overline{\bm\eta}\},
\end{align*}
where $\bm\theta^0$ is a nominal distribution,  $\mathbf e$ is the vector of ones, and $\underline{\bm\eta}$
and $\overline{\bm\eta}$
are given vectors. The constraints of $\bm\eta$ in the above set ensure that $\Theta_{B}\subseteq \Delta_N$. 
Using a similar argument of Section 2.2.1 in \cite{ZF09},
%where the dual problem of $\sup_{\bm\theta\in\Theta_B}\bm\theta^\top \mathbf u$ is given with any fixed $\mathbf u\in\R^N$,
Problem \eqref{prob-MAcrdis} is equivalent to 
\begin{align}\label{prob-MAcrbox}
	\min ~ &  h \\
	{\rm s.t.} ~~ & (\mathbf p^w)^\top \mathbf h^w\le h,~~ w\in\mathbb W\notag\\
	&   x_j^{w}+ \frac1{1-\alpha_j^w}(\bm\theta^0)^\top\mathbf u^{j,w}+\frac1{1-\alpha_j^w}(
	\overline{\bm\eta}^{\top} \bm\xi^{j,w}+\underline{\bm\eta}^\top \bm\gamma^{j,w}) \le h_j^w,~~ j\in[n^w],~ w\in\mathbb W\notag\\
	& \mathbf e z^{j,w}+\bm\xi^{j,w}+\bm\gamma^{j,w}=0,~\bm\xi^{j,w}\ge 0,~\bm\gamma^{j,w}\le 0,~~
	j\in[n^w],~ w\in\mathbb W\notag\\
	& u_i^{j,w}\ge f(\mathbf a,\mathbf x_{i})-x_j^w,~~i\in[N],~j\in[n^w],~ w\in\mathbb W\notag\\
	& u_i^{j,w}\ge 0,~~i[N],~j\in[n^w],~ w\in\mathbb W,\notag
\end{align}
where variables are $\mathbf a\in A$ and $h\in\R$; $\mathbf h^w, \mathbf x^w\in\R^{n^w}$ for $w\in\mathbb W$;
$z^{j,w}\in\R$ and $\mathbf u^{j,w}, \bm \xi^{j,w}, \bm \gamma^{j,w}\in\R^N$  for $j\in[n^w]$ and $w\in\mathbb W$. 
If $f(\mathbf a,\mathbf x)$ is convex in terms of 
$\mathbf a$ and 
$A$ is convex, then Problem  \eqref{prob-MAcrbox} is a convex optimization problem that is consistent with Proposition \ref{prop:cxMA}. Specifically, when 
$f(\mathbf a,\mathbf x)$ is linear in $\mathbf a$ and 
$A$ is a convex polyhedron, the optimization problem simplifies to a linear program.

\item [(ii)]\textbf{Ellipsoid uncertainty.}  
We set $\Theta$ as an ellipsoid, i.e.,
\begin{align*}
	\Theta=\Theta_{E}=\{\bm\theta: \bm\theta=\bm\theta^0+C\bm\eta, ~\mathbf e^{\top}C\bm\eta=0, ~\bm\theta^0+C\bm\eta\ge 0,~ \|\bm\eta\|_2\le 1\},
\end{align*}
where $\|\bm\eta\|_2=\sqrt{\bm\eta^\top\bm\eta}$, $\bm\theta^0$ is a nominal distribution which is also the center of the ellipsoid, and $C\in\R^{N\times N}$ is the scaling matrix of the ellipsoid. The constraints of $\bm\eta$ in the above set ensure that $\Theta_{E}\subseteq \Delta_N$. 
%Applying the similar arguments in Section 2.2.2 of \cite{ZF09}\footnote{}, 
The dual problem of
\begin{align*}
	\sup_{\bm\eta\in\R^N}\{\mathbf u^\top A\bm\eta: \mathbf e^{\top}C\bm\eta=0, ~\bm\theta^0+C\bm\eta\ge 0, ~\|\bm\eta\|_2\le 1\}
\end{align*}
is an SOCP, which has the form:
\begin{align*}
	\inf_{(\bm\xi,\bm\gamma,\zeta,z)\in\R^N\times \R^N\times \R\times \R}
	\{\zeta+(\bm\theta^0)^\top \bm\gamma:
	-\bm\xi-C^\top\bm\gamma+C^\top\mathbf e z=C^\top \mathbf u,~\|\bm\xi\|_2\le\zeta,~ \bm\gamma\ge 0\}.
\end{align*}
The strong duality between above two problems holds under some mild condition such as the Slater's condition. In this case, using a similar argument of Section 2.2.2 in \cite{ZF09},
Problem \eqref{prob-MAcrdis} can be equivalently reformed as
\begin{align}\label{prob-MAcrell}
	\min ~ &  h \\
	{\rm s.t.} ~~ & (\mathbf p^w)^\top \mathbf h^w\le h,~~ w\in\mathbb W\notag\\
	&   x_j^{w}+ \frac1{1-\alpha_j^w}(\bm\theta^0)^\top\mathbf u^{j,w}+\frac1{1-\alpha_j^w}(
	\zeta^{j,w}+(\bm\theta^0)^\top\bm\gamma^{j,w}) \le h_j^w,~~ j\in[n^w],~ w\in\mathbb W\notag\\
	& -\bm\xi^{j,w}-C^\top \bm\gamma^{j,w}+C^\top \mathbf e z^{j,w} =C^\top \mathbf u^{j,w},~\|\bm\xi^{j,w}\|_2\ge \zeta^{j,w},~\bm\gamma^{j,w}\le 0,~~
	j\in[n^w],~ w\in\mathbb W\notag\\
	& u_i^{j,w}\ge f(\mathbf a,\mathbf x_{i})-x_j^w,~~i\in[N],~j\in[n^w],n^w,~ w\in\mathbb W\notag\\
	& u_i^{j,w}\ge 0,~~i\in[N],~j\in[n^w],~ w\in\mathbb W,\notag
\end{align}
where variables are $\mathbf a\in A$ and $h\in\R$; $\mathbf h^w, \mathbf x^w\in\R^{n^w}$ for $w\in\mathbb W$;
$z^{j,w}, \zeta^{j,w}\in\R$ and $\mathbf u^{j,w}, \bm \xi^{j,w}, \bm \gamma^{j,w}\in\R^N$  for $j\in[n^w]$ and $w\in\mathbb W$. 
If $f(\mathbf a,\mathbf x)$ is convex in $\mathbf a$ and $A$ is a convex set, then Problem \eqref{prob-MAcrell} is a convex optimization problem that is consistent with Proposition \ref{prop:cxMA}. Additionally, if $f(\mathbf a,\mathbf x)$ is linear in $\mathbf a$ and $A$ is a convex polyhedron, the problem simplifies to an SOCP, which can be efficiently solved using interior-point methods.
\end{itemize}
\end{example}

To end this section, we compare the tractability of   \eqref{eq:0820-1} and \eqref{eq:0820-2}.  Set $$f({\bf a}, {\bf x}) = \sum_{i=1}^d [ \beta_i(a_i-x_i)_+ +\eta_i (x_i-a_i)_+],$$ where $\beta_i,\eta_i\in\R_+$. This is the loss function of the newsvendor problem.\footnote{See \cite{FLZ21} for a related distributionally robust newsvendor optimization based on ambiguity set constructed based on $\preceq_1$.}  For $i\in[n]$, we take $F_i$ as the empirical distribution of $N$ sample points from  the distribution ${\rm N}({\boldsymbol \mu}_i,\Sigma_i)$; this can be seen as an approximation of the corresponding normal distributions via Monte Carlo simulation. We generate the parameters by 
$
({\boldsymbol
	\mu}_i ,\boldsymbol \sigma_i) \stackrel{{\rm i.i.d.}}{\sim}  {\rm N}_{2d}(0,I_{2d}) ~\mbox{and}~\Sigma_i=\mathrm{diag}(\exp( \boldsymbol \sigma_{i})) \mbox{~for  $i\in[n]$},
$
where $I_d$ is the $d\times d$ identity matrix and $\exp$ is applied component-wise.  %More precisely, we generate an equal number of samples from each $F_i$, and the expectations in  \eqref{eq:0820-1} and \eqref{eq:0820-2} are solved by the empirical cdf with respect to the samples.
Figure \ref{fig-calculationtime} presents the computation times of two approaches where we use the \texttt{cvxpy} library with \texttt{MOSEK} as the chosen solver. % {\color{red}and both axes are plotted on a logarithmic scale}. 
The corresponding parameters are summarized below. We set $\beta_i=i$ and $\eta_i=d-i+1$ for $i\in[d]$. 
%In both panels of Figure \ref{fig-calculationtime}, the cardinality of $\mathbb W$ and the number of distributions in $\mathcal F$ are both chosen as $3$. We assume that the values of $n^w$ are identical for all $w\in\mathbb W$ where the first panel of Figure \ref{fig-calculationtime} shows the time performance of the change of this value. Moreover, $d$ in the first panel and $n^w$ in the second are both fixed as $3$. For each $w\in\mathbb W$, $p_j^w$ ($j=1,\dots,n^w$) is randomly generated under the condition that $\sum_{j=1}^{n^w}p_j^w=1$, and $\alpha_j^w=(2j-1)/(2n^w)$ for $j=1,2,\dots,n^w$. 
We assume that the values of $n^w$ are identical for all $w\in\mathbb W$, and $p_j^w$, $j\in[n^w]$ are uniformly generated from the standard $n^w$-simplex, and $\alpha_j^w=(2j-1)/(2n^w)$ for $j\in[n^w]$. Each panel in Figure \ref{fig-calculationtime} illustrates the performance as one parameter varies, while the other parameters are held constant at  $(n^w,d,n,|\mathbb{W}|,N)=(10,3,3,3,100)$. %, where $|\mathbb{W}|$ is the cardinality of $\mathbb{W}$. 
As shown in Figure \ref{fig-calculationtime}, 
{the computation times of the two approaches appear to be similar for most values of $d$, $n$ and $N$.
	In the case of large $n^w$ or large $|\mathbb W|$, the MA approach is visibly faster. This is consistent with our intuition that MA may work better when the risk measure itself is more complicated. }
%To be more precise, in the WR approach, for each feasible distribution, we need to calculate the value of the risk measure while in the MA approach,  we only need to calculate the value of the risk measure for only once as the robust distribution model is obtained independent of the risk measure.}
%the coherent risk measure is accurately approximated by \eqref{eq:0818-3}.

\begin{figure}[t]
\caption{Computation times of MA and WR methods %, where both axes are presented in the logarithmic scale
} \label{fig-calculationtime}
\medskip	
\begin{center}
	\includegraphics[width=5.4 cm]{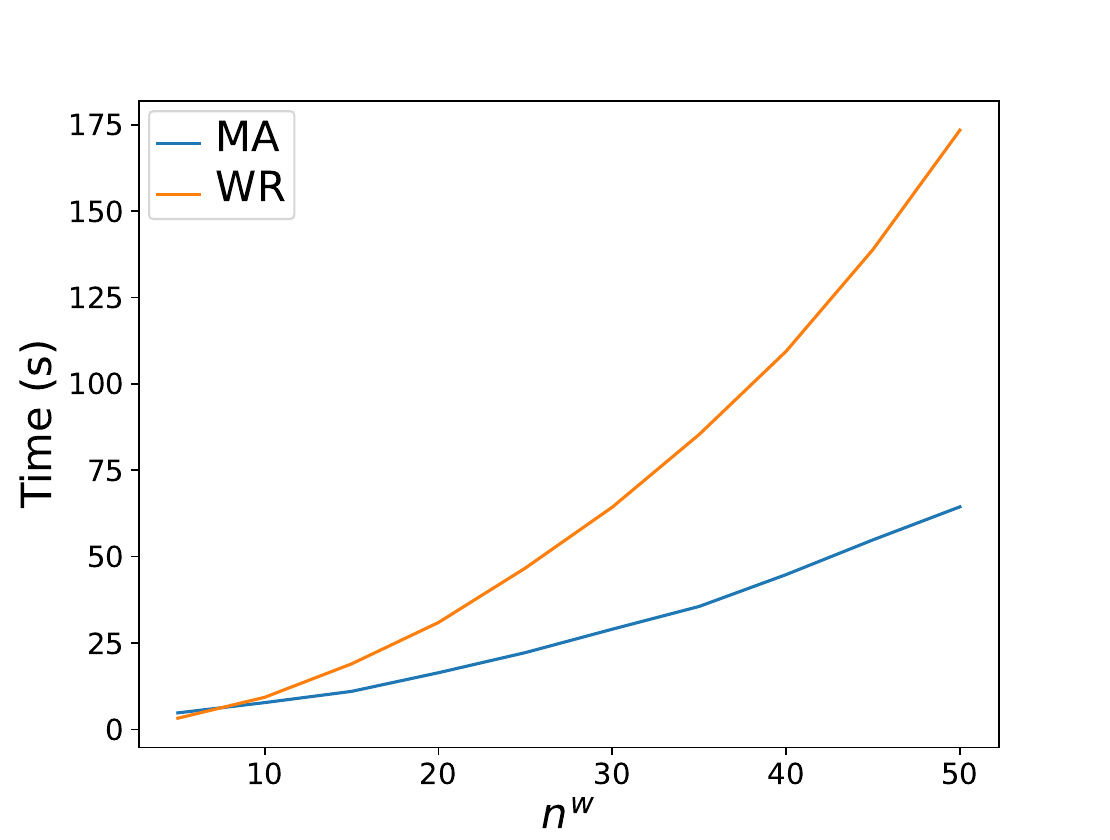}
	\includegraphics[width=5.4 cm]{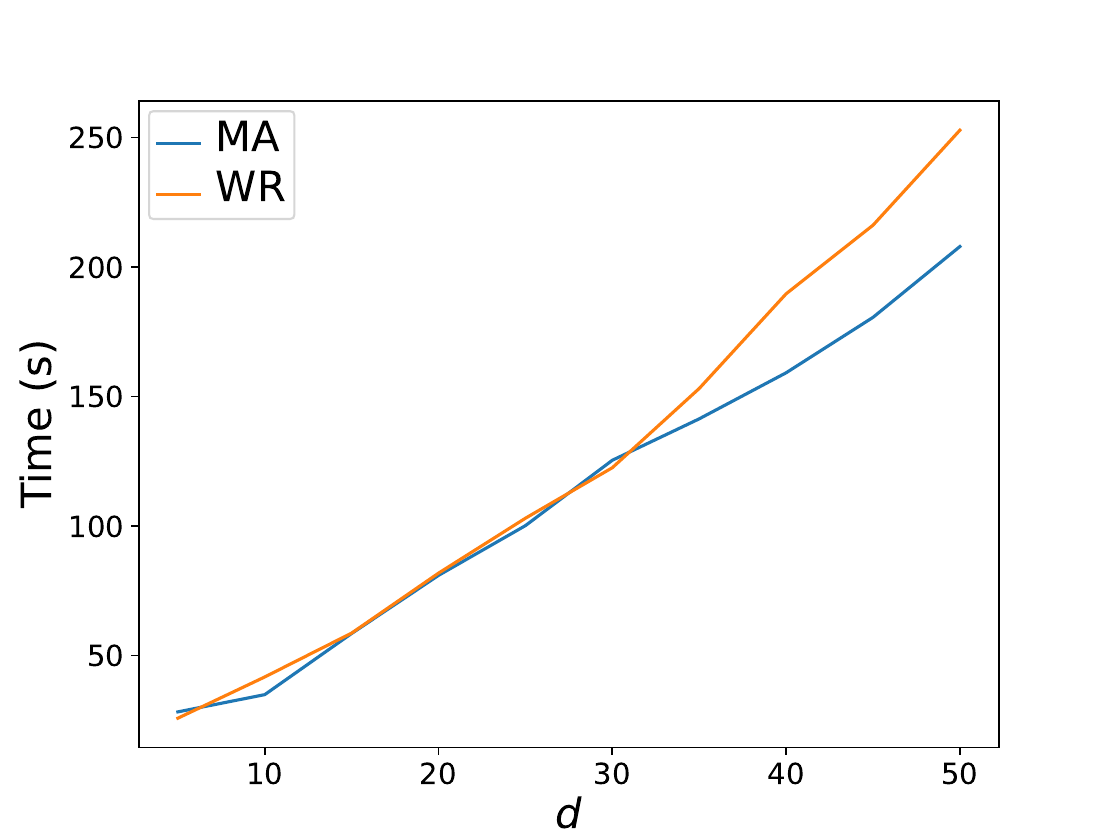}
	\includegraphics[width=5.4 cm]{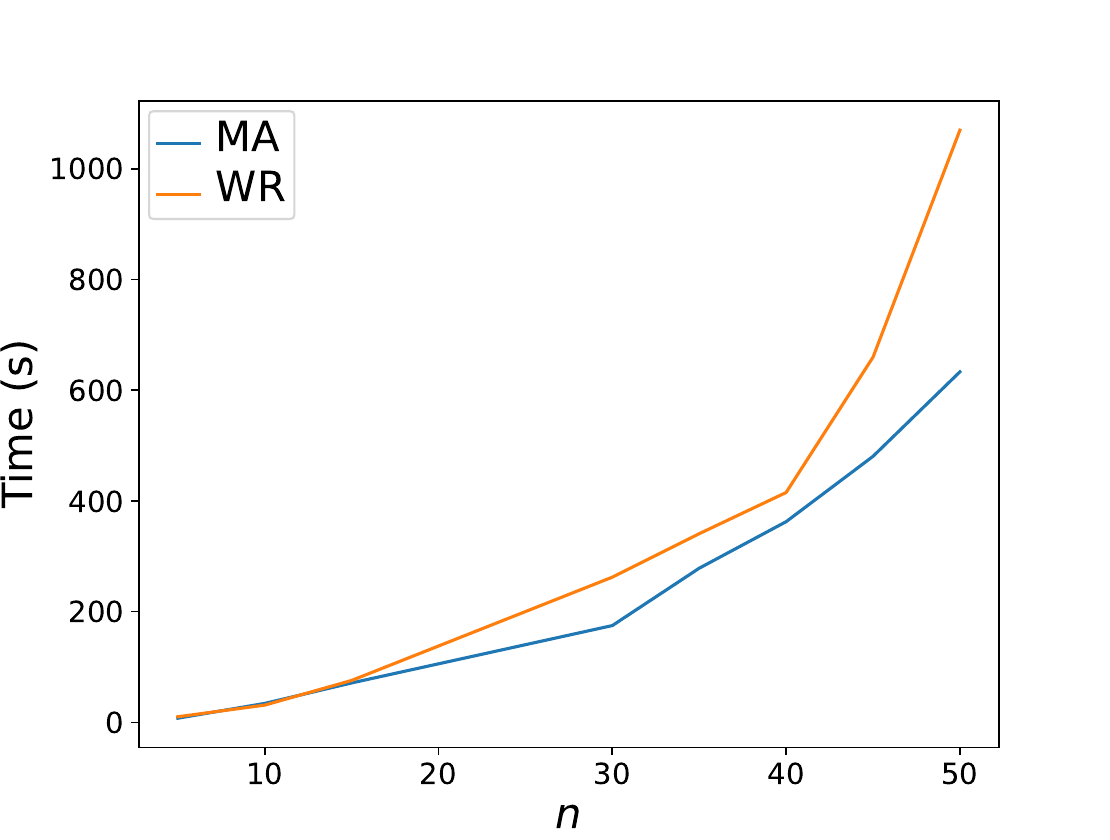}
	
	\includegraphics[width=5.4 cm]{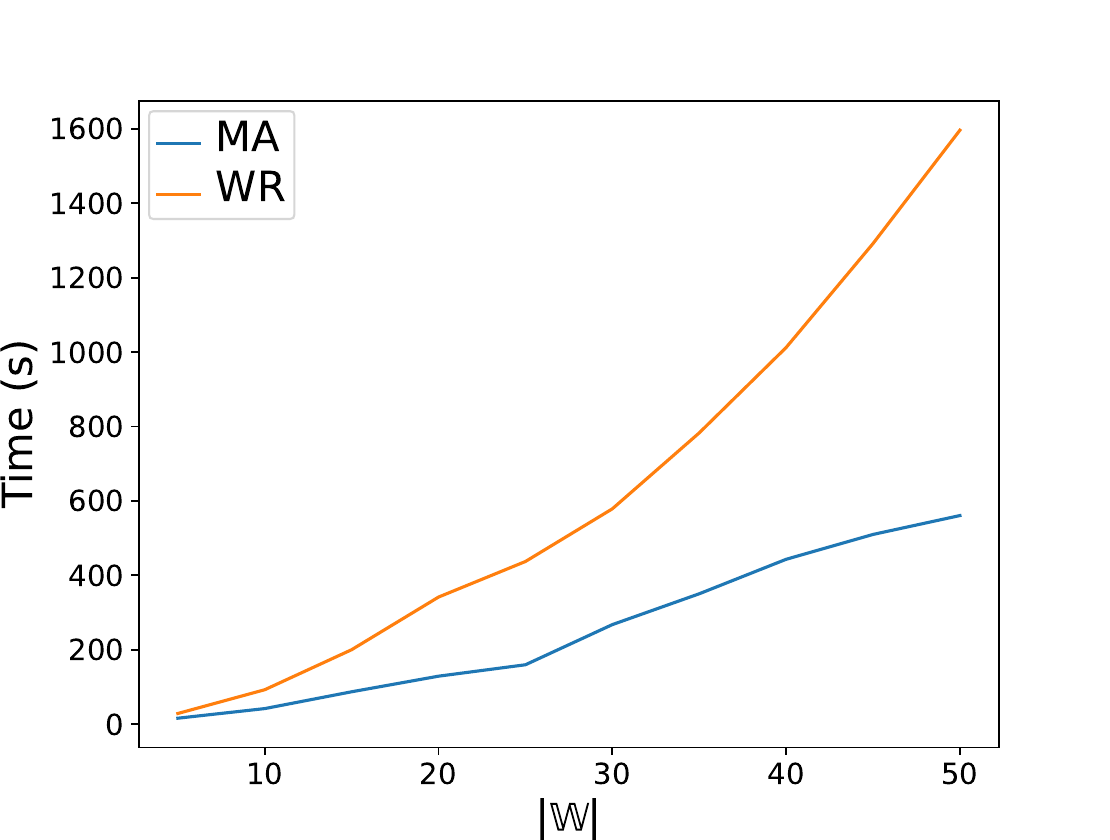}
	\includegraphics[width=5.4 cm]{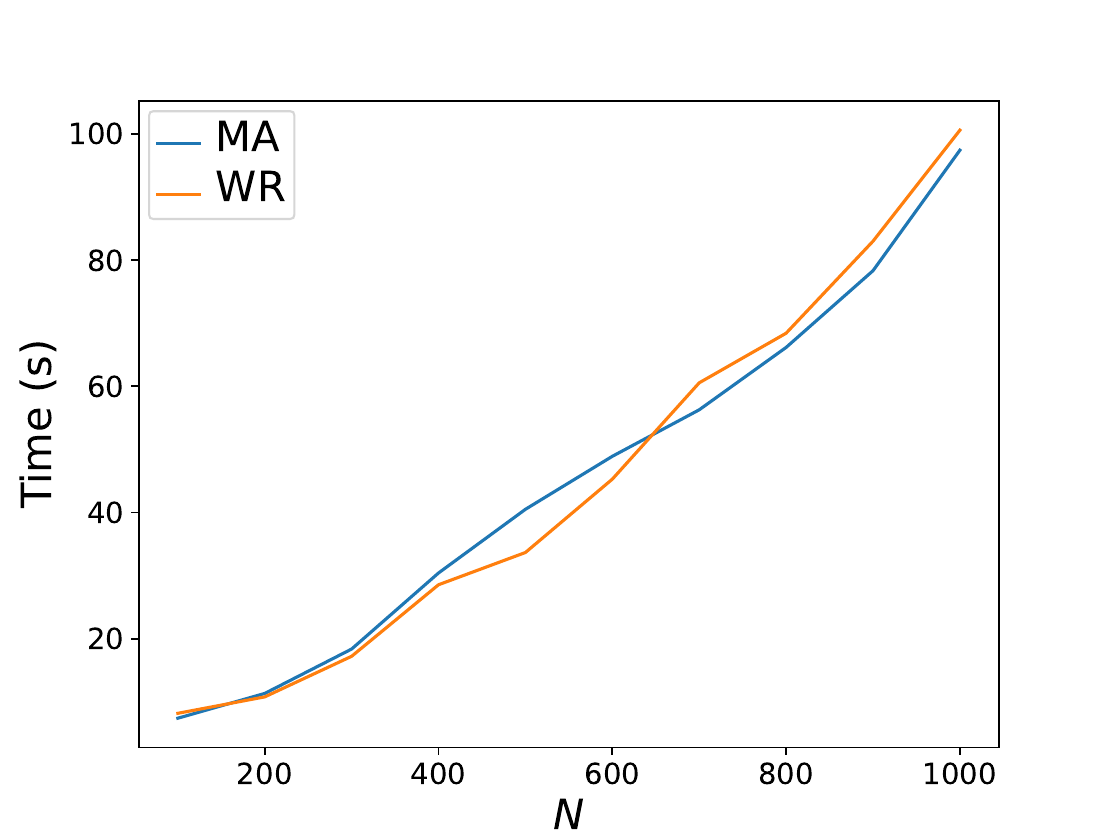}
	\medskip
\end{center}
\footnotesize{Note: $d$ is the dimension of the random vector; $n$ is the number of cdfs in $\mathcal F$; 
	$|\mathbb{W}|$ is the cardinality of $\mathbb{W}$; 
	$N$ is the sample size drawn from each $F_i\in \mathcal F$. We assume that $n^w$ is identical for each $w\in\mathbb{W}$. 
	%{\color{red}Both axes are plotted on a logarithmic scale.}
	%Each result represents the average of 10 computations.
}
\end{figure}

}

\section{Proofs for results in Section \ref{sec:US} and omitted figures}
\label{app:E}
\subsection{Proofs}
%of Theorems \ref{prop-WU} and \ref{th:7}}

\noindent{\bf Proof of Theorem \ref{prop-WU}.}~
Statement (a) can be directly obtained by applying Proposition 4 (i) of \cite{LMWW21}. To see (b), if $p=1$, one can check that $\sup_{F\in \mathcal F_{1,\epsilon}(F_0)}\pi_F(x)=\pi_{F_0}(x)+\epsilon$ for all $x\in\R$. There does not exist $G\in\mathcal M_1$ such that $\pi_{G}=\sup_{F\in \mathcal F_{1,\epsilon}(F_0)}\pi_F$ since $\lim_{x\to\infty}\sup_{F\in \mathcal F_{1,\epsilon}(F_0)}\pi_F(x)=\epsilon>0$. Hence, the set $\mathcal F_{1,\epsilon}(F_0)$ is not $\preceq_2$-bounded.
%It then follows from Proposition \ref{th-sharp} that $\mathcal F_{1,\epsilon}(F_0)$ is not $\preceq_2$-bounded.
For $p>1$,
Since the Wasserstein ball is convex, it follows from Theorem \ref{prop-cxES} that $\sup_{F\in \mathcal F_{p,\epsilon}(F_0)} \ES_{\alpha}(F)=\ES_\alpha (F_{p,\epsilon|F_0}^2 )$.
By Proposition 4 (ii) of \cite{LMWW21}, we have
$\sup_{F\in \mathcal F_{p,\epsilon}(F_0)} \ES_{\alpha}(F)
=\ES_{\alpha}(F_0)+(1-\alpha)^{-1/p}\epsilon$ for $\alpha\in(0,1)$.
Therefore, one can obtain
$$
\int_{\alpha}^1 \VaR_s\left(F_{p,\epsilon|F_0}^2\right)\d s
%=(1-\alpha)\ES_\alpha(F_{k,\epsilon|F_0}^2)
=\int_{\alpha}^1 \VaR_s(F_0)\d s+(1-\alpha)^{1-\frac1p}\epsilon,~~\alpha\in(0,1).
$$
Take the derivative on the left and right sides of the above formula for $\alpha$, we have
$$
\VaR_\alpha\left(F_{p,\epsilon|F_0}^2\right)=
\VaR_\alpha(F_0)+\left(1-\frac1p\right)(1-\alpha)^{-\frac1p}\epsilon.
$$
Hence, we complete the proof.\qed

The following remark is related to the robust distributions in Theorem \ref{prop-WU}.

\begin{remark}\label{rem:omit-WD}
	In this remark, we collect some observations related to the robust distributions $F_{p,\epsilon|F_0}^1$ and $ F_{p,\epsilon|F_0}^2$ obtained in Theorem \ref{prop-WU}.
	\begin{enumerate}[(i)]
		\item %For $\epsilon>0$, neither of the supremum $F_{k,\epsilon|F_0}^1$ and $F_{k,\epsilon|F_0}^2$ is an element in $B_k(F_0,\epsilon)$.
		The order   $F_{p,\epsilon|F_0}^2\preceq_1F_{p,\epsilon|F_0}^1$ holds   since $(F_{p,\epsilon|F_0}^1)^{-1}(\alpha)\ge (F_{p,\epsilon|F_0}^2)^{-1}(\alpha)$ for all $\alpha\in(0,1)$.
		\item Both $F_{p,\epsilon|F_0}^1$ and $F_{p,\epsilon|F_0}^2$ are increasing in $\epsilon$ with respect to $\preceq_1$. %In the special case of $\epsilon=0$, the uncertainty set $B_k(F_0,\epsilon)$ contains only one element $F_0$ so that $F_{k,\epsilon|F_0}^1=F_{k,\epsilon|F_0}^2=F_0$.
		
		\item  The left-hand side of equation \eqref{eq-WU1} is increasing in $p$. Hence, a larger value of $p$ leads to a smaller cdf $F_{p,\epsilon|F_0}^1$ with respect to $\preceq_1$.
		%The impact of $k$ on $F_{k,\epsilon|F_0}^1$ and $F_{k,\epsilon|F_0}^2$ can be intuitively seen from the fact that the uncertainty set $\mathcal F_{k,\epsilon}(F_0)$ will be smaller when $k$ becomes larger
		%Nevertheless, the gap between them will be smaller as the value $k$ become bigger.
		% so that $k$ can be seen as an index of uncertainty aversion of decision maker.
		\item  The left quantile functions $(F_{p,\epsilon|F_0}^1)^{-1}$ and $(F_{p,\epsilon|F_0}^2)^{-1}$ has the same limit  $F_0^{-1}+\epsilon$ as $p\to\infty$.
		\item
		None of $F_{p,\epsilon|F_0}^1$ and $F_{p,\epsilon|F_0}^2$ is in any Wasserstein ball of the form \eqref{eq-WU} since
		$
		{W_p}(F_{p,\epsilon|F_0}^1,F_0) =   {W_p}(F_{p,\epsilon|F_0}^2,F_0)  =\infty.
		$   This is not surprising, as $F_{p,\epsilon|F_0}^1$ and $F_{p,\epsilon|F_0}^2$ dominate every element in the Wasserstein ball and their quantile functions are  of a different shape in general.
	\end{enumerate}
\end{remark}

\noindent{\bf Proof of Theorem \ref{th:7}.}~
%We will show the theorem for $k=2$ and replacing $2$ by another $k>1$ is analogous.
For two random vectors $\mathbf X$ and $\mathbf Y$ of the same dimension,  define $\mathcal L_{a,p}$ as
$$
\mathcal L_{a,p} (\mathbf X, \mathbf Y)^p
=  \E\left[  \Vert\mathbf  X - \mathbf Y \Vert^p_a\right] .
$$
For any $F \in  \mathcal F _{\mathbf w,a, p,\epsilon }(F_{\mathbf X}) $, by definition, there exists $\mathbf Z$ with $F_{\mathbf Z} \in \mathcal F^d _{ a, p,\epsilon }(F_{\mathbf X})
$ and
$ F=F_{ \mathbf w^{\top}   \mathbf Z}   $.
We can verify that
\begin{align*}
	W_p ( F_{\mathbf w^{\top} \mathbf X},   F_{\mathbf w^{\top} \mathbf Z})
	& = \inf_{ X \laweq \mathbf w^{\top} \mathbf X ,~ Z \laweq \mathbf w^{\top} \mathbf Z}  (\E[|X-Z|^p])^{1/p}
	\\ &
	= \inf_{\mathbf w ^\top  \mathbf X'\laweq \mathbf w^{\top} \mathbf X ,~\mathbf w ^\top  \mathbf Z'\laweq \mathbf w^{\top} \mathbf Z}   (\E[|\mathbf w ^\top  \mathbf X'-\mathbf w ^\top  \mathbf Z'|^p])^{1/p}
	%\mathcal L_p ( \mathbf w^{\top} \mathbf X'  ,  \mathbf w^{\top} \mathbf Z')
	\\ &
	\le \inf_{\mathbf w ^\top  \mathbf X'\laweq \mathbf w^{\top} \mathbf X ,~\mathbf w ^\top  \mathbf Z'\laweq \mathbf w^{\top} \mathbf Z}   \Vert \mathbf w \Vert_b   \mathcal L_{a,p} (  \mathbf X',    \mathbf Z')
	\\ & \le \inf_{ \mathbf X'\laweq  \mathbf X ,~   \mathbf Z'\laweq  \mathbf Z}  \Vert \mathbf w \Vert_b  \mathcal L_{a,p} (  \mathbf X',    \mathbf Z')
	=  \Vert \mathbf w \Vert_b      W_{a,p}^d  ( F_{ \mathbf X},  F_{\mathbf Z})  \le  \epsilon\Vert \mathbf w \Vert_b,
\end{align*}
where the infima are taken over $(X,Z)$ or $(\mathbf X',\mathbf Z')$, $b$ satisfies $1/a+1/b=1$,
and the first inequality follows from the H\"{o}lder inequality.
Hence, $\mathcal F _{\mathbf w,a, p,\epsilon }(F_{\mathbf X})\subseteq \mathcal F_{p,\epsilon \Vert \mathbf w \Vert_b    }(F_{\mathbf w^\top \mathbf X})$.

We next show the opposite direction of the set inclusion.
For any $F \in \mathcal F_{p,\epsilon \Vert \mathbf w \Vert_b    }(F_{\mathbf w^\top \mathbf X})$, since the set
$\{Y: (\E[|Y-\mathbf w^{\top}\mathbf X|^p])^{1/p}\le \epsilon\Vert \mathbf w \Vert_b  \}$ is closed,
there exists  $Z$ such that $F_Z=F$ and $(\E[|Z-\mathbf w^{\top}\mathbf X|^p])^{1/p}\le \epsilon\Vert \mathbf w \Vert_b.$ 
Below we consider the case that $a>1$.
Denote by %$(\Vert \mathbf w\Vert_2^2 Y)/(\Vert \mathbf w\Vert_a\Vert \mathbf w\Vert_b)= Z-\mathbf w ^\top \mathbf X$
$Y=Z-\mathbf w ^\top \mathbf X$ and $\mathbf Z = \mathbf X+({\mathbf w^{b/a}Y})/({\mathbf w^\top \mathbf w^{b/a}}),$ where $\mathbf w^{b/a}=({\rm sign}(w_1)|w_1|^{b/a},\dots,{\rm sign}(w_d)|w_d|^{b/a})$ and sign$:\R\to\{-1,1\}$ is the sign function.  We have
$\E[|Y|^p]\le \Vert \mathbf w \Vert_b^p  \epsilon^p$.
Moreover, noting that $\mathbf w^{\top}\mathbf w^{b/a}=\sum_{i=1}^d |w_i|^{1+b/a}=\sum_{i=1}^d |w_i|^{b}=\|\mathbf w\|_b^b$, we have
\begin{align*}
	\left(W_{a,p}^d (F_{\mathbf X},F_\mathbf {Z})\right)^p \le \E[ \Vert \mathbf Z-\mathbf X \Vert ^p_a  ]
	&= \E\left[\left\Vert \frac{\mathbf w^{b/a}}{\mathbf w^\top \mathbf w^{b/a}}{Y}\right\Vert ^p_a \right] \\[10pt]
	%&= \E\left[\left\Vert \frac{\mathbf w^{b/a}}{\|\mathbf w\|_b^b}{Y}\right\Vert ^p_a \right]
	&= \frac {\left\Vert  \mathbf w^{b/a} \right\Vert ^p_a \E[|Y|^p]}{\|\mathbf w\|_b^{pb}}
	= \frac {\|\mathbf w\|_b^{pb/a} \E[|Y|^p]}{\|\mathbf w\|_b^{pb}}
	%& = \frac { \E[|Y|^p]}{\|\mathbf w\|_b^{bp(1-1/a)}} \\[10pt]
	=\frac{\E[| Y | ^p ]}{\Vert \mathbf w\Vert ^p_b }
	\le \epsilon^p.
	%\le \epsilon^p.
\end{align*}
where the forth equality is due to $\left\Vert  \mathbf w^{b/a} \right\Vert _a= \left(\sum_{i=1}^d |w_i|^{b}\right)^{1/a}=\|\mathbf w\|_b^{b/a}$ and the last equality comes from $b(1-1/a)=1$. Hence, $F_{\mathbf Z} \in  \mathcal F _{a, p,\epsilon }^d(F_{\mathbf X})$,
which implies $F_{\mathbf w^\top\mathbf Z } \in \mathcal F _{\mathbf w,a, p,\epsilon }(F_{\mathbf X}).$ Noting that $\mathbf w^\top\mathbf Z = \mathbf w^\top\mathbf X +  Y =Z$,
we obtain $F\in    \mathcal F _{\mathbf w, a,p,\epsilon }(F_{\mathbf X})$.
This implies  $\mathcal F _{\mathbf w, a,p,\epsilon }(F_{\mathbf X}) \supseteq  \mathcal F_{p,\Vert \mathbf w \Vert_b  \epsilon   }(F_{\mathbf w^\top \mathbf X})$. Hence, we conclude that $\mathcal F _{\mathbf w, a,p,\epsilon }(F_{\mathbf X}) =  \mathcal F_{p,\Vert \mathbf w \Vert_b  \epsilon   }(F_{\mathbf w^\top \mathbf X})$ for $a>1$.
If $a=1$, then we let $i_0\in[d]$ be such that $|w_{i_0}|=\|\mathbf w\|_\infty$, and
denote by $Y=Z-\mathbf w^{\top}\mathbf X$ and $\mathbf Z=\mathbf X+\mathbf {\rm sign}(w_{i_0}) \mathbf e_{i_0} Y/\|\mathbf w\|_\infty$, where $\mathbf e_{i_0}$ is a vector with its $i_0$-th element equal to 1 and all other elements equal to $0$. Similar to the previous proof of the case that $a>1$, the desired result holds.
\qed

\subsection{Omitted figures from Section \ref{sec:US}}\label{app:G}
We present a few figures omitted from Section \ref{sec:US}.
Figures \ref{quantile_W} and \ref{ES_power_W} are related to Section \ref{sec:62} for  the Wasserstein uncertainty set  $\mathcal F_{k,\epsilon}(F_0)$ with $k=2$, $\epsilon=0.1$,
where the benchmark distribution $F_0$ is the standard normal distribution function.
Figure \ref{quantile_W} shows the left quantile functions
of the supremum.
In Figure \ref{ES_power_W}, we obtain  the WR and  MA robust risk values and the risk measure is chosen as $\ES_\alpha$
or ${\rm PD}_k$.

Figure \ref{fig-R} shows the curves of robust risk evaluation via the WR and MA approaches for the mean-variance uncertainty set $\mathcal F_{0,1}$ in Section \ref{sec:MV}. The risk measure is chosen as $\ES_\alpha$, ${\rm RVaR}_{\alpha,\beta}$, ${\rm PD}_k$ or ${\rm ex}_\alpha$.

\begin{figure}[t]
	\caption{The supremum of $\mathcal F_{2,0.1}(F_0)$ with $F_0\sim  N(0,1)$.}
	\begin{center}
		\includegraphics[width=7.5cm]{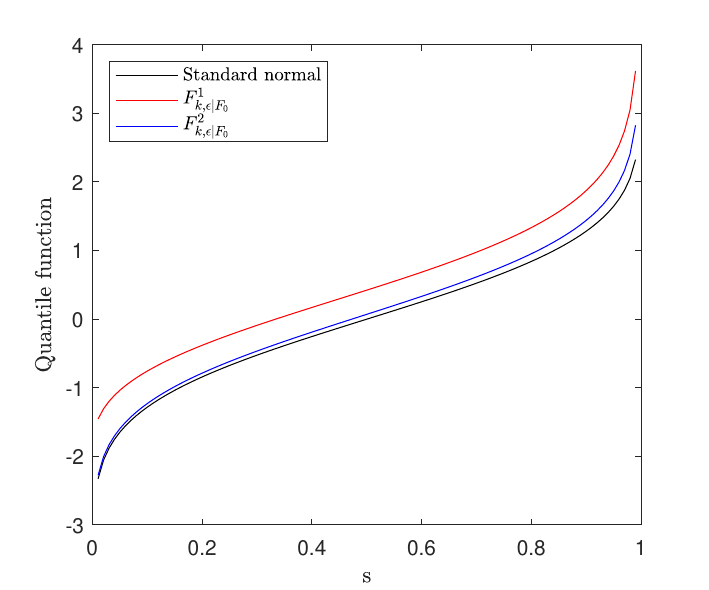}
		\label{quantile_W}
		\medskip
	\end{center}
\end{figure}
\begin{figure}[t]
	\caption{The WR and MA approaches under $\mathcal F_{2,0.1}(F_0)$ with $F_0\sim N(0,1)$.}
	\begin{center}
		\includegraphics[width=7.5cm]{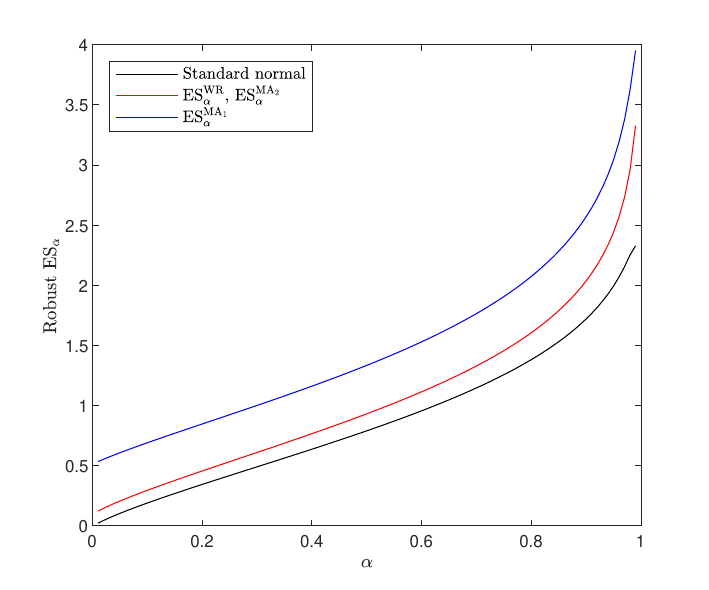}
		\includegraphics[width=7.5cm]{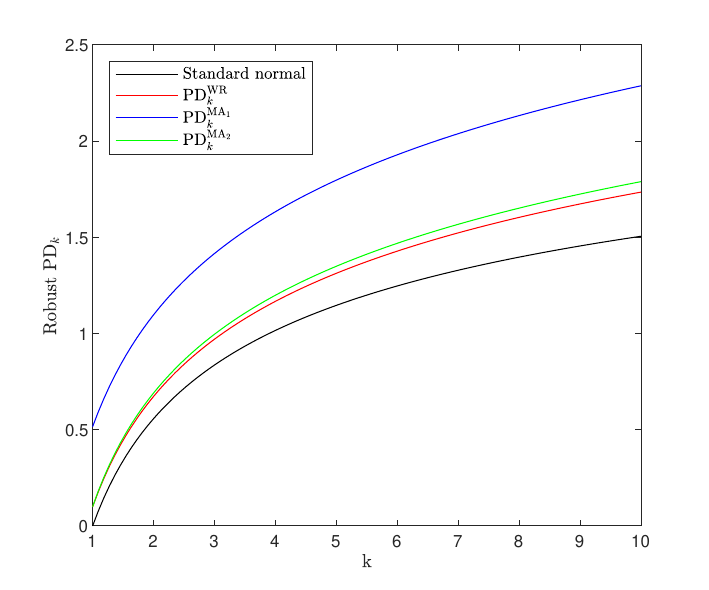}
		\label{ES_power_W}
		\medskip
	\end{center}
\end{figure}

\begin{figure}[t]
	\caption{The WR and MA approaches under mean-variance uncertainty $\mathcal F_{0,1}$.}
	\begin{center}
		\includegraphics[width=7.5cm]{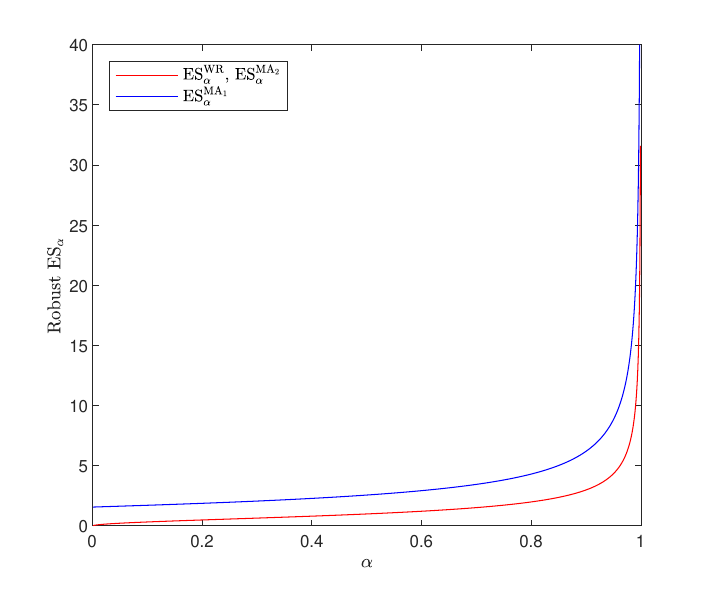}
		\includegraphics[width=7.5cm]{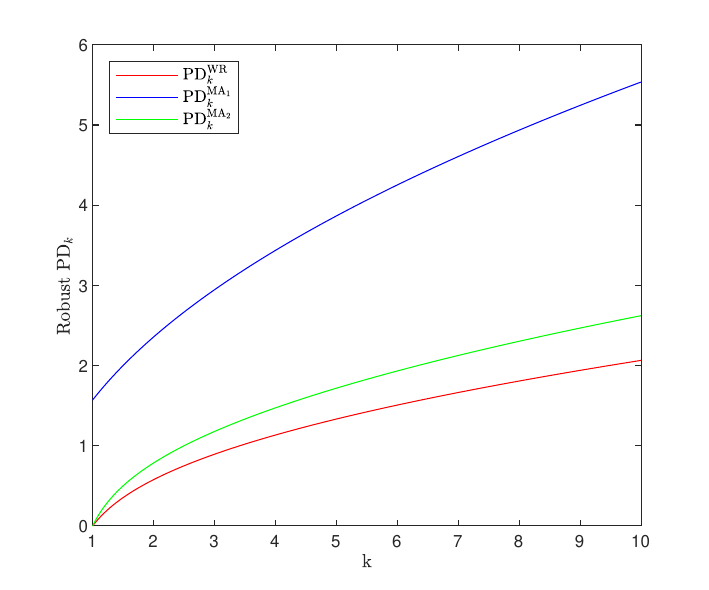}
		\includegraphics[width=7.5cm]{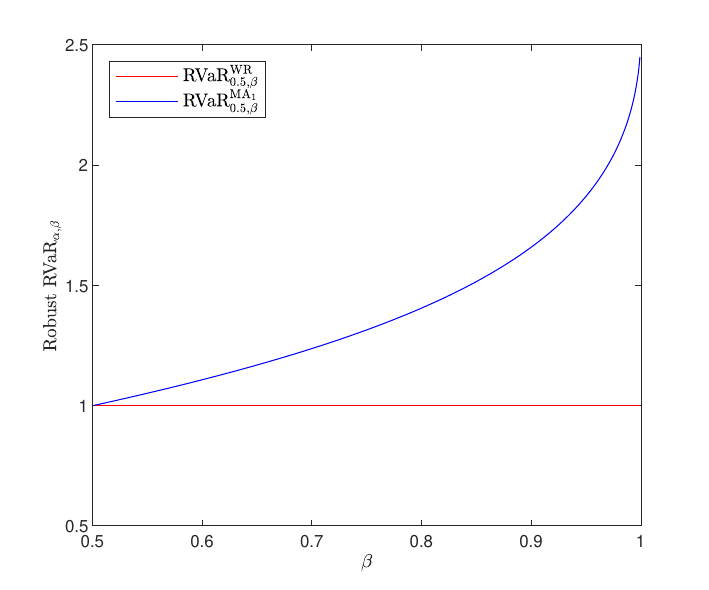}
		\includegraphics[width=7.5cm]{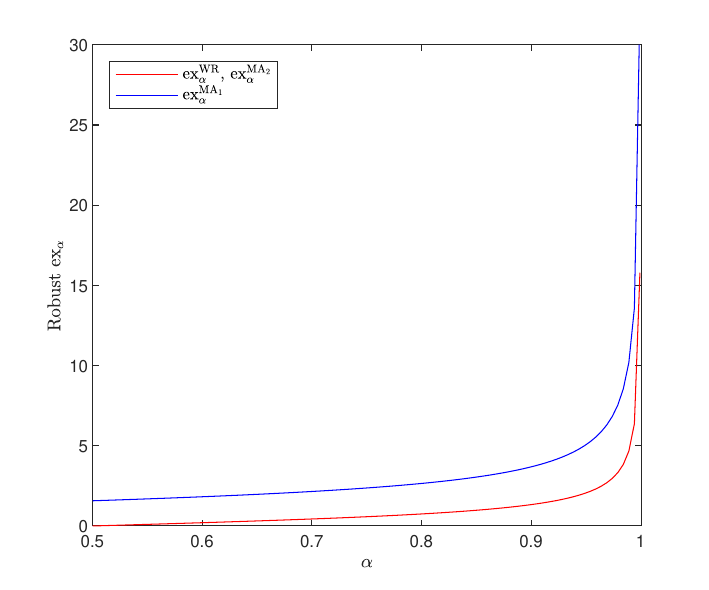}
		\label{fig-R}
		\medskip
	\end{center}
\end{figure}

	\section{Proofs and generalizations for results in    Section \ref{sec:EMAM}}
	\label{apdx-EMAM}
	\label{app:C}
	We first recall that, by Proposition \ref{prop-EMAM},
	%prove Proposition \ref{prop-EMAM} in Section \ref{sec:EMAM}, by which, we have that
	$\preceq_1$-cEMA and $\preceq_2$-cEMA imply the properties $\preceq_1$-consistency and $\preceq_2$-consistency, respectively, and
	%That is,  $\preceq_i$-EMA means the risk measure is consistent with the partial order $\preceq_i$, $i=1,2$.
	$\preceq_i$-cEMA ($i=1,2$) is equivalent to
	\begin{align*}
		\rho\left(\bigvee_i\{F_1,\dots,F_n\}\right)\notag
		=\sup\left\{\rho\left(\sum_{i=1}^n\lambda_iF_i\right):
		\bm\lambda\in\Delta_n\right\}
	\end{align*}
	for all $F_1,\dots,F_n\in\mathcal M$ and $n\ge 1$.
	
	%\noindent{\bf Proof of Proposition \ref{prop-EMAM}.}~
	%(i) Since $\VaR_\alpha$ for $\alpha \in(0,1)$ is m-quasiconvex, that is, if $(\lambda_1,\dots,\lambda_n)\in\Delta_n$, then
	%$\VaR_\alpha(\sum_{i=1}^n \lambda_iF_i)\le \max\{\VaR_\alpha(F_1),\dots,\VaR_\alpha(F_n)\}$ (see e.g., \cite{BB15}), and the mapping $F\mapsto\pi_F$ is linear, the results of (i) then follows immediately from Proposition \ref{prop-lattice-G}.
	%
	%(ii) We only verify the case of $\preceq_1$ as the case of $\preceq_2$ is similar. Let $F,G\in\mathcal M$ such that $F\preceq_1 G$. By the property $\preceq_1$-cEMA of $\rho$ and the arguments in (i), we have $\rho(G)=\max_{\lambda\in[0,1]}\rho(\lambda F+(1-\lambda)G)\ge \rho(F)$. This completes the proof of (ii). \qed
	%\end{proof}

\subsection{A generalization of Theorem \ref{th-VaRM} (a) and related results}
The following theorem is a generalized version of Theorem \ref{th-VaRM} (a) to the domain $\mathcal M_p$, $p\in [0,\infty)$.
\begin{theorem}\label{th-VaRM-G}
	Let $p\in[0,\infty)$.
	A mapping $\rho:\mathcal M_p\to\R$ satisfies translation invariance, positive homogeneity, lower semicontinuity and $\preceq_1$-cEMA
	if and only if $\rho=\VaR_\alpha$ for some $\alpha \in (0,1)$.
\end{theorem}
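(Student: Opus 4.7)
The ``if'' direction is a straightforward verification: $\VaR_\alpha$ with $\alpha\in(0,1)$ is real-valued on $\mathcal M_p$, translation invariant, positively homogeneous, lower semicontinuous (by left-continuity of quantiles under weak convergence), and satisfies $\preceq_1$-cEMA by Theorem \ref{prop-cxES}(c) together with Proposition \ref{prop-EMAM}. For the converse, let $\rho$ satisfy the four properties. Positive homogeneity gives $\rho(\delta_0)=0$, translation invariance extends this to $\rho(\delta_c)=c$ for $c\in\R$, and $\preceq_1$-cEMA applied to ordered pairs $\{F,G\}$ with $F\preceq_1 G$ immediately yields $\preceq_1$-consistency.

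The next step is a scalar reduction. Define $h(\lambda):=\rho(\lambda\delta_0+(1-\lambda)\delta_1)$. Monotonicity and the boundary values give $h:[0,1]\to[0,1]$ decreasing with $h(0)=1$ and $h(1)=0$; lower semicontinuity of $\rho$, combined with the weak continuity of the map $\lambda\mapsto\lambda\delta_0+(1-\lambda)\delta_1$, forces $h$ to be lower semicontinuous, and hence (being monotone decreasing) right-continuous. Translation invariance and positive homogeneity extend this to the two-point formula $\rho(\lambda\delta_a+(1-\lambda)\delta_b)=a+(b-a)h(\lambda)$ for $a<b$, so $\rho$ is entirely determined on two-point distributions by the single function $h$.

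The crucial step is to show $h$ takes only the values $0$ and $1$. The plan is to apply $\preceq_1$-cEMA to non-comparable polytopes whose $\preceq_1$-supremum is itself a two-point distribution (whose $\rho$-value is already known from the previous step). A prototypical example is $\mathcal F=\{\delta_0,\lambda\delta_{-1}+(1-\lambda)\delta_1\}$, for which a direct pointwise maximum of quantile functions yields $\bigvee_1\mathcal F=\lambda\delta_0+(1-\lambda)\delta_1$; cEMA then produces a functional equation on $\rho$ restricted to three-point distributions supported in $\{-1,0,1\}$. Rescaling by positive homogeneity and shifting by translation invariance propagates this into a self-consistency system that rules out $h(\lambda)\in(0,1)$. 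Setting $\alpha:=\inf\{\lambda\in[0,1]:h(\lambda)=0\}$ and invoking right-continuity then gives $h=\mathbf{1}_{[0,\alpha)}$, so $\rho$ agrees with $\VaR_\alpha$ on all two-point distributions; the values $\alpha\in\{0,1\}$ are excluded because $\VaR_0$ and $\VaR_1$ fail to be real-valued on $\mathcal M_p$ for $p<\infty$ (e.g., a standard normal lies in every $\mathcal M_p$).

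Extension to finitely-supported $F=\sum_{i=1}^n p_i\delta_{x_i}$ (with $x_1<\cdots<x_n$) proceeds by applying $\preceq_1$-cEMA to the polytope generated by the two-point distributions $F_j:=q_{j-1}\delta_{x_1}+(1-q_{j-1})\delta_{x_j}$, where $q_j=p_1+\cdots+p_j$: the pointwise maximum of quantiles shows $\bigvee_1\{F_1,\dots,F_n\}=F$, and cEMA combined with the two-point identity pins down $\rho(F)=\VaR_\alpha(F)$ (an induction on the number of atoms treats the mixtures on the right-hand side). Finally, lower semicontinuity together with $\preceq_1$-monotone approximation of an arbitrary $F\in\mathcal M_p$ by finitely-supported distributions, combined with continuity of $\VaR_\alpha$ along such approximations, extends the identity to all of $\mathcal M_p$. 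The \emph{principal obstacle} is the preceding paragraph: the other three axioms alone permit $h$ to be an arbitrary decreasing right-continuous function with the stipulated boundary values, and converting the genuine content of $\preceq_1$-cEMA (namely the case where $\bigvee_1\mathcal F$ lies outside $\mathrm{conv}(\mathcal F)$, made sharp by Proposition \ref{prop-EMAM}) into the pointwise $\{0,1\}$-constraint on $h$ requires an ingenious choice of non-comparable polytopes whose supremum is tractable yet whose convex hull forces self-consistency compatible only with a single jump.
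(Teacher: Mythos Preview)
Your proposal contains two genuine gaps.

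\textbf{The crucial step is not closed, and the sketch does not obviously close.} You correctly identify that forcing $h$ to be $\{0,1\}$-valued is the heart of the matter, but the mechanism you outline does not deliver it. Applying $\preceq_1$-cEMA to $\{\delta_0,\lambda\delta_{-1}+(1-\lambda)\delta_1\}$ yields
\[
h(\lambda)=\sup_{t\in[0,1]}\rho\bigl(t\lambda\,\delta_{-1}+(1-t)\,\delta_0+t(1-\lambda)\,\delta_1\bigr),
\]
and the right-hand side involves $\rho$ on genuine three-point distributions, which is \emph{not} determined by $h$. So no self-consistency equation for $h$ alone emerges. The paper circumvents this by invoking a minimax representation (Jia et al.) that writes $\widehat\rho$ on $\mathcal M_\infty$ as $\min_\xi\max_{\mu\in\mathcal M_\xi}\int \VaR_s(\cdot)\,\mu(\d s)$ and then analyses $g_\Xi(\beta)=\max_\xi\min_{\mu\in\mathcal M_\xi}\mu([0,\beta])$. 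The three-point test with $F=\beta\delta_{-a}+(1-\beta)\delta_1$, $G=\delta_0$ and $a\to\infty$ forces $g_\Xi$ to be an indicator; the point is that the representation supplies enough structure to bound the three-point values from below by expressions in $g_\Xi$ alone. Without such a representation (or a substitute), your functional-equation idea would need an independent handle on three-point evaluations, and you do not provide one. Your induction for finitely supported $F$ has the same defect: the convex hull of your $F_j$'s consists of distributions with the same support as $F$, so the induction hypothesis never applies to the right-hand side.

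\textbf{The final extension step fails for distributions unbounded below.} Monotone approximation from below in $\preceq_1$ by finitely supported $F_n$ requires $F_n^{-1}\le F^{-1}$, which forces $\mathrm{ess\mbox{-}inf}\,F_n\le\mathrm{ess\mbox{-}inf}\,F=-\infty$, impossible for finite support. Approximating from above gives $\rho(F_n)\ge\rho(F)$, but then lower semicontinuity gives only $\liminf\rho(F_n)\ge\rho(F)$, the same direction, and you cannot sandwich. The paper's Lemma \ref{lm-extend} handles exactly $\mathcal M_{\rm bb}\cap\mathcal M_p$ by this route; for $F\in\mathcal M_p\setminus\mathcal M_{\rm bb}$ the paper returns to cEMA, applying it to $\{F,\delta_\zeta\}$ with $\zeta<\VaR_\alpha(F)$ and arguing separately according to whether $F$ is continuous at $\VaR_\alpha(F)$. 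Your one-line appeal to approximation does not cover this case.
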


%It is worth noting that the characterization result in Theorem \ref{th-VaRM-G} is not true if taking the domain of $\rho$ as $\mathcal M_\infty$. For example, the convex combination of $\VaR_\alpha$ for some  $\alpha \in (0,1)$ and $\VaR_1$ satisfies all axioms in Theorem \ref{th-VaRM}.
%Nevertheless, the risk mapping is not real-valued
%on a space of unbounded random variables.  The characterization   on $\mathcal M_\infty$ is  an open question.

To prove Theorem \ref{th-VaRM-G}, we need some technical lemmas.
%\subsubsection*{Some technical lemmas and propositions}
%Denote by $\mathcal D$ the set of all discrete distributions with finite support and $\mathcal M_{\rm bb}$ the set of all lower bounded distribution functions $F$, i.e., ${\rm VaR}_0(F)>-\infty$.
The following lemma
%, which will also be used in the proof of Theorem \ref{th-lossVaR-G}, 
shows that a $\preceq_1$-consistent and lower semicontinuous risk measure can be uniquely extended from $\mathcal M_\infty$ to $\mathcal M_{\rm bb}$.

\begin{lemma}\label{lm-extend}
	Let $p\in [0,\infty)$ and $\rho_1, \rho_2: \mathcal M_{p}\to\R\cup\{\infty\}$ be two $\preceq_1$-consistent and lower semicontinuous risk measures that coincide on $\mathcal M_\infty$. Then $\rho_1(F)=\rho_2(F)$ for all $F\in\mathcal M_{\rm bb}\cap \mathcal M_p$.
\end{lemma}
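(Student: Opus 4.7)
The plan is to extend the equality from $\mathcal M_\infty$ to $\mathcal M_{\rm bb}\cap \mathcal M_p$ by approximating each target cdf from below, with respect to $\preceq_1$, by a sequence of compactly supported cdfs. Fix $F\in \mathcal M_{\rm bb}\cap \mathcal M_p$ with $F(a^-)=0$ for some $a\in\R$, and for each integer $n>a$ define $F_n$ by $F_n(x)=F(x)$ for $x<n$ and $F_n(x)=1$ for $x\ge n$; equivalently, its left quantile is $F_n^{-1}(\alpha)=F^{-1}(\alpha)\wedge n$ for $\alpha\in(0,1)$. Then the support of $F_n$ lies in $[a,n]$, so $F_n\in \mathcal M_\infty$; also $F_n\preceq_1 F$ for every $n$, and at each continuity point $x_0$ of $F$ we have $F_n(x_0)=F(x_0)$ as soon as $n>x_0$, so $F_n\to F$ weakly.

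The core step is then a two-sided squeeze. On the one hand, $\preceq_1$-consistency of $\rho_i$ combined with $F_n\preceq_1 F$ gives $\rho_i(F_n)\le \rho_i(F)$ for $i=1,2$, so $\limsup_n \rho_i(F_n)\le \rho_i(F)$. On the other hand, since $F_n,F\in\mathcal M_p$ and $F_n\to F$ weakly (i.e.\ in distribution), lower semicontinuity yields $\liminf_n \rho_i(F_n)\ge \rho_i(F)$. Hence $\lim_n \rho_i(F_n)=\rho_i(F)$ in $\R\cup\{\infty\}$. Because $F_n\in\mathcal M_\infty$ and $\rho_1,\rho_2$ coincide there, one has $\rho_1(F_n)=\rho_2(F_n)$ for every $n$, and passing to the limit finishes the proof.

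The argument is fairly routine; there is no genuinely hard step, but a few points merit care. The extended-real codomain is harmless because the $\liminf$/$\limsup$ sandwich above makes sense in $\R\cup\{\infty\}$; if $\rho_i(F)=\infty$ for one index $i$ we conclude the same for the other. The restriction to $\mathcal M_{\rm bb}$ is precisely what makes the simple upper-truncation sit inside $\mathcal M_\infty$, so one does not need any delicate tail-control. The mild technical obstacle, if any, is to verify that upper truncation at a continuity point (or at an arbitrary $n\in\N$, with only countably many jump points of $F$ to worry about) preserves weak convergence; this is routine but worth recording explicitly before invoking the lower semicontinuity hypothesis.
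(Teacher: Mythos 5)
Your proof is correct and follows essentially the same route as the paper: approximate $F$ from below in $\preceq_1$ by compactly supported cdfs converging weakly, then squeeze $\rho_i(F)$ between the $\limsup$ given by $\preceq_1$-consistency and the $\liminf$ given by lower semicontinuity. The only (cosmetic) difference is the approximating sequence — the paper additionally discretizes $F$ on a dyadic grid before truncating, whereas your plain upper truncation already lands in $\mathcal M_\infty$ thanks to the $\mathcal M_{\rm bb}$ hypothesis and is, if anything, slightly cleaner.
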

\begin{proof}
	For $F\in\mathcal M_{\rm bb}\cap \M_p$,  {\color{black}without loss of generality assume that the support of $F$  is bounded from below by $-M$.  Define $ F_n(x) =  \sum_{k=0}^{n2^n} F(k/2^n-M) \id_{\{\frac{k-1}{2^n}\le x+M< \frac{k}{2^n} \}}+ \id_{\{x+M\ge  n\}},$ $n\in\N$.
	%there exists a sequence $\{F_n\}_{n\in\N}\subseteq \mathcal M_\infty$ such that $F_n\preceq_1 F_{n+1}$ for all $n\ge 1$ and
	We have $F_n\in \mathcal M_\infty$, $F_n\preceq_1 F_{n+1}$ for all $n\ge 1$, $F_n\lawcn F$ as $n\to\infty$,} and
	$$
	\rho_1(F)\ge \limsup_{n\to\infty} \rho_1(F_n)=\limsup_{n\to\infty} \rho_2(F_n)\ge \liminf_{n\to\infty} \rho_2(F_n)\ge \rho_2(F),
	$$
	where the first inequality follows from the $\preceq_1$-consistency of $\rho_1$, and the last inequality is due to the lower semicontinuity of $\rho_2$. By symmetry, we have $\rho_1=\rho_2$.
 \end{proof}

Denote by $\ell$ the Lebesgue measure on $[0,1]$ and $\mathcal M_{1,f}(\ell)$ the space of all finitely additive probability measures on $([0,1],\mathfrak B([0,1]))$ that are absolutely continuous with respect to $\ell$. By Theorem 4.5 of \cite{J20},
for any $\preceq_1$-consistent, translation invariant and positively homogeneous risk measure $\rho:\mathcal M_\infty\to\R$, there exists a family $\{\mathcal M_\xi: \xi\in\Xi\}$ ($\Xi$ is an index set) of nonempty, ${\rm weak}^*$-compact and convex subsets of $\mathcal M_{1,f}(\ell)$ such that
\begin{align}\label{eq-MR}
	\rho(F)=\min_{\xi\in\Xi}\max_{\mu\in\mathcal M_{\xi}}\int_0^1\VaR_{s}(F)\mu(\d s),~~F\in\mathcal M_\infty.
\end{align}
Applying this representation, we can establish the following lemma.
\begin{lemma}\label{prop-EMAM1}
	Let $\rho:\mathcal M_p\to\R$, $p\in[0,\infty)$, be a mapping satisfying $\preceq_1$-consistency, translation invariance, positive homogeneity and lower semicontinuity. Denote by $\widehat\rho$ the {\color{black}restriction} of $\rho$ on $\mathcal M_\infty$, i.e., $\widehat{\rho}(F)=\rho(F)$ for all $F\in\mathcal M_\infty$.  Then the following three statements are equivalent.
	\begin{itemize}
		\item[(a)] $\widehat\rho=\VaR_{\alpha}$ for some $\alpha\in(0,1)$ on $\mathcal M_\infty$.
		\item[(b)] $\widehat\rho$ satisfies $\preceq_1$-cEMA.
		\item[(c)] $\widehat\rho$ admits a representation \eqref{eq-MR} which
		%The representation of $\widehat\rho$ defined by \eqref{eq-MR}
		satisfies $\max_{\xi\in\Xi}\min_{\mu\in\mathcal M_\xi}\mu([0,s])=\id_{\{s\ge\alpha\}}$ for some $\alpha\in(0,1)$.
	\end{itemize}
\end{lemma}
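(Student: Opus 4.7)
The strategy is to establish the cycle $(a)\Rightarrow(b)$, $(a)\Rightarrow(c)$, $(c)\Rightarrow(a)$, and $(b)\Rightarrow(c)$. The first implication $(a)\Rightarrow(b)$ is immediate from Theorem \ref{prop-cxES}(c), since cEMA restricted to convex polytopes is exactly the equality $\VaR_\alpha^{\rm WR}(\mathcal F)=\VaR_\alpha^{{\rm MA}_1}(\mathcal F)$, which holds for every $\preceq_1$-bounded $\mathcal F$. For $(a)\Rightarrow(c)$, one exhibits a concrete representation of $\VaR_\alpha$: set $\Xi=\{\xi_0\}$ and $\mathcal M_{\xi_0}=\{\mu_\alpha\}$, where $\mu_\alpha\in\mathcal M_{1,f}(\ell)$ is the finitely additive probability measure obtained by extending the functional $f\mapsto f(\alpha-)$ from bounded left-continuous functions on $[0,1]$ to $L^\infty(\ell)$ via Hahn--Banach. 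Left-continuity of $s\mapsto\VaR_s(F)$ yields $\int_0^1\VaR_s(F)\mu_\alpha(\d s)=\VaR_\alpha(F)$, and $\mu_\alpha([0,s])=\id_{\{s\ge\alpha\}}$ delivers the required form of $h$.

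For $(c)\Rightarrow(a)$, I would first evaluate the representation on a Bernoulli distribution $F=(1-\beta)\delta_0+\beta\delta_1$ with $\beta\in(0,1)$. Since $\VaR_s(F)=\id_{\{s>1-\beta\}}$, one has
\[
\widehat\rho(F)=\min_\xi\max_{\mu\in\mathcal M_\xi}\mu((1-\beta,1])=1-\max_\xi\min_{\mu\in\mathcal M_\xi}\mu([0,1-\beta])=1-h(1-\beta)=\id_{\{\beta>1-\alpha\}},
\]
which equals $\VaR_\alpha(F)$. Translation invariance and positive homogeneity extend $\widehat\rho=\VaR_\alpha$ to all two-point distributions. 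An inductive computation using the representation formula, based on the break points of $F^{-1}$, then delivers $\widehat\rho=\VaR_\alpha$ on all distributions with finitely many atoms; the extension to $F\in\mathcal M_\infty$ proceeds by sandwiching $F$ between two $\preceq_1$-monotone sequences of simple cdfs that converge to $F$ in distribution, and invoking lower semicontinuity together with $\preceq_1$-consistency.

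The hard direction is $(b)\Rightarrow(c)$. The representation \eqref{eq-MR} is guaranteed by \cite{J20}; the task is to show that cEMA forces $h(s)=\max_\xi\min_{\mu\in\mathcal M_\xi}\mu([0,s])$ to be a $\{0,1\}$-valued step function with a single jump at some $\alpha\in(0,1)$. The plan is to apply cEMA to pairs $\{F_1,F_2\}$ with crossing quantile functions; a canonical choice is $F_1=(1-\beta)\delta_0+\beta\delta_1$ and $F_2=\delta_c$ with $c\in(0,1)$. The supremum $\bigvee_1\{F_1,F_2\}$ is a two-point distribution whose $\widehat\rho$-value is computable via $h$, while the convex mixtures $\lambda F_1+(1-\lambda)F_2$ are three-point distributions whose $\widehat\rho$-values are obtained from the representation. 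Equating $\widehat\rho(\bigvee_1\{F_1,F_2\})=\sup_\lambda\widehat\rho(\lambda F_1+(1-\lambda)F_2)$ constrains the representing measures to be concentrated at a single point of $[0,1]$; varying $\beta$ and $c$ shows that this point is independent of the chosen polytope, yielding the desired $\alpha$.

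The main obstacle is making the concentration argument rigorous, because the min-max structure of \eqref{eq-MR} implies that a single value $\widehat\rho(F)$ does not uniquely determine the representing measures $\mu\in\mathcal M_\xi$. To overcome this, I would argue by contradiction: if $h$ attains a value in $(0,1)$ or has more than one jump in $(0,1)$, then by choosing $F_1, F_2$ whose quantiles cross at two distinct levels one can exhibit a convex polytope on which the cEMA equality strictly fails, contradicting (b). The normalizations $h(0)=0$ and $h(1)=1$ follow by evaluating the two-point formula at $\widehat\rho(\delta_0)=0$ and $\widehat\rho(\delta_1)=1$, and together with monotonicity of $h$ they pin down $h=\id_{\{s\ge\alpha\}}$ for a unique $\alpha\in(0,1)$.
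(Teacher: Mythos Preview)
Your cycle $(a)\Rightarrow(b)$, $(a)\Rightarrow(c)$, $(c)\Rightarrow(a)$ is fine, and your $(c)\Rightarrow(a)$ sketch works, though the paper's argument is cleaner: once $\widehat\rho=\VaR_\alpha$ on two-point distributions, one sandwiches an arbitrary $F\in\mathcal M_\infty$ between the two-point laws $G=\beta\delta_{\VaR_0(F)}+(1-\beta)\delta_{\VaR_\beta(F)}\preceq_1 F$ (for $\beta<\alpha$) and $H=\alpha\delta_{\VaR_\alpha(F)}+(1-\alpha)\delta_{\VaR_1(F)}\succeq_1 F$, and uses $\preceq_1$-consistency and left-continuity of $\VaR$ alone. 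No induction on the number of atoms and no lower semicontinuity are needed.

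The genuine gap is in $(b)\Rightarrow(c)$. Your test pair $F_1=(1-\beta)\delta_0+\beta\delta_1$, $F_2=\delta_c$ with $c\in(0,1)$ keeps all quantile values bounded, and the resulting cEMA identity only yields a soft constraint on $h$; it does not force $h$ to be $\{0,1\}$-valued. The paper's device is to use $F=\beta\delta_{-a}+(1-\beta)\delta_1$ and $G=\delta_0$ and then send $a\to\infty$. Writing $g_\Xi(\beta)=\max_\xi\min_{\mu\in\mathcal M_\xi}\mu([0,\beta])$, cEMA gives
\[
g_\Xi(\beta)\ge\inf_{\lambda\in[0,1]}\max_{\xi}\bigl\{a\,g_\xi(\lambda\beta)+g_\xi(1-\lambda(1-\beta))\bigr\},
\]
and the factor $a$ forces the near-optimal $\lambda_a$ to satisfy $g_\Xi(\lambda_a\beta)\to 0$, hence $\limsup\lambda_a\le\alpha/\beta$ once $\alpha:=\inf\{\beta:g_\Xi(\beta)>0\}$ is introduced. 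This is the mechanism that propagates $g_\Xi(\beta)\ge g_\Xi((1-\alpha(1-\beta)/\beta)-)$ and eventually forces $g_\Xi\equiv 1$ on $(\alpha,1)$. Without an unbounded parameter, your contradiction sketch (``quantiles crossing at two distinct levels'') does not supply this leverage.

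A second point you miss is that the conclusion $\alpha\in(0,1)$, in particular $g_\Xi(1-)=1$, is \emph{not} obtainable from $\widehat\rho$ on $\mathcal M_\infty$ alone; you need to evaluate $\rho$ on an unbounded $G\in\mathcal M_p\setminus\mathcal M_\infty$. If $g_\Xi(1-)=1-\delta$ with $\delta>0$, then for any $M>0$ the two-point law $F=G(M)\delta_0+(1-G(M))\delta_M\preceq_1 G$ gives $\rho(G)\ge\rho(F)=M(1-g_\Xi(G(M)))\ge\delta M$, contradicting $\rho(G)\in\R$. Your appeal to $h(1)=1$ is not the same as $h(1-)=1$. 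Similarly, $\alpha>0$ requires a separate argument: the paper first shows $g_\Xi$ is right-continuous (from lower semicontinuity and $\preceq_1$-consistency on two-point laws), and then rules out $\alpha=0$ by the same $a\to\infty$ device, since $\alpha=0$ would force $g_\Xi=\id_{\{s>0\}}$, violating right-continuity.
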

\begin{proof}
		The implication (a)$\Rightarrow$(b) is straightforward to check by Proposition \ref{prop-lattice-G}. %Next, we will first verify (b)$\Rightarrow$(c), and then verify (c)$\Rightarrow$(a).		

		\underline{(b)$\Rightarrow$(c)}:  By the discussion before this lemma, there exists  a family $\{\mathcal M_\xi: \xi\in\Xi\}$ consisting of nonempty, ${\rm weak}^*$-compact and convex subsets of $\mathcal M_{1,f}(\ell)$  such that  $\widehat\rho$ admits a representation \eqref{eq-MR}.
		Denote by $$g_\mu(\beta )=\mu([0,\beta ]),~\beta \in[0,1],~ ~g_\xi=\min_{\mu\in\mathcal M_\xi}g_\mu~~{\rm and}~~ g_{\Xi}=\max_{\xi\in\Xi}g_{\xi}.$$
		All these three functions are nonnegative, increasing, and take value one at $1$.
		We complete the proof of (b)$\Rightarrow$(c) by verifying  the following three facts. %that (1) $g_{\Xi}$ is right continuous; (2) $g_{\Xi}(1-)=g_{\Xi}(1)=1$; (3) $g_{\Xi}(p)=\id_{\{p\ge \alpha\}}$ for some $\alpha\in(0,1)$.
		\begin{itemize}
			\item [1.]
			\emph{Right-continuity of $g_{\Xi}$}. Note that  for any $\beta \in[0,1)$,
			\begin{align}\rho(\beta \delta_0+(1-\beta )\delta_1 )
				& =\min_{\xi\in\Xi}\max_{\mu\in\mathcal M_{\xi}}\int_0^1\VaR_{s}(\beta \delta_0+(1-\beta )\delta_1 )\mu(\d s)\nonumber\\
				& =\min_{\xi\in\Xi}\max_{\mu\in\mathcal M_{\xi}}\int_{(\beta ,1]}1\mu(\d s)\nonumber\\
				& =\min_{\xi\in\Xi}\max_{\mu\in\mathcal M_{\xi}} \mu((\beta ,1]) =\min_{\xi\in\Xi}\max_{\mu\in\mathcal M_{\xi}}(1-g_\mu(\beta ))= 1-g_\Xi(\beta ).\label{eq-220104-1}
			\end{align}
			Fix $\beta \in[0,1)$. Let $\{\beta _n\}_{n\in\N}\subseteq [0,1)$ such that $\beta _n>\beta $ and $\beta _n\downarrow \beta $ as $n\to\infty$. We have  $\beta _n\delta_0+(1-\beta _n)\delta_1 \preceq_1 \beta \delta_0+(1-\beta )\delta_1$, $n\in\N$, and $\beta _n\delta_0+(1-\beta _n)\delta_1\stackrel{\rm d}\to \beta \delta_0+(1-\beta )\delta_1$ as $n\to\infty$. Therefore, by \eqref{eq-220104-1}, we have %Note that $\preceq_1$-cEMA implies $\preceq_1$-consistency (see Proposition \ref{prop-EMAM}), we have
			\begin{align*}
				\limsup_{n\to\infty}\{1-g_\Xi(\beta _n)\}
				=\limsup_{n\to\infty}\rho(\beta _n\delta_0+(1-\beta _n)\delta_1)
				\le \rho(\beta \delta_0+(1-\beta )\delta_1)
				=1-g_\Xi(\beta ),
			\end{align*}
			where the inequality comes from $\beta _n\delta_0+(1-\beta _n)\delta_1 \preceq_1 \beta \delta_0+(1-\beta )\delta_1$ {\color{black}and $\preceq_1$-consistency of $\rho$.}
			By   lower semicontinuity of $\rho$, we have
			\begin{align*}
				\liminf_{n\to\infty}\{1-g_\Xi(\beta _n)\}
				=\liminf_{n\to\infty}\rho(\beta _n\delta_0+(1-\beta _n)\delta_1)
				\ge \rho(\beta \delta_0+(1-\beta )\delta_1)
				=1-g_\Xi(\beta ).
			\end{align*}
			Hence, we obtain $g_\Xi(\beta )=\lim_{n\to\infty}g_\Xi(\beta _n)$ which implies the right-continuity of $g_\Xi$.
			
			\item [2.] \emph{$g_{\Xi}(1-)=g_{\Xi}(1)=1$}. Assume by contradiction that $g_{\Xi}(1-)=1-\delta$ with $\delta\in(0,1]$. By the definition of $g_\Xi$, we obtain that for all $\xi\in\Xi$, there exists $\mu\in\mathcal M_\xi$ such that $g_\mu(1-)\le g_{\Xi}(1-)=1-\delta$.  Take $G\in \mathcal M_p\setminus \mathcal M_\infty$, $p\in[0,\infty)$ with support on $\R_+$. For any $M\in\R_+$, {\color{black}take $\beta=G(M) \in [0,1)$
			%For any $p\in [0,1)$ and $M\in \R_+$, we have
			and define $F:=\beta \delta_0+(1-\beta ) \delta_M\in\mathcal M_\infty$. One can verify that $F\ge G$ and thus, $F\preceq_1 G$.} Therefore,
			$$ \rho(G)\ge  \rho(F)=\widehat{\rho}(F)=\min_{ \xi\in\Xi}\max_{\mu\in\mathcal M_{\xi}}\int_0^1\VaR_{s}(F)\mu(\d s)= M(1-g_\Xi(\beta ))\ge \delta M,
			$$
	{\color{black}where the last inequality follows from 	$g_\Xi(\beta )\le g_\Xi(1-)=1- \delta$ for any $\beta<1$.}	%which implies $\rho(G)\ge {\rho}(F)\ge \delta M$  % Take $G\in \mathcal M_p\setminus M_\infty$. For any $M\in\R_+$, there exists $p\in [0,1)$ such that $F:=p\delta_0+(1-p) \delta_M\in\mathcal M_\infty$ and $F\preceq_1 G$, and thus, $\rho(G)\ge \rho(F)\ge kM$
			%for any $M\in\R$.
			Since $M$ is arbitrary, this yields a contradiction to that  $\rho:\mathcal M_p\to\R$, $p\in[0,\infty)$.

			\item [3.] \emph{$g_{\Xi}(\beta )=\id_{\{\beta \ge \alpha\}}$ for some $\alpha\in(0,1)$}. Define
			\begin{align}\label{eq-1alpha*}
				\alpha=\inf\left\{\beta : g_\Xi(\beta )
				%=\max_{\xi\in\Xi}g_{\xi}(p)=\max_{\xi\in\Xi}\min_{\mu\in\mathcal M_\xi}g_{\mu}(p)
				>0\right\}\in [0,1].
			\end{align}
			We assert $\alpha\in (0,1)$. To see this, first note that $g_{\Xi}(1-)=1$ which implies $\alpha<1$. We next show $\alpha>0$ by contradiction. Suppose that $\alpha=0$, which means $ g_\Xi(\beta )>0$ for each $\beta >0$.  Let $F=\beta \delta_{-a}+(1-\beta )\delta_1$ and $G=\delta_0$ with $\beta \in(0,1)$ and $a>0$. We can calculate the MA robust risk value
			\begin{align*}
				\widehat{\rho}\left(\bigvee_1\{F,G\}\right)=\widehat{\rho}(\beta \delta_0+(1-\beta )\delta_1)
				=1-g_{\Xi}(\beta ),
			\end{align*}
			and the WR robust risk value
			\begin{align*}
				\sup_{\lambda\in[0,1]}\widehat{\rho}\left(\lambda F+(1-\lambda)G\right)
				&=\sup_{\lambda\in[0,1]}\widehat{\rho}(\lambda \beta \delta_{-a}+(1-\lambda)\delta_0+\lambda(1-\beta )\delta_1)\\
				&=\sup_{\lambda\in[0,1]}\min_{\xi\in\Xi}\max_{\mu\in\mathcal M_{\xi}}\{-a  g_\mu(\lambda \beta )+1-g_\mu(1-\lambda(1-\beta ))\}\\
				&=1-\inf_{\lambda\in[0,1]}\max_{\xi\in\Xi}\min_{\mu\in\mathcal M_{\xi}}\{a  g_\mu(\lambda \beta )+g_\mu(1-\lambda(1-\beta ))\}\\
				&\le 1-\inf_{\lambda\in[0,1]}\max_{\xi\in\Xi}\{a \min_{\mu\in\mathcal M_{\xi}} g_\mu(\lambda \beta )+\min_{\mu\in\mathcal M_{\xi}}g_\mu(1-\lambda(1-\beta ))\}\\
				&= 1-\inf_{\lambda\in[0,1]}\max_{\xi\in\Xi}\{a\,  g_{\xi}(\lambda \beta )+ g_{\xi}(1-\lambda(1-\beta ))\}.
			\end{align*}
			%where the inequality is due to $\min_{\mu\in\mathcal M_{\xi}}\{a  g_\mu(\lambda p)+g_\mu(\lambda p+1-\lambda)\}\ge a\min_{\mu\in\mathcal M_{\xi}}  g_\mu(\lambda p)+\min_{\mu\in\mathcal M_{\xi}}g_\mu(\lambda p+1-\lambda)
			%=\{a  g_{\xi}(\lambda p)+ g_{\xi}(\lambda p+1-\lambda)\} $.
			The property $\preceq_1$-cEMA implies $\widehat{\rho}\left(\bigvee_1\{F,G\}\right)=\sup_{\lambda\in[0,1]}\widehat{\rho}\left(\lambda F+(1-\lambda)G\right)$, and thus, %that the left hand side of the above two formulas are equal. Hence, we have
			\begin{align}\label{eq-VaReq1}
				g_{\Xi}(\beta )\ge \inf_{\lambda\in[0,1]}\max_{\xi\in\Xi} \{a \, g_{\xi}(\lambda \beta )+ g_{\xi}(1-\lambda(1-\beta ))\},~~\beta \in[0,1],~a>0.
			\end{align}
			% then we will show that this case is impossible.
			%For any $\epsilon>0$, there exists $\xi\in\Xi$ such that $g_{\xi}(\epsilon)>0$.
			Fix $\beta >0$. By \eqref{eq-VaReq1} and the definition of infimum, for each $a>0$, there exist $\epsilon_a$ and $\lambda_a\in [0,1]$ such that $\lim_{a\to\infty}{\epsilon_a}=0$ and
			\begin{align}\label{eq-220108-1}
				g_{\Xi}(\beta )\ge  \max_{\xi\in\Xi} \{a \, g_{\xi}(\lambda_a \beta )+ g_{\xi}(1-\lambda_a(1-\beta ))\} -\epsilon_a.
			\end{align}
			It follows that $g_{\Xi}(\beta )\ge  \max_{\xi\in\Xi}  a \, g_{\xi}(\lambda_a \beta )-\epsilon_a =a\,g_{\Xi}(\lambda_a \beta )-\epsilon_a$ which implies $g_{\Xi}(\lambda_a \beta )\to 0$ as $a\to\infty$. That is, $\lambda_a\to0$ as $a\to\infty$ since $g(\beta )>0$ for $\beta >0$.
			Therefore, \eqref{eq-220108-1} implies
			$g_{\Xi}(\beta ) \ge \max_{\xi\in\Xi}   g_{\xi}(1-\lambda_a(1-\beta ))  -\epsilon_a = g_{\Xi}(1-\lambda_a(1-\beta ))  -\epsilon_a.
			$
			Letting $a\to\infty$, we have $g_{\Xi}(\beta ) \ge g_{\Xi}(1-)=1
			$ for any $\beta >0$ which implies $g_{\Xi}(s)=\id_{\{s>0\}}$ for $s\in[0,1]$, and this
			yields a contradiction
			to that $g$ is right-continuous on $[0,1)$.
			% $g_{\Xi}(0+)=0$.
			Hence, we have $\alpha>0$, and thus, $\alpha\in(0,1)$.
			%We then show that $g_{\Xi}(p)=\id_{\{p\ge \alpha\}}$.
			By the definition of $\alpha$ in \eqref{eq-1alpha*}, we have $g_{\Xi}(\beta )=0$ for all $\beta \in[0,\alpha)$.
			%Further, we can share a similar argument in Case 2 that $g_{\Xi}(\alpha)>0$.
			Fix $\beta >\alpha$, there exist $\epsilon_a$ and $\lambda_a\in [0,1]$ such that $\lim_{a\to\infty}{\epsilon_a}=0$ and \eqref{eq-220108-1} holds.
			Similarly, we have $g_{\Xi}(\lambda_a \beta )\to 0$ as $a\to\infty$ which implies $\limsup_{a\to\infty}\lambda_a \le \alpha/\beta $. It then follows that
			$g_{\Xi}(\beta ) \ge   g_{\Xi}(1-\lambda_a(1-\beta ))  - \epsilon_a.
			$
			Letting $a\to\infty$, we have
			$$g_{\Xi}(\beta ) \ge  \limsup_{a\to\infty} g_{\Xi}(1-\lambda_a(1-\beta ))
			\ge   g_{\Xi}\left(\left(1-\frac{\alpha(1-\beta )}{\beta }\right)-\right).
			$$
			Since $\beta <1-\alpha(1-\beta )/\beta $ for $\beta >\alpha$, and the sequence $\{\beta_n\}_{n\in\N}$, where $\beta_0=\beta$ and $\beta_{n+1}=1-\alpha(1-\beta_n)/\beta_n$ for $n\ge 0$, converges to 1,
			and $g_{\Xi}$ is increasing, we have $g_{\Xi}(\beta )$ takes constant on $\beta \in (\alpha,1)$, that is, $g_{\Xi}(\beta ) =g_{\Xi}(1-)=1$. By right-continuity of $g_{\Xi}$, %we have
			we have $g_{\Xi}(\beta)=\id_{\{\beta\ge \alpha\}}$, which completes the proof of (b)$\Rightarrow$(c).
		\end{itemize}
		%Combining the above three facts, we complete the proof of (b)$\Rightarrow$(c).
		%Combining with the above three cases, we have $g_{\Xi}(p)=k\id_{\{p\ge \alpha^*\}}+(1-k)\id_{\{p=1\}}$ for $\alpha^*\in(0,1]$ and $k\in[0,1]$.
		%Up to now, we conclude that
		%\begin{align*}
		%g_\Xi(p)&=\max_{\xi\in\Xi}g_\xi(p)=\max_{\xi\in\Xi}\min_{\mu\in\mathcal M_\xi}g_\mu(p)\\
		%&=\max_{\xi\in\Xi}\min_{\mu\in\mathcal M_\xi}\mu((0,p])=\id_{\{p\ge\alpha\}},~~\alpha\in(0,1).
		%\end{align*}
		%In summary, we conclude that $g_{\Xi}(p)=\id_{\{p\ge \alpha\}}$ for $\alpha\in(0,1)$.\\
		%This complete the proof of (b)$\Rightarrow$(c).\\

		\underline{(c)$\Rightarrow$(a)}: By \eqref{eq-220104-1}, under the condition of (c), for $F=\beta\delta_0 + (1-\beta)\delta_1$, we have
		$
		\rho(F) =1- g_{\Xi}(\beta)= {\rm VaR}_\alpha(F)
		$. By positive homogeneity and translation invariance of $\rho$, this implies that for any $F=\beta\delta_a + (1-\beta)\delta_b$,
		%$
	%	\%rho(F) =  {\rm VaR}_\alpha(F)$.
{\color{black}\begin{align*}
	\rho(F)=a+ (b-a) \min_{\xi\in\Xi}\max_{\mu\in\mathcal M_{\xi}} \mu((\beta,1])= a+(b-a) {\rm VaR}_\alpha(\beta\delta_0+ (1-\beta)\delta_1) = {\rm VaR}_\alpha(F),
\end{align*}
where the second equality follows from $\min_{\xi\in\Xi}\max_{\mu\in\mathcal M_{\xi}} \mu((\beta,1]) = 1-\max_{\xi\in\Xi}\min_{\mu\in\mathcal M_{\xi}} \mu([0,\beta]) = \id_{\{s<\alpha\}}.$}

		For $F\in\mathcal M_\infty$, define $G= \beta \delta_{ {\rm VaR}_0(F)} + (1-\beta) \delta_{ {\rm VaR}_\beta (F)}$ for $\beta < \alpha$.  One can check $G \preceq_1 F$ and thus, $\rho(F)\ge \rho(G) = {\rm VaR}_\alpha(G)=  {\rm VaR}_\beta(F)$ for $\beta<\alpha$. By left-continuity of ${\rm VaR}$, we have $\rho(F) \ge  \lim_{\beta\uparrow \alpha}\VaR_\beta(F)={\rm VaR}_\alpha(F)$.
		On the other hand,
		define $H = \alpha \delta_{{\rm VaR}_\alpha(F)} + (1-\alpha) \delta_{ {\rm VaR}_1(F)}$.
		One can check $F \preceq_1 H$ and thus, $\rho(F)\le \rho(H) = {\rm VaR}_\alpha(H)=  {\rm VaR}_\alpha(F)$.
		Therefore, we have $\rho(F)= {\rm VaR}_\alpha(F)$.
  \end{proof}

\noindent{\bf Proof of Theorem \ref{th-VaRM-G}.}~
%\emph{Proof of Theorem %s \ref{th-VaRM} and
	%\ref{th-VaRM-G}}:
Sufficiency follows directly from Proposition \ref{prop-lattice-G}. Below we show necessity. Note that $\preceq_1$-cEMA implies $\preceq_1$-consistency, and hence,
by Lemma \ref{prop-EMAM1}, we have $\rho=\VaR_{\alpha}$ for some $\alpha\in(0,1)$ on $\mathcal M_\infty$.
It remains to show that $\rho=\VaR_{\alpha}$ on $\mathcal M_p$, $p\in[0,\infty)$. By Lemma \ref{lm-extend}, we have  $\rho=\VaR_{\alpha}$ on $\mathcal M_{\rm bb}\cap \M_p$.
Next, we will prove $\rho = \VaR_\alpha$ on $\mathcal M_p\setminus \mathcal M_{\rm bb}$.
%To this end, we need some preliminaries
Take $F\in\mathcal M_p\setminus \mathcal M_{\rm bb}$ and $\zeta<\VaR_\alpha(F)$. We have
\begin{align}\label{eq-VaReq4}
	\rho\left(\bigvee_1\{F,\delta_{\zeta}\}\right)
	=\VaR_{\alpha}\left(\bigvee_1\{F,\delta_{\zeta}\}\right)
	=\VaR_{\alpha}(F)\vee\zeta
	=\VaR_{\alpha}(F),
\end{align}
where the first equality is due to $\bigvee_1\{F,\delta_{\zeta }\}\in\mathcal M_{\rm bb}\cap \M_p$ and the second equality follows from Proposition \ref{prop-lattice-G}. By $F\preceq_1\bigvee_1\{F,\delta_{\zeta}\}$ and $\preceq_1$-consistency, we have \eqref{eq-VaReq4}  implies that
\begin{align}\label{eq-GM0}
	\rho(F)\le \VaR_{\alpha}(F)~{\rm for~all}~F\in\mathcal M_p\setminus \mathcal M_{\rm bb}.
\end{align}
%on $\mathcal M_0\setminus\mathcal M_{\rm bb}$.
By $\preceq_1$-cEMA  of $\rho$, and together with \eqref{eq-VaReq4}, we have
%for all $\zeta<\VaR_\alpha(F)$,
\begin{align}\label{eq-VaReq5}
	\sup_{0\le\lambda\le 1}\rho(\lambda F+(1-\lambda)\delta_{\zeta})=\VaR_\alpha(F)~~{\rm for~}\zeta<\VaR_\alpha(F).
\end{align}
We assert that if $F$ is continuous at $\VaR_\alpha(F)$, then
\eqref{eq-VaReq5} implies
\begin{align}\label{eq-limlambda}
	\limsup_{\lambda\to 1}\rho(\lambda F+(1-\lambda)\delta_\zeta)=\VaR_\alpha(F),~~~\zeta<\VaR_\alpha(F).
\end{align}
%for all $\zeta<\VaR_\alpha(F)$.
To see this,   fix $\epsilon>0$ and $\zeta<\VaR_\alpha(F)$, and let $\lambda\in[0,1-\epsilon)$.
Since $F$ is continuous at  $\VaR_\alpha(F)$,  we have $\zeta':=\VaR_{(\alpha-\epsilon)/(1-\epsilon)}(F)<\VaR_\alpha(F)$.
Hence, 
$\lambda F +(1-\lambda)\delta_{\zeta} $ has probability at least $ \alpha-\epsilon +\epsilon \ge \alpha$ for the interval  $(-\infty,\zeta'\vee \zeta]$. Therefore, we have $\VaR_{\alpha}(\lambda F+(1-\lambda)\delta_{\zeta}) \le \zeta'\vee \zeta <\VaR_\alpha(F)$. Hence,
$$
\sup_{0\le\lambda\le 1-\epsilon}\rho(\lambda F+(1-\lambda)\delta_{\zeta})\le \zeta'\vee \zeta <\VaR_\alpha(F).
$$
Therefore,
% for a fixed $\zeta<\VaR_\alpha(F)$,
the supremum in \eqref{eq-VaReq5} is not attained on $[0,1-\epsilon)$ for any $\epsilon>0$,
and we have
\begin{align*}
	\limsup_{\lambda\to 1} \rho(\lambda F+(1-\lambda)\delta_\zeta)\vee\rho(F)=\VaR_\alpha(F),~~~\zeta<\VaR_\alpha(F).
\end{align*}
By lower semicontinuity of $\rho$, we have
\begin{align*}
	\rho(F)\le\liminf_{\lambda\to 1} \rho(\lambda F+(1-\lambda)\delta_\zeta)\le\limsup_{\lambda\to 1} \rho(\lambda F+(1-\lambda)\delta_\zeta).
	%&\le \limsup_{\lambda\to 1}\VaR_\alpha(\lambda F+(1-\lambda)\delta_\zeta)=\VaR_\alpha(F).
\end{align*}
Combining above two equations, the assertion \eqref{eq-limlambda} is verified. In the following, we will show that $\rho(G)=\VaR_\alpha(G)$ for $G\in\mathcal M_p\setminus \mathcal M_{\rm bb}$ in two cases by applying \eqref{eq-GM0} and \eqref{eq-limlambda}.\\
%: (1) $F$ is continuous at point $\zeta$; (2) $F$ has a jump at point $\zeta$.
\underline{Case 1}: \emph{$G$ is continuous at   $\VaR_\alpha(G)$}.
By \eqref{eq-GM0}, $\rho(G)\le \VaR_\alpha(G)$, and we suppose by contradiction that $\rho(G)<\VaR_\alpha(G)$. Since $G$ is continuous at $\VaR_\alpha(G)$, there exist $x^*\in(\rho(G),\VaR_\alpha(G))$ such that $G$ is continuous at $x^*$, and $x^*$ is an element of the support of $G$, which implies
%the left-quantile function of $G$ is continuous at $G(x^*)$
$\VaR_{G(x^*)}(G)=x^*$. Noting that $x^*<\VaR_{\alpha}(G)$, we have $\lambda^*:=G(x^*)/\alpha\in(0,1)$.
Define a cdf as
\begin{align*}
	H(x)=\begin{cases}
		\frac{1}{\lambda^*}G(x)\wedge 1,~~~& x<\VaR_\alpha(G)\\
		1,~~~&x\ge \VaR_\alpha(G).
	\end{cases}
\end{align*}
One can check that $H$ is continuous at $x^*$ and $\VaR_{\alpha}(H)=x^*$.
%and for $\zeta<\VaR_\alpha(G)$ and $\lambda\in[\lambda^*,1]$, we have
%$\lambda H+(1-\lambda) \delta_\zeta\ge G$ pointwisely.
By \eqref{eq-limlambda}, we obtain for $\zeta<x^*$,
\begin{align}\label{eq-cont}
	\limsup_{\lambda\to 1}\rho(\lambda H+(1-\lambda)\delta_\zeta)=\VaR_\alpha(H)=x^*.
\end{align}
For $\zeta<x^*$ and $\lambda\in[\lambda^*,1]$, we have
$\lambda H+(1-\lambda) \delta_\zeta\ge G$ pointwise, which implies $\lambda H+(1-\lambda) \delta_\zeta\preceq_1 G$. It follows from $\preceq_1$-consistency that $\rho(G)\ge \rho(\lambda H+(1-\lambda)\delta_\zeta)$ for all $\zeta<x^*$ and $\lambda\in[\lambda^*,1]$. Hence, by \eqref{eq-cont}, we obtain $\rho(G)\ge x^*$, and this yields a contradiction.\\
\underline{Case 2}: \emph{$G$ has a jump at $\VaR_\alpha(G)$}. In this case, we can construct a sequence $\{G_n\}_{n\in\N}$ such that $G_n\preceq_1 F$ for all $n\in\N$, $G_n$ is continuous at point $\VaR_{\alpha}(G_n)$ and $\VaR_{\alpha}(G_n)\to\VaR_{\alpha}(G)$. By Case 1 and $\preceq_1$-consistency of $\rho$, we have $\rho(G)\ge \rho(G_n)\to\VaR_{\alpha}(G)$. Note that the converse direction holds by \eqref{eq-GM0}. Hence, we obtain $\rho(G)=\VaR_{\alpha}(G)$ for all $F\in\mathcal M_p\setminus\mathcal M_{\rm bb}$ such that $G$ has a jump at point $\zeta$.

In summary, we compete the proof of this theorem.\qed
%\noindent{\bf Proof of Theorem \ref{th-VaRM}.}~
%%The necessity follows from Theorem \ref{th-VaRM-G}  as  the lower semicontinuity is stronger than $d$-lower semicontinuity. For the sufficiency is due to that
%The  sufficiency follows from the sufficiency of  Theorem \ref{th-lossVaR-G} as  $\rho={\rm VaR}_\alpha$, $\alpha\in (0,1)$ satisfies the $d$-lower semicontinuity which is stronger than the lower semicontinuity.
%For the necessity, note that  for $\mathcal M_p$, $p\ge 1$, in the proof of the necessity of Theorem \ref{th-VaRM-G}, we only used the lower semicontinuity to get the expression  $\rho={\rm VaR}_\alpha$.
%\qed

\subsection{A generalization of Theorem \ref{th-ESM}  (b)  and related   results  }
The following theorem is a generalized result of Theorem \ref{th-ESM} (b) for the space $\mathcal M_p$, $p\in [1,\infty)$.
\begin{theorem}\label{th-ESM-G}
	A mapping $\rho:\mathcal M_p\to\R$, $p\in[1,\infty)$, satisfies  translation invariance, positive homogeneity, lower semicontinuity and $\preceq_2$-cEMA if and only if $\rho=\ES_{\alpha}$ for some $\alpha\in(0,1)$.
\end{theorem}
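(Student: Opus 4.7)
The plan is to mirror the three-stage structure of the proof of Theorem \ref{th-VaRM-G}, with $\preceq_2$ playing the role of $\preceq_1$ and $\ES_\alpha$ playing the role of $\VaR_\alpha$. Sufficiency is immediate: for $\alpha \in (0,1)$, translation invariance, positive homogeneity, and lower semicontinuity of $\ES_\alpha$ are classical, and $\preceq_2$-cEMA follows from Theorem \ref{prop-cxES}(b), since any convex polytope $\mathcal F \subseteq \mathcal M_p$ is a convex, $\preceq_2$-bounded set on which $\ES_\alpha^{\rm WR}(\mathcal F) = \ES_\alpha^{\mathrm{MA}_2}(\mathcal F)$.

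For necessity, I would proceed in three stages. Stage 1 reduces the problem to $\mathcal M_\infty$: since every increasing convex function is increasing, $\preceq_1$ is weaker than $\preceq_2$, so $\preceq_2$-cEMA (which implies $\preceq_2$-consistency via Proposition \ref{prop-EMAM}) also implies $\preceq_1$-consistency. Lemma \ref{lm-extend} then tells us that it suffices to identify $\rho$ on $\mathcal M_\infty$, as the identity will automatically extend to $\mathcal M_{\rm bb}\cap \mathcal M_p$. Stage 2 is the analog of Lemma \ref{prop-EMAM1}: one starts from the representation
\[
\rho(F) = \min_{\xi \in \Xi} \max_{\mu \in \mathcal M_\xi} \int_0^1 \VaR_s(F)\,\mu(\d s),\qquad F \in \mathcal M_\infty,
\]
provided by Theorem 4.5 of \cite{J20} (applicable because $\rho$ is $\preceq_1$-consistent, translation invariant, and positively homogeneous). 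Integration by parts converts each inner integral to $\int_0^1 \ES_s(F)\,\nu(\d s)$ for an associated Borel measure $\nu$ on $[0,1)$, and I would argue that $\preceq_2$-consistency of $\rho$ forces the representing functionals to live in the class of $\ES$-mixtures, yielding
$\rho(F) = \min_{\xi}\max_{\nu \in \mathcal N_\xi} \int_0^1 \ES_s(F)\,\nu(\d s)$ with each $\nu$ a probability measure on $[0,1)$. Setting $h_\Xi(s) = \max_{\xi}\min_{\nu \in \mathcal N_\xi}\nu([0,s])$, I would then apply $\preceq_2$-cEMA to test families such as $\{F,\delta_\zeta\}$ with $F = \beta\delta_{-a} + (1-\beta)\delta_1$ and appropriate $\zeta$, using Proposition \ref{th-sharp}(b) to compute $\bigvee_2\{F,\delta_\zeta\}$ explicitly. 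An argument parallel to Steps 1--3 of the proof of Lemma \ref{prop-EMAM1} (exploiting lower semicontinuity, right-continuity of the induced distortion, and an iterative fixed-point argument on the threshold) should force $h_\Xi(s) = \mathbf 1_{\{s\ge \alpha\}}$ for some $\alpha \in (0,1)$, i.e., the representing probability measure is $\delta_\alpha$ and $\rho = \ES_\alpha$ on $\mathcal M_\infty$. Stage 3 extends from $\mathcal M_{\rm bb}\cap \mathcal M_p$ to $\mathcal M_p \setminus \mathcal M_{\rm bb}$ by exactly the same truncation argument used at the end of Theorem \ref{th-VaRM-G}: for $\zeta < \ES_\alpha(F)$, the set $\bigvee_2\{F,\delta_\zeta\}$ lies in $\mathcal M_{\rm bb}\cap \mathcal M_p$, where Stages 1--2 already give $\rho = \ES_\alpha$; combining this with $\preceq_2$-cEMA on $\{F,\delta_\zeta\}$, lower semicontinuity of $\rho$, and continuity of $\ES_\alpha$ along suitable approximating sequences pinches $\rho(F)$ to $\ES_\alpha(F)$.

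The main obstacle is Stage 2: upgrading the $\VaR$-based minimax representation of \cite{J20} to an $\ES$-based one under only $\preceq_2$-consistency (rather than full coherence/convexity of $\rho$). The Kusuoka theorem gives an $\ES$-representation for free under coherence, but here convexity is not assumed, so I expect to use $\preceq_2$-cEMA itself, evaluated on carefully chosen two- and three-atom distributions, to rule out representing measures whose induced distortions are not concave—effectively enforcing the $\ES$-mixture structure. The remaining identification of the threshold $\alpha$ from the two-atom test cdfs then mirrors the VaR case, but one must handle the endpoint behaviour $g_\Xi(1-)=1$ and the exclusion of $\alpha=0$ differently because $\ES_0 = \mathfrak m$ does not satisfy lower semicontinuity on $\mathcal M_p$, which is precisely how the case $\alpha=0$ gets excluded from the conclusion.
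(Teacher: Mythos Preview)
Your three-stage scaffold is reasonable, but the paper's argument diverges from yours at two substantive points, and your Stage~3 has a genuine gap.

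For Stage~2, the paper sidesteps your ``main obstacle'' entirely: rather than starting from the finitely-additive $\VaR$-mixture representation (Theorem~4.5 of \cite{J20}) and attempting to upgrade it to an $\ES$-mixture using $\preceq_2$-consistency, the paper invokes Corollary~5.9 of \cite{J20}, which \emph{directly} furnishes, for any $\preceq_2$-consistent, translation-invariant, positively homogeneous $\rho$ on $\mathcal M_\infty$, a minimax representation over \emph{convex} distortions $h\in\mathcal H$. The identification then proceeds by showing that the envelope $h_\Xi(\beta)=\max_{\xi}\min_{h\in\mathcal H_\xi}h(\beta)$ must equal $(\beta-\alpha)_+/(1-\alpha)$ for some $\alpha\in[0,1)$; this is equivalent, via the Kusuoka parametrization, to your claim that the mixing measure is $\delta_\alpha$. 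The two-atom test $F=\beta\delta_{-a}+(1-\beta)\delta_1$, $G=\delta_0$ is used as you suggest, but the resulting functional inequality and its iteration differ from the VaR case, and the final step uses convexity of each $h$ to rule out $h_\Xi(\beta)>(\beta-\alpha)_+/(1-\alpha)$.

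Your Stage~3 does not go through as sketched, and the paper handles the extension quite differently. First, for $\bigvee_2\{F,\delta_\zeta\}\in\mathcal M_{\rm bb}$ one needs $\zeta>\mathfrak m(F)$, not merely $\zeta<\ES_\alpha(F)$. More seriously, applying $\preceq_2$-cEMA to $\{F,\delta_\zeta\}$ gives $\rho(\bigvee_2\{F,\delta_\zeta\})=\sup_\lambda\rho(\lambda F+(1-\lambda)\delta_\zeta)$, but for every $\lambda\in(0,1]$ the mixture $\lambda F+(1-\lambda)\delta_\zeta$ still lies in $\mathcal M_p\setminus\mathcal M_{\rm bb}$, so you cannot evaluate those $\rho$-values from what you already know; the intricate VaR-case argument exploiting $\preceq_1$-dominations does not translate here. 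The paper instead extends from $\mathcal M_\infty$ to all of $\mathcal M_p$ in a single step, bypassing Lemma~\ref{lm-extend} altogether: for $X=F^{-1}(U)$ with $U$ uniform, one discretizes $U$ by a partition whose nodes include $\alpha$ and sets $X_n=\E[X\mid U_n]$. Then $F_{X_n}\in\mathcal M_\infty$, $F_{X_n}\preceq_2 F$ (conditional expectation), $\ES_\alpha(F_{X_n})=\ES_\alpha(F)$ by design, and $F_{X_n}\stackrel{\rm d}\to F$. Now $\preceq_2$-consistency yields $\rho(F)\ge\rho(F_{X_n})=\ES_\alpha(F)$, while lower semicontinuity yields $\rho(F)\le\liminf_n\rho(F_{X_n})=\ES_\alpha(F)$. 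The case $\alpha=0$ is excluded at the end because $\ES_0=\mathfrak m$ is not lower semicontinuous on $\mathcal M_p$.
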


By Theorem \ref{prop-cxES}, we know that $\ES$ satisfies $\preceq_2$-cEMA.
In order to prove the necessity of Theorem \ref{th-ESM-G},
we need to apply Corollary 5.9 of \cite{J20}.
Define $$\mathcal H:=\{h:[0,1]\to[0,1]:~h~{\rm is~increasing~convex},~h(0)=0,~h(1)=1\}.$$
For any $\preceq_2$-consistent, translation invariant  and positively homogeneous risk measure $\rho:\mathcal M_\infty\to\R$, there exists a family $\{\mathcal H_{\xi}:\xi\in\Xi\}$ of nonempty, compact and convex subsets of $\mathcal H$ such that
\begin{align}\label{eq-MRC}
	\rho(F)=\min_{\xi\in\Xi}\max_{h\in\mathcal H_{\xi}}\int_0^1 \VaR_{s}(F)\d h(s),~~F\in\mathcal M_\infty.
\end{align}
\textcolor{black}{Here, compactness is defined with respect to the weak topology induced by all continuous functions on $[0,1]$.}
As pointed out by \cite{J20}, both the $\min$ and $\max$ can be attained, that is, for each $F\in\mathcal M_\infty$, there exists   $h\in \mathcal H_{\xi}$ for some $\xi\in \Xi$ such that $\rho(F)=\int_0^1 \VaR_{s}(F)\d h(s)$. Therefore, for any $\beta\in [0,1]$, by  $
\rho(\beta \delta_0 + (1-\beta)\delta_1)=\min_{\xi\in\Xi}\max_{h\in\mathcal H_{\xi}}\int_\beta^1 1\d h(s) =1-\max_{\xi\in \Xi}\min_{h\in\mathcal H_\xi}h(\beta),
$ we can % $\rho(\beta \delta_0 + (1-\beta)\delta_1)$, $\beta\in [0,1]$, we have
define\begin{align} \label{Eq-220108-5}
	h_\xi(\beta)=\min_{h\in\mathcal H_\xi}h(\beta)~~{\rm  and}~~ h_{\Xi}(\beta)=\max_{\xi\in\Xi}h_{\xi}(\beta),~~\beta\in [0,1].\end{align}
Applying this representation, we can establish the following lemma.
\begin{lemma}\label{prop-EMAM2}
	Let $\rho:\mathcal M_p\to\R$, $p\in[1,\infty)$, be a mapping satisfying $\preceq_2$-consistency, translation invariance and positive homogeneity. Denote by $\widehat\rho$ the {\color{black}restriction}  of $\rho$ on $\mathcal M_\infty$. Then the following three statements are equivalent.
	\begin{itemize}
		\item[(a)] $\widehat\rho=\ES_{\alpha}$ for some $\alpha\in[0,1)$ on $\mathcal M_\infty$.
		\item[(b)] $\widehat\rho$ satisfies $\preceq_2$-cEMA.
		\item[(c)] $\widehat\rho$ admits a representation \eqref{eq-MRC} which
		% The representation of $\widehat\rho$ defined by \eqref{eq-MRC}
		satisfies $h_{\Xi}(\beta )=(\beta -\alpha)_+/(1-\alpha)$ for some $\alpha\in[0,1)$ where $h_{\Xi}$ is defined by \eqref{Eq-220108-5}.
	\end{itemize}
\end{lemma}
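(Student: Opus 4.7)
The implication (a)$\Rightarrow$(b) is immediate: ES satisfies $\preceq_2$-cEMA on convex polytopes by Theorem~\ref{prop-cxES}(b), since convex polytopes are convex and $\preceq_2$-bounded when restricted to $\mathcal M_\infty$. The proof will focus on (b)$\Rightarrow$(c) and (c)$\Rightarrow$(a), mirroring the template of Lemma~\ref{prop-EMAM1} but adapted from the indicator form $\mathds{1}_{\{\beta\ge\alpha\}}$ to the piecewise linear form $(\beta-\alpha)_+/(1-\alpha)$.

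For (b)$\Rightarrow$(c), the starting point is the identity
$$\widehat\rho(\beta\delta_0+(1-\beta)\delta_1)=\min_{\xi\in\Xi}\max_{h\in\mathcal H_\xi}(1-h(\beta))=1-h_\Xi(\beta),\qquad\beta\in[0,1],$$
obtained directly from (EC.6). I will first verify the boundary values $h_\Xi(0)=0$, $h_\Xi(1)=1$, and the continuity of $h_\Xi$ on $[0,1)$ (using the compactness of each $\mathcal H_\xi$ and the continuity of elements of $\mathcal H$). Then I will set $\alpha=\inf\{\beta\in[0,1]:h_\Xi(\beta)>0\}\in[0,1]$. The core step is to test the $\preceq_2$-cEMA identity on carefully chosen pairs $(F,G)$: for $\zeta<0$ and $\beta\in(0,1)$, take $F=\beta\delta_\zeta+(1-\beta)\delta_1$ and $G=\delta_0$, and compute $\pi_{\bigvee_2\{F,G\}}=\pi_F\vee \pi_G$ explicitly (three linear pieces, separated by kinks determined by $\beta$ and $\zeta$). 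Applying the $\preceq_2$-cEMA identity $\widehat\rho(\bigvee_2\{F,G\})=\sup_{\lambda\in[0,1]}\widehat\rho(\lambda F+(1-\lambda)G)$, together with positive homogeneity and translation invariance, produces a functional equation relating $h_\Xi(\beta)$, $h_\Xi(\lambda\beta)$, and $h_\Xi(1-\lambda(1-\beta))$. Letting $\zeta\to-\infty$ (so the right branch of the supremum is ruled out) will force the equation $h_\Xi(\beta)=\sup_\lambda\{(1-\lambda)\cdot 0+\lambda h_\Xi(\beta)+\cdots\}$ to concentrate at a unique interior maximizer whose first-order condition determines the linear slope $1/(1-\alpha)$ on $[\alpha,1]$; finiteness of $\alpha$ then follows from $h_\Xi(1)=1$.

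For (c)$\Rightarrow$(a), I will use the explicit form $h_\Xi(\beta)=(\beta-\alpha)_+/(1-\alpha)$ to first compute $\widehat\rho$ on binary distributions: for $F=\beta\delta_a+(1-\beta)\delta_b$ with $a<b$, positive homogeneity and translation invariance yield $\widehat\rho(F)=a+(b-a)(1-h_\Xi(\beta))=\ES_\alpha(F)$ by the direct computation given after Lemma~\ref{prop-EMAM2}. For general $F\in\mathcal M_\infty$, I will sandwich $F$ in $\preceq_2$ between simple upper and lower approximations: the upper bound is $G=\alpha\delta_{\VaR_\alpha(F)}+(1-\alpha)\delta_{\VaR_1(F)}$ (as in the VaR proof), which satisfies $F\preceq_2 G$ after checking $\pi_F\le\pi_G$; the lower bound is a finitely supported discretization of $F$ that converges to $F$ weakly and from below in $\preceq_2$. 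Since $\widehat\rho$ and $\ES_\alpha$ agree on the approximating distributions and both are $\preceq_2$-consistent, the sandwich yields $\widehat\rho(F)=\ES_\alpha(F)$.

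The main obstacle will be step (b)$\Rightarrow$(c), specifically forcing the linear (rather than general increasing) form of $h_\Xi$ on $[\alpha,1]$. Unlike the VaR case, where the cEMA functional equation only needs to distinguish $h_\Xi=0$ from $h_\Xi=1$, here one must extract a one-parameter family of slopes, which requires choosing the auxiliary parameter $\zeta$ in the test pair $(F,G)$ so that the supremum over $\lambda$ in the WR side is attained at an interior point and identifying that point precisely; balancing the two kinks of $\pi_F\vee\pi_G$ to produce the right functional identity (as opposed to an inequality) is the delicate part. A subsidiary technical issue is handling the case $\alpha=0$, which should be ruled out here (in contrast to Theorem~\ref{th-ESM}(b) where it is excluded by the lower semicontinuity hypothesis) because on $\mathcal M_\infty$ the mean is continuous and cEMA is satisfied.
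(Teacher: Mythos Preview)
Your scaffold matches the paper's, and the test pair $F=\beta\delta_\zeta+(1-\beta)\delta_1$, $G=\delta_0$ is exactly right for (b)$\Rightarrow$(c). But there are two concrete gaps.

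\textbf{(c)$\Rightarrow$(a): the sandwich does not close.} Your upper bound $G=\alpha\delta_{\VaR_\alpha(F)}+(1-\alpha)\delta_{\VaR_1(F)}$ does satisfy $F\preceq_2 G$, but $\ES_\alpha(G)=\VaR_1(F)$, not $\ES_\alpha(F)$; all you get is $\widehat\rho(F)\le \VaR_1(F)$, which is useless. The paper instead takes $H=\beta\delta_{\VaR_\alpha(F)}+(1-\beta)\delta_{\VaR_1(F)}$ with $\beta>\alpha$ chosen so that $(\beta-\alpha)\VaR_\alpha(F)+(1-\beta)\VaR_1(F)=(1-\alpha)\ES_\alpha(F)$, i.e.\ so that $\ES_\alpha(H)=\ES_\alpha(F)$; one then checks $F\preceq_2 H$ via the integrated quantile functions. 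On the lower side, your ``finitely supported discretization'' is circular: you have only established $\widehat\rho=\ES_\alpha$ on \emph{two-point} distributions, so you cannot yet evaluate $\widehat\rho$ on a general discretization. The paper uses the two-point lower bound $G=\alpha\delta_{\VaR_0(F)}+(1-\alpha)\delta_{\ES_\alpha(F)}$, which satisfies $G\preceq_2 F$ and $\ES_\alpha(G)=\ES_\alpha(F)$.

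\textbf{(b)$\Rightarrow$(c): the ``first-order condition'' does not go through.} You have no regularity on $h_\Xi$ beyond monotonicity, so speaking of an interior maximizer and its first-order condition is unjustified. The paper's route is different: from the cEMA identity one only extracts the \emph{inequality} $h_\Xi(\beta)/\beta \ge \inf_{\lambda}\max_{\xi}\{a\,h_\xi(\lambda\beta)+h_\xi(1-\lambda(1-\beta))\}$; sending $a\to\infty$ forces the near-minimizers $\lambda_a$ to satisfy $\limsup_a\lambda_a\le\alpha/\beta$, yielding the recursive bound $h_\Xi(\beta)/\beta\ge h_\Xi^-\!\left(1-\alpha(1-\beta)/\beta\right)$. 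Iterating along $\beta_{n+1}=1-\alpha(1-\beta_n)/\beta_n\uparrow 1$ and using $h_\Xi(1-)=1$ gives $h_\Xi(\beta)\ge(\beta-\alpha)_+/(1-\alpha)$. The reverse inequality is \emph{not} obtained from cEMA at all but from the convexity of the functions in $\mathcal H_\xi$: if $h_\Xi(\beta^*)>(\beta^*-\alpha)_+/(1-\alpha)$ for some $\beta^*$, pick $\xi_0$ attaining the max; since $h_{\xi_0}(\beta)=0$ for $\beta<\alpha$, there is $h_0\in\mathcal H_{\xi_0}$ with $h_0\equiv 0$ on $[0,\alpha)$, $h_0(\beta^*)>(\beta^*-\alpha)/(1-\alpha)$ and $h_0(1)=1$, contradicting convexity. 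Finally, your remark that $\alpha=0$ ``should be ruled out'' is backwards: the lemma explicitly allows $\alpha\in[0,1)$, and $\ES_0=\mathfrak m$ is continuous on $\mathcal M_\infty$, so nothing excludes it here.
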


\begin{proof}
	The implication (a)$\Rightarrow$(b) follows immediately from Theorem \ref{prop-cxES} (b).
	%Define
	%$
	%\mathcal F=\{\lambda_1F_1+\dots+\lambda_n F_n: (\lambda_1,\dots,\lambda_n)\in\Delta_n\}.
	%$
	%We aim to show that $\ES_{\alpha}(\bigvee_2\{F_1,\dots,F_n\})=\sup_{F\in\mathcal F}\ES_\alpha(F)$. To see this, by the representation of ES in \cite{RU02}, we have
	%\begin{align*}
	%\ES_{\alpha}\left(\bigvee_2\{F_1,\dots,F_n\}\right)
	%&=\inf_{x\in\R}\left\{x+\frac{1}{1-\alpha}\pi_{\bigvee_2\{F_1,\dots,F_n\}}(x)\right\}\\
	%&=\inf_{x\in\R}\sup_{F\in\{F_1,\dots,F_n\}}\left\{x+\frac{1}{1-\alpha}\pi_F(x)\right\}
	%=\sup_{F\in\mathcal F}\ES_\alpha(F)
	%\end{align*}
	%where $\pi$ is the integrated survival function defined in \eqref{eq-pi}, the second equality follows from Proposition \ref{prop-lattice-G}, and the third equality follows from Theorem 1 of \cite{ZF09}. Since $\mathcal F$ is an arbitrary convex polytope, we complete the proof of (a)$\Rightarrow$(b).\\
	%We omit the proof of (a)$\Rightarrow$(b) as a stronger result will be verified in the proof of Theorem \ref{th-ESM}. In the following, we will first verify (b)$\Rightarrow$(c), and then verify (c)$\Rightarrow$(a).\\
	
	\underline{(b)$\Rightarrow$(c)}: By the discussion before this lemma, there exists  a family $\{\mathcal H_{\xi}:\xi\in\Xi\}$ of nonempty, compact and convex subsets of $\mathcal H$ such that  $\widehat\rho$ admits a representation \eqref{eq-MRC}.  Define  $h_{\xi}$
	and $h_{\Xi}$  by \eqref{Eq-220108-5}. One can check that both $h_\xi$ and $h_\Xi$ are  increasing and satisfy  $h_\xi(0)=h_\Xi(0)=0$  and $h_\xi(1)=h_\Xi(1)=1.$
	In the following, we show (c) by verifying the following facts.\\ % that (1) $h_{\Xi}$ is continuous on $(0,1)$; (2) $h_{\Xi}(1-)=h_{\Xi}(1)=1$; (3) $h_{\Xi}(p)=(p-\alpha)_+/(1-\alpha)$ for some $\alpha\in[0,1)$.\\
	1. %\emph{$h_{\Xi}$ is continuous on $(0,1)$}. Assume by contradiction that there exists $p^*\in(0,1)$ such that $h_\Xi$ has a jump at point $p^*$. By the definition that $h_{\Xi}=\max_{\xi\in\Xi}\min_{h\in\mathcal H_\xi}h$, we argue that for any $\epsilon>0$, there exists $h\in\bigcup_{\xi\in\Xi}\mathcal H_\xi$ such that $h(p^*-\epsilon)\le h_{\Xi}(p^*-\epsilon)$ and $h(p^*+\epsilon)\ge h_{\Xi}(p^*+\epsilon)$. This yields a contradiction since all elements $h$ in $\bigcup_{\xi\in\Xi}\mathcal H_\xi$ are increasing and convex with $h(0)=0$ and $h(1)=1$.\\
	\emph{$h_{\Xi}(1-)=h_{\Xi}(1)=1$}. This  can be showed by {\color{black}similar arguments as} in (b)$\Rightarrow$(c) of Lemma \ref{prop-EMAM1}.\\
	2. \emph{$h_{\Xi}(\beta )=(\beta -\alpha)_+/(1-\alpha)$ for some $\alpha\in[0,1)$}.
	Let $F=\beta \delta_{-a}+(1-\beta )\delta_1$ and $G=\delta_0$ with $\beta \in(0,1)$ and $a>(1-\beta )/\beta $, and calculate the MA robust risk value
	\begin{align*}
		\widehat{\rho}\left(\bigvee_2\{F,G\}\right)=\widehat{\rho}
		\left(\beta \delta_{-\frac{1-\beta }{\beta }}+(1-\beta )\delta_1\right)
		=1-\frac{h_\Xi(\beta )}{\beta },
	\end{align*}
	and the WR robust risk value
	\begin{align*}
		\sup_{\lambda\in[0,1]}\widehat{\rho}\left(\lambda F+(1-\lambda)G\right)
		&=\sup_{\lambda\in[0,1]}\widehat{\rho}(\lambda \beta \delta_{-a}+(1-\lambda)\delta_0+\lambda(1-\beta )\delta_1)\\
		&=\sup_{\lambda\in[0,1]}\min_{\xi\in\Xi}\max_{h\in\mathcal H_{\xi}}\{-a\, h(\lambda \beta )+1-h(1-\lambda(1-\beta))\}\\
		&=1-\inf_{\lambda\in[0,1]}\max_{\xi\in\Xi}\min_{h\in\mathcal H_{\xi}}\{a\, h(\lambda \beta )+h(1-\lambda(1-\beta))\}\\
		&\le 1-\inf_{\lambda\in[0,1]}\max_{\xi\in\Xi}\{a\, \min_{h\in\mathcal H_{\xi}} h(\lambda \beta )+ \min_{h\in\mathcal H_{\xi}} h(1-\lambda(1-\beta))\}\\
		&= 1-\inf_{\lambda\in[0,1]}\max_{\xi\in\Xi}\{a\, h_{\xi}(\lambda \beta )+ h_{\xi}(1-\lambda(1-\beta))\}.
	\end{align*}
	The property $\preceq_2$-cEMA implies $\sup_{\lambda\in[0,1]}\widehat{\rho}\left(\lambda F+(1-\lambda)G\right)=\widehat{\rho}\left(\bigvee_2\{F,G\}\right)$, and thus,
	\begin{align}\label{eq-ESeq2}
		\frac{h_{\Xi}(\beta )}{\beta }\ge \inf_{\lambda\in[0,1]}\max_{\xi\in\Xi}\{a\, h_{\xi}(\lambda \beta )+ h_{\xi}(1-\lambda(1-\beta))\},~~\beta \in[0,1],~a>\frac{1-\beta }{\beta }.
	\end{align}
	Define
	\begin{align}\label{eq-2alpha*}
		\alpha=\inf\left\{\beta : h_\Xi(\beta )
		%=\max_{\xi\in\Xi}h_{\xi}(p)=\max_{\xi\in\Xi}\min_{h\in\mathcal H_\xi}h(p)
		>0\right\}\in [0,1),
	\end{align}
	%and recall that $h_\Xi
	%=\max_{\xi\in\Xi}h_{\xi}=\max_{\xi\in\Xi}\min_{h\in\mathcal H_\xi}h$.
	%and denote by $\lambda_{a,p}$ the minimizer of the right-hand side of \eqref{eq-ESeq2}.
	%We consider the following two cases.\\
	%\underline{Case 1}: If $\alpha=1$, this is impossible as $h_{\Xi}$ has no jump at $1$.\\
	where the fact   $\alpha<1$ comes from   $h_{\Xi}(1-)=1$.
	%\underline{Case 2}: If $\alpha\in[0,1)$, then we will show that $h_{\Xi}(p)=(p-\alpha)_+/(1-\alpha)$.
	%Since $h_{\Xi}$ is continuous, we have $h_{\Xi}(\alpha)=0$.
	Fix $\beta >\alpha$. By \eqref{eq-ESeq2} and the definition of infimum, for each $a>{(1-\beta) }/{\beta }$, there exist $\lambda_a\in [0,1]$ and $\epsilon_a$ such that $\lim_{a\to\infty} \epsilon_a=0$ and
	\begin{align} \label{eq-220108-3}
		\frac{h_{\Xi}(\beta )}{\beta } \ge  \max_{\xi\in\Xi}\{a h_{\xi}(\lambda_a \beta )+ h_{\xi}(1-\lambda_a(1-\beta ))\}-\epsilon_a.
	\end{align}
	It follows that ${h_{\Xi}(\beta )}/{\beta } \ge  \max_{\xi\in\Xi}a\, h_{\xi}(\lambda_a \beta )-\epsilon_a=a\, h_{\Xi}(\lambda_a \beta )-\epsilon_a.$ Letting $a\to\infty$, we obtain  $\lim_{a\to\infty}h_{\Xi}(\lambda_a \beta )=0$. By definition of $\alpha$,  $\limsup_{a\to\infty} \lambda_a\le \alpha/\beta .$
	Again, by \eqref{eq-220108-3}, we have
	\begin{align}  \label{eq-220108-31}
		\frac{h_{\Xi}(\beta )}{\beta } &\ge  \max_{\xi\in\Xi}\{h_{\xi}(1-\lambda_a(1-\beta ))\}-\epsilon_a  = h_{\Xi}(1-\lambda_a(1-\beta ))-\epsilon_a
	\end{align}
	By monotonicity of $h_{\Xi}$ and $\limsup_{a\to\infty} \lambda_a\le \alpha/\beta $, we get $\limsup_{a\to\infty}h_{\Xi}(1-\lambda_a(1-\beta )) -\epsilon_a \ge   h_{\Xi}((1-\alpha(1-\beta )/\beta )-)$. This combined with \eqref{eq-220108-31} implies
	${h_{\Xi}(\beta )}/{\beta }\ge h_{\Xi}((1-\alpha(1-\beta )/\beta )-).$ Denote by $h_{\Xi}^-(x) =\lim_{y\uparrow x}h_{\Xi}(y)$. We have $h_{\Xi}^-(\alpha)=0$ and
	\begin{align}\label{eq-EScase3} \frac{h_{\Xi}^-(\beta )-h_{\Xi}^-(\alpha)}{\beta -\alpha}\ge \frac{h_{\Xi}^-(\alpha+(1-\alpha/\beta )) -h_{\Xi}^-(\alpha)}{1-\alpha/\beta },~~\beta >\alpha.
	\end{align}
	Letting $\beta _0=\beta $ and $\beta _{n+1}=1+\alpha-\alpha/\beta _n$ for $n\ge0$, we have $\beta _n=\frac{\alpha(1-\beta )+(\beta -\alpha)
		\alpha^{-n+1}}{(1-\beta )+(\beta -\alpha)\alpha^{-n+1}}\uparrow 1$ as $n\to\infty$. Combining with \eqref{eq-EScase3}, we obtain
	$$
	\frac{h_\Xi^-(\beta )-h_\Xi^-(\alpha)}{\beta -\alpha}\ge \frac{h_\Xi^-(\beta _n)-h_\Xi^-(\alpha)}{\beta _n-\alpha}\to
	\frac{h_\Xi^-(1)
		-h_\Xi^-(\alpha)}{1-\alpha}=\frac{1}{1-\alpha}~~{\rm as}~~ n\to\infty
	$$
	for all $\beta \in(\alpha,1]$. It follows  that $h_\Xi(\beta )\ge h_\Xi^-(\beta )\ge (\beta -\alpha)_+/(1-\alpha)$ for $\beta \in (\alpha,1]$.
	We next show $h_\Xi(\beta )\le (\beta -\alpha)_+/(1-\alpha)$ for $\beta \in (\alpha,1]$ by contradiction. Suppose that there exists $\beta ^*\in(\alpha,1)$ such that $h_\Xi(\beta ^*)> (\beta ^*-\alpha)_+/(1-\alpha)$. Noting that $h_{\Xi} (\beta ) = \max_{\xi\in\Xi}\min_{h\in\mathcal H_{\xi}} h(\beta )$, there exists $\xi_0\in \Xi$ such that
	\begin{align}\label{eq-220108-4}
		\min_{h\in\mathcal H_{\xi_0}} h(\beta ^*)>\frac{(\beta ^*-\alpha)_+}{1-\alpha}.
	\end{align}
	Meanwhile, by $h_\Xi(\beta )=0$ for $\beta <\alpha$, we have $\min_{h\in\mathcal H_{\xi}} h(\beta )=0$ for any $\beta <\alpha$ and any $\xi\in\Xi$, and thus, $\min_{h\in\mathcal H_{\xi_0}} h(\beta )=0$ for $\beta <\alpha$. This implies  that there exists $h_0\in \mathcal H_{\xi_0}$ such that $h_0(\beta )=0$ for $\beta <\alpha$. By \eqref{eq-220108-4}, we have  $ h_0(\beta ^*)>(\beta ^*-\alpha)_+/(1-\alpha)$, which,  combined with $h_0(1)=1$ and $h_0(\beta )=0$ for $\beta <\alpha$, yields a contradiction to that $h_0\in\mathcal H$ is convex.  Hence,  $h_{\Xi}(\beta )=(\beta -\alpha)_+/(1-\alpha)$, and this completes the proof of (b)$\Rightarrow$(c).
	
	\underline{(c)$\Rightarrow$(a)}:   One can check that under the condition of (c), for $F=\beta \delta_0 + (1-\beta )\delta_1$, we have
	$
	\rho(F) = 1- h_{\Xi}(\beta )= {\rm ES}_\alpha(F)
	$. By positive homogeneity and translation invariance of $\rho$,   for any $F=\beta \delta_a + (1-\beta )\delta_b$,
	$\rho(F) =  {\rm ES}_\alpha(F)$.
	For $F\in\mathcal M_\infty$, define $G= \alpha \delta_{{\rm VaR}_0(F)} + (1-\alpha) \delta_{ {\rm ES}_\alpha (F)}$. By computing the $\pi$ function, one can check $G \preceq_2 F$ and thus, $\rho(F)\ge \rho(G) = {\rm ES}_\alpha(G)=  {\rm ES}_\alpha(F)$.
	On the other hand,
	define $H = \beta  \delta_{{\rm VaR}_\alpha  (F)} + (1-\beta ) \delta_{{\rm VaR}_{1}(F)}$, where $\beta >\alpha$ satisfies $(\beta -\alpha){\rm VaR}_\alpha  (F) + (1-\beta ){\rm VaR}_{1}(F) =(1-\alpha) {\rm ES}_\alpha(F)$, that is, $ {\rm ES}_\alpha(H)=  {\rm ES}_\alpha(F)$.
	By computing the ES$_s$, $s\in [0,1]$, one can check $F \preceq_2 H$ and thus, $\rho(F)\le \rho(H) = {\rm ES}_\alpha(H)=  {\rm ES}_\alpha(F)$.
	We therefore have $\rho(F)={\rm ES}_\alpha(F)$, which completes the proof.
 \end{proof}

\noindent{\bf Proof of Theorem \ref{th-ESM-G}.}~
Translation invariance, positive
homogeneity and lower semicontinuity of $\ES_{\alpha}$, $\alpha\in(0,1)$, are well-known, and the property $\preceq_2$-cEMA of
$\ES$ follows from Theorem \ref{prop-cxES}. Conversely, note that $\preceq_2$-cEMA implies $\preceq_2$-consistency, and hence,
by Lemma \ref{prop-EMAM2}, we have $\rho=\ES_\alpha$ for some $\alpha\in[0,1)$ on $\mathcal M_\infty$. Thus, it remains to show that this representation can be extended to $\mathcal M_p$ for $p\in[1,\infty)$. To see this, for $F\in\mathcal M_p$, let $X$ be a random variable with cdf $F$.
Since the probability space is nonatomic, there exists a uniform random variable $U$ on $[0,1]$ such that $X=F^{-1}(U)$ $\p$-a.s. (see, e.g., Lemma A.28 of \cite{FS16}). Define
$$
U_n=\sum_{i=0}^{n-1}\frac{\alpha i}{n}\id_{\left\{\frac{\alpha i}{n}\le U<\frac{\alpha(i+1)}{n}\right\}}+\sum_{i=0}^{n-1}
\left(\alpha+\frac{(1-\alpha)i}{n}\right)
\id_{\left\{\alpha+\frac{(1-\alpha)i}{n}\le U<\alpha+\frac{(1-\alpha)(i+1)}{n}\right\}},~~n\ge 1.
$$
and denote by $X_n=\E[X|U_n]$. One can obatin $F_{X_n}\in\mathcal M_\infty$, and $\rho(F_{X_n})=\ES_\alpha(F_{X_n})=\ES_{\alpha}(F)$. On one hand, since $F_{X_n}\preceq_2 F$ for all $n\ge 1$, and note that $\preceq_2$-cEMA implies $\preceq_2$-consistency, we have $\rho(F_{X_n})\le \rho(F)$. Hence, we have $\ES_{\alpha}(F)=\limsup_{n\to \infty}\rho(F_{X_n})\le\rho(F)$. On the other hand, noting that $X_n\stackrel{\rm d}\to X$, it follows from the lower semicontinuitycof $\rho$ that $\ES_{\alpha}(F)=\liminf_{n\to \infty}\rho(F_{X_n})\ge\rho(F)$. Hence, we conclude that $\rho(F)=\ES_{\alpha}(F)$. Since $\ES_0=\E$ is not lower semicontinuous, we obtain $\rho=\ES_{\alpha}$ for some $\alpha\in(0,1)$.
\qed

\begin{remark}
	\label{rem:10}
	The characterization results in Theorems \ref{th-VaRM-G} and \ref{th-ESM-G} are obtained for spaces
	$\M_p$, $p\in [1,\infty)$, i.e., cdfs with finite $p$th moment.
	On the space $\M_\infty$ of compactly supported cdfs, the situation is more delicate.
	In particular,  for $\alpha \in (0,1)$ and $\lambda \in (0,1)$, we find that the mappings $\lambda \VaR_\alpha+(1-\lambda)\VaR_1$
	and $\lambda \ES_\alpha+(1-\lambda)\VaR_1$  on $\M_\infty$ satisfy the conditions in  (a) and (b) of Theorem \ref{th-VaRM}, respectively.
	These mappings are not real-valued on $\M_p$ for $p\in [1,\infty)$. A full characterization   on $\mathcal M_\infty$ seems  beyond current techniques and requires future study. This   hints at the level of technical sophistication of the theory.
\end{remark}

\section{Supplementary numerical results in Section \ref{sec:Numerical}}
\label{app:portfolio}

We present the summary statistics of the return rates of 20 stocks in Section \ref{sec:Numerical} in Table \ref{tab:basic}.

\subsection{Portfolio selection under mean-variance uncertainty}
\label{app:portfolio1}
We follow the portfolio selection setting discussed in Section \ref{sec:72} to assume that only the mean and the covariance matrix are available  to the investor. This appendix complements the study in Section \ref{sec:72} with Wasserstein uncertainty.

For a given portfolio weight $\mathbf{w}$, the uncertainty set is $\mathcal F_{\mathbf{w}^\top{\bm \mu},\sqrt{\mathbf{w}^\top\Sigma \mathbf{w}}}$,
where $\bm\mu$ is the mean vector and $\Sigma$ is the covariance matrix of   losses from the stocks as reported in or computed from Table \ref{tab:basic}.
%First, we consider the robust portfolio optimization problem under the mean-variance uncertainty.
By the results in Section \ref{sec:MV}, the optimization problem of the portfolio selection under the MA approach with $\preceq_1$ and $\preceq_2$ are
\begin{align}\label{eq-MAOP1}
	\min_{{\mathbf w}\in\Delta_{20}}:~~ \rho^{\rm MA_1}\left( \mathcal F_{{\mathbf w}^\top\bm\mu,\sqrt{{\mathbf w}^\top\Sigma {\mathbf w}}}\right)=   {\mathbf w}^\top\bm\mu+\beta_k\sqrt{{\mathbf w}^\top\Sigma {\mathbf w}}~~~~{\rm s.t.}~{\mathbf w}^\top\bm \mu\le -r_0/m,
\end{align}
and
\begin{align}\label{eq-MAOP}
	\min_{{\mathbf w}\in\Delta_{20}}:~~ \rho^{\rm MA_2}\left( \mathcal F_{{\mathbf w}^\top\bm\mu,\sqrt{{\mathbf w}^\top\Sigma {\mathbf w}}}\right)= {\mathbf w}^\top\bm\mu+\gamma_k\sqrt{{\mathbf w}^\top\Sigma {\mathbf w}}~~~~{\rm s.t.}~{\mathbf w}^\top\bm\mu\le -r_0/m,
\end{align}
respectively, where $\beta_k={(\sqrt{\pi}\Gamma\left(k+1/2\right))}/{\Gamma(k)}$ and
$\gamma_k={(\sqrt{\pi} (k-1)\Gamma\left(k+1/2\right))}/{((2k-1)\Gamma(k))},$
as in Table \ref{tab-R}, and $r_0$ is the expected annualized return and $m=250$.
Using the results of \cite{L18}, the WR portfolio optimization problem is
\begin{align}\label{eq-WCOP}
	&\min_{{\mathbf w}\in\Delta_{20}}: ~~\rho^{\rm WR} \left( \mathcal F_{{\mathbf w}^\top\bm\mu,\sqrt{{\mathbf w}^\top\Sigma {\mathbf w}}}\right)= {\mathbf w}^\top\bm\mu+\eta_k\sqrt{{\mathbf w}^\top\Sigma {\mathbf w}}~~~~{\rm s.t.}~{\mathbf w}^\top\bm\mu\le -r_0/m,
\end{align}
where $\eta_k= ({k-1})/{\sqrt{2k-1}}.$ %The larger the required minimal expected return $r_0$ or the parameter $k$ is, the larger the associated risk is. %Nevertheless, the minimal expected return can not be set arbitrary. %For example, if $r_0\le 0.0005$, then the constraint $w^{\top}\ell$ is redundant since any element in $\mathcal W$ satisfy the constraint. If $r_0$ is too large ($r_0>0.0023$), then the feasible region of the optimization problem will be an empty set. Therefore, we need to
%We set   $r_0$ to take values in $[0.0005,0.0023]$ which allows for feasible efficient frontiers.
Figure \ref{Robust_MV_PW_k} presents the optimal values of the optimization problem under mean-variance uncertainty with the SAA, WR and MA approaches for different values of $k$ and $r_0$ using the whole-period data. We can see that the robust value computed by the MA approach with $\preceq_1$ is always the largest one and that of  SAA  is always the smallest one; this is consistent with our intuition as MA with $\preceq_1$ is the most robust approach among them, and SAA is not conservative.

\begin{figure}[pht]
	\caption{The optimized values of ${\rm PD}_k$ under mean-variance uncertainty. Left: $r_0=0.2$ and $k\in [1,20]$; Right: %The optimal value of ${\rm PW}_k$  under  SAA,  WC and MA approaches  approaches for
		$k=10$ and $r_0\in  [0.15,0.5]$.}
	\begin{center}
		\includegraphics[width=7.5cm]{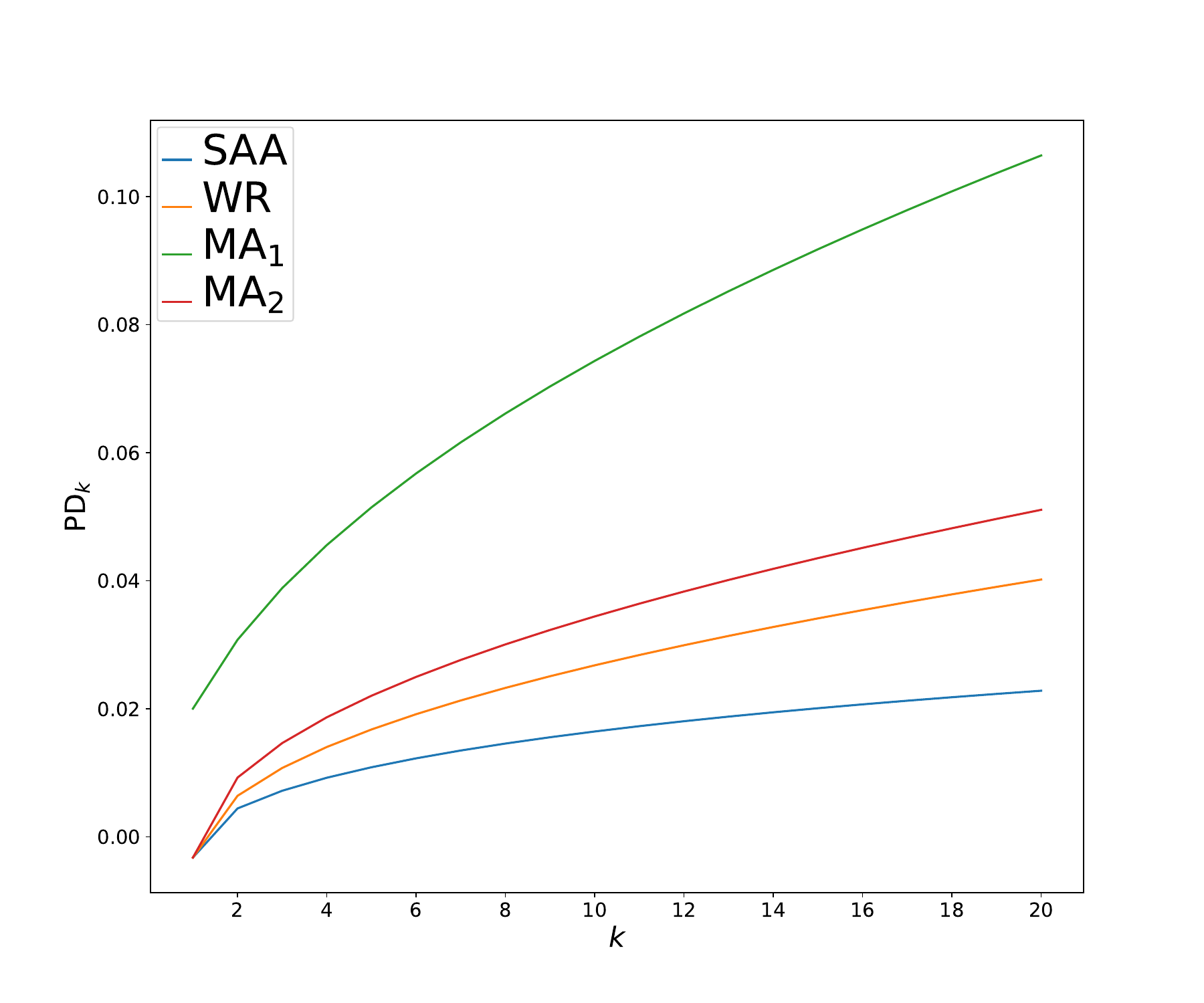}
		\label{Robust_MV_PW_k}
		%  \medskip
		% \end{center}
	%\end{figure}
	%\begin{figure}[pht]
	%\caption{
		%\begin{center}
		\includegraphics[width=7.5cm]{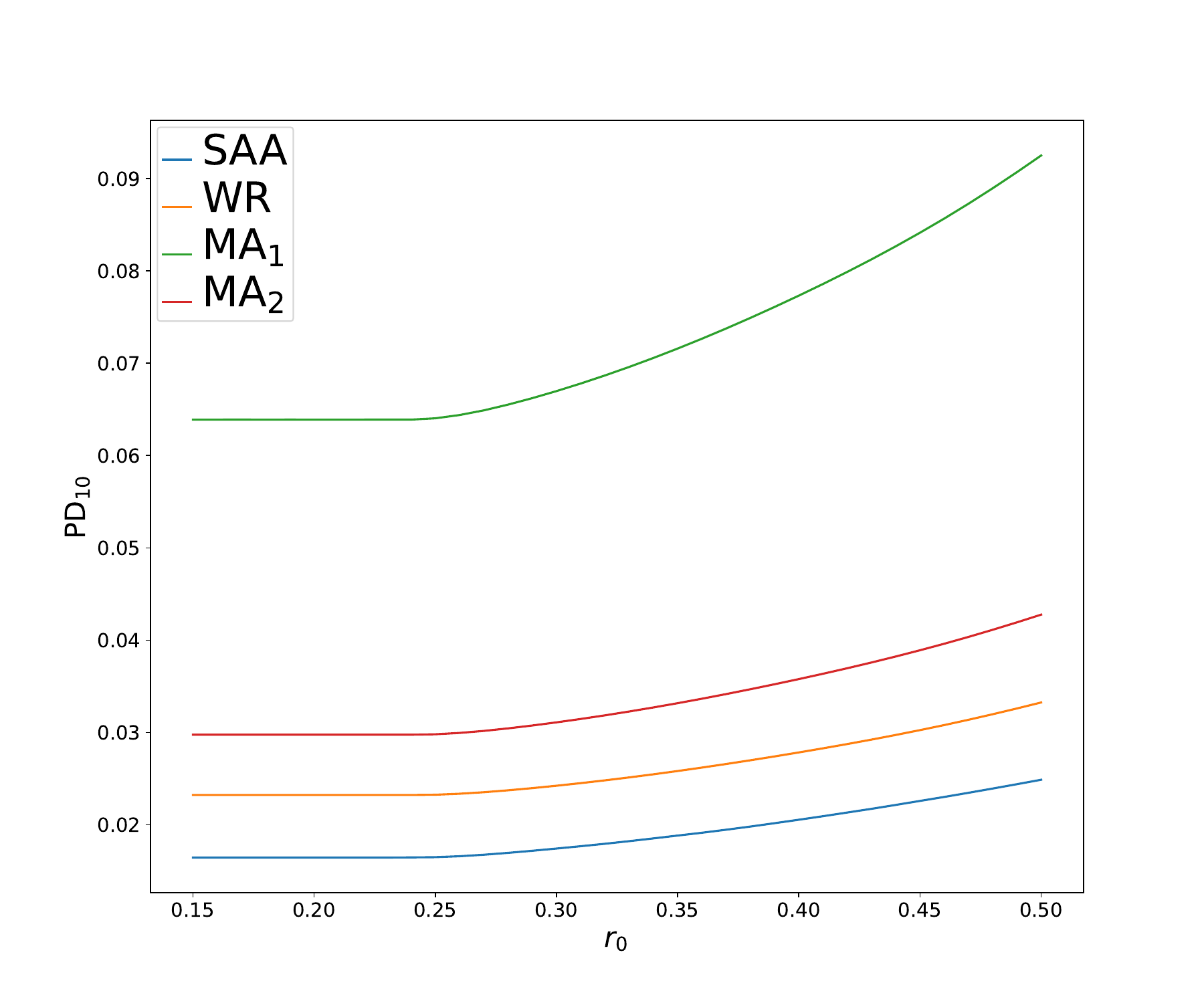}
		\label{Robust_MV_PW_mu}
		\medskip
	\end{center}
\end{figure}

Similarly to Section \ref{sec:72}, we choose 350 trading days  for the initial training,
and compute the optimal portfolio weights in each day with a rolling window. Figure \ref{rollingwindow} presents the SAA approach, Markowitz model, and WR and MA approaches under Wasserstein uncertainty set with t benchmark distribution and mean-variance uncertainty over the remaining 300  trading days with $r_0=0.2$ and $k=2$ (left) or $k=20$ (right).
% In both cases, the MA approaches with $\preceq_1$ and $\preceq_2$ have very similar performance.
In the case $k=2$, the WR and ${\rm MA}_2$ approaches exhibit comparable performance but underperform the other approaches. In the first 150 trading days, the SAA approach outperform the others, and its performance aligns closely with the ${\rm MA}_1$ approach thereafter, with the Markowitz model delivering the best performance.
In the case $k=20$, the Markowitz model stands out with superior performance, particularly in the final 100 trading days. Both the MA and WR approaches perform very similarly, trailing slightly behind the SAA approach during the first 150 trading days.

% both outperform the SAA approach in terms of the wealth process, especially after the first 100 trading days. 
%The interpretation of these results are similar to what is discussed in Section \ref{sec:72}.

{\color{black}
Tables \ref{tab:SR} and \ref{tab:TC} present the Sharpe ratios and the
nominal transaction cost, respectively, for SAA, Markowitz, WR and MA approaches under mean-variance uncertainty as well as Wasserstein uncertainty, which has been shown in Section \ref{sec:72}.
With $r_0=0.2$ and $0.3$, the MA and WR approaches based on Wasserstein uncertainty have smaller transaction costs than the other approaches. Additionally, these approaches also exhibit larger Sharpe ratios. The 
$\rm MA_1$ approach with mean-variance uncertainty has the smallest transaction cost for $r_0=0.1$.
% Markowitz model, and the WR and MA approaches under mean-variance uncertainty have similar transaction cost which is smaller than SAA approach.
}

\begin{figure}[pht]
	\caption{Wealth evolution for different portfolio strategies from May 2020 to Aug 2021 with $r_0=0.2$. Left: $k=2$; Right: $k=20$.}
	\begin{center}
		\includegraphics[width=7.5cm]{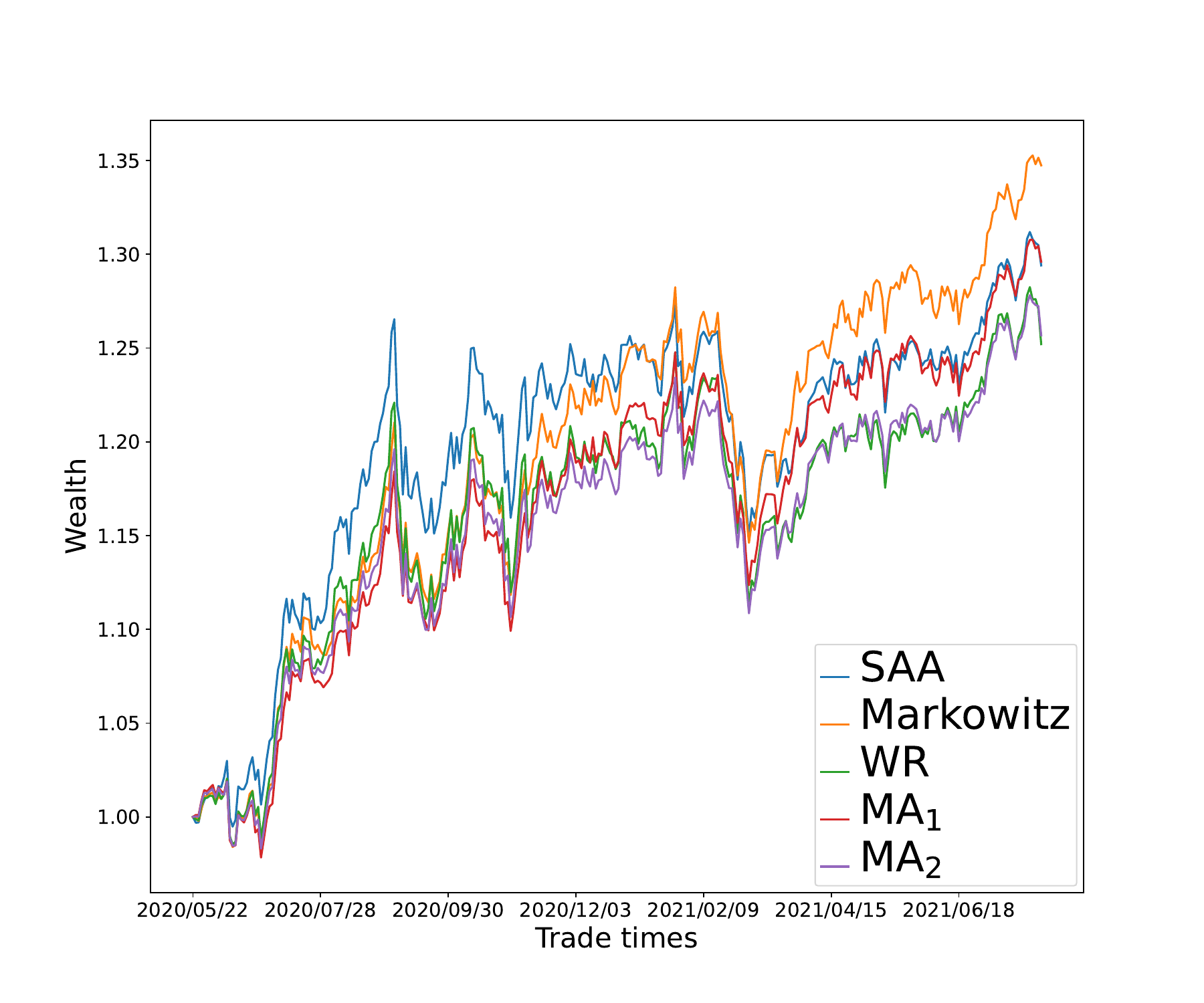}
		\includegraphics[width=7.5cm]{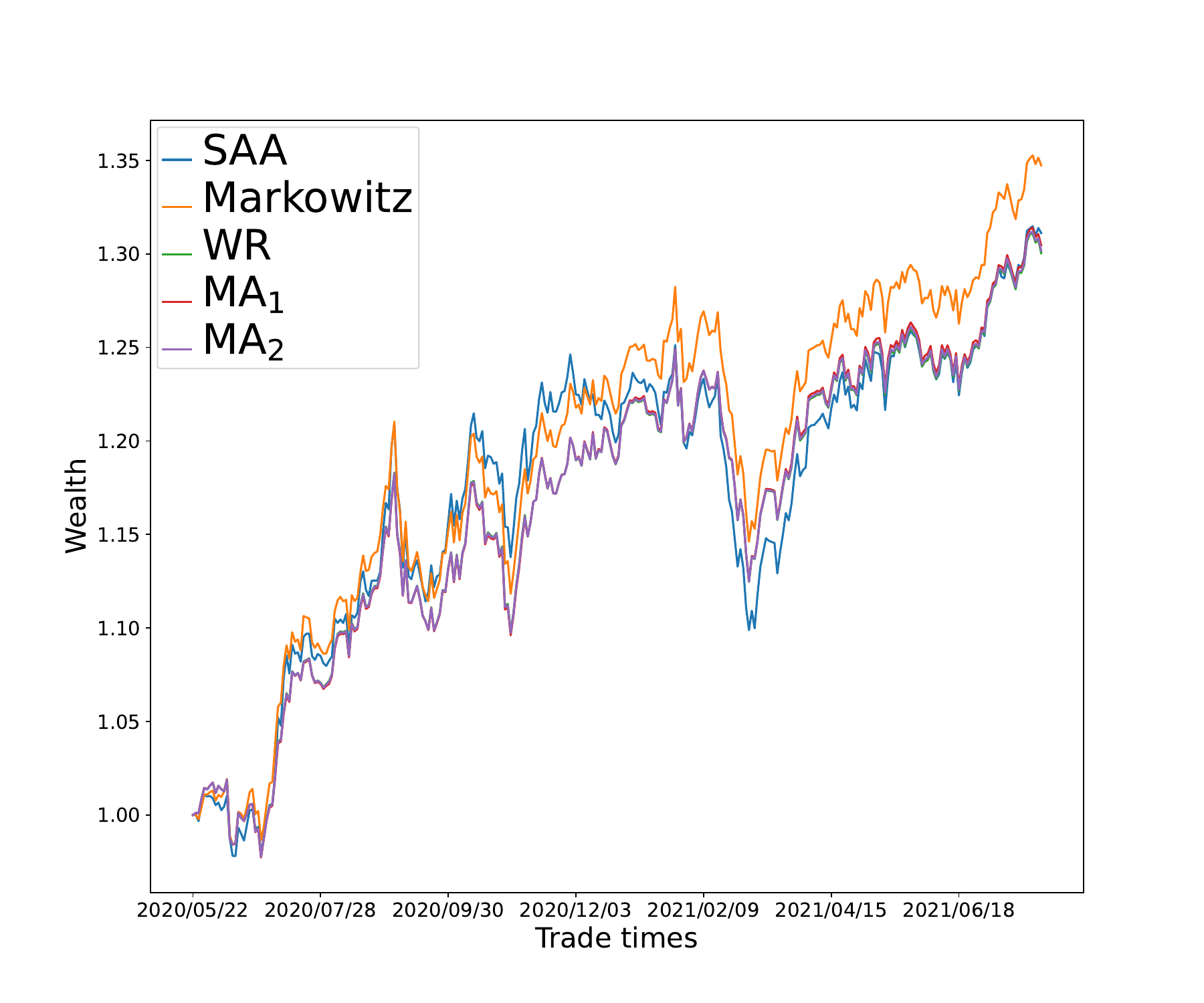}
		\label{rollingwindow}
		\medskip
	\end{center}
\end{figure}

\begin{table}[htb]
\color{black}
	\caption{Annualized return (AR), annualized volatility (AV) and Sharpe ratio (SR) for different portfolio strategies from May 2020 to Aug 2021 with $r_0=0.2$}
	\label{tab:SR}
 \renewcommand{\arraystretch}{1.2}
\begin{center}
	\begin{tabular}{lcccccc}
		\toprule
		& \multicolumn{2}{c}{AR ($\%$)} & \multicolumn{2}{c}{AV ($\%$)} & \multicolumn{2}{c}{SR ($\%$)} \\
		\cmidrule(lr){2-3} \cmidrule(lr){4-5} \cmidrule(lr){6-7}
		Approach & $ k = 2 $ & $ k = 20 $ & $ k = 2$ & $ k = 20 $ & $ k = 2$ & $ k = 20 $ \\
		\midrule
		SAA & 25.42 & 26.72 & 14.82 & 14.35 & 170.4 & 185.0 \\
	Markowitz  & 29.50 & 29.50 & 13.54 & 13.54 & 216.6 & 216.6 \\
	W-WR   & 32.75 & 30.79 & 14.25 & 13.30 & 228.7 & 230.3 \\
	W-${\rm MA}_2$ & \textbf{33.77} & \textbf{32.01} & 14.48 & 13.56 & \textbf{232.0} & \textbf{234.9} \\
MV-WR & 22.10 & 25.60 & 15.30 & 12.86 & 143.4 & 197.8 \\
	MV-${\rm MA}_1$  & 25.26 & 25.93 & \textbf{12.92} & \textbf{12.81} & 194.2 & 201.1 \\
	MV-${\rm MA}_2$ & 22.24 & 25.71 & 13.90 & 12.84 & 158.8  & 198.9 \\
		\bottomrule
\end{tabular}
\end{center}
~\\[5pt]
 \footnotesize{Note: W-WR: WR with Wasserstein uncertainty; W-${\rm MA}_2$: ${\rm MA}_2$ with Wasserstein uncertainty; MV-WR: WR with mean-variance uncertainty; MV-${\rm MA}_1$: ${\rm MA}_1$ with mean-variance uncertainty; MV-${\rm MA}_2$: ${\rm MA}_2$ with mean-variance uncertainty. }
\end{table}

\begin{table}[htb]
\centering
{\color{black}
	\caption{Nominal transaction cost $\sum_{t=1}^{T} \|\mathbf w_{t+1}-\mathbf w_{t}\|_1/T$ with $\epsilon=0.01$ and $T=299$}
	\label{tab:TC}
 \renewcommand{\arraystretch}{1.2}
	\begin{tabular}{lcccccc}
		\toprule
		& \multicolumn{2}{c}{$ r_0 = 0.1 $} & \multicolumn{2}{c}{$ r_0 = 0.2 $} & \multicolumn{2}{c}{$ r_0 = 0.3 $} \\
		\cmidrule(lr){2-3} \cmidrule(lr){4-5} \cmidrule(lr){6-7}
		Approach & $ k = 2 $ & $ k = 20 $ & $ k = 2$ & $ k = 20 $ & $ k = 2$ & $ k = 20 $ \\
		\midrule
		SAA & 0.0549 & 0.0071 & 0.0871 & 0.0121 & 0.0969 & 0.0833 \\
	Markowitz  & 0.0102 & 0.0102 & 0.0110 & 0.0110 & 0.0746 & 0.0746 \\
	W-WR   & 0.0127 & 0.0032 & 0.0127 & \textbf{0.0033} & 0.0271 & 0.0382 \\
	W-${\rm MA}_2$ & 0.0114 & 0.0035 & \textbf{0.0105} & 0.0035 & \textbf{0.0239} & \textbf{0.0348} \\
MV-WR & 0.0321 & 0.0062 & 0.0766 & 0.0074 & 0.0690 & 0.0714 \\
	MV-${\rm MA}_1$  & \textbf{0.0080} & \textbf{0.0024} & 0.0108 & 0.0047 & 0.0714 & 0.0714 \\
	MV-${\rm MA}_2$ & 0.0236 & 0.0049 & 0.0496 & 0.0065 & 0.0714  & 0.0714 \\
		\bottomrule
\end{tabular}
% ~\\[5pt]
%  \footnotesize{Note: W-WR: WR approach with Wasserstein uncertainty; W-${\rm MA}_2$: ${\rm MA}_2$ approach with Wasserstein uncertainty; MV-WR: WR approach with mean-variance uncertainty; MV-${\rm MA}_1$: ${\rm MA}_1$ approach with mean-variance uncertainty; MV-${\rm MA}_2$: ${\rm MA}_2$ approach with mean-variance uncertainty. }
}
\end{table}

\begin{remark}
	\label{rem:omit-emp}
	The similar performance of MA and WR approaches for large $k$ is not a coincidence.
	In the setting of mean-variance uncertainty, as $k$ grows,   the weights $\beta_k$, $\gamma_k$ and $\eta_k$ in the optimization problems \eqref{eq-MAOP1}, \eqref{eq-MAOP} and \eqref{eq-WCOP} also grow. If these weights are large enough, then the terms involving  $\sqrt{{\mathbf w}^\top\Sigma {\mathbf w}}$  in those  problems become dominant in the optimization. For this reason,  Problems \eqref{eq-MAOP1}, \eqref{eq-MAOP} and \eqref{eq-WCOP}  are
 similar to the case that
 the mixture of mean and standard variance as the objective risk measure with a large weight on the standard variance.
\end{remark}

\subsection{Wasserstein uncertainty with normal benchmark distribution}
We follow the portfolio selection setting discussed in Section \ref{sec:72} under  Wassserstein uncertainty with a fitted normal benchmark distribution. The considered optimization problems have the same form of \eqref{eq-WWCOP} and \eqref{eq-WMAOP} with the unit variance t-distribution   replaced by the standard norm distribution.
Figure \ref{Robust_W_PW_k} presents the robust risk values of the optimization problem  with the SAA, WR and MA approaches for different values of $\epsilon$, $r_0$ and $k$ using the whole-period data. In the left panel, we see that SSA may be the largest if $\epsilon\le 0.002$, because the empirical cdf of the data may be outside the Wasserstein uncertainty set if $\epsilon$ is too small.
As seen from Theorem \ref{prop-WU}, although the multivariate normal distribution of $\mathbf{X}$ leads to a light-tailed benchmark distribution of $\mathbf w^\top \mathbf X$, the robust model we use in the MA approach is heavy-tailed.
Figure \ref{rollingwindow_W} reports the wealth process under SAA approach, and WR and MA approaches under Wasserstein uncertainty with a normal benchmark distribution as well as a t-benchmark distribution which has been shown in Section \ref{sec:72}. All robust approaches perform similarly, and they generally outperform SAA approach, especially after the first 150 trading days.
%In the middle and the right panels we set $\epsilon=0.01$ and let $k$ and $r$ vary.
\begin{figure}[pht]
	%\caption{igure}[pht]
	\caption{The optimized values of ${\rm PD}_k$ under Wasserstein uncertainty. Left: $r_0=0.2$, $k=10$ and $\epsilon\in[0,0.1]$; Middle: $r_0=0.2$, $\epsilon=0.01$ and $k\in [1,20]$; Right: $k=10$, $\epsilon=0.01$ and $r_0\in  [0.15,0.5]$.}
	\begin{center}
		\includegraphics[width=5.4 cm]{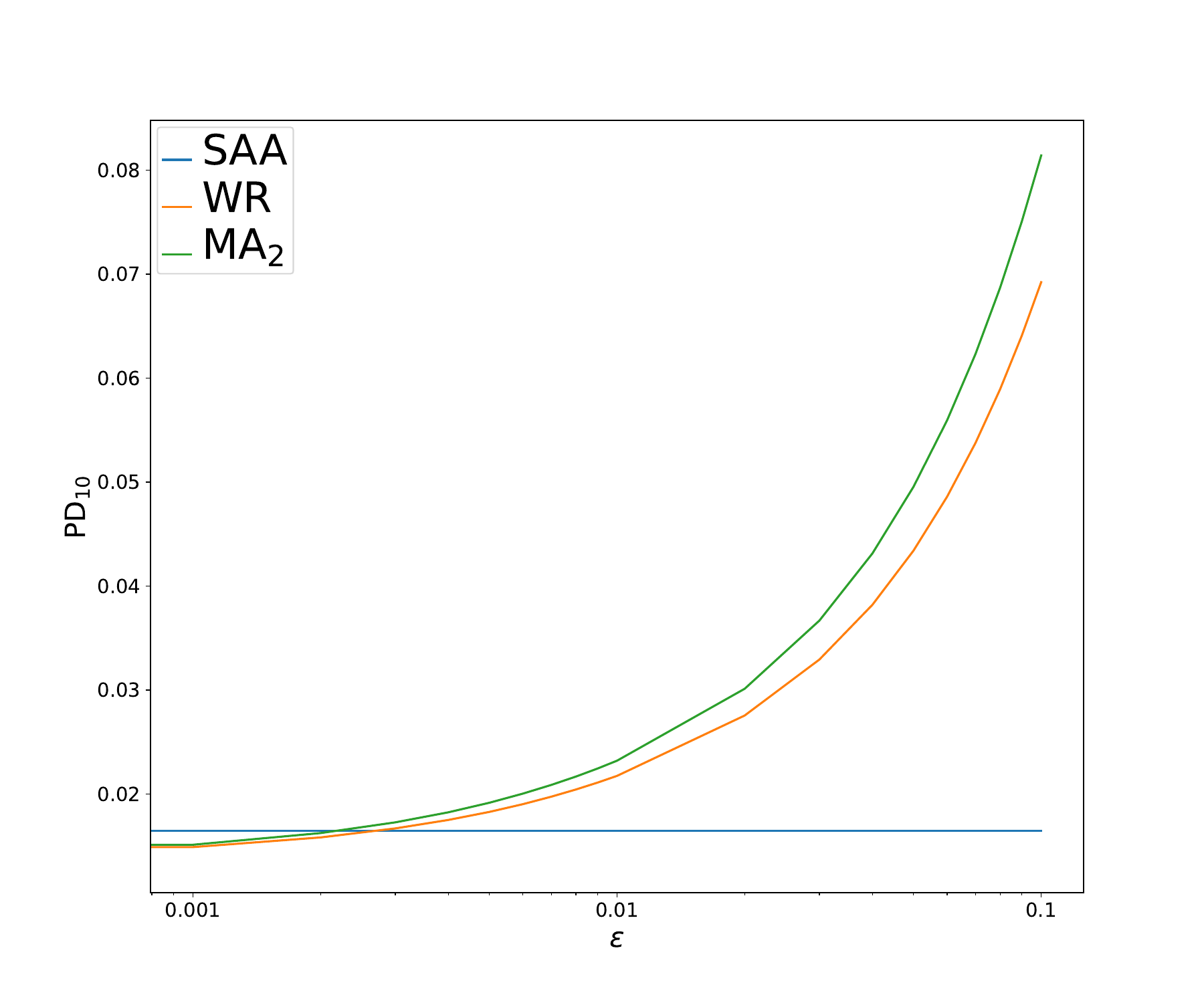}
		\label{Robust_W_PW_epsilon}
		\includegraphics[width=5.4 cm]{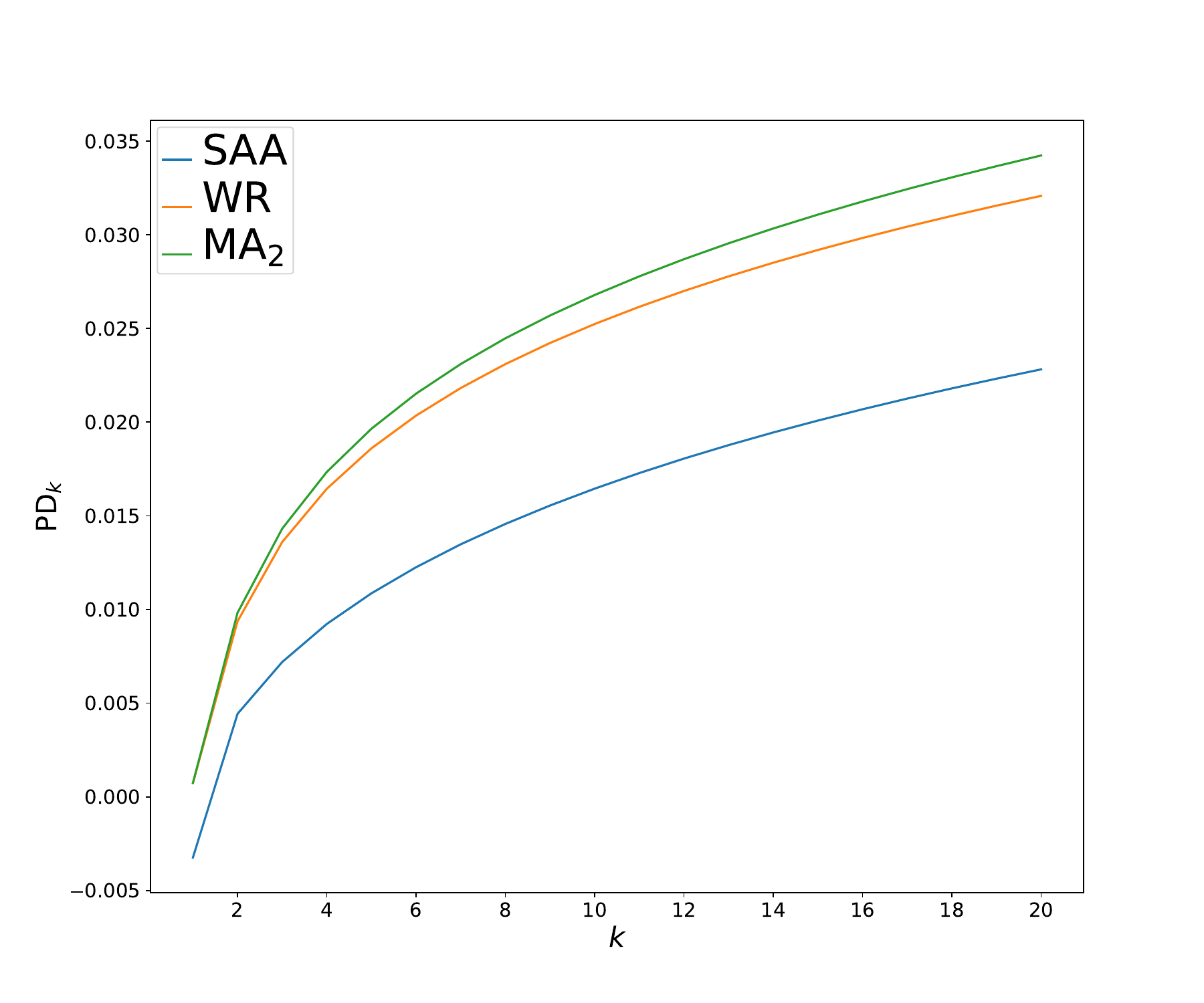}
		\label{Robust_W_PW_k}
		\includegraphics[width=5.4 cm]{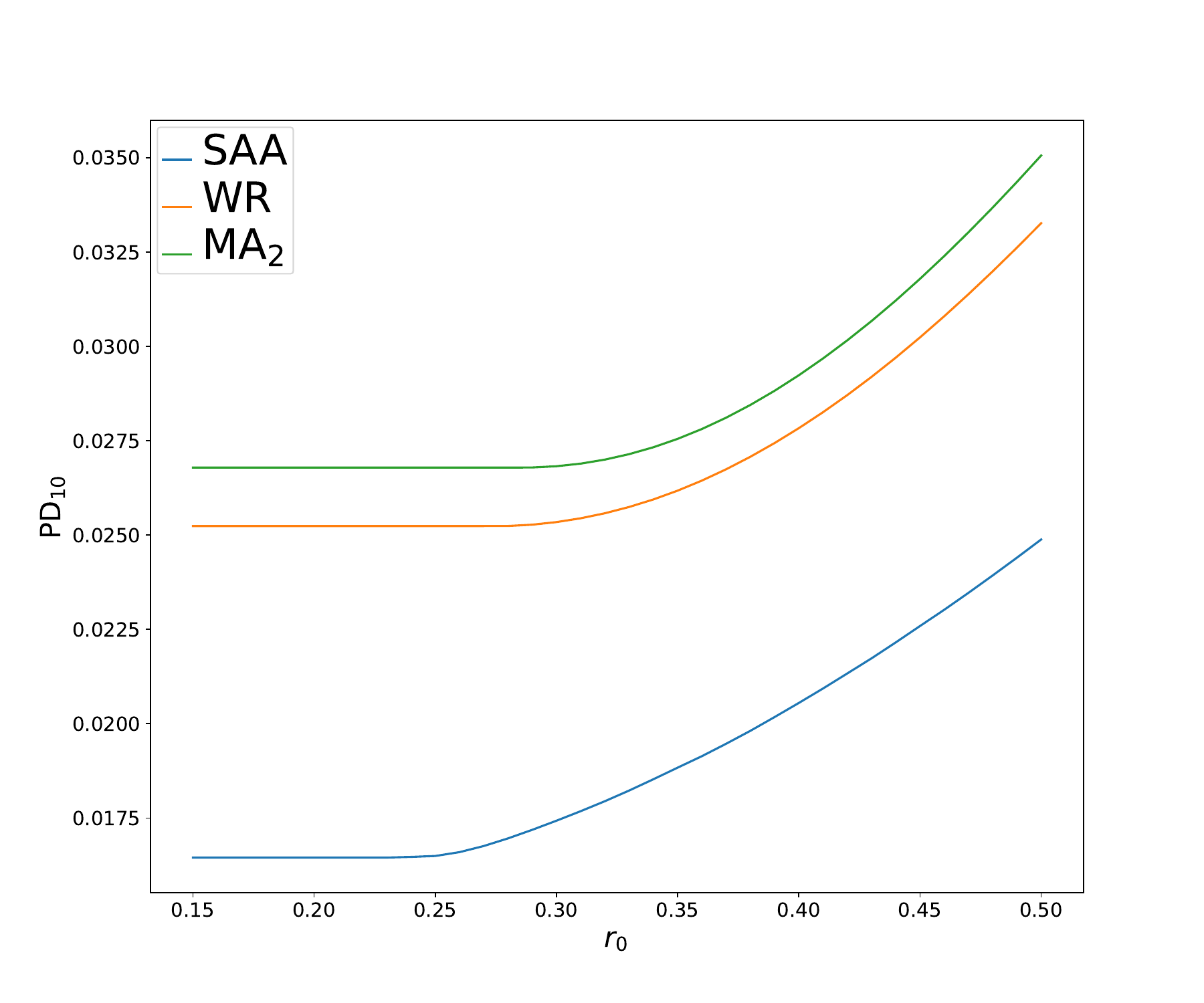}
		\label{Robust_W_PW_mu}
		\medskip
	\end{center}
\end{figure}

\begin{figure}[pht]
	\caption{Wealth evolution for different portfolio strategies from May 2020 to Aug 2021 ($\epsilon=0.01$, $r_0=0.2$). Left: $k=2$; Right: $k=20$.}
	\begin{center}
		\includegraphics[width=7.5cm]{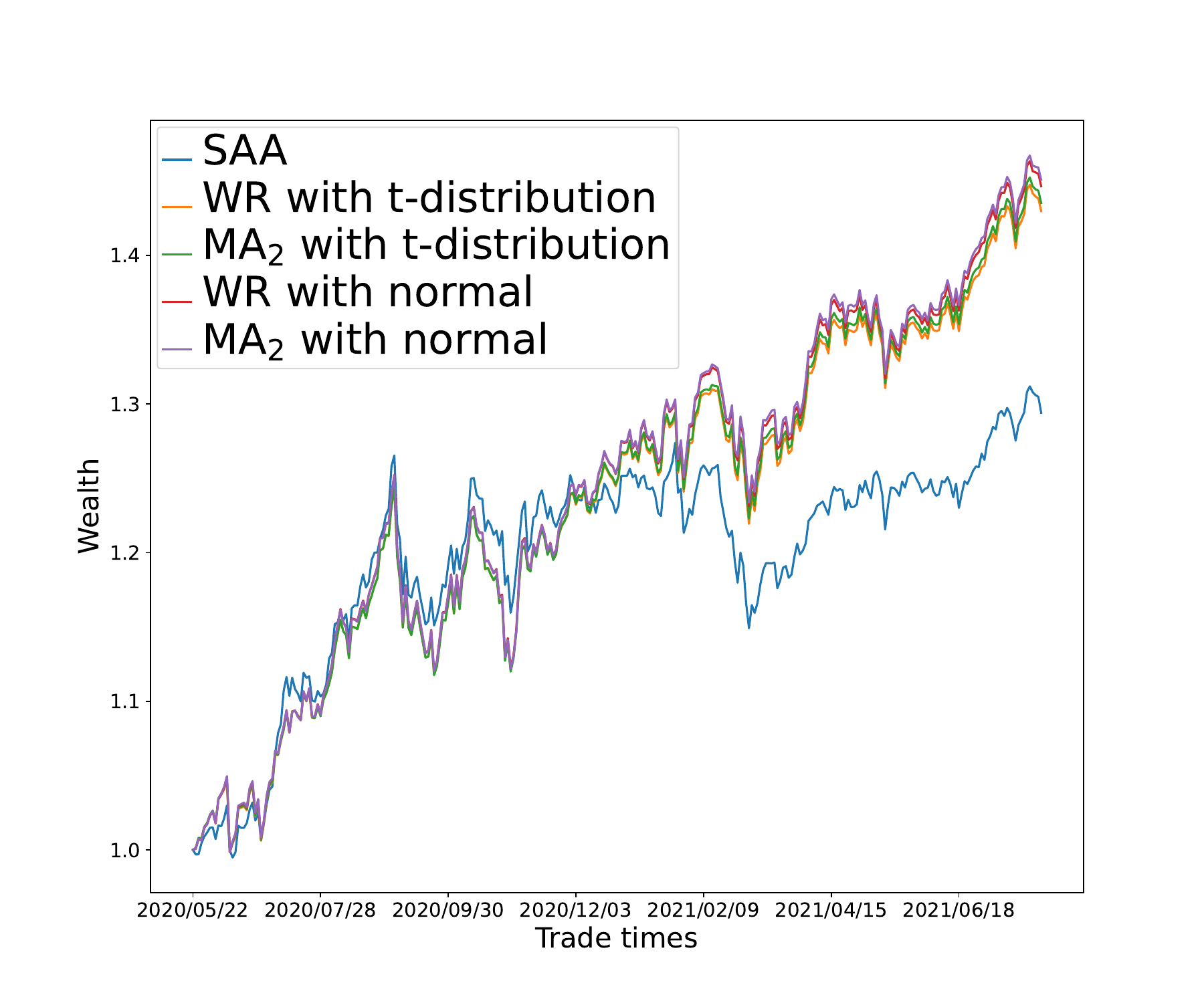}
		\includegraphics[width=7.5cm]{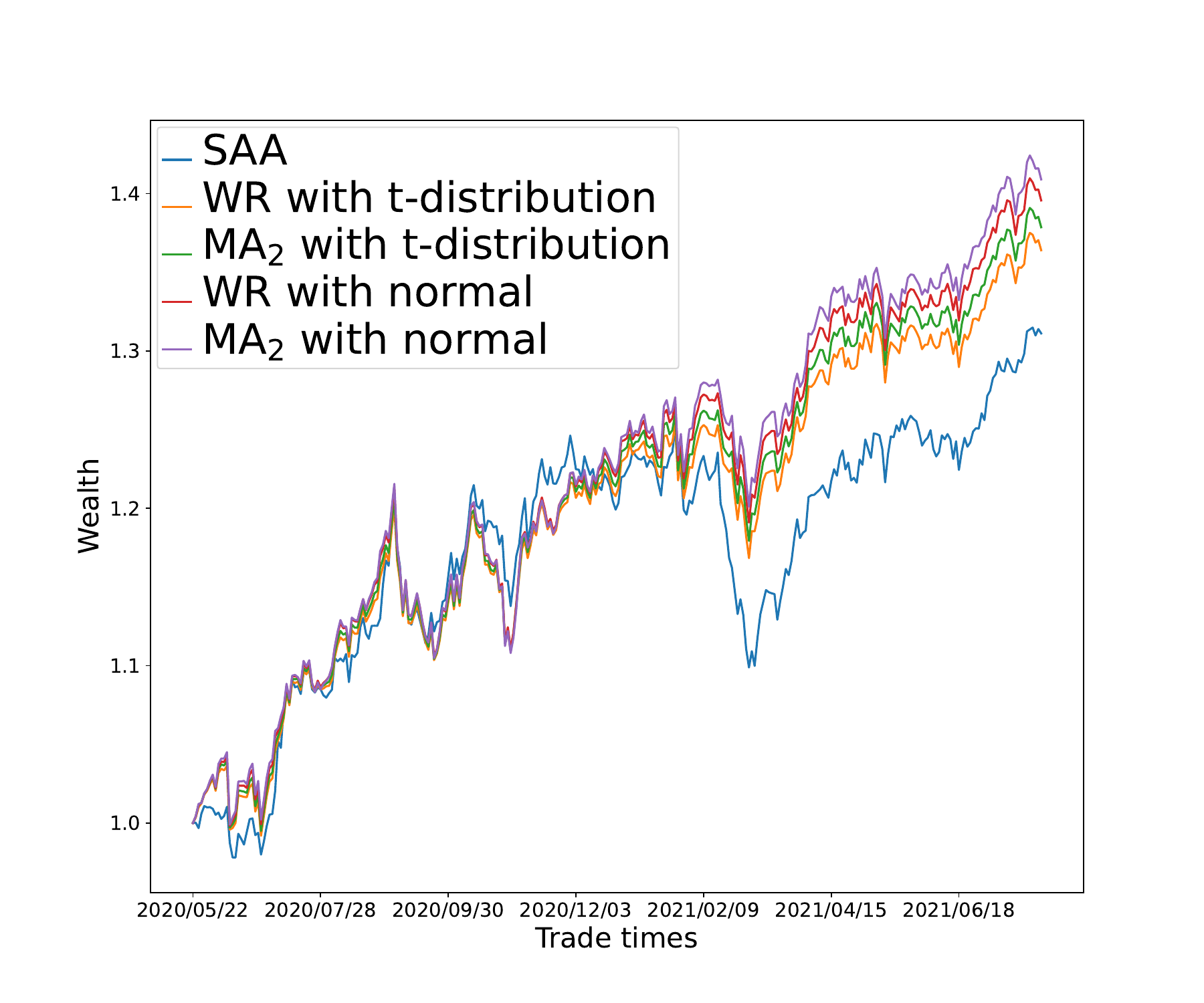}
		\label{rollingwindow_W}
		\medskip
	\end{center}
\end{figure}

\begin{center}
\begin{table}
\rotatebox{90}
{\begin{minipage}{\textheight}
\vspace{3.8cm}
		\captionof{table}{Summary statistics of daily returns of the 20 stocks, including sample mean ($\times 10^{-3}$), sample standard deviation (SD, $\times10^{-4}$), and sample correlations}
		\label{tab:basic}
		\begin{center}
			\footnotesize
			\begin{tabular}{c|cc|*{20}{c}}  
				\hline
				 & Mean & SD  & 1 & 2 & 3 & 4 & 5 & 6 & 7 & 8 & 9 & 10 & 11 & 12 & 13 & 14 & 15 & 16 & 17 & 18 & 19 & 20\\
				\hline
				1 &2.3 &5.1 & 1 & & & & & & & & & & & & & & & & & & & \\
				2 &1.8 &4.0 &0.784 & 1 & & & & & & & & & & & & & & & & & & \\
				3 &1.6 &3.8 &0.679 &0.786 & 1 & & & & & & & & & & & & & & & & & \\
				4 &1.4 &3.7 &0.660 &0.714 &0.644  & 1 & & & & & & & & & & & & & & & & \\
				5 &1.8 &5.0 & 0.707 &0.838 &0.716 &0.704 &1 & & & & & & & & & & & & & & & \\
				6 &1.3 &6.3 & 0.496 &0.546 &0.505 &0.614 &0.612 &1 & & & & & & & & & & & & & & \\
				7 &3.2 &11.6 & 0.583 &0.600 &0.519 &0.560 &0.574 &0.445 &1 & & & & & & & & & & & & & \\
				8 &1.6 &3.7 & 0.637 &0.737 &0.685 &0.443 &0.653 &0.330 &0.473 &1 & & & & & & & & & & & & \\
				9 &0.7 &2.0 & 0.485 &0.563 &0.478 &0.332 &0.429 &0.251 &0.323 &0.568 &1 & & & & & & & & & & & \\
				10 &1.3 &2.0 & 0.588 &0.647 &0.531 &0.529 &0.600 &0.405 &0.447 &0.513 &0.534 &1 & & & & & & & & & & \\
				11 &0.8 &2.1 & 0.438 &0.520 &0.403 &0.378 &0.449 &0.330 &0.334 &0.377 &0.479 &0.680 &1 & & & & & & & & & \\
				12 &0.9 &2.3 & 0.491 &0.581 &0.478 &0.371 &0.494 &0.271 &0.338 &0.538 &0.656 &0.610 &0.610 &1 & & & & & & & & \\
				13 &1.4 &5.0 & 0.626 &0.715 & 0.662&0.457 &0.628 &0.308 &0.472 &0.922 &0.534 &0.470 &0.349 &0.513 &1 & & & & & & & \\
				14 &1.1 &4.8 & 0.500 &0.580 &0.534 &0.357 &0.482 &0.292 &0.379 &0.611 &0.565 &0.483 &0.363 &0.491 &0.567 &1 & & & & & & \\
				15 &1.0 &5.1 & 0.437 &0.493 &0.498 &0.308 &0.402 &0.198 &0.307 &0.649 &0.410 &0.310 &0.274 &0.392 &0.670 &0.415 &1 & & & & & \\
				16 &1.3 &3.8 & 0.606 &0.681 &0.584 &0.424 &0.598 &0.372 &0.471 &0.654 &0.494 &0.569 &0.469 &0.548 &0.613 &0.593 &0.539 &1 & & & & \\
				17 &0.6 &6.6 & 0.584 &0.646 &0.552 &0.461 &0.587 &0.397 &0.434 &0.562 &0.442 &0.514 &0.437 &0.444 &0.548 &0.457 &0.409 &0.560 &1 & & & \\
				18 &2.1 &6.5 & 0.659 &0.742 &0.636 &0.595 &0.742 &0.445 &0.522 &0.653 &0.395 &0.513 &0.374 &0.425 &0.640 &0.428 &0.428 &0.595 &0.540 &1 & & \\
				19 &1.6 &5.7 & 0.525 &0.544 &0.535 &0.321 &0.440 &0.241 &0.364 &0.649 &0.452 &0.398 &0.321 &0.426 &0.633 &0.556 &0.631 &0.621 &0.521 &0.441 &1 & \\
				20 &0.7 &3.7 &0.479 &0.546 &0.513 &0.354 &0.450 &0.227 &0.361 &0.630 &0.563 &0.455 &0.390 &0.543 &0.623 &0.521 &0522 &0.595 &0.555 &0.391 &0.630 & 1 \\
				\hline
			\end{tabular}
		\end{center} 
	\end{minipage}
 }
	\end{table}
\end{center}

\end{appendix}

\newpage

\end{document}